\newtheorem*{teorema}{Teorema}
\newtheorem{lemma}{Lemma}[section]
\newtheorem{definizione}{Definizione}[section]
\newtheorem{oss}{Osservazione}[section]
\newtheorem{prop}{Proposizione}[section]
\newtheorem{cor}{Corollario}[section]
\begin{document}
\thispagestyle{empty}
\pagestyle{empty}

\begin{center}
{\large UNIVERSIT\`{A} DEGLI STUDI DI ROMA}\\
{\large ``LA SAPIENZA''}\\
{\normalsize $\;$}\\
{\normalsize FACOLT\`{A} DI SCIENZE MATEMATICHE FISICHE E NATURALI}\\
{\normalsize }{\large $\,$}{\normalsize }\\
{\normalsize TESI DI LAUREA SPECIALISTICA IN FISICA}\\
{\normalsize $\;$}\\
\vskip1cm
{\LARGE {\bf Studio della misura di Sinai - Ruelle - Bowen in un}}\\
{\LARGE {\bf sistema semplice}}
\vskip6cm
{$\;\;\;\;\;\;\;\;\;\;\;$Relatore:\hfill{}$\;\;\;\;\;\;\;\;\;$Candidato:$\;\;\;\;\;\,\,\,\:$\\
Prof. Giovanni Gallavotti\hfill{}Marcello Porta$\;\;\;\;$\hfill{}}\\
\vfill
{\large Anno Accademico 2006 - 2007}
\end{center}
\newpage
{\hfill{}{\it A Viviana, per la sua infinita pazienza.}}

\tableofcontents{}

\chapter{Introduzione}

\section{Il problema di una descrizione microscopica}

\subsection{Equilibrio e non equilibrio: definizioni}

In questa tesi vogliamo discutere il problema della costruzione di una {\em meccanica statistica dei sistemi fuori dall'equilibrio}, ovvero di una teoria che, partendo dalla dinamica microscopica, permetta di predire relazioni tra valori medi di osservabili fisiche in uno stato di non equilibrio.

Innanzitutto, dobbiamo specificare cosa intendiamo per {\em equilibrio}\index{sistema all'equilibrio} e {\em non equilibrio}\index{sistema fuori dall'equilibrio}. Un sistema all'equilibrio \` e un sistema di particelle in uno stato stazionario che evolve risolvendo le equazioni del moto di Hamilton; chiamiamo {\em stato stazionario}\index{stato stazionario} uno stato fisico che non muta nel tempo, ovvero se la termodinamica del sistema \` e descritta da un insieme di osservabili $\mathcal{A}=\left\{A_{i}\right\}$ i valori medi delle $A_{i}$ sono costanti.\footnote{Ad esempio, possiamo immaginare che $\mathcal{A}$ sia formato dall'energia potenziale media $\Phi$, dall'energia cinetica media $T$, dall'energia totale $U$, dalla pressione $P$, dalla densit\` a $\rho$ e dal volume $V$.}

Un sistema fuori dall'equilibrio pu\` o essere tale per due motivi:

\begin{enumerate}
\item\label{punto:1} le particelle evolvono risolvendo equazioni del moto che non sono le equazioni di Hamilton, e/o
\item\label{punto:2} il sistema non si trova in uno stato stazionario.
\end{enumerate}

In seguito, considereremo esclusivamente {\em stati stazionari fuori dall'equilibrio}: in generale, il nostro sistema sar\` a soggetto a delle forze non conservative\index{forze non conservative} - ovvero che non possono essere espresse come il gradiente di un potenziale - e, per permettere l'esistenza di uno stato stazionario, il calore prodotto dovr\` a essere assorbito da un {\em termostato}.

\subsection{Caratteristiche di una teoria microscopica}

 \` E chiaro che, all'equilibrio o fuori dall'equilibrio, lo studio di $10^{23}$ equazioni del moto \` e un problema intrattabile. Ad ogni modo, \` e ragionevole pensare che ai fini del calcolo di ``poche'' osservabili fisiche i dettagli della traiettoria della singola particella siano inessenziali; prima di scrivere una ad una le equazioni del moto su un calcolatore e aspettare il tempo necessario affinch\'e vengano risolte, dobbiamo chiederci se sia davvero necessario avere cos\` i tanta informazione. Quali sono le caratteristiche del moto che hanno una rilevanza da un punto di vista macroscopico, ovvero che {\em giustificano} la termodinamica\index{termodinamica}?
  
Si potrebbe provare a rispondere a questa domanda assumendo le propriet\` a del moto che riteniamo plausibili, e verificando l'accordo tra le nostre predizioni teoriche e la realt\` a fisica. Questo \` e esattamente ci\` o che accade nello studio dei sistemi all'equilibrio, dove l'assunzione sulla dinamica microscopica che permette di giustificare la termodinamica prende il nome di {\em ipotesi ergodica}\index{ipotesi ergodica};  nonostante nelle applicazioni questa ipotesi sia praticamente impossibile da dimostrare, le sue conseguenze fisiche sono largamente verificate. Ci\` o suggerisce che non abbia senso, o perlomeno che sia superfluo, verificare la  stretta  validit\` a dell'ipotesi ergodica ogni volta che si vogliono usare le tecniche della meccanica statistica dell'equilibrio; piuttosto, l'atteggiamento corretto \` e quello di assumere che al fine di calcolare quantit\` a macroscopiche un sistema all'equilibrio si comporti {\em come se} fosse ergodico.

I sistemi fuori dall'equilibrio sono incompatibili con l'ipotesi ergodica. Per derivare la termodinamica partendo dalle leggi del moto \` e necessario fare delle nuove assunzioni sulla dinamica microscopica, e verificare che la teoria che si ottiene sia consistente con i risultati sperimentali e con quanto gi\` a noto all'equilibrio; potremmo dire di aver raggiunto l'obiettivo pi\` u ambizioso se riuscissimo a costruire una teoria che permetta di descrivere la {\em statistica} di un sistema allo stato stazionario indipendentemente dal fatto che si trovi o meno all'equilibrio, e che dunque si riduca all'usuale meccanica statistica dell'equilibrio nel limite di forze non conservative nulle.

Negli ultimi $70$ anni molte propriet\` a di sistemi ``quasi'' all'equilibrio, ovvero sottoposti a forze non conservative con ``piccole'' intensit\` a, sono state efficacemente descritte con le tecniche della meccanica statistica dell'equilibrio\footnote{Vedere \cite{DGM} per una esauriente esposizione di questi risultati.}; chiaramente, c'\` e un'ambiguita in cosa si intende per ``piccolo'', e questa ambiguit\` a non pu\` o che essere risolta caso per caso. In queste situazioni si assume che tra equilibrio e non equilibrio ``le cose cambino poco'' in prima approssimazione, e con questo approccio si riescono a dimostrare risultati notevoli come la {formula di Green - Kubo}\index{formula di Green - Kubo}, meglio noto come {\em teorema di fluttuazione - dissipazione}\index{teorema di fluttuazione-dissipazione}, e le {\em relazioni di reciprocit\` a di Onsager}\index{relazioni di reciprocit\` a di Onsager} (assumendo che la dinamica sia {\em reversibile}); parlando di questi risultati ci riferiremo alla {\em teoria della risposta lineare}\index{teoria della risposta lineare}.

Quindi, un'altra caratteristica richiesta ad una teoria microscopica del non equilibrio \` e che per ``piccole'' forze non conservative implichi i ben noti risultati della teoria della risposta lineare, possibilmente estendendoli agli ordini successivi.

\section{Un esperimento}

Lo studio dei sistemi fuori dall'equilibrio ricevette un forte impulso nel 1993, grazie ai risultati sperimentali di D. J. Evans, E. G. D. Cohen, G. P. Morriss, \cite{ECM93}; gli autori simularono un fluido bidimensionale composto da $N=56$ particelle, che evolve con le equazioni del moto

\begin{equation}
\dot{\underline{q}}_{j}=\frac{1}{m}\underline{p}_{j}+\underline{i}\gamma y_{j}, \qquad \dot{\underline{p}}_{j}=\underline{F}_{j}-\underline{i}\gamma p_{y_{j}}-\alpha\underline{p}_{j},\label{eq:EvCM1}
\end{equation}

dove $\underline{p}_{j}=\left(p_{x_{j}},p_{y_{j}}\right)$, $\gamma=\gamma(\underline{q}_{j})=\frac{\partial u_{x}\left(\underline{q}_{j}\right)}{\partial y}$ se $\underline{u}(\underline{q}_{j})=(u_{x}(\underline{q}_{j}),u_{y}(\underline{q}_{j}))$ \` e la velocit\` a della particella $j$-esima nel punto $\underline{q}_{j}$, $\underline{i}$ \` e il versore nella direzione $x$, $\alpha$ \` e determinato con il {\em principio di minimo sforzo di Gauss}\index{principio di minimo sforzo di Gauss} ({\em cf.} capitolo \ref{cap:noneq}), e $\underline{F}_{j} $ \`e la risultante delle forze d'interazione tra la particella $j$ e le rimanenti $N-1=55$. Le particelle interagiscono mediante un potenziale

\begin{equation}
\Phi(\underline{r})=\left\{\begin{array}{c} 4\epsilon\left[\left(\frac{\sigma}{r}\right)^{12}-\left(\frac{\sigma}{r}\right)^{6}\right]+\epsilon\qquad\mbox{per $r<2^{\frac{1}{6}}\sigma$}\\ 0 \qquad\qquad\qquad\qquad\qquad\mbox{altrimenti}.\end{array}\right.
\end{equation}

Con questo modello Evans, Cohen e Morriss verificarono numericamente che, chiamando $\sigma$ il tasso di produzione di entropia\footnote{Come discuteremo nel capitolo \ref{cap:noneq}, non \` e ovvio chi siano l'entropia e il suo tasso di produzione in un sistema fuori dall'equilibrio; ad ogni modo, nel caso di {\em termostati gaussiani} il tasso di produzione di entropia pu\` o essere identificato con il {\em tasso di contrazione dello spazio delle fasi}\index{tasso di contrazione dello spazio delle fasi}, proporzionale alla quantit\` a $\alpha$ che appare nella (\ref{eq:EvCM1}).}, $\sigma_{\tau}$ la sua media su un tempo $\tau$ e indicando con $Freq(B)$ la frequenza delle occorrenze dell'evento $B$,

\begin{equation}
\frac{1}{\tau P}\log\frac{Freq\left(\sigma_{\tau}=P\right)}{Freq\left(\sigma_{\tau}=-P\right)}\sim1 \label{eq:EvCM2}
\end{equation}

a meno di correzioni tanto pi\` u piccole quanto pi\` u \` e grande $\tau$; l'evento $\left\{\sigma_{\tau}=-P\right\}$ pu\` o essere visto come una ``violazione'' della seconda legge della termodinamica\index{violazione della seconda legge della termodinamica}, vera solo nel limite $\tau\rightarrow\infty$. Nello stesso lavoro Evans, Cohen e Morriss provarono a giustificare la (\ref{eq:EvCM2}) congetturando che la probabilit\` a che il sistema si trovi a percorrere in un tempo $\tau$ il ``segmento di traiettoria nello spazio delle fasi'' $i$ abbia la forma

\begin{equation}
\mu_{i,\tau}=\frac{e^{-\sum_{n}\lambda_{i,n}^{+}\tau}}{Z}, \label{eq:EvCM3}
\end{equation}

dove $\lambda_{i,n}^{+}$ \` e l'$n$-esimo esponente di Lyapunov\index{esponente di Lyapunov} positivo calcolato sul segmento $i$-esimo, e $Z$ \` e un'opportuna normalizzazione.

Come abbiamo accennato poco fa, il parametro $\alpha$ che appare nella (\ref{eq:EvCM1}) \` e determinato con il principio di Gauss; nella sezione \ref{sez:term} del capitolo \ref{cap:noneq} vedremo che questa scelta render\` a {\em reversibili} le equazioni del moto, e inoltre ci permetter\` a di stabilire un'identificazione tra il tasso di produzione di entropia e la somma di tutti gli esponenti di Lyapunov\footnote{Pi\` u precisamente, potremo identificare il tasso di produzione di entropia con il {\em tasso di contrazione dello spazio delle fasi}.}. Grazie a questa corrispondenza e alla reversibilit\` a del moto, assumendo la (\ref{eq:EvCM3}) come misura di probabilit\` a invariante Evans, Cohen e Morriss riuscirono a dedurre la (\ref{eq:EvCM2}), e conclusero interpretando il risultato sperimentale (\ref{eq:EvCM2}) come una  verifica della loro congettura (\ref{eq:EvCM3}).

\section{Il teorema di fluttuazione di Gallavotti-Cohen}

L'argomento di Evans, Cohen e Morriss, per\` o, \` e troppo vago per essere considerato una dimostrazione del risultato (\ref{eq:EvCM2}); data l'importanza della questione, \` e necessario capire {\em perch\'e} la misura di probabilit\` a invariante di un sistema fuori dall'equilibrio dovrebbe avere la forma (\ref{eq:EvCM3}).

Nel 1995 G. Gallavotti e E. G. D. Cohen proposero, {\em cf.} \cite{GC95}, \cite{GC95a}, un punto di vista che, se accettato, permette di dire con rigore matematico quale sia la misura di probabilit\` a invariante di un sistema fuori dall'equilibrio; in questo contesto la formula (\ref{eq:EvCM2}) diventa un teorema, \cite{GdimFT}, \cite{Ge96}, noto come {\em teorema di fluttuazione di Gallavotti - Cohen}. La proposta di Gallavotti e Cohen consiste in un'assunzione sulla dinamica delle particelle che generalizza l'ipotesi ergodica\index{ipotesi ergodica}, perch\'e implica che all'equilibrio lo stato del sistema sia descritto dalla misura di Lebesgue sulla superficie dello spazio delle fasi ad energia costante, ovvero dall'ensemble microcanonico; l'assunzione consiste nel dire che {\em allo scopo di misurare propriet\` a macroscopiche} un sistema macroscopico pu\` o essere pensato appartenente ad una speciale classe di sistemi dinamici caotici, chiamati {\em iperbolici transitivi} o {\em Anosov}, per i quali il valor medio di un'osservabile $\mathcal{O}$ nel futuro pu\` o essere scritto informalmente come:\footnote{Per dare senso alla (\ref{eq:infsrb}) \` e necessaria una {\em discretizzazione dello spazio delle fasi}; come vedremo, per la classe di sistemi ``molto caotici'' che studieremo ci\` o potr\` a essere fatto in modo relativamente facile. }

\begin{equation}
\left\langle\mathcal{O}\right\rangle_{+}=\frac{\sum_{traiettorie\;nello\;spazio\;delle\;fasi}\mathcal{O}[x(0)]e^{-\sum_{k}A_{u}[x(t_{k})]}}{Z}, \label{eq:infsrb}
\end{equation}

dove $x(t)$ \` e lo stato del sistema lungo la traiettoria nello spazio delle fasi al tempo $t$, $A_{u}$ \` e il logaritmo del {\em coefficiente di espansione}\index{coefficiente di espansione}, ovvero la somma di tutti gli esponenti di Lyapunov positivi del sistema di particelle, e $\left\{x(t_{k})\right\}$ \` e un ``campionamento opportuno'' dell'evoluzione del sistema\footnote{Se la dinamica avviene ad intervalli di tempo discreti, come assumeremo sempre in questa tesi, non avremo bisogno del campionamento.} $t\rightarrow x(t)$. Sotto le stesse ipotesi che implicano la (\ref{eq:infsrb}), invertendo la freccia del tempo il valor medio di un'osservabile nel passato pu\`o essere ottenuto sostituendo $-A_{u}[x(t)]$ con il logaritmo del {\em coefficiente di contrazione}\index{coefficience di contrazione} $A_{s}[x(t)]$. Infine, dal momento che la misura \` e invariante possiamo sostituire l'argomento $x(0)$ dell'osservabile $\mathcal{O}$ con $x\left(t'\right)$ e $t'$ generico. L'assunzione che implica la (\ref{eq:infsrb}) \` e chiamata {\em ipotesi caotica}, e verr\` a discussa ampiamente nella sezione \ref{sez:IC} del capitolo \ref{cap:noneq}.

\section{Organizzazione del lavoro di tesi}

Il lavoro di tesi \` e organizzato in 3 capitoli e 4 appendici; ne riassumiamo brevemente il contenuto.

\begin{itemize}
\item Nel \textbf{capitolo \ref{cap:ergo}} presentiamo qualche concetto della teoria matematica che ci permetter\` a di comprendere l'ipotesi caotica, con particolare attenzione ad una semplice applicazione nella quale \` e possibile calcolare esplicitamente i potenziali della misura di probabilit\` a invariante.
\item Nel \textbf{capitolo \ref{cap:noneq}} riprendiamo il problema del non equilibrio accennato in questa introduzione, in particolare discutendo l'identificazione tra tasso di produzione di entropia e tasso di contrazione dello spazio delle fasi, e presentiamo una dimostrazione rigorosa del teorema di fluttuazione di Gallavotti-Cohen. In seguito mostriamo una generalizzazione del teorema, e riassumiamo qualche altro risultato noto fuori dall'equilibrio.
\item Nel \textbf{capitolo \ref{cap:appl}} discutiamo alcune conseguenze del teorema di fluttuazione; il lavoro originale consiste in una generalizzazione della teoria della risposta lineare, ricavando a partire dal teorema di fluttuazione alcune relazioni tra funzioni di correlazione a tutti gli ordini. In particolare, saremo in grado di scrivere il valore di aspettazione di qualunque osservabile $\mathcal{O}$ dispari sotto inversione temporale attraverso uno sviluppo in funzioni di correlazione $\mathcal{O}-\sigma$, se $\sigma$ \` e il tasso di produzione di entropia; come verificheremo, all'ordine pi\` u basso questo risultato implicher\` a la formula di Green - Kubo e le relazioni di reciprocit\` a di Onsager. In seguito, applicheremo le relazioni ottenute per verificare il teorema di fluttuazione in un semplice caso in cui, da alcune simulazioni numeriche, sembra essere vero nonostante non ne siano verificate le ipotesi.
\item Nell'\textbf{appendice \ref{app:eur}} discutiamo una derivazione euristica della misura di probabilit\` a invariante che descrive i sistemi fuori dall'equilibrio nel contesto dell'ipotesi caotica.
\item Nell'\textbf{appendice \ref{app:C}} calcoliamo esplicitamente il funzionale di grandi deviazioni del tasso di produzione di entropia mediato su un tempo finito. Il contenuto di questa appendice \` e originale.
\item Nell'\textbf{appendice \ref{app:D}} discutiamo una tecnica matematica che ci permetter\` a di studiare le propriet\` a di analiticit\` a della misura di probabilit\` a invariante che descrive i sistemi fuori dall'equilibrio nel contesto dell'ipotesi caotica. 
\item Infine, grazie alla tecnica esposta nell'appendice \ref{app:D}, nell'\textbf{appendice \ref{app:B}} discutiamo le propriet\` a di analiticit\` a dei valori medi di osservabili analitiche.
\end{itemize}

\chapter{Cenni di teoria ergodica} \label{cap:ergo}

\section{Sistemi dinamici astratti}

In questa sezione vogliamo presentare una breve introduzione alla teoria astratta dei sistemi dinamici, e ricordare qualche concetto fondamentale di teoria della misura.

\begin{definizione}{{\em (Sistemi dinamici discreti)}}\label{def:1}

Sia $\Omega$ uno spazio metrico compatto\footnote{Un {\em insieme compatto $G\in\Omega$}\index{spazio metrico compatto} \` e un insieme {\em chiuso}\index{insieme chiuso} e {\em limitato}\index{insieme limitato}; un insieme chiuso \` e un insieme che contiene i limiti delle sue successioni, mentre un insieme \` e detto {\em limitato} se pu\` o essere contenuto in una ``palla'' di raggio finito, ovvero se esiste $x\in \Omega$ tale che, indicando con $d$ una {\em distanza} su $\Omega$,  per ogni $g\in G$ $d(x,s)<r$, $r\geq0$.} e separabile\footnote{Un {\em insieme separabile} $G$ \` e un insieme che contiene un sottoinsieme {\em denso} e {\em numerabile}; ci\` o vuol dire che l'insieme $G$ pu\` o essere `` approssimato arbitrariamente bene'' da una famiglia numerabile di suoi sottoinsiemi, ovvero che possono essere messi in corrispondenza uno-a-uno con l'insieme dei numeri interi $\mathbb{Z}$.}, e sia $S$ una mappa continua di $\Omega$ in s\'e stesso. La coppia $(\Omega, S)$ sar\` a chiamata {\em sistema dinamico topologico discreto}\index{sistema dinamico topologico discreto} su $\Omega$. $(\Omega,S)$ sar\` a detto {\em invertibile}\index{sistema dinamico invertibile} se $S$ ha un'inversa continua $S^{-1}$.

Sia $\mu$ una misura di probabilit\` a completa\footnote{Una misura $\mu$ \` e detta {\em completa} se ogni sottoinsieme di un insieme di $\mu$-misura nulla ha $\mu$-misura nulla.}\index{misura di probabilit\` a completa} definita sulla $\sigma$-algebra $\mathcal{B}$\footnote{\label{nota:2}Un'{\em algebra} \` e una collezione di insiemi contenente l'insieme vuoto chiusa rispetto alle operazioni di complemento e di unione finita; se \` e anche chiusa rispetto all'operazione di unione numerabile, la collezione viene detta una {\em $\sigma$-algebra}.}\index{$\sigma$-algebra} dei sottoinsiemi di $\Omega$, dove $N\in\mathcal{B}$ \` e un insieme di $\mu$-misura nulla, e supponiamo che $S$ sia una mappa misurabile\index{funzione misurabile} rispetto a $\mathcal{B}$ al di fuori di $N$\footnote{Una funzione $S$ \` e detta {\em misurabile} rispetto alle $\sigma$-algebre $\mathcal{B}$, $\mathcal{B'}$ se l'immagine inversa $S^{-1}(E)$ di un insieme $E\in\mathcal{B}$ appartiene a $\mathcal{B'}$; in particolare, se $\mathcal{B}=\mathcal{B'}=\mathcal{B}_{\mu}$, dove $B_{\mu}$ \` e la $\sigma$-algebra degli insiemi $\mu$-misurabili, la funzione \` e detta $\mu$-misurabile.}. Se

\begin{equation}
\mu(A)=\mu(S^{-1}A)\qquad\mbox{per ogni $\mathcal{A}\in\mathcal{B}\cap N^{c}$}, \label{eq:erg1}
\end{equation}

dove $N^{c}=\Omega/N$ \` e il complemento di $N$, diremo che $(\Omega,S)$ \` e $\mu$-invariante. La tripla $(\Omega,S,\mu)$ sar\` a chiamata un {\em sistema dinamico discreto mod $0$} definito al di fuori di $N$. Se $N=\emptyset$ la tripla $(\Omega,S,\mu)$ sar\` a detta un {\em sistema dinamico metrico}\index{sistema dinamico metrico}. 
\end{definizione}

La definizione che abbiamo appena dato coinvolge sistemi dinamici che evolvono ad intervalli di tempo discreti; l'analogo nel continuo della definizione \ref{def:1} pu\` o essere formulato nel seguente modo.

\begin{definizione}{{\em (Flussi)}}\label{def:flux}

Sia $\Omega$ uno spazio metrico compatto e separabile. Sia $(S_{t})_{t\in\mathbb{R}}$ o $(S_{t})_{t\in\mathbb{R}_{+}}$ un gruppo o un semigruppo, omomorfo a $\mathbb{R}$ o a $\mathbb{R}_{+}$, di mappe che agiscono con continuit\` a su $\Omega$.\footnote{Questo significa che la funzione $(t,x)\rightarrow S_{t}x$ di $\mathbb{R}\times\Omega$, o di $\mathbb{R}_{+}\times\Omega$, in $\Omega$ \` e continua.}

Le coppie $(\Omega, (S_{t})_{t\in\mathbb{R}})$, $(\Omega, (S_{t})_{t\in\mathbb{R}_{+}})$ saranno chiamate rispettivamente un {\em flusso topologico invertibile}\index{flusso topologico invertibile} o un {\em flusso topologico}\index{flusso topologico} su $\Omega$.

Sia $\mu$ una misura di probabilit\` a completa su una $\sigma$-algebra $\mathcal{B}$ e sia $(S_{t})_{t\in\mathbb{R}}$ un gruppo di mappe $\mu$-misurabili omomorfe a $\mathbb{R}$ e che conservano $\mu$. Supponiamo che la funzione $(t,x)\rightarrow S_{t}x$ definita in $\mathbb{R}\times\Omega$ con valori in $\Omega$ sia $\mu$-misurabile rispetto alla $\sigma$-algebra generata dagli insiemi in\footnote{Con la notazione $\mathcal{B}(\Omega)$ indichiamo la $\sigma$-algebra dei boreliani di $\Omega$, ovvero la $\sigma$-algebra generata dagli insiemi aperti di $\Omega$. Se $\Omega=\mathbb{R}$ \` e possibile dimostrare, {\em cf.} \cite{Si}, che la $\sigma$-algebra dei boreliani pu\` o essere equivalentemente generata dagli insiemi chiusi di $\mathbb{R}$, oppure dagli insiemi semi-aperti di $\mathbb{R}$ ecc.} $\mathcal{B}(\mathbb{R})\times\mathcal{B}$. Allora il flusso $(\Omega, (S_{t})_{t\in\mathbb{R}},\mu)$ sar\` a chiamato un {\em flusso metrico invertibile}\index{flusso metrico invertibile} su $\Omega$.

Se sostituiamo $(S_{t})_{t\in\mathbb{R}}$ con un semigruppo $(S_{t})_{t\in\mathbb{R}_{+}}$, $\mathcal{B}(\mathbb{R}_{+})\times\mathcal{B}$-misurabile, otteniamo la nozione di {\em flusso metrico}\index{flusso metrico}.
\end{definizione}

Nel seguito ci occuperemo sempre di sistemi dinamici discreti, dal momento che un sistema dinamico continuo pu\` o essere ricondotto ad un sistema dinamico discreto mediante la tecnica della {\em sezione di Poincar\'e}\index{sezione di Poincar\'e}, {\em cf.} \cite{Ge96}. Inoltre, da un punto di vista fisico i sistemi dinamici continui sono piuttosto inusuali, perch\'e in un esperimento \` e impossibile controllare la dinamica di un sistema fisico in tempo continuo; quello che si fa, invece, \` e acquisire informazioni sullo stato del sistema a determinati intervalli di tempo, che, a causa delle limitazioni sperimentali, non possono essere arbitrariamente piccoli.

Parlando di sistemi dinamici discreti, qualifiche tipo topologico, metrico ecc. possono essere omesse riferendosi semplicemente a {\em sistemi dinamici astratti}\index{sistema dinamico astratto} $(\Omega,S)$, dove $\Omega$ \` e lo spazio sul quale agisce la mappa $S$. 

Per concludere, introduciamo la nozione di {\em isomorfismo}\index{isomorfismo}, o {\em coniugazione}, tra sistemi dinamici.

\begin{definizione}{{\em (Isomorfismi)}}

Se $(\Omega,S)$ e $(\Omega',S')$ sono due sistemi dinamici astratti diremo che sono {\em isomorfi}\index{sistemi dinamici isomorfi}, o {\em coniugati}\index{sistemi dinamici coniugati}, se esiste una mappa invertibile $I:\Omega\leftrightarrow\Omega'$ tale che\footnote{Con il simbolo ``$\circ$'' indichiamo la {\em composizione} tra mappe, ovvero $(S\circ  I)(x)=S(I(x))$.}

\begin{equation}
I\circ S=S'\circ I.
\end{equation}
 
\end{definizione}

\section{Dinamica simbolica} \label{sez:dinsimb}

L'evoluzione delle traiettorie di un sistema dinamico $(\Omega,S)$ pu\` o essere descritta misurando ad ogni instante di tempo prefissato delle quantit\` a di interesse, e compilando una lista con tutte queste caratteristiche in funzione del tempo; matematicamente, sintetizziamo queste informazioni nella {\em storia}\index{storia di un punto} $\underline{\sigma}(x)$ di un punto $x$. Ovviamente, un punto $x$ identifica univocamente la sua storia, mentre \` e naturale chiedersi se da una storia sia possibile risalire al punto iniziale; approfondiremo questa questione tra poco. 

Proviamo a formulare il concetto di storia di un punto in modo astratto. Considerando una {\em partizione}\index{partizione}\footnote{Una {\em partizione}, o un {\em pavimento}, $\mathcal{P}=\left(P_{1},...,P_{n}\right)$ di $\Omega$ \` e una collezione di sottoinsiemi di $\Omega$ tali che $\bigcup_{i}P_{i}=\Omega$ e $P_{i}\cap P_{j}=\emptyset$ per $i\neq j$. } $\mathcal{P}=\{P_{0},P_{1},...,P_{n}\}$\index{partizione} di $\Omega$, $n\geq 1$, diremo che la sequenza $\underline{\sigma}(x)=\{\sigma_{i}(x)\}_{i\in\mathbb{Z}}\in\{0,...,n\}^{\mathbb{Z}}$ \`e la {\em storia simbolica} di un punto $x$ se

\begin{equation}
S^{i}x\in P_{\sigma_{i}(x)}\qquad\forall i\in\mathbb{Z}; \label{eq:hist1}
\end{equation}

l'insieme $P_{i}$, $i=1,2,3,...,n$, \` e la collezione di punti di $\Omega$ che ha le caratteristiche indicate dall'etichetta $i$, cosicch\'e l'unione $\cup_{i=1}^{n}P_{i}$ \` e l'insieme dei punti che hanno almeno una delle caratteristiche $i=1,2,...,n$ mentre $P_{0}=\Omega\setminus\cup_{i=1}^{n}P_{i}$ \` e l'insieme dei punti che non ne ha nessuna. Richiederemo sempre che gli elementi della partizione $\mathcal{P}$ (gli {\em atomi} della partizione) siano insiemi boreliani in $\mathcal{B}(\Omega)$; partizioni di questo tipo sono chiamate {\em partizioni boreliane}\index{partizione boreliana}.

Possiamo pensare le informazioni acquisite sull'evoluzione del sistema come una successione $\underline{\sigma}\in\{0,...,n\}^{\mathbb{Z}}$, generata prendendo un grande tempo di osservazione $N$ durante il quale il moto visita in successione i sottoinsiemi $P_{\sigma_{i}}$, $i=-N,...,N$; la sequenza $\underline{\sigma}^{N}\in\{0,...,n\}^{[-N,N]}$ che risulter\` a dalle osservazioni avr\` a la propriet\` a

\begin{equation}
\cap_{k=-N}^{N}S^{-k}P_{\sigma_{k}}\neq\emptyset, \label{eq:hist2}
\end{equation}

dal momento che $S^{k}x\in P_{\sigma_{k}}$ $\forall k$ (e quindi l'intersezione nella (\ref{eq:hist2}) conterr\` a almeno un punto).

Per quanto detto, \` e naturale pensare che un moto, o piuttosto la sua storia, abbia la {\em propriet\` a di intersezione finita}, ovvero che

\begin{equation}
\cap_{j\in J}S^{-j}P_{\sigma_{j}}\neq\emptyset\qquad\forall J\subset\mathbb{Z}, |J|<\infty, \label{eq:hist3}
\end{equation}

dove $|J|$ indica il numero di elementi di $J$. Quindi, indentificheremo lo spazio dei moti osservati in $\mathcal{P}$ con l'insieme delle sequenze infinite

\begin{equation}
\widehat{\Omega}=\{\underline{\sigma}|\underline{\sigma}\in\{0,...,n\}^{\mathbb{Z}},\cap_{j\in J}S^{-j}P_{\sigma_{j}}\neq\emptyset\;\forall J\subset\mathbb{Z}, |J|<\infty\}. \label{eq:hist4}
\end{equation}

Inoltre, diremo che una partizione $\mathcal{P}$ \` e {\em $S$-separante}\index{partizione $S$-separante} se la storia $\underline{\sigma}$ indentifica {\em univocamente} il punto $x$; questa propriet\` a equivale all'{\em espansivit\` a}\index{mappa espansiva} di $S$ in $\mathcal{P}$, ovvero al fatto che, per ogni $\underline{\sigma}\in\widehat{\Omega}$,\footnote{Il {\em diametro} di un insieme $G\subset\Omega$, dove $\Omega$ \` e uno spazio metrico, \` e la massima distanza tra due punti di $G$.}

\begin{equation}
\lim_{N\rightarrow\infty}\left(\mbox{diam}\cap_{j=-N}^{N}S^{-j}P_{\sigma_{j}}\right)=0. \label{eq:hist5}
\end{equation}

Riassumiano questi concetti nella seguente definizione.

\begin{definizione}{{\em (Moti simbolici)}}\label{def:motsimb}

Sia $(\Omega,S)$ un sistema dinamico topologico invertibile e sia $\mathcal{P}=\{P_{0},...,P_{n}\}$ una partizione di $\Omega$ in $n+1$, $n\geq 1$, insiemi boreliani.

\begin{enumerate}
\item Consideriamo l'insieme

\begin{equation}
\widehat{\Omega}=\left\{\underline{\sigma}|\underline{\sigma}\in\{0,...,n\}^{\mathbb{Z}},\cap_{j=-N}^{N}S^{-j}P_{\sigma_{j}}\neq\emptyset\;\forall N\right\}; \label{eq:hist6}
\end{equation}

chiamiamo $\widehat{\Omega}$ l'insieme delle {\em $(\mathcal{P},S)$-storie dei moti simbolici generati da $S$ in $\Omega$ e visti da $\mathcal{P}$}. Quando non ci sar\` a ambiguit\` a chiameremo $\widehat{\Omega}$ semplicemente {\em insieme dei moti simbolici}\index{insieme dei moti simbolici}.

Se $x\in\Omega$ chiameremo la $(\mathcal{P},S)$-storia di $x$ l'elemento $\underline{\sigma}(x)$ di $\widehat{\Omega}$ tale che

\begin{equation}
x\in S^{-j}P_{\sigma_{j}(x)}\qquad\forall j\in\mathbb{Z}. \label{eq;hist7}
\end{equation}
 
 \item Se la relazione $\underline{\sigma}(x)=\underline{\sigma}(x')$ implica che $x=x'$ diremo che $\mathcal{P}$ \` e {\em $S$-separante}. Se $\mathcal{P}$ e $S$ verificano la (\ref{eq:hist5}) diremo che $S$ \` e {\em $\mathcal{P}$-espansiva} o che \` e {\em espansiva in $\mathcal{P}$}.
 
 \item La corrispondenza definita dalla mappa $\Sigma: \Omega\rightarrow\widehat{\Omega}$ che associa ad ogni $x\in\Omega$ la sequenza $\underline{\sigma}(x)\in\widehat{\Omega}$ sar\` a chiamata il {\em codice della dinamica simbolica di $S$ rispetto a $\mathcal{P}$}.
 
 \item Infine, considerando gli insiemi $J=(j_{1},...,j_{q})\subset\mathbb{Z}$  e le sequenze $\underline{\sigma}_{J}\in\{0,...,n\}^{J}$ useremo la notazione
 
 \begin{equation}
 P_{\underline{\sigma}_{J}}^{J}\equiv P_{\sigma_{1}...\sigma_{q}}^{j_{1}...j_{q}}=\cap_{j\in J}S^{-j}P_{\sigma_{j}} \label{eq:hist8}
 \end{equation}
 
 per indicare i {\em punti la cui storia in $J$ \` e specificata da $\underline{\sigma}_{J}$}.
\end{enumerate}
\end{definizione}

\begin{oss}
\begin{enumerate}
\item Se $\Omega=\{0,...,n\}^{\mathbb{Z}}$, $S=\tau$ \` e la {\em traslazione}\footnote{L'operatore di traslazione nello spazio delle sequenze ``sposta'' la sequenza verso sinistra, ovvero $\tau\underline{\sigma}_{i}=\underline{\sigma}_{i+1}$.} in $\{0,...,n\}^{\mathbb{Z}}$ e $P_{i}=\{\underline{\sigma}|\sigma_{0}=i\}$, $i=0,1,...,n$, gli insiemi $P_{\underline{\sigma}}^{J}$ saranno indicati anche con $C_{\underline{\sigma}}^{J}$ e saranno chiamati {\em cilindri} di $\{0,...,n\}^{\mathbb{Z}}$ con ``base $J$ e specificazione $\underline{\sigma}$'', ovvero

\begin{equation}
C_{\underline{\sigma}}^{J}=\{\underline{\sigma}'|\underline{\sigma}'\in\{0,...,n\}^{\mathbb{Z}}, \sigma'_{j_{k}}=\sigma_{k}\quad\forall k=1,...,q\}, \label{eq:hist9}
\end{equation}

se $J=\{j_{1},...,j_{q}\}$ e $\underline{\sigma}=(\sigma_{1},...,\sigma_{q})\in\{0,...,n\}^{J}$.

\item In generale $\widehat{\Omega}\supset\Sigma(\Omega)$; questo vuol dire che non tutte lo storie sono permesse.

\end{enumerate}
\end{oss}

Se $(\Omega,S)$ \` e un sistema dinamico astratto e $\mathcal{P}$ \` e una partizione espansiva possiamo ``invertire'' la mappa codice $\Sigma$, ovvero partendo da una storia $\underline{\sigma}$  possiamo identificare il punto che la genera; dato un punto $\underline{\sigma}\in\widehat{\Omega}$, introduciamo l'insieme $\mathcal{X}(\underline{\sigma})\in\Omega$ come

\begin{equation}
\mathcal{X}(\underline{\sigma})=\cap_{j=-\infty}^{\infty}S^{-j}\overline{P}_{\sigma_{j}}, \label{eq:hist10}
\end{equation}

dove $\overline{P}_{j}$ \` e la chiusura di $P_{j}$. L'insieme $\mathcal{X}(\underline{\sigma})$ \` e non vuoto perch\'e $\Omega$ \` e compatto\footnote{Uno spazio \` e compatto se e solo se ogni sua collezione di sottoinsiemi chiusi $\{C_{\alpha}\}_{\alpha\in J}$ con la {\em propriet\` a di intersezione finita} \` e tale che $\cap_{\alpha\in J}C_{\alpha}\neq\emptyset$ (una collezione di insiemi ha la propriet\` a di intersezione finita se per ogni sottoinsieme finito $J'$ di $J$ si ha che $\cap_{\alpha\in J'}C_{\alpha}\neq\emptyset$); nel nostro caso la propriet\` a di intersezione finita \` e garantita dalla definizione di moto simbolico, e l'insieme $\Omega$ \` e compatto per definizione.}.

\begin{prop}{{\em (Codice della dinamica simbolica)}}

Sia $(\Omega,S)$ un sistema dinamico topologico invertibile; sia $\mathcal{P}=\{P_{0},...,P_{n}\}$ una partizione topologica di $\Omega$ sulla quale $S$ \` e espansiva.

\begin{enumerate}
\item L'insieme $\mathcal{X}(\underline{\sigma})$ definito nella (\ref{eq:hist10}) contiene uno ed un solo punto cosicch\'e possiamo definire la mappa $X:\widehat{\Omega}\rightarrow\Omega$ chiamando $X(\underline{\sigma})$ l'unico punto in $\mathcal{X}(\underline{\sigma})$ oppure come, con un leggero abuso di notazione,

\begin{equation}
\underline{\sigma}\rightarrow X(\underline{\sigma})=\cap_{j=-\infty}^{\infty}S^{-j}\overline{P}_{\sigma_{j}}.
\end{equation}

\item $X$ \` e una mappa continua.
\item $\Sigma^{-1}X^{-1}(x)=x$ per ogni $x\in\Omega$.
\item $X$ e $\Sigma$ sono una l'inversa dell'altra se considerate rispettivamente come mappe tra $\Omega$ e $\Sigma\Omega$, $\Sigma\Omega$ e $\Omega$.
\item $S(x)=X(\tau\Sigma(x))$, dove $(\tau\underline{\sigma})_{i}=\sigma_{i+1}$ \` e la traslazione di una unit\` a temporale a sinistra della sequenza $\underline{\sigma}$.
\end{enumerate}
\end{prop}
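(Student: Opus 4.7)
The plan is to prove the five assertions in order, deducing everything from compactness of $\Omega$, $\mathcal{P}$-expansiveness of $S$, and the very definitions of $\Sigma$, $\widehat{\Omega}$ and $X$; once uniqueness of the point in $\mathcal{X}(\underline{\sigma})$ is available, the remaining items (continuity, inversion, intertwining with the shift) reduce to short bookkeeping.

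For point 1, I first check $\mathcal{X}(\underline{\sigma})\neq\emptyset$: each $S^{-j}\overline{P}_{\sigma_j}$ is closed, because $S^{j}$ is continuous, and every finite sub-intersection $\cap_{|j|\leq N} S^{-j}\overline{P}_{\sigma_j}$ is non-empty by the definition (\ref{eq:hist6}) of $\widehat{\Omega}$, which already supplies a point in the smaller set $\cap_{|j|\leq N} S^{-j} P_{\sigma_j}$. The finite intersection property together with compactness of $\Omega$, as recalled in the footnote to (\ref{eq:hist10}), then forces the full intersection to be non-empty. Uniqueness follows from expansiveness (\ref{eq:hist5}): two distinct points $x\neq y$ in $\mathcal{X}(\underline{\sigma})$ would keep $\mathrm{diam}\bigl(\cap_{|j|\leq N} S^{-j}\overline{P}_{\sigma_j}\bigr)\geq d(x,y)>0$, contradicting the vanishing of the diameter.

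For point 2, I equip $\widehat{\Omega}$ with the subspace topology inherited from the product topology on $\{0,\ldots,n\}^{\mathbb{Z}}$, in which sequences close to $\underline{\sigma}$ are exactly those agreeing with it on a large symmetric window $|j|\leq N$. Given $\epsilon>0$ I pick $N$ so that the diameter in (\ref{eq:hist5}) is below $\epsilon$; any window-close $\underline{\sigma}'$ then satisfies $X(\underline{\sigma}')\in\cap_{|j|\leq N} S^{-j}\overline{P}_{\sigma_j}$, which forces $d(X(\underline{\sigma}),X(\underline{\sigma}'))<\epsilon$. Points 3--5 are then essentially unfoldings of definitions: the history $\Sigma(x)$ satisfies $x\in S^{-j}\overline{P}_{\sigma_j(x)}$ for every $j$, so $x\in\mathcal{X}(\Sigma(x))$, and by point 1 this forces $X(\Sigma(x))=x$, which is exactly $\Sigma^{-1}X^{-1}(x)=x$ (point 3) and makes $X$ the two-sided inverse of $\Sigma$ on $\Sigma(\Omega)$ (point 4, using that $\Sigma$ is injective thanks to expansiveness). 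Point 5 drops out from the identity $\Sigma(Sx)_j=\sigma_{j+1}(x)=(\tau\Sigma(x))_j$, so that $Sx=X(\Sigma(Sx))=X(\tau\Sigma(x))$.

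The main obstacle I foresee is the subtle distinction between the atoms $P_{\sigma_j}$ used in the expansiveness hypothesis (\ref{eq:hist5}) and their closures $\overline{P}_{\sigma_j}$ appearing in the definition (\ref{eq:hist10}) of $\mathcal{X}(\underline{\sigma})$: since intersection does not commute with closure, diameter control on the ``open'' intersection does not transfer automatically to the closed one. One should therefore either argue by approximation, recovering the closed-version diameter bound from the open one, or read (\ref{eq:hist5}) directly as a statement about the closures, which is the standard interpretation in the Markov-partition setting exploited in the later chapters.
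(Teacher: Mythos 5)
Your proof is correct and is the standard argument: the paper itself gives no proof of this proposition but defers to \cite{ergo}, and what you wrote (compactness plus the finite-intersection property for existence of $\mathcal{X}(\underline{\sigma})$, expansiveness for uniqueness and for the continuity of $X$, and definition-unfolding for items 3--5, including $\Sigma(Sx)=\tau\Sigma(x)$ for item 5) is precisely the argument in that reference. You were also right to flag the mismatch between the atoms $P_{\sigma_j}$ in (\ref{eq:hist5}) and the closures $\overline{P}_{\sigma_j}$ in (\ref{eq:hist10}): the intended reading is that the diameter condition holds for the closed sets as well (as it does for the rectangles $Q=\overline{\mathrm{int}(Q)}$ of the Markov pavements used later), and with that reading your uniqueness and continuity steps go through verbatim.
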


\begin{proof}
Vedere \cite{ergo}.
\end{proof}

Intuitivamente, ci aspettiamo che moti ``ragionevoli'' passino una frazione di tempo definita in ogni insieme ``ragionevole'' $E\subset\Omega$; tradotto in termini matematici, ci\`o vuol dire che, se $\chi_{E}$ \` e la funzione caratteristica\footnote{La funzione caratteristica $\chi_{E}$ \` e definita come 

\begin{equation}
\chi_{E}(x)=\left\{\begin{array}{c}1\qquad\mbox{se $x\in E$}\\ 0\qquad\mbox{altrimenti}.\end{array}\right.
\end{equation}

} dell'insieme $E$, il limite

\begin{equation}
\lim_{N\rightarrow\infty}N^{-1}\sum_{j=0}^{N-1}\chi_{E}(S^{j}x)=\nu_{x}(E) \label{eq:hist11}
\end{equation}

deve esistere, e con esso la {\em frequenza di visita dell'insieme $E$ da parte moti che hanno origine in $x$}\index{frequenza di visita}. Se $E=P_{j}$ \` e un elemento della partizione $\mathcal{P}=\{P_{1},...,P_{n}\}$ di $\Omega$ il limite (\ref{eq:11}) equivale a 

\begin{equation}
\nu_{x}(P_{j})=\lim_{N\rightarrow\infty}N^{-1}\{\mbox{numero indici $h$ tra $0$ e $N-1$ tali che $\underline{\sigma}_{h}(x)=j$}\}. \label{eq:hist12}
\end{equation}

Analogamente, definiamo la {\em frequenza} di apparizione {\em nel futuro}\index{frequenza di apparizione nel futuro} di un blocco di $p$ simboli $\sigma_{1},...,\sigma_{p}$ in una successione $\underline{\sigma}(x)$ come

\begin{equation}
N^{-1}\{\mbox{numero di indici $h$ tra $0$ e $N-1$ tali che $\underline{\sigma}_{j_{i}+h}(x)=\sigma_{i}, i=1,...p$}\}, \label{eq:hist12b}
\end{equation}

oppure, equivalentemente,

\begin{equation}
p\left(\left.\begin{array}{c}j_{1}...j_{p}\\\sigma_{1}...\sigma_{p}\end{array}\right|\underline{\sigma}(x)\right)=\lim_{N\rightarrow\infty}N^{-1}\sum_{h=0}^{N-1}\chi_{P_{\sigma_{1}...\sigma_{p}}^{j_{1}...j_{p}}}(S^{h}x); \label{eq:hist13}
\end{equation}

il numero definito tra parentesi graffe nella (\ref{eq:hist12b}) \` e chiamato {\em numero di stringhe omologhe\index{stringa omologa} a $\left(\begin{array}{c}J\\\underline{\sigma}\end{array}\right)\equiv\left(\begin{array}{c}j_{1}...j_{p}\\\sigma_{1}...\sigma_{p}\end{array}\right)$ che appaiono in $\sigma(x)$ tra $0$ e $N-1$}, e il valore del limite (\ref{eq:hist13}) sar\` a la {\em frequenza di apparizione della porzione di storia $\left(\begin{array}{c}J\\\underline{\sigma}\end{array}\right)$ in $\underline{\sigma}(x)$}.

Riassumiamo queste nozioni nella seguente definizione.

\begin{definizione}{{\em (Sequenze con frequenze definite)}}

Siano $\widehat{\underline{\sigma}}\in\{0,...,n\}^{\mathbb{Z}}$, $\{j_{1},...,j_{p}\}\in\mathbb{Z}$ e $\sigma_{1},...,\sigma_{p}\in\{0,...,n\}$.

\begin{enumerate}
\item Definiamo il {\em numero di stringhe omologhe a $\left(\begin{array}{c}j_{1}...j_{p}\\\sigma_{1}...\sigma_{p}\end{array}\right)$ che appaiono in $\widehat{\underline{\sigma}}$ tra $0$ e $N-1$ come il numero di indici $h$ tra $0$ e $N-1$ tali che $\widehat{\underline{\sigma}}_{j_{1}+h}=\sigma_{1},...,\widehat{\underline{\sigma}}_{j_{p}+h}=\sigma_{p}$}; indichiamo questo numero con $\mathcal{N}_{N}\left(\left.\begin{array}{c}j_{1}...j_{p}\\\sigma_{1}...\sigma_{p}\end{array}\right|\widehat{\underline{\sigma}}\right)$.
\item Definiamo la {\em frequenza di apparizione in $\widehat{\underline{\sigma}}$ della stringa omologa a $\left(\begin{array}{c}j_{1}...j_{p}\\\sigma_{1}...\sigma_{p}\end{array}\right)$} come il limite, se esiste,

\begin{equation}
\lim_{N\rightarrow\infty}N^{-1}\mathcal{N}_{N}\left(\left.\begin{array}{c}j_{1}...j_{p}\\\sigma_{1}...\sigma_{p}\end{array}\right|\widehat{\underline{\sigma}}\right)=p\left(\left.\begin{array}{c}j_{1}...j_{p}\\\sigma_{1}...\sigma_{p}\end{array}\right|\widehat{\underline{\sigma}}\right). \label{eq:hist14}
\end{equation}
\item Una sequenza $\widehat{\underline{\sigma}}$ sar\` a detta {\em a frequenze definite} se il limite (\ref{eq:hist14}) esiste per ogni $j_{1},...j_{p}\in\mathbb{Z}$, per ogni $\sigma_{1},...\sigma_{p}\in\{0,...,n\}$, e per ogni $p=1,2,...$.
\end{enumerate}
\end{definizione}

Per concludere la sezione, introduciamo il concetto di {\em ergodicit\` a}\index{ergodicit\` a} per sistemi dinamici astratti i cui moti possono essere rappresentati simbolicamente.

\begin{definizione}{{\em (Sequenze ergodiche e sequenze mescolanti)}}

Una sequenza $\underline{\sigma}\in\{0,...,n\}^{\mathbb{Z}}$ \` e detta {\em ergodica} se ha frequenze definite e se 

\begin{eqnarray}
\lim_{N\rightarrow\infty}N^{-1}\sum_{k=0}^{N-1}p\left(\left.\begin{array}{cc}j_{1}...j_{p}&i_{1}+k...i_{q}+k\\\sigma'_{1}...\sigma'_{p}&\sigma''_{1}...\sigma''_{q}\end{array}\right|\underline{\sigma}\right)=\nonumber\\=p\left(\left.\begin{array}{c}j_{1}...j_{p}\\\sigma'_{1}...\sigma'_{p}\end{array}\right|\underline{\sigma}\right)p\left(\left.\begin{array}{c}i_{1}...i_{q}\\\sigma''_{1}...\sigma''_{q}\end{array}\right|\underline{\sigma}\right) \label{eq:hist14a}
\end{eqnarray}

per ogni $p,q=1,2...$, $\sigma'_{1},...,\sigma'_{p}$, $\sigma''_{1},...,\sigma''_{q}\in\{0,...,n\}$, $\{j_{1},...,j_{p}\}$, $\{i_{1},...,i_{q}\}\subset\mathbb{Z}$. Diremo che $\underline{\sigma}$ \`e {\em mescolante} se

\begin{equation}
\lim_{k\rightarrow\infty}p\left(\left.\begin{array}{cc}j_{1}...j_{p}&i_{1}+k...i_{q}+k\\\sigma'_{1}...\sigma'_{p}&\sigma''_{1}...\sigma''_{q}\end{array}\right|\underline{\sigma}\right)=p\left(\left.\begin{array}{c}j_{1}...j_{p}\\\sigma'_{1}...\sigma'_{p}\end{array}\right|\underline{\sigma}\right)p\left(\left.\begin{array}{c}i_{1}...i_{q}\\\sigma''_{1}...\sigma''_{q}\end{array}\right|\underline{\sigma}\right) \label{eq:hist15}
\end{equation}

per tutte le possibili scelte di indici.
\end{definizione}

\begin{oss}
Ovviamente ogni seguenza mescolante \` e anche ergodica.
\end{oss}

Con la prossima proposizione capiremo l'importanza del concetto di ergodicit\` a; vedremo che l'esistenza delle frequenze di visita delle $(\mathcal{P},S)$-storie dei punti di un sistema dinamico \` e un fatto molto generale.

\begin{prop}{{\em (Teorema di Birkhoff)}}\index{teorema di Birkhoff}\label{th:birk}

Sia $(\Omega,S,\mu)$ un sistema dinamico metrico discreto invertibile e sia $\mathcal{B}$ la $\sigma$-algebra sulla quale $\mu$ \` e definita. Se $f$ \` e una funzione $\mu$-misurabile il limite

\begin{equation}
\lim_{N\rightarrow\infty}N^{-1}\sum_{j=0}^{N-1}f(S^{j}x)=\overline{f}(x) \label{eq:birk}
\end{equation}

esiste $\mu$-quasi ovunque, e se $f\in L_{1}(\mu)$ il limite (\ref{eq:birk}) \` e raggiunto in $L_{1}(\mu)$.

 \end{prop}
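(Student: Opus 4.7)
Il piano è quello di seguire la dimostrazione classica basata sul teorema ergodico massimale di Hopf-Yosida. Pongo $S_N f(x) = \sum_{j=0}^{N-1} f(S^j x)$ e definisco $\overline{f}(x) = \limsup_{N\to\infty} N^{-1} S_N f(x)$, $\underline{f}(x) = \liminf_{N\to\infty} N^{-1} S_N f(x)$; entrambe le quantità sono ben definite a valori in $[-\infty, +\infty]$ per ogni funzione $\mu$-misurabile. Il primo passo, immediato, consiste nel verificare che $\overline{f}$ e $\underline{f}$ sono $S$-invarianti: la relazione $S_N f(Sx) = S_{N+1} f(x) - f(x)$ mostra infatti che $N^{-1} S_N f(Sx)$ e $N^{-1} S_N f(x)$ differiscono per termini di ordine $1/N$, che svaniscono nel limite purché $f$ sia finita nel punto considerato.

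Il cuore della dimostrazione è il teorema ergodico massimale: data $g\in L_1(\mu)$, posto $E = \{x : \sup_{N\ge 1} S_N g(x) > 0\}$, vale $\int_E g\, d\mu \ge 0$. Dimostrerei questo risultato tramite l'elegante argomento combinatorio di Garsia, definendo $G_N(x) = \max\{0, S_1 g(x), \ldots, S_N g(x)\}$ e osservando che $G_N(Sx) + g(x) \ge S_k g(x)$ per ogni $k\le N$, da cui $g(x) \ge G_N(x) - G_N(Sx)$ sull'insieme $\{G_N > 0\}$; integrando e usando l'invarianza di $\mu$ si ottiene la stima desiderata.

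Applicherei quindi il teorema massimale agli insiemi invarianti $A_{\alpha,\beta} = \{\underline{f} < \alpha < \beta < \overline{f}\}$, al variare di $\alpha, \beta \in \mathbb{Q}$, lavorando con le funzioni $(f - \beta)\chi_{A_{\alpha,\beta}}$ e $(\alpha - f)\chi_{A_{\alpha,\beta}}$. Le disuguaglianze che ne risultano obbligano $\beta\, \mu(A_{\alpha,\beta}) \le \alpha\, \mu(A_{\alpha,\beta})$, da cui $\mu(A_{\alpha,\beta}) = 0$ per ogni coppia razionale; prendendo l'unione numerabile concludo che $\overline{f} = \underline{f}$ quasi ovunque, cioè che il limite $\overline{f}(x)$ esiste $\mu$-quasi ovunque. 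Per la convergenza in $L_1(\mu)$, quando $f\in L_1(\mu)$, userei un argomento di troncamento: scritta $f = f_M + r_M$ con $f_M$ limitata, la parte $N^{-1} S_N f_M$ converge in $L_1$ per convergenza dominata di Lebesgue, mentre l'invarianza di $\mu$ dà $\|N^{-1} S_N r_M\|_1 \le \|r_M\|_1$, che può essere reso arbitrariamente piccolo scegliendo $M$ grande.

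L'ostacolo principale è senza dubbio la dimostrazione del teorema ergodico massimale: sebbene il risultato abbia un enunciato trasparente, la sua prova richiede l'argomento combinatorio di Garsia o l'equivalente tecnica delle ``catene ascendenti'' di Hopf, che sono le uniche fonti di non-banalità dell'intero ragionamento. Una volta disponibile tale lemma, il resto della dimostrazione procede in maniera essenzialmente formale.
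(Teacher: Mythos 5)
La tesi in realt\`a non dimostra questa proposizione: il testo rimanda semplicemente a \cite{kh}, \cite{ergo}, \cite{AA}, quindi non c'\`e una dimostrazione interna con cui confrontare la tua. La strada che proponi \`e quella classica (teorema ergodico massimale provato con l'argomento di Garsia, insiemi invarianti $A_{\alpha,\beta}=\{\underline{f}<\alpha<\beta<\overline{f}\}$, troncamento per la convergenza in $L_{1}$) ed \`e sostanzialmente corretta; \`e in effetti la dimostrazione standard contenuta nei riferimenti citati. Due punti andrebbero per\`o precisati. Primo: il teorema massimale e le disuguaglianze $\beta\,\mu(A_{\alpha,\beta})\leq\int_{A_{\alpha,\beta}}f\,d\mu\leq\alpha\,\mu(A_{\alpha,\beta})$ richiedono $f\in L_{1}(\mu)$ (o almeno una riduzione preliminare, ad esempio per troncamento), mentre l'enunciato parla di $f$ soltanto $\mu$-misurabile: per quel caso devi dire esplicitamente come ricondurti al caso integrabile, altrimenti l'esistenza $\mu$-quasi ovunque del limite non segue dal tuo argomento cos\`i com'\`e scritto. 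Secondo: nel passo finale in $L_{1}$, oltre a $\|N^{-1}S_{N}r_{M}\|_{1}\leq\|r_{M}\|_{1}$ serve anche controllare la distanza tra le funzioni limite, cio\`e $\|\overline{f}-\overline{f_{M}}\|_{1}$, cosa che si ottiene ad esempio con il lemma di Fatou (che d\`a $\|\overline{f}-\overline{f_{M}}\|_{1}\leq\|r_{M}\|_{1}$); senza questo confronto la stima a tre termini non si chiude. Con queste due integrazioni la dimostrazione \`e completa e coincide con quella a cui la tesi rinvia.
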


\begin{cor}\label{cor:birk}

Sotto le ipotesi della proposizione (\ref{th:birk}), il limite (\ref{eq:birk}) pu\` o essere riscritto come

\begin{equation}
\lim_{N\rightarrow\infty}N^{-1}\sum_{j=0}^{N-1}f(S^{j}x)=\int\mu(dy)f(y)\label{eq:condergo}\qquad\mbox{per ogni $f\in L_{1}(\mu)$.}
\end{equation}
\end{cor}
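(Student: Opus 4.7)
The plan is to deduce this from the Birkhoff theorem just stated, together with ergodicity of $(\Omega,S,\mu)$, which must be read as an implicit extra hypothesis: in general the Birkhoff limit $\overline f(x)$ equals the conditional expectation of $f$ with respect to the $\sigma$-algebra $\mathcal I$ of $S$-invariant sets, and this reduces to the constant $\int f\,d\mu$ precisely when the system is ergodic, i.e.\ when every $S$-invariant measurable function is $\mu$-a.e.\ constant. So the argument splits into three short steps: obtain the limit from Birkhoff, show it is $S$-invariant, identify the constant.

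First I would invoke Proposition \ref{th:birk} to get the $\mu$-a.e.\ limit $\overline f(x)=\lim_{N}N^{-1}\sum_{j=0}^{N-1}f(S^{j}x)$ and, when $f\in L_{1}(\mu)$, the convergence in $L_{1}(\mu)$ as well. Second I would verify $S$-invariance of $\overline f$: the difference $\overline f(Sx)-\overline f(x)$ equals the $N\to\infty$ limit of $N^{-1}\bigl[f(S^{N}x)-f(x)\bigr]$, which vanishes $\mu$-a.e.\ (and in $L_{1}$) because $f\in L_{1}(\mu)$ forces $N^{-1}f(S^{N}x)\to 0$ (by Borel--Cantelli applied to the invariant measure $\mu$, or directly from the existence of the Ces\`aro limit). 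Third I would use ergodicity to conclude that $\overline f$ is $\mu$-a.e.\ equal to a constant $c$, and then pin $c$ down by integrating: the $L_{1}$ convergence lets me commute limit and integral, while the $S$-invariance of $\mu$ gives $\int f(S^{j}x)\,\mu(dx)=\int f\,d\mu$ for every $j$, whence $c=\int\overline f\,d\mu=\int f\,d\mu$.

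The only real obstacle is the interpretive one flagged in the first paragraph: strictly under the hypotheses stated in Proposition \ref{th:birk}, the corollary fails unless ergodicity is granted, since otherwise the right-hand side must be replaced by $E_{\mu}[f\,|\,\mathcal I]$. Once that assumption is added, all the hard analytic content (existence of the limit, passage to the limit under the integral sign) is already packaged inside Birkhoff's theorem, and no further estimate is required.
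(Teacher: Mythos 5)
Your argument is correct, and there is nothing in the paper to compare it against in detail: the thesis does not prove the corollary, it simply refers to the literature (\cite{kh}, \cite{ergo}, \cite{AA}), and your three steps (existence of the Birkhoff limit $\overline f$, its $S$-invariance via $N^{-1}f(S^{N}x)\to 0$, identification of the constant by $L_{1}$-convergence together with the $S$-invariance of $\mu$) constitute the standard route. Your interpretive caveat is also well taken and not pedantic: under the bare hypotheses of Proposition \ref{th:birk} the identity (\ref{eq:condergo}) is false in general (take $S$ the identity map), the correct right-hand side being the conditional expectation $E_{\mu}[f\mid\mathcal I]$ with respect to the $\sigma$-algebra of invariant sets, and the constant value $\int f\,d\mu$ is recovered exactly when the system is ergodic. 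The paper itself implicitly acknowledges this when, a few lines later, it observes that the conditions (\ref{eq:defergo}) and (\ref{eq:osserg2}) are equivalent, i.e.\ that the conclusion of the corollary characterizes ergodicity rather than following from invariance alone; so reading ergodicity as an implicit hypothesis, as you do, is the right way to make the statement true, and with that reading your proof is complete.
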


\begin{proof}
Per una dimostrazione di questo teorema e del suo corollario vedere \cite{kh}, \cite{ergo}, \cite{AA}.
\end{proof}

 \begin{oss}
 \item Se $\mathcal{P}=\{P_{1},...,P_{n}\}$ \` e una partizione di $\Omega$ in insiemi misurabili il teorema di Birkhoff implica che, scegliendo come funzione $f$ la funzione indicatrice $\chi_{\cap_{k=1}^{p}S^{-j_{k}}P_{\sigma_{k}}}\equiv \chi_{P_{\sigma_{1}...\sigma_{p}}^{j_{1}...j_{p}}}$,
 
 \begin{equation}
 \lim_{N\rightarrow\infty}N^{-1}\sum_{k=0}^{N-1}\chi_{P_{\sigma_{1}...\sigma_{p}}^{j_{1}...j_{p}}}(S^{k}x)=p\left(\begin{array}{cc}j_{1}...j_{p}\\\sigma_{1}...\sigma_{p}\end{array}\right),
 \end{equation}
 
 a meno di un insieme di misura nulla; ci\` o vuol dire che le $(\mathcal{P},S)$-storie di $\mu$-quasi tutti i punti di $\Omega$ hanno frequenze definite.  Inoltre, grazie al corollario \ref{cor:birk}
 
 \begin{equation}
 \lim_{N\rightarrow\infty}N^{-1}\sum_{k=0}^{N-1}\chi_{P_{\sigma_{1}...\sigma_{p}}^{j_{1}...j_{p}}}(S^{k}x)=\mu\left(\cap_{k=1}^{p}S^{-j_{k}}P_{\sigma_{k}}\right), \label{eq:osserg2}
\end{equation}

ovvero per $\mu$-quasi tutti i dati iniziali la misura di un insieme \` e asintoticamente uguale alla frazione di tempo che i moti trascorrono visitando l'insieme.
\end{oss}

Questo risultati ci permettono di introdurre due definizioni pi\` u naturali di sistema ergodico e sistema mescolante.

\begin{definizione}
Se $(\Omega,S,\mu)$ \` e un sistema dinamico metrico invertibile diremo che \` e {\em ergodico} se per ogni $E,F\in\mathcal{B}$ si ha che

\begin{equation}
\lim_{N\rightarrow\infty}N^{-1}\sum_{j=0}^{N-1}\mu(E\cap S^{-j}F)=\mu(E)\mu(F);\label{eq:defergo}
\end{equation}

invece, diremo che il sistema \` e {\em mescolante} se

\begin{equation}
\lim_{j\rightarrow\infty}\mu(E\cap S^{-j}F)=\mu(E)\mu(F). \label{eq:defmesc}
\end{equation}
\end{definizione}

\begin{oss}
Si pu\` o dimostrare che le condizioni (\ref{eq:defergo}), (\ref{eq:osserg2}) sono {\em equivalenti}, {\em cf.} \cite{ergo}, \cite{AA}, {\em i.e.} che si implicano a vicenda.
\end{oss}

\section{Pavimentazioni markoviane}

Come abbiamo visto nella sezione precedente, l'evoluzione di un sistema dinamico in uno spazio $\Omega$ pu\` o essere schematizzata attraverso l'azione di una traslazione $\tau$ su delle sequenze di simboli $\underline{\sigma}$. Chiaramente, con ci\` o non abbiamo reso pi\` u semplice il problema dello studio dell'evoluzione di un sistema dinamico; tutta la complessit\` a \` e racchiusa ora nel {\em codice}\index{codice} $X(\underline{\sigma})$, che permette di associare ad ogni punto $x\in\Omega$ una $(\mathcal{P},S)$-storia in $\mathcal{P}$. Come vedremo, per una classe di sistemi ``molto caotici'' (che definiremo tra poco) risulter\` a relativamente (e inaspettatamente) facile descrivere le traiettorie mediante dinamica simbolica; in particolare, saremo in grado di costruire esplicitamente la {\em misura di probabilit\` a invariante del sistema}.

Prima di capire come sia possibile construire in modo concreto una rappresentazione simbolica, introduciamo la nozione di {\em compatibilit\` a} tra simboli $\sigma_{i}\in\underline{\sigma}$.

\begin{definizione}{\em{(Matrice di compatibilit\` a)}}\label{def:matrcomp}

Una matrice T $(n+1)\times(n+1)$ con elementi $T_{\sigma\sigma'}$ uguali a $0$ o a $1$ sar\` a chiamata una {\em matrice di compatibilit\` a}. Questa matrice sar\` a detta {\em transitiva} se per ogni coppia $\sigma$, $\sigma'$ esiste un intero $a_{\sigma\sigma'}$ tale che $T_{\sigma\sigma'}^{1+a_{\sigma\sigma'}}>0$ per ogni $\sigma$, $\sigma'$, mentre sar\` a detta {\em mescolante} se esiste $a\geq 0$ tale che $T^{1+a}_{\sigma\sigma'}>0$ per ogni $\sigma$, $\sigma'$; chiameremo $a$ il {\em tempo di mescolamento} di $T$.

Una sequenza $\underline{\sigma}\in\{0,...,n\}^{\mathbb{Z}}$ sar\` a detta {\em $T$-compatibile} o {\em ammissibile} se e solo se ogni coppia di simboli adiacenti che appare in $\underline{\sigma}$ \` e ammissibile, ovvero $T_{\sigma\sigma'}=1$ $\forall i\in\mathbb{Z}$, o equivalentemente se e solo se

\begin{equation}
\prod_{i=-\infty}^{\infty}T_{\sigma_{i}\sigma_{i+1}}=1. \label{eq:comp1} 
\end{equation}
  
  Chiameremo $\{0,...,n\}^{\mathbb{Z}}_{T}\subset\{0,...,n\}^{\mathbb{Z}}$ l'insieme (chiuso e invariante per traslazioni) delle sequenze $\underline{\sigma}$ che verificano la (\ref{eq:comp1}).
  
\end{definizione}

La nozione di compatibilit\` a tiene conto essenzialmente del fatto che non tutte le traiettorie immaginabili sono permesse dall'evoluzione del sistema; l'insieme delle sequenze compatibili equivale all'insieme di tutte le traiettorie in $\Omega$ permesse dalla mappa $S$. La codifica di una traiettoria in una sequenza di simboli \` e resa possibile da una {\em pavimentazione markoviana}\index{pavimentazione markoviana}.

\begin{definizione}{{\em (Pavimentazioni markoviane)}}\label{def:pavmark}

Sia $(\Omega,S)$ un sistema dinamico topologico invertibile. Dato un {\em pavimento} $\mathcal{Q}=\{Q_{1},...,Q_{q}\}$ di $\Omega$, {\em i.e.} una collezioni di insiemi che ricopre $\Omega$ tale che $Q_{i}\cap Q_{j}=\emptyset$ se $i\neq j$, poniamo $T_{\sigma\sigma'}=1$ se\footnote{L'insieme $\mbox{int}(E)\subset E$ corrisponde, indicando con $\partial E$ il {\em bordo} di E, a $E\setminus\partial E$.} $\mbox{{\em int}}(Q_{\sigma})\cap\mbox{{\em int}}(S^{-1}Q_{\sigma'})\neq\emptyset$ e $T_{\sigma\sigma'}=0$ altrimenti, con $\sigma,\sigma'\in\{1,...,q\},q\geq2$. Diremo che $\mathcal{Q}$ \` e {\em markoviana} se le seguenti condizioni sono verificate.

\begin{enumerate}
\item L'insieme

\begin{equation}
\chi(\underline{\sigma})=\bigcap_{k=-\infty}^{+\infty}S^{-k}Q_{\sigma_{k}} \label{eq:pavmark0}
\end{equation}

\` e non vuoto e consiste di un singolo punto $X(\underline{\sigma})$ tale che

\begin{equation}
\prod_{i=-\infty}^{+\infty}T_{\sigma_{i}\sigma_{i+1}}=1. \label{eq:pavmark}
\end{equation}

\item La corrispondenza $\underline{\sigma}\rightarrow X(\underline{\sigma})$ tra $\{1,...,q\}^{\mathbb{Z}}_{T}$ e $\Omega$ \` e H\"older continua, ovvero esistono delle costanti $C,a>0$ tali che

\begin{equation}
d(X(\underline{\sigma}),X(\underline{\sigma}'))\leq Cd(\underline{\sigma};\underline{\sigma}')^{a},\label{eq:pavmark2}
\end{equation}

dove $d$ \` e la {\em distanza}\index{distanza tra sequenze di simboli} tra $\underline{\sigma}$, $\underline{\sigma}'$ in $\{1,...,q\}^{\mathbb{Z}}$ definita come

\begin{equation}
d(\underline{\sigma},\underline{\sigma}')=\exp\left(-\nu(\underline{\sigma},\underline{\sigma}')\right), \label{eq:pavmark3}
\end{equation}

con $\nu(\underline{\sigma},\underline{\sigma}')$ pari al pi\` u grande intero $j$ tale che $\sigma_{i}=\sigma'_{i}$ per $|i|\leq j$.

\item Il numero di sequenze compatibili mappate in un certo punto $x$ pu\` o essere maggiorato con una costante $M<\infty$, ovvero la mappa inversa $X^{-1}$ verifica la propriet\` a $|X^{-1}(x)|\leq M$, per ogni $x\in\Omega$. Il numero $M$ sar\` a chiamato {\em molteplicit\` a del codice}.

\item Ponendo $\partial_{i}=\partial Q_{i}$, $i=1,...,q$ e $\partial=\bigcup_{i=1}^{q}\partial_{i}$ esistono due insiemi chiusi tali che

\begin{equation}
\partial = \partial^{+}\cup\partial^{-},\qquad S\partial^{-}\subset\partial^{-},\qquad S^{-1}\partial^{+}\subset\partial^{+},\label{eq:pavmark4}
\end{equation}

ovvero l'insieme dei bordi degli elementi della partizione pu\` o essere diviso in due sottoinsiemi, che, rispettivamente, vengono contratti ed espansi dall'azione di $S$.

\end{enumerate}

\end{definizione}

\begin{oss}
\begin{enumerate}
\item La continuit\` a di $X$ implica che l'immagine inversa di un insieme boreliano tramite $X^{-1}$ sia ancora un insieme boreliano (perch\'e ogni funzione continua \` e Borel - misurabile, {\em cf.} \cite{kolmo}).
\item $X$ permette di codificare la mappa $S$ in dinamica simbolica, dal momento che $X(\tau\underline{\sigma})=SX(\underline{\sigma})$.
\item La definizione di pavimentazione di Markov che abbiamo appena dato in realt\` a descrive una partizione $S$-separante, {\em cf.} definizione \ref{def:motsimb}; una partizione non $S$-separante pu\` o essere ottenuta eliminando la condizione che l'insieme $\chi(\underline{\sigma})$, {\em cf.} formula (\ref{eq:pavmark0}), consista di un solo punto.
\item \`E chiaro che ogni punto appartenente all'insieme $\Omega\setminus \cup_{i}S^{-i}\delta$ sar\` a l'immagine di una sequenza\footnote{Basta compilare una lista con tutti gli elementi del pavimento visitati dal punto $x$ evoluto con $S^{k}$, con $-\infty\leq k\leq\infty$.}, mentre \` e meno ovvio quale sia il modo di rappresentare i punti che appartengono all'insieme dei bordi di $\mathcal{Q}$. Se $x\in\cup_{i}S^{-i}\partial$ allora esiste $\sigma$ tale che $x\in Q_{\sigma}$ (dal momento che $Q_{\sigma}=\partial Q_{\sigma}\cup \mbox{{\em int}}(Q_{\sigma})=\overline{\mbox{{\em int}}(Q_{\sigma})}$\footnote{L'insieme $\overline{E}$ rappresenta la {\em chiusura}\index{chiusura di un insieme} di $E$, ovvero $E$ completato con i limiti delle successioni in esso contenute.}); poniamo $\sigma=\sigma_{0}$, e deve esistere $\sigma'$ tale che $\mbox{{\em int}}(Q_{\sigma_{0}})\cap\mbox{{\em int}}(S^{-1}Q_{\sigma'})\neq\emptyset$ e $Sx\in Q_{\sigma'}$. Quindi poniamo $\sigma_{1}=\sigma'$, e per costruzione $T_{\sigma_{0}\sigma_{1}}=1$; continuando questo ragionamento si costruisce la sequenza $\underline{\sigma}\in\{1,...,q\}^{\mathbb{Z}}_{T}$ tale che $x\in S^{-k}Q_{\sigma_{k}}$ per ogni $k\in\mathbb{Z}$. Quindi, $x=X(\underline{\sigma})$.
\item\label{oss:isomorf} La mappa codice $X$ pu\` o essere usata per definire una misura di probabilit\` a nello spazio delle sequenze. Infatti, se $m$ \` e una misura di probabilit\` a nello spazio delle sequenze compatibili $\{1,...,q\}^{\mathbb{Z}}_{T}$ la relazione

\begin{equation}
Am(E)=m(X^{-1}E)\qquad\mbox{per ogni $E\in\mathcal{B}(\Omega)$}\label{eq:museq}
\end{equation}

definisce una misura di probabilit\` a $Am$ su $\Omega$. La mappa $A$ trasforma misure invarianti per $S$ in misure invarianti per $\tau$, misure ergodiche rispetto a $S$ in misure ergodiche rispetto a $\tau$ ecc. (per una dimostrazione rimandiamo a \cite{ergo}.)

\end{enumerate}
\end{oss}

Quest'ultima osservazione \` e molto importante; come vedremo, la codifica in simboli della misura di volume di un sistema ``molto caotico'' assumer\` a la forma della misura di probabilit\` a che descrive un modello di spin unidimensionale, interagente attraverso potenziali a molti corpi che decadono esponenzialmente.

\subsection{Pavimentazioni markoviane per sistemi di Anosov}\label{sez:pavan}

Dopo aver introdotto la nozione di pavimentazione di Markov e di codice della dinamica simbolica, cerchiamo adesso di capire come {\em costruire} una pavimentazione di Markov; introduciamo una classe di sistemi dinamici, i {\em sistemi di Anosov}\index{sistema di Anosov} o {\em sistemi iperbolici lisci}\index{sistema iperbolico liscio}, per i quali ci\` o pu\` o essere fatto in modo relativamente facile.

Un semplice esempio di sistema di Anosov \` e il {\em gatto di Arnold}\index{gatto di Arnold}, {\em cf.} \cite{AA}; lo spazio nel quale questo sistema \` e definito \` e il toro $\mathbb{T}^{2}$, ovvero ci\` o che si ottiene identificando i lati opposti del quadrato $[0,2\pi]\times[0,2\pi]$, e l'evoluzione discreta \` e descritta da

\begin{equation}
S\left(\begin{array}{c}\varphi_{1}\\\varphi_{2}\end{array}\right)=\left(\begin{array}{cc}1&1 \\ 1&2\end{array}\right)\left(\begin{array}{c}\varphi_{1}\\\varphi_{2}\end{array}\right)\quad\mbox{mod $2\pi$}. \label{eq:arndcat}
\end{equation}

\` E facile verificare che la matrice $S$ ha due autovalori distinti $\lambda_{+}$, $\lambda_{-}$, $\lambda_{+}=\lambda_{-}^{-1}$, tali che

\begin{eqnarray}
\lambda_{+}&=&\frac{3+\sqrt{5}}{2}\\
\lambda_{-}&=&\frac{3-\sqrt{5}}{2};
\end{eqnarray}

gli autovettori $\underline{v}_{+}$, $\underline{v}_{-}$ corrispondenti a $\lambda_{+}$, $\lambda_{-}$ sono dati da

\begin{eqnarray}
\underline{v}_{+}&=&\left(\begin{array}{c} 1\\\lambda-1\end{array}\right)\\
\underline{v}_{-}&=&\left(\begin{array}{c} 1\\\lambda^{-1}-1\end{array}\right)\qquad\mbox{con $\lambda=\lambda_{-}$}.
\end{eqnarray}

Dal momento che $v_{+}$, $v_{-}$ hanno pendenza che non \` e multiplo razionale di $\pi$, le due rette $W^{u}(\underline{\varphi})$, $W^{s}(\underline{\varphi})$ riempiono densamente il toro $\mathbb{T}^{2}$, {\em cf.} \cite{AA}, e inoltre $W^{u}(\underline{\varphi})$, $W^{s}(\underline{\varphi})$ sono (banalmente) {\em covarianti}, ovvero $SW^{a}(\underline{\varphi})=W^{a}(S\underline{\varphi})$ ($a=u,s$); poich\'e $\lambda_{+}>1$ e $\lambda_{-}<1$, due punti sulla stessa {\em variet\` a stabile}\index{variet\` a stabile} $W^{s}$, o sulla stessa {\em variet\` a instabile}\index{variet\` a instabile} $W^{u}$, si avvicineranno , o allontaneranno, esponenzialmente. Tradotto in termini matematici, per ogni $n>0$ si ha che

\begin{eqnarray}
d(S^{n}\underline{\psi},S^{n}\underline{\psi}')&\leq&\lambda^{n}d(\underline{\psi},\underline{\psi}')\qquad\mbox{per ogni $\underline{\psi},\underline{\psi}'\in W^{s}(\underline{\varphi})$}\\
d(S^{-n}\underline{\psi},S^{-n}\underline{\psi}')&\leq&\lambda^{n}d(\underline{\psi},\underline{\psi}')\qquad\mbox{per ogni $\underline{\psi},\underline{\psi}'\in W^{u}(\underline{\varphi})$},
\end{eqnarray}

se $d$ \` e la distanza misurata sulla variet\` a stabili e instabili. Quindi, poich\' e le variet\` a stabili e instabili sono dense in $\mathbb{T}^{2}$, ogni punto del toro sotto l'azione di $S$  \` e fortemente instabile rispetto a qualunque altro; questa situazione giustifica l'aggettivo {\em iperbolico}.

Ispirati da questo esempio, possiamo dare una definizione generale di {\em sistema di Anosov} (o {\em sistema iperbolico liscio}).

\begin{definizione}{{\em (Sistema iperbolico liscio o sistema di Anosov)}} \label{def:Anosov}

Sia $(\Omega,S)$ un sistema dinamico definito su una variet\` a riemanniana\footnote{Una variet\` a riemanniana \` e una variet\` a differenziabile dove in ogni spazio tangente \` e definito un prodotto interno, in un modo che varia analiticamente da punto a punto; ci\`o ci permette di definire gli angoli, la lunghezza delle curve ecc. In altre parole, una variet\` a riemanniana\index{variet\` a riemanniana} \`  e una variet\` a differenziabile in cui lo spazio tangente in ogni punto \` e uno spazio di Hilbert\index{spazio di Hilbert}.}\index{superficie riemanniana} $\Omega$ compatta e connessa di classe $C^{\infty}$ con $S$ un diffeomorfismo di classe $C^{\infty}$. Supponiamo che il sistema sia {\em topologicamente transitivo}\index{sistema topologicamente transitivo}, ovvero che esista un'orbita densa. Supponiamo inoltre che esista una metrica riemanniana regolare $d$ (la quale pu\` o essere diversa da quella definita su $\Omega$) tale che misurando le lunghezze con $d$ le seguenti propriet\` a sono verificate.
\begin{enumerate}
\item Esistono due superfici $W^{u}(x)$ e $W^{s}(x)$, che supporremo di classe $C^{k}$ con $k>2$, e con piano tangente in $x$ dipendente da $x$ in modo H\"older continuo. Inoltre, $W^{u}(x)$ e $W^{s}(x)$ sono trasverse\footnote{Ci\` o vuol dire che si intersecano con un angolo di incidenza non nullo.} in $x$ e hanno dimensioni positive complementari\footnote{Ovvero $\mbox{dim}(\Omega)=\mbox{dim}(W^{u})+\mbox{dim}(W^{s})$, $\mbox{dim}(W^{u,s})>0$.}.
\item Chiamando $\Sigma_{\gamma}(x)$ la sfera di raggio $\gamma$ centrata in $x$ e ponendo 

\begin{eqnarray}
W^{u}_{\gamma}(x)&=&\{\mbox{parte connessa di $W^{u}(x)\cap\Sigma_{\gamma}(x)$ contenente $x$}\}\nonumber\\
W^{s}_{\gamma}(x)&=&\{\mbox{parte connessa di $W^{s}(x)\cap\Sigma_{\gamma}(x)$ contenente $x$}\},\nonumber
\end{eqnarray}

 esiste $\gamma>0$ tale che

\begin{equation}
SW^{s}_{\gamma}(x)\subset W^{s}_{\gamma}(Sx),\qquad S^{-1}W^{u}_{\gamma}(x)\subset W^{u}_{\gamma}(x)\subset W^{u}_{\gamma}(S^{-1}x). \label{eq:defanos}
\end{equation}
 
 \item Esiste $\lambda<1$ tale che, per ogni $n\geq 0$,
 
 \begin{eqnarray}
 d(S^{n}y,S^{n}z)&\leq&\lambda^{n}d(y,z)\qquad\mbox{per ogni $y,z\in W^{s}_{\gamma}(x)$}\\
d(S^{-n}y,S^{-n}z)&\leq&\lambda^{n}d(y,z)\qquad\mbox{per ogni $y,z\in W^{u}_{\gamma}(x)$},
 \end{eqnarray}
 
 \item\label{propr:int} Esiste $\varepsilon>0$,  $\varepsilon<\gamma$ tale che, se $x,y\in\Omega$ e $d(x,y)<\varepsilon$, l'insieme $W^{u}_{\gamma}(x)\cap W^{s}_{\gamma}(y)$ consiste in un singolo punto $[x,y]$, che dipende in modo continuo da $x$ e $y$.
\end{enumerate}

Sotto queste ipotesi diremo che $(\Omega,S)$ \` e un {\em sistema iperbolico liscio} o {\em sistema di Anosov} e che la metrica riemanniana $d$ \` e {\em adattata}\index{metrica adattata} alla mappa $S$.
\end{definizione}

\begin{oss}
\begin{enumerate}
\item Grazie alla propriet\` a (\ref{propr:int}) la variet\` a stabile $W^{s}$ e la variet\` a instabile $W^{u}$ definiscono un {\em sistema di coordinate locale}\index{sistema di coordinate locale}.
\item La definizione di sistema di Anosov che abbiamo appena dato vale per i sistemi dinamici discreti; ad ogni modo, \` e possibile formulare una definizione analoga per i flussi, e la differenza principale \` e che oltre alla direzione stabile e alla direzione instabile esiste una direzione {\em marginale}, lungo la quale due punti distinti non si avvicinano n\'e si allontanano esponenzialmente. Questa direzione coincide con la traiettoria del punto nello spazio delle fasi.
\end{enumerate}
\end{oss}

Prima di passare ad una {\em dimostrazione costruttiva} dell'esistenza di una pavimentazione markoviana per i sistemi di Anosov, introduciano il concetto di {\em $S$-rettangolo}\index{$S$-rettangolo}.

\begin{definizione}{\em ($S$-rettangoli)}

Un insieme $R\subset\Omega$ \` e un {\em $S$-rettangolo} per il sistema iperbolico $(\Omega,S)$ se

\begin{enumerate}
\item $R=\overline{\mbox{{\em int}}(R)}$,
\item il diametro di $R$ \` e abbastanza piccolo da permettere che se $x,y\in R$ esiste $\gamma$ tale che l'intersezione di $W^{s}_{\gamma}(x)$ e $W^{u}_{\gamma}(y)$ definisce un unico punto $[x,y]\in R$.
\end{enumerate}
\end{definizione}

\begin{oss}
Ad esempio, nel caso di un sistema di Anosov bidimensionale $\left(\mathbb{T}^{2},S\right)$ possiamo ``prolungare'' in modo opportuno due ``lacci'' diretti lungo variet\` a stabile e instabile in modo da ottenere una griglia dove ogni maglia \` e un $S$-rettangolo.
\end{oss}

Siamo pronti per dimostrare il seguente fondamentale risultato.

\begin{prop}{{\em (Esistenza delle pavimentazioni markoviane)}}\label{prop:pavmark}

Se $(\Omega,S)$ \` e un sistema di Anosov e $\delta>0$, allora esiste una pavimentazione markoviana $\mathcal{Q}=(Q_{1},...,Q_{q})$ di $\Omega$ dove $Q_{i}$, $i=1,...,q$, sono $S$-rettangoli con diametro minore di $\delta$.
\end{prop}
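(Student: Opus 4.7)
Il piano \`e di seguire la costruzione classica di Sinai-Bowen: partire da un ricoprimento finito di $\Omega$ con piccoli $S$-rettangoli e raffinarlo iterativamente sfruttando la struttura di prodotto locale garantita dalla propriet\`a \ref{propr:int} della definizione \ref{def:Anosov}. Per prima cosa fisserei $\varepsilon>0$ minore di $\delta/2$ e tale che il bracket $[x,y]$ sia definito per $d(x,y)<\varepsilon$; per compattezza di $\Omega$ esiste un insieme $\varepsilon/2$-denso $\{p_{1},\ldots,p_{N}\}\subset\Omega$, e attorno ad ogni $p_{i}$ costruirei un $S$-rettangolo iniziale $T_{i}$ di diametro $\leq\varepsilon$ definendolo come $T_{i}=\{[x,y]:x\in W^{u}_{\gamma'}(p_{i}),\,y\in W^{s}_{\gamma'}(p_{i})\}$ per $\gamma'$ sufficientemente piccolo.

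Il secondo passo consiste nel raffinare il ricoprimento $\{T_{i}\}$ tagliando ogni $T_{i}$ lungo le variet\`a stabili e instabili locali dei punti in $T_{j}\cap T_{i}$. Pi\`u precisamente, introdurrei su $\bigcup_{i}T_{i}$ la relazione di equivalenza per cui $x\sim x'$ se e solo se, per ogni $j$, le fibre $W^{u}_{\gamma}(x)\cap T_{j}$ e $W^{u}_{\gamma}(x')\cap T_{j}$ sono simultaneamente vuote o non vuote, e analogamente per le fibre stabili. Le chiusure delle classi di equivalenza con interno non vuoto forniscono la partizione $\mathcal{Q}=\{Q_{1},\ldots,Q_{q}\}$ cercata. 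Per costruzione ogni $Q_{k}$ \`e ancora un $S$-rettangolo di diametro $<\delta$ e il suo bordo si decompone naturalmente in una parte ``stabile'' $\partial^{s}Q_{k}$ (giacente su variet\`a stabili locali) e una parte ``instabile'' $\partial^{u}Q_{k}$ (giacente su variet\`a instabili locali).

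Il passo cruciale sar\`a la verifica della propriet\`a di Markov: se $\mbox{int}(Q_{i})\cap\mbox{int}(S^{-1}Q_{j})\neq\emptyset$, allora per ogni $x$ nell'intersezione la variet\`a instabile locale di $Sx$ in $Q_{j}$ deve essere interamente immagine sotto $S$ della variet\`a instabile locale di $x$ in $Q_{i}$, e dualmente per le variet\`a stabili sotto $S^{-1}$. Questa propriet\`a \`e l'effetto desiderato del raffinamento del secondo passo, che \`e progettato in modo che ogni variet\`a instabile locale che entra in un $Q_{k}$ lo attraversi completamente (e analogamente per quelle stabili); una volta stabilito ci\`o, la propriet\`a di Markov segue dal fatto che $S$ manda variet\`a stabili in variet\`a stabili e variet\`a instabili in variet\`a instabili. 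Da questa propriet\`a ricavo immediatamente la decomposizione del bordo $\partial=\partial^{+}\cup\partial^{-}$ richiesta dalla (\ref{eq:pavmark4}), ponendo $\partial^{-}=\bigcup_{k}\partial^{s}Q_{k}$ e $\partial^{+}=\bigcup_{k}\partial^{u}Q_{k}$, dal momento che $S$ contrae le variet\`a stabili e $S^{-1}$ contrae le variet\`a instabili.

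Le rimanenti condizioni della definizione \ref{def:pavmark} seguirebbero con minore difficolt\`a. La continuit\`a H\"older del codice $X$ (propriet\`a 2) segue dalle stime di contrazione esponenziale della propriet\`a 3 della definizione \ref{def:Anosov}: se $\underline{\sigma},\underline{\sigma}'\in\{1,\ldots,q\}^{\mathbb{Z}}_{T}$ coincidono per $|i|\leq n$, i punti $X(\underline{\sigma}),X(\underline{\sigma}')$ appartengono entrambi a $\bigcap_{|i|\leq n}S^{-i}Q_{\sigma_{i}}$, che ha diametro $\leq C\lambda^{n}$ per qualche costante $C>0$, e ci\`o implica la stima (\ref{eq:pavmark2}) con esponente $a$ opportuno. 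La molteplicit\`a finita del codice (propriet\`a 3) segue dal fatto che un punto pu\`o appartenere alla chiusura di al pi\`u un numero limitato di elementi di $\mathcal{Q}$, controllato dalla cardinalit\`a del ricoprimento iniziale. L'ostacolo principale sar\`a senz'altro il secondo e terzo passo: il raffinamento deve essere eseguito con cura affinch\'e i tagli con cui vengono suddivisi i rettangoli iniziali siano compatibili con l'azione di $S$, cio\`e che nessun bordo stabile venga mappato fuori dall'unione dei bordi stabili e viceversa. Questo richiede un'analisi dettagliata della geometria delle intersezioni tra rettangoli e variet\`a invarianti, ed \`e il contenuto geometrico principale della costruzione originale di Sinai e Bowen.
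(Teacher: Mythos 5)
C'\`e una lacuna sostanziale nel punto che tu stesso indichi come ``cruciale''. Il tuo raffinamento del secondo passo \`e un'operazione puramente geometrica e statica: la relazione di equivalenza che introduci (fibre $W^{u}_{\gamma}(x)\cap T_{j}$ e $W^{s}_{\gamma}(x)\cap T_{j}$ simultaneamente vuote o non vuote) non coinvolge in alcun modo la mappa $S$, quindi non pu\`o da sola produrre la covarianza dei bordi $S\partial^{s}\subset\partial^{s}$, $S^{-1}\partial^{u}\subset\partial^{u}$ richiesta dalla (\ref{eq:pavmark4}) ed equivalente alla propriet\`a di Markov (\ref{eq:dimpav2}), (\ref{eq:dimpav3}). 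Tagliare i rettangoli lungo le fibre dei punti di intersezione garantisce solo che, \emph{all'interno del ricoprimento fissato}, ogni fibra attraversi completamente l'elemento che la contiene; non garantisce affatto che l'immagine sotto $S$ della fibra instabile di $Q_{i}$ passante per $x$ contenga la fibra instabile di $Q_{j}$ passante per $Sx$ quando $\mbox{int}(Q_{i})\cap S^{-1}\mbox{int}(Q_{j})\neq\emptyset$: per un ricoprimento iniziale arbitrario (rettangoli centrati in un insieme $\varepsilon/2$-denso qualsiasi) questa propriet\`a \`e generalmente falsa, come si vede gi\`a nel gatto di Arnold con quadratini a lati paralleli alle direzioni stabile/instabile ma con bordi in posizione generica. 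Nella costruzione classica questa compatibilit\`a dinamica va \emph{costruita prima} del raffinamento per intersezioni: o alla Bowen, ricavando il ricoprimento iniziale dall'ombreggiamento di pseudo-orbite (cosicch\'e le inclusioni markoviane valgono gi\`a per i $T_{i}$), oppure, come fa la tesi, lavorando con $g=S^{m}$ per $m$ grande e prolungando iterativamente gli assi instabili sotto $g$ e quelli stabili sotto $g^{-1}$, con correzioni di lunghezza $\leq\alpha\lambda^{mn}$ che convergono grazie alla scelta (\ref{eq:dimpav7a}); solo dopo questo passo (che \`e il vero contenuto dinamico della dimostrazione) si passa al raffinamento per intersezioni con i complementari e alla ``riparazione'' degli elementi non rettangolari prolungandone i lati. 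Il tuo schema contiene l'analogo di quest'ultima fase ma manca del tutto della prima, e infatti la frase ``la propriet\`a di Markov \`e l'effetto desiderato del raffinamento del secondo passo'' resta un'asserzione non dimostrata: \`e esattamente il punto in cui l'argomento dovrebbe usare l'iperbolicit\`a (contrazione esponenziale e covarianza delle variet\`a) per far convergere una procedura di aggiustamento dei bordi, e nel tuo testo questo meccanismo non compare.

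Osservo inoltre due punti minori: le chiusure delle classi di equivalenza del tuo secondo passo non sono in generale $S$-rettangoli (lo stesso problema compare nella dimostrazione della tesi, che lo risolve prolungando i lati fino al primo incontro con un lato di tipo opposto, e questa correzione andrebbe inclusa anche nel tuo schema); e la stima di molteplicit\`a finita del codice non segue solo dalla cardinalit\`a del ricoprimento iniziale, ma dal fatto che un punto appartiene alla chiusura di un numero limitato di elementi della partizione finale, cosa che va verificata dopo i tagli e i prolungamenti. La parte sulla H\"older continuit\`a del codice \`e invece corretta e coincide nella sostanza con quella della tesi.
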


\begin{oss}
Per comodit\` a, dimostreremo questa proposizione nel caso $d=2$. L'argomento che presenteremo non \` e il pi\` u facile per questo caso, ma ha il pregio di essere facilmente estendibile a qualsiasi dimensione.
\end{oss}

\begin{proof}
La dimostrazione della proposizione (\ref{prop:pavmark}) risulta essere molto pi\` u semplice di quanto non possa sembrare a prima vista se si rappresenta graficamente ci\` o che si sta descrivendo a parole; per questo motivo, cercheremo sempre di illustrare i punti chiave con dei disegni.

\subsubsection{Descrizione di un rettangolo generico}\label{sez:(A)}

Come abbiamo gi\` a accennato in precedenza, variet\` a stabile e variet\` a instabile definiscono un sistema di coordinate locale. Quindi, se $A$ \` e un $S$-rettangolo e $x\in\mbox{int}(A)$, ponendo $C=W^{u}_{\gamma}(x)\cap A$,  $D=W^{s}_{\gamma}(x)\cap A$ abbiamo che (vedere figura \ref{fig:pav1})

\begin{equation}
A=[C,D]=\cup_{y\in C,z\in D}[y,z]; \label{eq:dimpav}
\end{equation}
   
chiameremo $x$ il {\em centro} di $A$ rispetto alla coppia di assi $C, D$\footnote{Con questa definizione ogni punto interni di $x$ \` e un centro di $A$ rispetto ad opportuni assi.}, e diremo che $C,D$ sono rispettivamente l'{\em asse stabile} e l'{\em asse instabile} di $A$. Se $C,D$ e $C',D'$ sono due coppie di assi per lo stesso rettangolo diremo che sono {\em paralleli}, dal momento che o $C\equiv C'$ oppure $C\cap C'=\emptyset$. Identificheremo con $\partial^{u}_{1}A$, $\partial^{u}_{2}A$ i bordi del rettangolo giacenti sulle variet\` a instabili, e con $\partial^{s}_{1}A$, $\partial^{s}_{2}A$ quelli appartenenti alle variet\` a stabili.

\begin{figure}[htbp]
\centering
\includegraphics[width=0.7\textwidth]{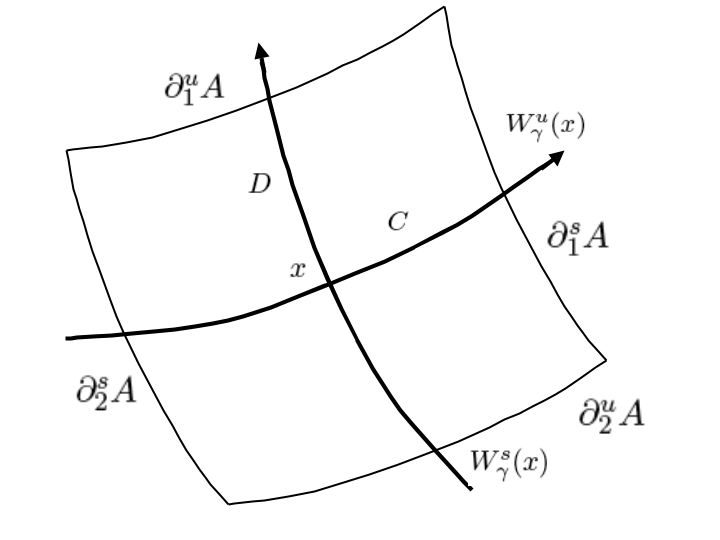}
\caption{Un $S$-rettangolo con assi $C$, $D$ e centro $x$.} \label{fig:pav1}
\end{figure}

Vogliamo dimostrare l'esistenza di un pavimento $\mathcal{Q}$ di $\Omega$ formato da $S$-rettangoli con diametro $<\delta$ arbitrario, tale che per ogni $Q\in\mathcal{Q}$ si abbia

\begin{equation}
S\partial_{\beta}^{s}Q\subset\partial_{\beta'}^{s}Q',\qquad S^{-1}\partial_{\beta}^{u}Q\subset\partial^{u}_{\beta''}Q'', \label{eq:dimpav2}
\end{equation}

con $\beta,\beta',\beta''=1,2$.

Notare che, in realt\` a, la propriet\` a che definisce un pavimento markoviano \` e (vedere definizione \ref{def:pavmark})

\begin{equation}
S\left(\cup_{Q\in\mathcal{Q}}\partial^{s}Q\right)\subset\cup_{Q\in\mathcal{Q}}\partial^{s}Q',\qquad S^{-1}\left(\cup_{Q\in\mathcal{Q}}\right)\partial^{u}Q\subset\cup_{Q\in\mathcal{Q}}\partial^{u}Q'', \label{eq:dimpav3}
\end{equation}

ma \` e facile rendersi conto che la (\ref{eq:dimpav2}), (\ref{eq:dimpav3}) sono equivalenti. Infatti, \` e chiaro che ponendo $\partial^{s}Q=\partial^{s}_{1}Q\cup\partial^{s}_{2}Q$ e $\partial^{u}Q=\partial^{u}_{1}Q\cup\partial^{u}_{2}Q$ la (\ref{eq:dimpav2}) implica la (\ref{eq:dimpav3}); l'implicazione inversa pu\` o essere vista con un semplice ragionamento per assurdo.

Applichiamo la mappa $S$ ad un elemento $Q\in\mathcal{Q}$; per la propriet\` a (\ref{eq:dimpav3}), l'insieme dei bordi stabili $\partial^{s}$ sotto l'azione di $S$ diventa strettamente contenuto in s\'e stesso, e la stessa cosa accade per $\partial^{u}$ sotto l'azione di $S^{-1}$. Assumiamo che la (\ref{eq:dimpav2}) non sia vera; dunque in generale avremo che $S\partial^{s}_{\beta}Q\subset \cup_{Q'\in\mathcal{C}}\partial^{s}Q'$ dove $\mathcal{C}\subset\mathcal{Q}$ \` e una collezioni di insiemi contenuta in $\mathcal{Q}$ con $|\mathcal{C}|>1$. Senza perdita di generalit\` a, fissato $\partial^{s}_{\beta}Q$ consideriamo il caso in cui $\mathcal{C}$ sia composto da due elementi $Q',Q''$ con un bordo instabile in comune, che chiameremo $\partial^{u}_{*}Q'$, e pensiamo che $S\partial^{s}_{\beta}Q\subset \partial^{s}_{\beta}Q'\cup\partial^{s}_{\beta}Q'' $. Ci\` o implica che $\partial^{u}_{*}Q'$ \` e compreso tra $S\partial^{s}_{1}Q$ e $S\partial^{s}_{2}Q$ generando un assurdo, perch\' e $S^{-1}$ manda punti relativamente interni\footnote{Per {\em punti relativamente interni} a $\partial^{u}Q$, o a $\partial^{s}Q$, intendiamo i punti compresi tra $\partial^{u}_{1}Q$ e $\partial^{u}_{2}Q$, o tra $\partial^{s}_{1}Q$ e $\partial^{s}_{2}Q$; il punto $x$ \` e {\em compreso} in $\partial^{u}Q=\partial^{u}_{1}Q\cup\partial^{u}_{2}Q$ se \` e possibile raggiungere $\partial^{u}Q$ percorrendo la variet\` a stabile che passa per $x$, ed \` e compreso in $\partial^{s}Q=\partial^{s}_{1}Q\cup\partial^{s}_{2}Q$ se \` e possibile raggiungere $\partial^{s}Q$ percorrendo la variet\` a instabile passante per $x$.} a $S\partial^{s}Q$ in punti relativamente interni a $\partial^{s}Q$ e per ipotesi $S^{-1}\partial^{u}\subset\partial^{u}$.

\subsubsection{Un problema equivalente}\label{sez:(B)}

In realt\` a, dimostreremo la propriet\` a (\ref{eq:dimpav2}) per la mappa $g=S^{m}$ con $m$ ``grande'' (in un senso che specificheremo in seguito), ovvero troveremo un pavimento $\mathcal{B}$ di $\Omega$ tale che

\begin{equation}
g\cup_{B\in\mathcal{B}}\partial^{s} B\subset\cup_{B\in\mathcal{B}}\partial^{s}B,\qquad g^{-1}\cup_{B\in\mathcal{B}}\partial^{u}B\subset\cup_{B\in\mathcal{B}}\partial^{u}B; \label{eq:dimpav4}
\end{equation}

ma questo problema \` e equivalente a quello di partenza, dal momento che da $\mathcal{B}$ si ottiene un pavimento markoviano per $S$ semplicemente ponendo\footnote{Infatti, con questa scelta di $\mathcal{Q}$ esiste $B'$ tale che

\begin{eqnarray}
S\left(\cup_{Q\in\mathcal{Q}}\partial^{s}Q\right)&=&S\left(\cup_{B_{i_{0}}\in\mathcal{B}}...\cup_{B_{i_{m-1}}\in\mathcal{B}}\partial^{s}\cap_{k=0}^{m-1}S^{k}B_{i_{k}}\right)\nonumber\\&=&S\left(\cup_{B_{i_{0}}\in\mathcal{B}}...\cup_{B_{i_{m-1}}\in\mathcal{B}}\partial^{s}\left(\cap_{k=0}^{m-2}S^{k}B_{i_{k}}\cap S^{m-1}B_{i_{m-1}}\right)\right)\nonumber\\&=&\cup_{B_{i_{0}}\in\mathcal{B}}...\cup_{B_{i_{m-1}}\in\mathcal{B}}\partial^{s}\left(\cap_{k=1}^{m-1}S^{k}B_{i_{k-1}}\cap S^{m}B_{i_{m-1}}\right)\nonumber\\&\subset&\cup_{B_{i_{0}}\in\mathcal{B}}...\cup_{B'\in\mathcal{B}}\partial^{s}\left(\cap_{k=1}^{m-1}S^{k}B_{i_{k-1}}\cap B'\right)\nonumber\\&=&\cup_{B_{i_{0}}\in\mathcal{B}}...\cup_{B_{i_{m-1}}\in\mathcal{B}}\partial^{s}\cap_{k=0}^{m-1}S^{k}B_{i_{k}}=\cup_{Q\in\mathcal{Q}}\partial^{s}Q; \label{eq:dimpav5}
\end{eqnarray}

un ragionamento dello stesso tipo pu\`o essere fatto per la seconda propriet\` a nella (\ref{eq:dimpav3}).
}

\begin{equation}
Q=\cap_{k=0}^{m-1}S^{k}B_{i_{k}}, \label{eq:dimpav5a}
\end{equation}

con $B_{i_{1}},...,B_{i_{k}}\in\mathcal{B}$, per tutte le possibili combinazioni di indici.

\subsubsection{Costruzione di un ricoprimento di rettangoli}\label{sez:(C)}

Introduciamo la {\em distanza di Lebesgue}\index{distanza di Lebesgue} di un ricoprimento $\mathcal{A}^{0}$ come la massima distanza $a$ tale che la sfera di raggio $a$ disegnata prendendo come centro un qualsiasi punto di $\Omega$ \` e strettamente inclusa in un elemento di $\mathcal{A}^{0}$. Nel nostro caso modificheremo questa definizione, sostituendo al posto della sfera $\Sigma_{a}$ le porzioni locali ({\em cf.} definizione \ref{def:Anosov}) delle variet\` a stabili e instabili $W_{\gamma}^{u}(x)$, $W_{\gamma}^{s}(x)$ passanti per $x$; infatti, richiederemo che il nostro ricoprimento $\mathcal{A}^{0}$ ci permetta di trovare per ogni $x\in\Omega$ un $A^{0}_{F(x)}\in\mathcal{A}^{0}$ tale che

\begin{eqnarray}
d(y,W^{u}_{\gamma}(y)\cap\partial^{s}A_{F(x)})&>&a\qquad\mbox{$\forall y\in W^{s}_{\gamma}(x)\cap A^{0}_{F(x)}$}\\
d(z,W^{s}_{\gamma}(z)\cap\partial^{u}A_{F(x)})&>&a\qquad\mbox{$\forall z\in W^{u}_{\gamma}(x)\cap A^{0}_{F(x)}$},
\end{eqnarray}

dove le distanze $d$ sono misurate rispettivamente lungo $W^{u}_{\gamma}(y)$ e $W^{s}_{\gamma}(z)$. In pratica, stiamo escludendo il caso in cui le variet\` a locali $W^{u}_{\gamma}(x)$, $W^{s}_{\gamma}(x)$ si trovino ``troppo'' vicine ai bordi di {\em tutti} gli elementi di $\mathcal{A}^{0}$.

\begin{figure}[htbp]
\centering
\includegraphics[width=0.7\textwidth]{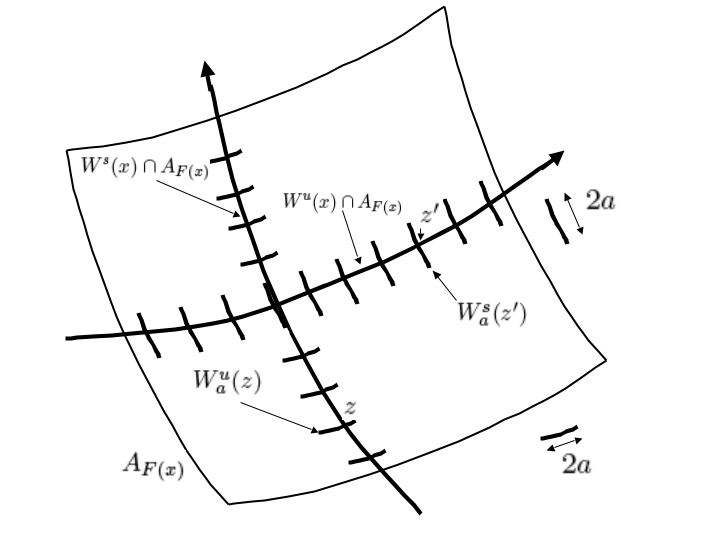}
\caption{Per ogni punto $x\in\Omega$ esiste un $S$-rettangolo $A_{F(x)}$ tale che la variet\` a stabile e la variet\` a instabile passanti per $x$ sono ``abbastanza lontane'' dai bordi di $A_{F(x)}$.} \label{fig:pav2}
\end{figure}

Qui e nel seguito assumeremo sempre che il diametro dei rettangoli sia abbastanza piccolo da poter considerare $W^{u}_{a}(x)$, $W^{s}_{a}(x)$ come dei segmenti dritti di lunghezza $2a$ centrati in $x$; quindi, assumendo che il diametro massimo dei rettangoli di $\mathcal{A}^{0}$ sia $\alpha<\delta/2$ ($\delta$ \` e il diametro massimo dei rettangoli del pavimento di Markov che vogliamo costruire; la ragione della scelta di $\alpha$ sar\` a chiara nel prossimo passo della dimostrazione) fisseremo $0<a<\alpha/2$.

\subsubsection{Raffinamento del ricoprimento $\mathcal{A}^{0}$}\label{sez:(D)}

Adesso, vogliamo dimostrare che partendo dal ricoprimento $\mathcal{A}^{0}$ appena introdotto \` e possibile costruire un nuovo ricoprimento $\mathcal{A}$ che verifica le seguenti propriet\`a: 

\begin{enumerate}
\item per ogni $\beta=1,2$ e $A\in\mathcal{A}$ esistono $\beta',\beta''=1,2$ e $A',A''\in\mathcal{A}$ tali che

\begin{equation}
g\partial^{s}_{\beta}A\subset\partial^{s}_{\beta'}A',\qquad g^{-1}\partial^{u}_{\beta}A\subset\partial^{u}_{\beta''}A''; \label{eq:dimpav3a}
\end{equation}

\item per ogni $z\in\partial^{s}A_{i}$, $y\in\partial^{u}A_{i}$ si ha che

\begin{eqnarray}
W^{s}_{a/2}(gz)&\subset&\partial^{s}A_{j(z)}\nonumber\\ 
W^{u}_{a/2}(g^{-1}y)&\subset&\partial^{u}A_{k(y)}\qquad\mbox{per $j(z)$, $k(y)$ opportuni.} \label{eq:dimpav3b}
\end{eqnarray}
\end{enumerate}

Ci\` o potr\` a essere fatto nel seguente modo. Per ogni $A^{0}_{i}\in\mathcal{A}^{0}$ consideriamo il segmento $C_{i}^{0}$ che collega due punti $\xi\in\partial^{s}_{1}A^{0}$, $\xi'\in\partial^{s}_{2}A^{0}$ giacenti sulla stessa porzione di variet\` a instabile $W^{u}_{\gamma}(\xi)$ (la scelta della coppia di punti \` e indifferente); questi segmenti formano un insieme $\mathcal{C}^{0}=\{C_{j}^{0}\}_{j=1}^{r}$, dove $r$ \` e il numero di elementi di $\mathcal{A}^{0}$. Poich\'e $m$ \` e ``molto grande'' (tra poco spiegheremo in che senso), i ``nuovi'' segmenti $gC_{j}^{0}$, $j=1,...,r$ saranno dei ``fili molto lunghi'', giacenti sulla variet\` a instabile. Scegliamo un ricoprimento $A^{0}_{i_{1}},...,A^{0}_{i_{k_{j}}}$ di $gC_{j}^{0}$ composto da elementi di $\mathcal{A}^{0}$ tali che $gC_{j}^{0}$ li attraversi ad una distanza $\geq a$ dai bordi paralleli, con $a$ pari alla distanza di Lebesgue definita in precedenza. In generale,  $gC_{j}^{0}$ non attraverser\` a completamente alcuni rettangoli che lo ricoprono; ci\`o accadr\` a per i rettangoli $A^{0}_{i_{1}}$, $A^{0}_{i_{k_{j}}}$ che contengono le estremit\` a del ``lungo filo'' $gC^{0}_{j}$, {\em cf.} figura \ref{fig:pav3}. In questo caso, ``allunghiamo'' il filo in modo da far coincidere le sue estremit\` a con un punto appartenente $\partial^{s}_{\beta}A_{i_{0}}$ e uno in $\partial^{s}_{\beta'}A_{i_{k_{j}}}$, ovvero introduciamo un nuovo segmento $C_{j}^{1}$ come\footnote{L'insieme $[C_{i_{h}}^{0},gC_{j}^{0}\cap A_{i_{h}}^{0}]$ corrisponde a $\cup_{x\in\left(gC_{j}^{0}\cap A_{i_{h}}^{0}\right)}[C_{i_{h}},x]$, dove $[C_{i_{h}},x]$ \` e il segmento passante per $x$ parallelo a $C_{i_{h}}$, dunque $[C_{i_{h}}^{0},gC_{j}^{0}\cap A_{i_{h}}^{0}]$ \` e il prolungamento di $gC_{j}^{0}\cap A_{i_{h}}^{0}$ che interseca sia $\partial^{s}_{1}A$ che $\partial^{s}_{2}A$.}

\begin{figure}[htbp]
\centering
\includegraphics[width=0.7\textwidth]{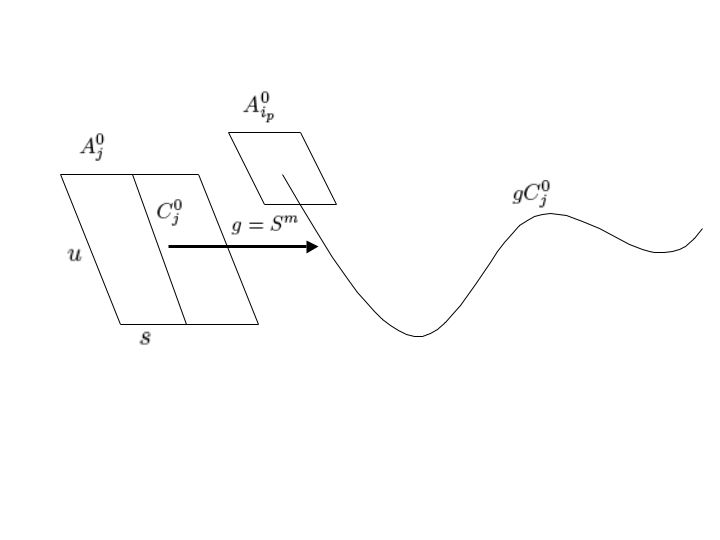}
\caption{Il segmento $C_{j}^{0}$ diventa un ``filo molto lungo'' sotto l'azione di $g=S^{m}$; nella figura, il rettangolo $A_{i_{p}}^{0}$ non \` e completamente attraversato da $gC_{j}^{0}$}. \label{fig:pav3}
\end{figure}

\begin{eqnarray}
C_{j}^{1}=\cup_{h=1}^{k_{j}}g^{-1}\left( \mbox{\{continuazione di $gC_{j}^{j}$ che attraversa tutti} \right.\nonumber\\ \left.\mbox{i rettangoli $A_{i_{h}}^{0}$\}} \cap A_{i_{h}}^{0}\right)=\cup_{h=1}^{k_{j}}g^{-1}\left([C_{i_{h}}^{0},gC_{j}^{0}\cap A_{i_{h}}^{0}]\right), \label{eq:dimpav6}
\end{eqnarray}  

e sostituiamo il vecchio $A_{j}^{0}$ con $A_{j}^{1}=[C_{j}^{1},D_{j}^{0}]$ ($C_{j}^{1}$ \` e leggermente ``pi\` u lungo'' di $C_{j}^{0}$, dunque definiamo un nuovo $A_{j}^{1}$ che lo contiene).

Adesso, applichiamo $g$ al segmento $C_{j}^{1}$ definito nella (\ref{eq:dimpav6}), e ripetiamo il discorso per questo ``nuovo filo''; iterando $n$ volte si ha che

\begin{equation}
C_{j}^{n}=\cup_{h=1}^{k_{j}}g^{-1}\left([C_{i_{h}}^{n-1},gC_{j}^{n-1}\cap A_{i_{h}}^{n-1}]\right), \label{eq:dimpav7}
\end{equation}

e resta da discutere se questa procedura ha senso nel limite $n\rightarrow\infty$, {\em i.e.} se il filo converge ad un ``filo asintotico''.

Ci\` o pu\` o essere fatto con un ragionamento per induzione. Al termine del primo passo, quello che porta da $C_{j}^{0}$ a $C_{j}^{1}$, il ``nuovo filo'' $C_{j}^{1}$ potr\` a essere espresso come $C_{j}^{1}=q_{1}\cup C_{j}^{0}\cup \overline{q}_{1}$\footnote{Non \` e detto che il $C^{0}_{j}$ che compare il questa definizione di $C_{j}^{1}$ sia lo stesso $C_{j}^{0}$ di partenza; in generale sar\` a un segmento parallelo al ``vecchio'' $C_{j}^{0}$, con la stessa lunghezza.}, dove $q_{1}$, $\overline{q}_{1}$ sono i due ``segmentini'' vengono aggiunti alle estremit\` a di $C_{j}^{0}$ per permettere a $gC_{j}^{1}$ di attraversare completamente tutti i rettangoli del ricoprimento $\{A^{0}_{i_{h}}\}$, con\footnote{$\alpha$ \` e il massimo diametro dei rettangoli $A^{0}\in\mathcal{A}$, e il fattore $\lambda^{m}$ \` e dovuto all'applicazione di $g^{-1}$.} $q_{1},\overline{q}_{1}\leq \alpha\lambda^{m}$; definendo $A_{j}^{n}\equiv [C_{j}^{n}, D_{j}^{0}]$, e assumendo che al passo $n$ si abbia $|q_{n}|\leq \left(\lambda^{m}\right)^{n}\alpha$ ({\em idem} per $\overline{q}_{n}$),  al passo $n+1$ il ``filo'' $gC_{j}^{n}$, che attraversava completamente i rettangoli $A^{n-1}_{i_{h}}$, dovr\` a essere ``prolungato'' ad ogni estremit\` a con dei segmentini $q_{n}$, $\overline{q}_{n}$, ovvero $C_{j}^{n+1}=g^{-1}q_{n}\cup C_{j}^{n}\cup g^{-1}\overline{q}_{n}\equiv q_{n+1}\cup C_{j}^{n}\cup\overline{q}_{n+1}$, con $|q_{n+1}|, |\overline{q}_{n+1}|\leq \left(\lambda^{m}\right)^{n+1}\alpha$.

Grazie a questo argomento \` e chiaro che il limite $C_{j}=\cup_{n=0}^{\infty}C_{j}^{n}$ avr\` a lunghezza finita, dal momento che $|C_{j}|\leq |C_{j}^{0}|+\sum_{n=1}^{\infty}2\alpha\lambda^{mn}= |C_{j}^{0}|+\frac{2\alpha\lambda^{m}}{1-\lambda^{m}}<\infty$, e scegliamo $m$ in modo tale che 

\begin{equation}
\frac{2\alpha\lambda^{m}}{1-\lambda^{m}}<a/2<\alpha. \label{eq:dimpav7a} 
\end{equation}

Adesso, costruiamo i segmenti $\{D_{j}\}_{j=1}^{r}$ partendo da $\{D_{j}^{0}\}_{j=1}^{r}$ con un procedimento del tutto simile a quello usato per $\{C_{j}\}_{j=1}^{r}$, sostituendo $g$ con $g^{-1}$; il risultato \` e che possiamo definire un nuovo ricoprimento $\mathcal{A}$ con $A_{j}=[C_{j},D_{j}]$, $\mbox{diam}(A_{j})<2\alpha=\delta$, tale che

\begin{equation}
g[C_{j},z]\supset [C_{j'},gz], \qquad g^{-1}[z,D_{j}]\supset [g^{-1}z,D_{j''}], \label{eq: dimpav7}
\end{equation}

e inoltre

\begin{eqnarray}
A_{j'}&\supset& W^{s}_{\frac{a}{2}}(y)\qquad\forall y\in [C_{j'},gz]\\
A_{j''}&\supset& W^{u}_{\frac{a}{2}}(z)\qquad\forall z\in [g^{-1}z, D_{j''}]. \label{eq;dimpav8}
\end{eqnarray}

Queste propriet\` a coincidono con le (\ref{eq:dimpav3a}), (\ref{eq:dimpav3b}), dunque abbiamo costruito il ricoprimento con le caratteristiche richieste; i rettangoli che formano questo ricoprimento, per\` o, in generale avranno punti interni in comunque. Nella prossimo punto della dimostrazione faremo vedere come da questo ricoprimento sia possibile estrarre una pavimentazione markoviana.

\subsubsection{Costruzione della pavimentazione di Markov}\label{sez:(E)}

Per concludere la dimostrazione, facciamo vedere come da $\mathcal{A}$ sia possibile ottenere un pavimento di Markov $\mathcal{B}$, formato da $S$-rettangoli $B_{i}$ con $\mbox{diam}(B_{i})<2\alpha=\delta$.

Partendo da $\mathcal{A}$, possiamo ottenere un {\em raffinamento} di $\mathcal{A}$ intersecando elementi di $\mathcal{A}$ con i complementari di altri. Pi\`u precisamente, per ogni $\mathbf{r}\subset\{1,...,r\}$, possiamo definire $P_{\mathbf{r}}=\left(\cap_{i\in\mathbf{r}}A_{i}\right)\cap\left(\cup_{i\notin\mathbf{r}}\Omega\setminus A_{i}\right)$\footnote{Quindi, se $\mathbf{r}\neq\mathbf{r}'$ si ha che $P_{\mathbf{r}}\cap P_{\mathbf{r}'}=\emptyset$.}; per costruzione si ha che (come per $\mathcal{A}$, i bordi di $P_{\mathbf{r}}$ appartengono alle variet\` a stabili e instabili)

\begin{equation}
g\partial^{s}P\subset\cup_{P'\in P}\partial^{s}P', \qquad g^{-1}\partial^{u}P\subset\cup_{P'\in P}\partial^{u}P'
\end{equation}

\begin{figure}[htbp]
\centering
\includegraphics[width=0.7\textwidth]{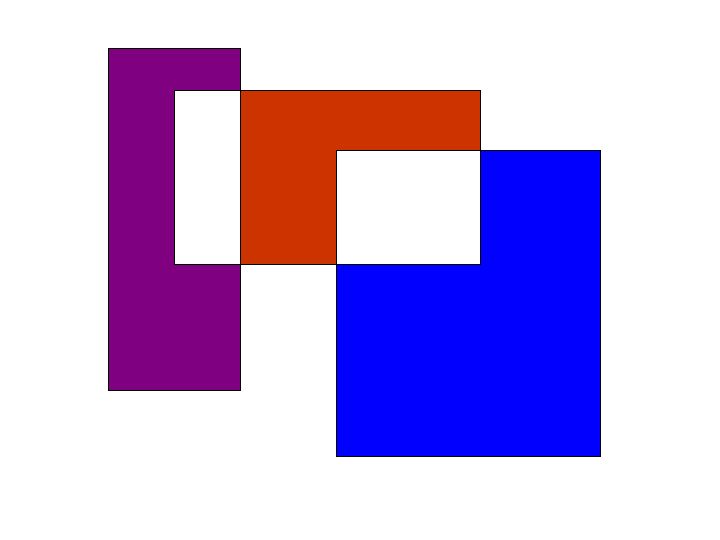}
\caption{Non tutti gli insiemi generati dalle intersezioni di elementi di $\mathcal{A}$ con i complementari di altri sono $S$-rettangoli.}. \label{fig:pav4}
\end{figure}
 
 Questo pavimento, per\` o, non \`e formato da $S$-rettangoli, perch\' e non \` e vero che se $x,y\in P$ allora $[x,y]\in P$ (vedere figura \ref{fig:pav4}). Per ottenere un pavimento formato da quadrati, basta ``continuare un po''' i lati degli elementi di $A_{i}\in\mathcal{A}$ fino a quando il prolungamento del lato non incontra un lato del ``tipo opposto'' (ovvero un lato stabile per un prolungamento instabile, e viceversa) (vedere figura \ref{fig:pav5}); chiaramente, dal momento che il diametro dei rettangoli contenuti in $\mathcal{A}$ \` e $<2\alpha$, la lunghezza del prolungamento sar\` a $<2\alpha$. Pi\` u precisamente, ci\`o vuol dire sostituire $\partial^{s}_{\beta}A$, $\partial^{u}_{\beta}A$ con $\tilde{\partial}^{u}_{\beta}A$, $\tilde{\partial}^{u}_{\beta}A$ definiti come
 
 \begin{equation}
 \tilde{\partial}^{u}_{\beta}A=\cup_{x\in\partial^{s}_{\beta}A}W^{s}_{\gamma}(x), \qquad\tilde{\partial}^{u}_{\beta}A=\cup_{x\in\partial^{u}_{\beta}A}W^{u}_{\gamma}(x), \label{eq:dimpav8}
 \end{equation}
 
 \begin{figure}[htbp]
\centering
\includegraphics[width=0.7\textwidth]{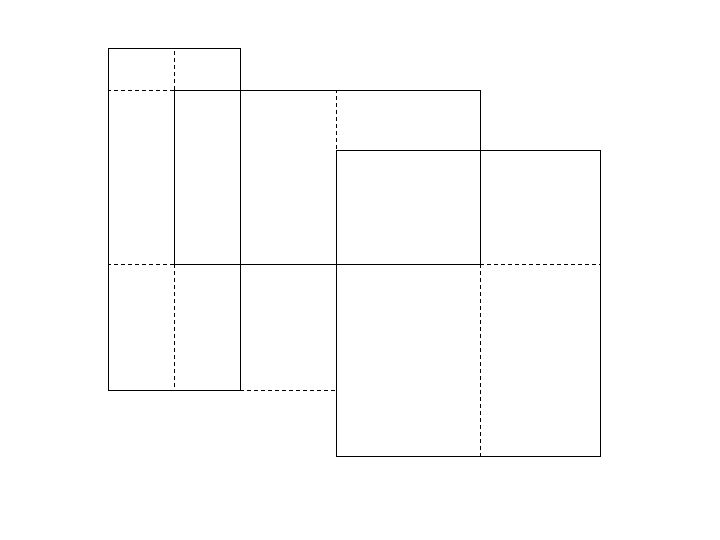}
\caption{Prolungando i lati degli $S$-rettangoli $A_{i}\in\mathcal{A}$ in modo opportuno si ottiene un nuovo pavimento $\mathcal{B}$ formato da $S$-rettangoli che non si intersecano tra di loro.}. \label{fig:pav5}
\end{figure}
 
 ovvero continuiamo $\partial^{s}_{\beta}A$, $\partial^{u}_{\beta}A$ ad ogni estremit\` a aggiungengo un pezzo della variet\` a dello stesso tipo di lunghezza $\gamma$. Poich\'e $\gamma>2\alpha$, gli insiemi definiti nella (\ref{eq:dimpav8}) andranno oltre il punto dove intersecano per la prima volta un lato di tipo ``opposto'', dunque dobbiamo considerare solo la parte di $\tilde{\partial}^{u}_{\beta}A$, $\tilde{\partial}^{s}_{\beta}A$ che non supera il punto prefissato; con questi ``nuovi lati'', gli insiemi $P_{\mathbf{r}}$, che chiameremo $B\in\mathcal{B}$, saranno degli $S$-rettangoli con diametro $<2\alpha=\delta$. L'ultima cosa che rimane da fare \` e verificare che gli insiemi $B\in\mathcal{B}$ rispettano le propriet\` a del pavimento di Markov (\ref{eq:dimpav4}); per fare ci\` o, notiamo che per ogni punto $z$ appartenente al bordo di $B$, ad esempio al bordo stabile $\partial^{s}B$, esiste un $x\in\partial^{s}A_{G(z)}$, con $G(z)$ opportuno, tale che
 
 \begin{equation}
 z\in W^{s}_{2\alpha}(x),\label{eq:dimpav9}
 \end{equation}
 
 e lo stesso vale se $z\in\partial^{u}B$, sostituendo $W^{s}_{2\alpha}(x)$ con $W^{u}_{2\alpha}(y)$ per un certo $y\in\partial^{u}A_{H(z)}$. Grazie alle propriet\` a (\ref{eq:dimpav3b}) abbiamo che, per $j(x)$ opportuno,
 
 \begin{equation}
 gx\in\partial^{s}A_{j(x)}\quad\mbox{e}\quad W^{s}_{a/2}(gx)\subset\partial^{s}A_{j(x)}; \label{eq:dimpav10}
 \end{equation}
 
poich\' e il diametro di $gW^{s}_{2\alpha}(x)$ \` e minore di $2\lambda^{m}\alpha<a/2$ (per la scelta di $m$, vedere eq. (\ref{eq:dimpav7a})), abbiamo che $gW^{s}_{2\alpha}(x)\subset W^{s}_{a/2}(gx)$. Dunque, grazie alla (\ref{eq:dimpav10}) e sapendo che $\cup_{A\in\mathcal{A}}\partial^{s}A\subset\cup_{B\in\mathcal{B}}\partial^{s}B$ (lo stesso vale sostituendo $s$ con $u$),
 
 \begin{equation}
 g\partial^{s}_{\beta}B\subset\cup_{B'\in\mathcal{B}}\partial^{s}B'; \label{eq:dimpav11}
 \end{equation}
 
allo stesso modo \` e possibile dimostrare che

 \begin{equation}
 g^{-1}\partial^{u}_{\beta}B\subset\cup_{B'\in\mathcal{B}}\partial^{u}B'. \label{eq;dimpav11a}
 \end{equation}
 
 \subsubsection{H\"older continuit\` a del codice}
 
 \` E facile vedere che la corrispondenza che associa ad ogni sequenza $\underline{\sigma}$ un punto $x$ \` e una funzione H\"older continua, nel senso della definizione (\ref{eq:pavmark}). Infatti, assumiamo che due punti $x_{1}=X(\underline{\sigma})$, $x_{2}=X(\underline{\sigma}')$ abbiano la stessa storia $\sigma_{1},...,\sigma_{n}$ fino al tempo $n>0$, e che al tempo $n+1$ $\sigma_{n+1}\neq\sigma'_{n+1}$; ci\` o vuol dire che chiamando $\underline{v}_{12}(x_{1})$ il vettore che congiunge i due punti
 
 \begin{equation}
 \left|S^{n}\underline{v}_{12}(x_{1})\right|\leq \delta. \label{eq:dimh}
 \end{equation}
 
 Possiamo scrivere il vettore $\underline{v}_{12}$ come
 
 \begin{equation}
 \underline{v}_{12}(x_{1}) = c_{u}(x_{1})w_{u}(x_{1})+c_{s}(x_{1})w_{s}(x_{1}), \label{eq:dimpavh}
 \end{equation}
  
  dove $w_{u}(x_{1})$, $w_{s}(x_{1})$ sono i due vettori normalizzati tangenti alla variet\` a instabile e alla variet\` a stabile nel punto $x_{1}$; grazie alla (\ref{eq:dimpavh}) la (\ref{eq:dimh}) diventa, assumendo per semplicit\` a che $S$ agisca linearmente sui vettori e che\footnote{\` E possibile infittire abbastanza la partizione in modo da escludere l'eventualit\` a che sotto l'azione di una iterata $S$ due punti giacenti sulla stessa variet\` a instabile e nello stesso $S$-rettangolo si spostino in un $S$-rettangolo comune.} $c_{u}\neq 0$,
  
  \begin{equation}
  \left|c_{u}(x_{1})S^{n}w_{u}(x_{1})+c_{s}(x_{1})S^{n}w_{s}(x_{1})\right|\leq \delta. \label{eq:dimh2}
  \end{equation}
  
  Quindi, per la definizione di sistema di Anosov \ref{def:Anosov} la (\ref{eq:dimh2}) implica che
  
  \begin{equation}
  |c_{u}(x_{1})|\lambda^{-n}d(x_{1},x_{2})\left(1-\lambda^{2n}\frac{c_{s}(x_{1})}{c_{u}(x_{1})}\right)\leq \delta d(x_{1},x_{2})
  \end{equation}
  
  ovvero, infittendo opportunamente\footnote{In particolare, dobbiamo infittire la partizione evitando che si formino dei rettangoli troppo ``lunghi e stretti''.} la partizione in modo da avere $\lambda^{2n}\frac{c_{s}(x_{1})}{c_{u}(x_{1})}\neq 1$, esiste una costante $C>0$ tale che
  
  \begin{equation}
  d(x_{1},x_{2}) \leq \lambda^{n}C; \label{eq:dimh3}
  \end{equation}
  
 infine, ponendo $c=-\log\lambda$, possiamo riscrivere la (\ref{eq:dimh3}) come
  
  \begin{equation}
  d(x_{1},x_{2})=d\left(X(\underline{\sigma},X(\underline{\sigma}'))\right)\leq Ce^{-cn}= Cd(\underline{\sigma},\underline{\sigma}').
  \end{equation}
  
Inoltre, scegliendo una partizione abbastanza fitta l'intersezione tra un elemento di $\mathcal{P}$ e l'immagine sotto $S$ di qualunque altro sar\` a un insieme connesso\footnote{Stiamo escludendo l'ipotesi che sotto l'azione di $S$ un rettangolo si allunghi abbastanza da poter intersecare due volte lo stesso rettangolo.}, dunque in questa situazione $X^{-1}$ non pu\` o avere molteplicit\` a infinita; ci\` o conclude la dimostrazione.

\end{proof}

\section{Codifica della misura di volume}

In questa sezione mostreremo che, per un sistema di Anosov\index{sistema di Anosov}, la codifica in simboli dell'evoluzione di un dato iniziale permetter\` a di rappresentare la misura di volume su $\Omega$ come la misura di probabilit\` a di un modello di spin unidimensionale con potenziali a decadimento esponenziale. A tal fine, introduciamo la nozione di {\em probabilit\` a condizionata}\index{probabilit\`a  condizionata}.

\begin{definizione}{{\em (Probabilit\` a condizionata)}}\label{def:probcond}

Sia $m$ una misura di probabilit\` a definita sui sottoinsiemi boreliani dello spazio delle sequenze con $q$ simboli $\{1,...,q\}^{\mathbb{Z}}$, e siano $J=\{j_{1},...,j_{s}\}\subset\mathbb{Z}$, $\underline{\sigma}_{J}=(\sigma_{J_{1}}....\sigma_{J_{s}})\in\{1,...,q\}^{J}$.

Consideriamo le $\sigma$-algebre $\mathcal{B}(J^{c})$ generate dai cilindri $C_{\underline{\sigma}'}^{J'}$, con base $J'\subset J^{c}=\mathbb{Z}\setminus J$; dato $\underline{\sigma}_{J}$ possiamo definire su $\mathcal{B}(J^{c})$ le misure

\begin{equation}
E\rightarrow m'(E)\equiv m(E\cap C^{J}_{\underline{\sigma}_{J}}),\qquad E\rightarrow \overline{m}(E)\equiv m(E), \label{eq:vol01}
\end{equation}

per $E\in\mathcal{B}(J^{c})$. Dunque la misura $m'$ \` e {\em assolutamente continua}\footnote{Due misure $\nu$, $\mu$ definite in $\Omega$ sono assolutamente continue se e solo se $\nu(A)=0\iff\mu(A)=0$; il teorema di Radon Nikodym afferma che per due misure assolutamente continue \` e possibile trovare una funzione misurabile $f$ tale che

\begin{equation}
\nu(A)=\int_{A}fd\mu. \label{eq:RN}
\end{equation}

} rispetto a $\overline{m}$, ed \` e possibile definire la {\em derivata di Radon-Nikodym}\index{derivata di Radon-Nikodym} di $m'$ rispetto a $m$

\begin{equation}
\frac{dm'}{d\overline{m}}(\underline{\sigma}')\equiv m(\underline{\sigma}_{J}|\underline{\sigma}'_{J^{c}}); \label{eq:vol02}
\end{equation}

l'espressione (\ref{eq:vol02}) \` e chiamata la {\em probabilit\` a condizionata di $\underline{\sigma}_{J}$ rispetto a $\underline{\sigma}'_{J}$}. 

\end{definizione}

\begin{oss}
La probabilit\` a condizionata $m(\underline{\sigma}_{J}|\underline{\sigma}_{J^{c}}')$ deve verificare l'ovvia {\em condizione di compatibilit\`a} 

\begin{equation}
\sum_{\underline{\sigma}_{J}}m(\underline{\sigma}_{J}|\underline{\sigma}_{J^{c}}')=1; \label{eq:osscomp}
\end{equation}

quindi, grazie alla (\ref{eq:osscomp}), per sapere chi \` e $m(\underline{\sigma}_{J}|\underline{\sigma}'_{J^{c}})$ \` e sufficiente conoscere i rapporti

\begin{equation}
\frac{m(\underline{\sigma}'_{J}|\underline{\sigma}_{J^{c}})}{m(\underline{\sigma}''_{J^{c}}|\underline{\sigma}_{J^{c}})},\qquad\forall\; \underline{\sigma}'_{J},\underline{\sigma}''_{J},\underline{\sigma}_{J^{c}}. \label{eq:03}
\end{equation}

\end{oss} 

Come abbiamo gi\` a detto introducendo i sistemi di Anosov\index{sistema di Anosov}, ha senso considerare la mappa $S$ come una mappa $S:W^{u}_{\gamma}(x)\rightarrow W^{u}_{\gamma}(Sx)$, oppure come $S:W^{s}_{\gamma}(x)\rightarrow W^{s}_{\gamma}(Sx)$; chiamiamo {\em coefficiente locale di espansione} e {\em coefficiente locale di contrazione} il modulo del determinante della matrice jacobiana di $S$ rispettivamente lungo la variet\` a instabile e stabile. Nel seguito considereremo per semplicit\` a solo il caso bidimensionale.  

\begin{definizione}{{\em (Coefficiente di espansione locale e coefficiente di contrazione locale)}}\label{def:contr}
Sia $(\Omega,S)$ un sistema di Anosov bidimensionale, e sia $\mathcal{Q}=\{Q_{1},...,Q_{q}\}$ un pavimento markoviano\index{pavimento markoviano}. Sia $T$ la sua matrice di compatibilit\` a\index{matrice di compatibilit\` a}, e $X$ il codice della dinamica\index{codice della dinamica simbolica} $X:\{1,...,q\}_{T}^{\mathbb{Z}}\rightarrow\Omega$. Definiamo il {\em coefficiente di espansione locale $\lambda_{u}(x)$}\index{coefficiente di espansione locale} e il {\em coefficiente di contrazione locale $\lambda_{s}(x)$}\index{coefficiente di contrazione locale} come

\begin{equation}
\lambda_{u}^{-1}(x)\equiv \left|\frac{dS^{-1}}{d\xi}(x)\right|_{u},\qquad\lambda_{s}(x)=\left|\frac{dS}{d\xi}(x)\right|_{s}, \label{eq:vol1}
\end{equation}

dove con i pedici $u,s$ indichiamo che la derivata di $S$ \` e calcolata sulla variet\` a instabile o sulla variet\` a stabile, e $\xi$ rappresenta la parametrizzazione della variet\` a; inoltre, definiamo il {\em tasso di contrazione locale}\index{tasso di contrazione locale} e il {\em tasso di espansione locale}\index{tasso di espansione locale} come

\begin{equation}
A_{u}(x)=\log \lambda_{u}(x),\qquad A_{s}(\underline{\sigma})=-\log\lambda_{s}(x). \label{eq:vol2}
\end{equation}

\end{definizione}

\begin{oss}
\begin{enumerate}

\item La definizione di sistema iperbolico liscio implica che $\lambda_{u}(x)$, $\lambda_{s}(x)$, e dunque anche i logaritmi (\ref{eq:vol2}), siano funzioni H\"older continue di $x$.
\item Nel caso del gatto di Arnold\index{gatto di Arnold} i coefficienti di espansione e di contrazione coincidono con l'autovalore instabile $\lambda_{+}$ e con l'autovalore stabile $\lambda_{-}$.
\item\label{oss:potexp} Come vedremo tra poco, partendo da $A_{u}(X(\underline{\sigma}))$, $A_{s}(X(\underline{\sigma}))$ potremo definire i potenziali a molti corpi del modello di Ising unidimensionale associato alla misura di volume. Inoltre, grazie alla H\"older continuit\` a di $\lambda_{u}(x)$, $\lambda_{s}(x)$ e del codice $X$, $A_{u}$ e $A_{s}$ sono funzioni H\"older continue nello spazio delle sequenze; questo fatto implicher\` a  il decadimento esponenziale dei potenziali associati alla misura di volume.
\item Nel seguito, quando scriveremo $A_{u}(\underline{\sigma})$, $A_{s}(\underline{\sigma})$ ci riferiremo in realt\` a, con un leggero abuso di notazione, a $A_{u}(X(\underline{\sigma}))$, $A_{s}(X(\underline{\sigma}))$.
\end{enumerate}
\end{oss}

Con l'osservazione \ref{oss:potexp} abbiamo anticipato il seguente risultato.

\begin{prop}{{\em (Sviluppo in potenziali di funzioni H\"older continue)}}\label{prop:svilpot}

Le funzioni $A_{u}(\underline{\sigma})$, $A_{s}(\underline{\sigma})$ introdotte nella definizione \ref{def:contr}, possono essere scritte come\footnote{In generale ci\` o sar\` a vero per qualunque funzione H\"older continua.}:

\begin{equation}
A_{a}=cost+\sum_{n=0}^{\infty}\Phi_{2n+1}^{a}(\sigma_{-n},...,\sigma_{n}),\quad a=u,s,
\end{equation}

dove $\left\{\Phi_{2n+1}^{a}\right\}$ decadono esponenzialmente, ovvero esistono $C,k>0$ tali che

\begin{equation}
|\Phi_{2n+1}^{a}(\sigma_{-n},...,\sigma_{n})|\leq Ce^{-kn}. \label{eq:vol3}
\end{equation}

\end{prop}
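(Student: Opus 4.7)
Il piano è di costruire i potenziali $\Phi^{a}_{2n+1}$ come differenze telescopiche di opportuni ``troncamenti'' di $A_{a}$, sfruttando direttamente la H\"older continuit\`a nello spazio delle sequenze. Dalla metrica (\ref{eq:pavmark3}) segue infatti che se due sequenze $\underline{\sigma},\underline{\sigma}'$ coincidono su $\{|i|\leq n\}$ allora $d(\underline{\sigma},\underline{\sigma}')\leq e^{-n}$, e dunque, per opportuni $K,\alpha>0$ forniti dalla propriet\`a di H\"older di $A_{a}$, $|A_{a}(\underline{\sigma})-A_{a}(\underline{\sigma}')|\leq Ke^{-\alpha n}$. \`E proprio questo decadimento ``cinematico'' che ci aspettiamo di convertire nel decadimento dei potenziali.

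Come primo passo fisserei una sequenza ammissibile di riferimento $\underline{\sigma}^{*}\in\{1,\ldots,q\}^{\mathbb{Z}}_{T}$ e, per ogni $n\geq 0$, definirei una mappa $\pi_{n}$ che ad $\underline{\sigma}$ associa una sequenza $T$-compatibile coincidente con $\underline{\sigma}$ su $[-n,n]$ e con $\underline{\sigma}^{*}$ abbastanza lontano da $[-n,n]$. Porrei poi $\pi_{-1}(\underline{\sigma})\equiv\underline{\sigma}^{*}$ e
\[
\Phi^{a}_{2n+1}(\sigma_{-n},\ldots,\sigma_{n})=A_{a}(\pi_{n}(\underline{\sigma}))-A_{a}(\pi_{n-1}(\underline{\sigma})), \qquad \mbox{cost}=A_{a}(\underline{\sigma}^{*}),
\]
cosicch\'e per costruzione $\Phi^{a}_{2n+1}$ dipende solo da $(\sigma_{-n},\ldots,\sigma_{n})$. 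La somma telescopica $\sum_{n=0}^{N}\Phi^{a}_{2n+1}=A_{a}(\pi_{N}(\underline{\sigma}))-A_{a}(\underline{\sigma}^{*})$ converge allora a $A_{a}(\underline{\sigma})-A_{a}(\underline{\sigma}^{*})$ per $N\to\infty$, dal momento che $\pi_{N}(\underline{\sigma})\to\underline{\sigma}$ nella metrica (\ref{eq:pavmark3}) e $A_{a}$ \`e continua.

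La stima (\ref{eq:vol3}) seguirebbe immediatamente dalla disuguaglianza triangolare osservando che $\pi_{n}(\underline{\sigma})$, $\pi_{n-1}(\underline{\sigma})$ e $\underline{\sigma}$ coincidono tutte su $[-(n-1),n-1]$:
\[
|\Phi^{a}_{2n+1}|\leq |A_{a}(\underline{\sigma})-A_{a}(\pi_{n}(\underline{\sigma}))|+|A_{a}(\underline{\sigma})-A_{a}(\pi_{n-1}(\underline{\sigma}))|\leq 2Ke^{-\alpha(n-1)}\equiv Ce^{-kn},
\]
con $k=\alpha$ e $C=2Ke^{\alpha}$.

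L'unico punto non meccanico, e dunque l'ostacolo principale, \`e costruire $\pi_{n}$ \emph{rispettando la matrice di compatibilit\`a} $T$: la sostituzione ingenua dei simboli al di fuori di $[-n,n]$ con quelli di $\underline{\sigma}^{*}$ non produce in generale una sequenza ammissibile, perch\'e in corrispondenza di $\pm n$ potrebbero comparire transizioni $T_{\sigma\sigma'}=0$. Qui interviene la mescolanza di $T$ garantita dalla definizione \ref{def:matrcomp}: per ogni coppia ordinata di simboli $(\sigma,\sigma')$ esiste una parola ammissibile di lunghezza al pi\`u $a+1$ che li raccorda, con $a$ il tempo di mescolamento. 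Scegliendo una volta per tutte un simile ``ponte'' per ciascuna coppia, e inserendolo subito dopo $\sigma_{\pm n}$ prima di proseguire con $\underline{\sigma}^{*}$, si ottiene la propriet\`a voluta: i ponti hanno lunghezza uniformemente limitata da $a+1$, e quindi non alterano il tasso esponenziale di decadimento della stima precedente ma solo le costanti $C,k$. Questa \`e la versione simbolica della costruzione classica di Sinai--Bowen per lo sviluppo in potenziali di funzioni H\"older continue su shift di tipo finito mescolanti.
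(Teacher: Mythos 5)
La tua proposta coincide essenzialmente con la dimostrazione del testo: anche l\`i i potenziali sono definiti come differenze telescopiche $A_{a}(\underline{\sigma}^{n+1})-A_{a}(\underline{\sigma}^{n})$ di sequenze interpolanti ottenute incollando il blocco centrale alla sequenza di riferimento tramite ``ponti'' di lunghezza pari al tempo di mescolamento di $T$, e la stima (\ref{eq:vol3}) segue dalla H\"older continuit\`a nella metrica (\ref{eq:pavmark3}). La tua scelta di mettere $\underline{\sigma}$ al centro e il riferimento all'esterno \`e quella che rende i $\Phi^{a}_{2n+1}$ funzioni cilindriche di $(\sigma_{-n},\ldots,\sigma_{n})$ e fa convergere la somma a $A_{a}(\underline{\sigma})$, ed \`e esattamente l'intento della costruzione del testo.
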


\begin{oss}
Le funzioni $\Phi^{a}_{2n+1}$ sono dette {\em cilindriche}\index{funzione cilindrica}, dal momento che dipendono da un numero finito di simboli $\sigma_{-n},...,\sigma_{n}$.
\end{oss}

\begin{proof}
La dimostrazione di questa proposizione \` e molto semplice, e si basa su un argomento piuttosto ricorrente nella teoria dei sistemi di Anosov. Consideriamo le sequenze $\underline{\sigma}^{0},\underline{\sigma}\in\{1,...,q\}^{\mathbb{Z}}_{T}$, e sia $z$ il {\em tempo di mescolamento}\index{tempo di mescolamento} della matrice di compatibilit\` a $T$, ovvero tale che $T^{z}_{\sigma\sigma'}>0$ per ogni $\sigma,\sigma'\in\{1,...,q\}$; quindi, per ogni $\sigma,\sigma'$ esiste almeno una stringa $\vartheta(\sigma,\sigma')$ formata da $z$ simboli compatibili tale che la successione $\sigma\vartheta(\sigma,\sigma')\sigma'$ \` e compatibile.

Costruiamo le sequenze $\underline{\sigma}^{n}$ come, indicando con $\underline{\sigma}_{[a,b]}$ la frazione della stringa $\underline{\sigma}$ compresa tra gli indici $a,b$ (inclusi),

\begin{equation}
\underline{\sigma}^{n}\equiv \underline{\sigma}_{[-\infty,-n-z-1]}\vartheta(\sigma_{-n-z-1},\sigma^{0}_{-n})\underline{\sigma}^{0}_{[-n,n]}\vartheta(\sigma^{0}_{n},\sigma_{n+z+1})\underline{\sigma}_{[n+z+1,+\infty]}; \label{eq:vol4}
\end{equation}

ponendo $A_{a}(\underline{\sigma}^{0})=cost$ possiamo scrivere la funzione $A_{a}(\underline{\sigma})$ come

\begin{equation}
A_{a}(\underline{\sigma})=cost + \sum_{n=0}^{\infty}\left(A_{a}(\underline{\sigma}^{n+1})-A_{a}(\underline{\sigma}^{n})\right)\equiv cost + \sum_{n=0}^{\infty}\Phi^{a}_{2n+1}(\sigma_{-n},...,\sigma_{n}).
\end{equation}

Infine, grazie alla H\"older continuit\` a di $A_{a}(\underline{\sigma})$ esistono due costanti $C,k>0$ tali che

\begin{equation}
\left|\Phi^{a}_{2n+1}\right| = \left|A_{a}(\underline{\sigma}^{n+1})-A_{a}(\underline{\sigma}^{n})\right|\leq Ce^{-kn}, \label{eq:vol5}
\end{equation}

e ci\` o conclude la dimostrazione.

\end{proof}

Abbiamo ora gli strumenti per capire cosa diventa la misura di volume di un sistema di Anosov se ``codificata'' in dinamica simbolica.

\begin{prop}{{\em (Codifica della misura di volume)}}

Sia $(\Omega,S)$ un sistema di Anosov bidimensionale, e $\mathcal{Q}=\{Q_{1},...,Q_{q}\}$ un pavimento markoviano con matrice di compatibilit\` a $T$ e codice $X$. Sia $\mu^{0}$ la misura di volume in $\Omega$ associata alla metrica definita in $\Omega$, e $m_{0}$ la sua misura ``immagine'' nello spazio delle sequenze $\{1,...,q\}^{\mathbb{Z}}_{T}$ tramite $X$ ({\em cf.} osservazione \ref{oss:isomorf} nella definizione \ref{def:pavmark}). I rapporti di probabilit\` a condizionate calcolate con la misura $m_{0}$ verificano

\begin{eqnarray}
\frac{m_{0}(\sigma'_{-n},...,\sigma'_{n}|\sigma_{j},|j|>n)}{m_{0}(\sigma''_{-n},...,\sigma''_{n}|\sigma_{j},|j|>n)}=\frac{\sin\varphi(X(\underline{\sigma}'))}{\sin\varphi(X(\underline{\sigma}''))}\times\nonumber\\\exp\left(-\sum_{k=1}^{\infty}A_{u}(\tau^{k}\underline{\sigma}')-A_{u}(\tau^{k}\underline{\sigma}'')+A_{s}(\tau^{-k}\underline{\sigma}')-A_{s}(\tau^{-k}\underline{\sigma}'')\right), \label{eq:vol6}
\end{eqnarray}

dove $\varphi(x)$ \` e l'angolo tra la variet\` a stabile e la variet\` a instabile nel punto $x$, e $\underline{\sigma}'$, $\underline{\sigma}''$ sono sequenze in $\{1,...,q\}^{\mathbb{Z}}_{T}$ i cui valori con indice $j$ coincidono per $|j|>n$.

\end{prop}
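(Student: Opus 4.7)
The plan is to reduce the claim to an asymptotic calculation of volumes of Markov rectangles. Since $m_0$ is the push-forward of $\mu^0$ by $X^{-1}$, one has $m_0(C^{[-N,N]}_{\underline{\sigma}})=\mu^0(R_N)$ for $R_N:=\bigcap_{k=-N}^N S^{-k}Q_{\sigma_k}$, and by definition of conditional probability the desired ratio equals the $N\to\infty$ limit of $\mu^0(R'_N)/\mu^0(R''_N)$ for rectangles associated with sequences $\underline{\sigma}',\underline{\sigma}''$ agreeing outside $[-n,n]$ (the symbols chosen for $n<|k|\le N$ will drop out of the ratio, since they are the same for the two sequences).

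The main geometric ingredient is the local product decomposition of the volume. Using the stable/unstable coordinates granted by property~\ref{propr:int} of Definition~\ref{def:Anosov}, the Riemannian volume element on each small rectangle factorises as $\sin\varphi(x)\,d\xi_u\,d\xi_s$ up to a H\"older remainder vanishing with the diameter, so
\begin{equation*}
\mu^0(R_N)=\sin\varphi(x_N)\,|W^u(R_N)|\,|W^s(R_N)|\,(1+o(1))
\end{equation*}
for any $x_N\in R_N$. To evaluate the two axis lengths I would invoke the Markov property~(\ref{eq:pavmark4}): since $S$ sends stable boundaries into stable boundaries and $S^{-1}$ sends unstable boundaries into unstable boundaries, $S^{N}R_N$ fully spans the unstable width of $Q_{\sigma_N}$, while $S^{-N}R_N$ fully spans the stable width of $Q_{\sigma_{-N}}$. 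Pulling these axes back by $S^{-k}$ (respectively $S^{k}$) and using Definition~\ref{def:contr} one gets
\begin{equation*}
|W^u(R_N)|=|W^u(Q_{\sigma_N})|\,\exp\Bigl(-\textstyle\sum_{k=1}^{N}A_u(S^{k}x_N)\Bigr)(1+o(1)),
\end{equation*}
\begin{equation*}
|W^s(R_N)|=|W^s(Q_{\sigma_{-N}})|\,\exp\Bigl(-\textstyle\sum_{k=1}^{N}A_s(S^{-k}x_N)\Bigr)(1+o(1)).
\end{equation*}

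Substituting into the ratio $\mu^0(R'_N)/\mu^0(R''_N)$, the factors $|W^u(Q_{\sigma_N})|$ and $|W^s(Q_{\sigma_{-N}})|$ cancel because $\sigma'_k=\sigma''_k$ for $|k|>n$, and after translating to symbolic variables via $A_{u,s}(S^{k}x)=A_{u,s}(X(\tau^{k}\underline{\sigma}))$ the truncated version of formula~(\ref{eq:vol6}) emerges. Passing to the limit $N\to\infty$ is then legitimised by Proposition~\ref{prop:svilpot} and the H\"older continuity of $A_u,A_s$: for $|k|>n$ the shifted sequences $\tau^{\pm k}\underline{\sigma}'$ and $\tau^{\pm k}\underline{\sigma}''$ coincide on intervals of length growing linearly with $|k|$, so their differences decay exponentially and the two series in (\ref{eq:vol6}) converge absolutely. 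The main technical obstacle is the control of the $(1+o(1))$ H\"older corrections collected over the $N$ steps of the orbit: one must show that they do not accumulate, which follows from the exponential contraction of $\mathrm{diam}(R_N)$ combined with the H\"older regularity of the derivative of $S$ along the invariant foliations, so that the total correction remains bounded and tends to a finite multiplicative constant which itself cancels in the ratio as $N\to\infty$.
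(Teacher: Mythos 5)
Your reduction to cylinder ratios, the factorisation of the rectangle volume as $\sin\varphi\cdot|W^u|\cdot|W^s|$, and the final limit via H\"older continuity all coincide with the paper's proof (note only that the identification of the conditional probabilities with the limit of cylinder ratios is not ``by definition'' but is Doob's martingale theorem, as in (\ref{eq:vol8})). The genuine gap is in your distortion control, i.e. in the two side-length formulas with a $(1+o(1))$ factor. If you push the unstable axis of $R_N$ forward by the full $N$ steps, its image is a \emph{macroscopic} crossing of $Q_{\sigma_N}$; the exact relation is an integral of $\prod_{j}\lambda_u(S^jy)$ over the fibre, and replacing the integrand by its value at the base point $x_N$ costs a factor $\exp\bigl(O(\sum_{j}\lambda^{\alpha(N-j)})\bigr)=e^{O(1)}$: a bounded-distortion constant, not $1+o(1)$, because the terms with $j$ close to $N$ contribute a fixed amount. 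Your closing claim that this accumulated correction ``tends to a finite multiplicative constant which itself cancels in the ratio'' is exactly the missing step and is not justified: the two constants are averages over order-one unstable arcs through the two distinct points $X(\underline{\sigma}')$ and $X(\underline{\sigma}'')$, which remain a definite distance apart, and H\"older continuity of $\lambda_u,\lambda_s$ alone does not force their ratio to tend to $1$. As written, your argument yields only a two-sided Gibbs bound (the ratio equals the right-hand side of (\ref{eq:vol6}) up to uniform constants), whereas the proposition asserts an exact identity, which is what is needed to identify the conditional probabilities.

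The paper avoids this precisely with the device you skipped: in (\ref{eq:vol11})--(\ref{eq:vol12}) the expansion is stopped at $k=\omega N$ with $0<\omega<1$. Then the partially expanded arcs $S^{\omega N}\delta^x_u$, $S^{\omega N}\delta^y_u$ still have length $O(\lambda^{(1-\omega)N})$, so the error made in replacing the integrand by its value at a single point is $O\bigl(\omega N\,\lambda^{\alpha(1-\omega)N}\bigr)=o(1)$, while the leftover ratio $|S^{\omega N}\delta^x_u|/|S^{\omega N}\delta^y_u|$ tends to $1$ (formula (\ref{eq:vol13})) because the base points $S^{\omega N}x$, $S^{\omega N}y$ lie on the same stable leaf at distance $O(\lambda^{\omega N})$. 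Incorporating this intermediate-time cut (or an equivalent estimate showing the two distortion constants coincide in the limit) is necessary to close your proof; with it, your Markov-property bookkeeping (anchoring the side lengths to the widths of $Q_{\sigma_N}$ and $Q_{\sigma_{-N}}$, which is legitimate since the corresponding fibres through $S^Nx'$ and $S^Nx''$ are exponentially close) becomes a correct variant of the paper's argument.
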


\begin{oss}\label{oss:vol}
Grazie alla proposizione \ref{prop:svilpot} possiamo riscrivere la (\ref{eq:vol6}) in modo che assuma una forma reminiscente della teoria dei modelli di spin. Definiamo le funzioni $(\Phi^{a}_{X})_{X\subset\mathbb{Z}}$ nel seguente modo:

\begin{eqnarray}
\Phi^{a}_{X}(\underline{\sigma}_{X})&=&0\quad\mbox{a meno che $X=(l,l+1,...,l+2n)$}\\
\Phi^{a}_{X}(\underline{\sigma}_{X})&=&\Phi^{a}_{2n+1}(\underline{\sigma}_{X})\quad\mbox{se $X=(l,l+1,...,l+2n)$},
\end{eqnarray}

per $\underline{\sigma}_{X}\in\{1,...,q\}^{X}$ e qualche $l,n\in\mathbb{Z}$; con questa definizione si ha che

\begin{equation}
A_{a}(\underline{\sigma})= \sum_{X\;centrato\;in\;0}\Phi_{X}^{a}(\underline{\sigma}_{X}),\quad a=u,s. \label{eq:vol7.0}
\end{equation} 

Quindi la (\ref{eq:vol6}) pu\` o essere riscritta come, se $a(X)=u$ per $\mbox{{\em centro}}(X)\geq 0$ e $a(X)=s$ per $\mbox{{\em centro}}(X)<0$,

\begin{equation}
\frac{m_{0}(\sigma'_{[-n,n]}|\sigma_{j},|j|>n)}{m_{0}(\sigma''_{[-n,n]}|\sigma_{j},|j|>n)}=\exp{\left(-\sum_{X\cap[-n,n]\neq\emptyset}\left[\Phi^{a(X)}_{X}(\underline{\sigma}'_{X})-\Phi^{a(X)}_{X}(\underline{\sigma}''_{X})\right]+\mathcal{B}(\underline{\sigma})\right)}, \label{eq:vol7}
\end{equation}

dove il termine $\mathcal{B}$ \` e dovuto al rapporto dei seni e al fatto che nella (\ref{eq:vol6}) la somma su $k$ parte da $1$; quindi, poich\' e $\mathcal{B}(\underline{\sigma})$ \` e H\"older continuo pu\` o essere sviluppato a sua volta in potenziali a decadimento esponenziale centrati in $0$. La (\ref{eq:vol7}) \` e a tutti gli effetti il rapporto delle probabilit\` a di due configurazioni di {\em spin} $\underline{\sigma}'$, $\underline{\sigma}''$ di un modello di Ising unidimensionale\index{modello di Ising unidimensionale} con {\em condizioni al bordo fissate} al di fuori di un volume $[-n,n]$, e interagente attraverso potenziali a molti corpi diversi se il centro dell'intervallo reticolare $X$ si trova a destra o a sinistra dell'origine $0$.

Notare che nella (\ref{eq:vol7}), come nella (\ref{eq:vol6}), le somme all'esponente sono assolutamente convergenti perch\'e le quantit\` a sommate sono calcolate in configurazioni $\underline{\sigma}$, $\underline{\sigma}'$ diverse solo in un intervallo finito $[-n,n]$, e le funzioni sono H\"older continue.
\end{oss}

\begin{proof}
Come per l'esistenza delle partizioni di Markov, anche in questo caso la dimostrazione della proposizione diventa notevolmente pi\` u semplice se visualizzata graficamente; per questo motivo, cercheremo sempre di illustrare i punti chiave con delle figure. Otterremo la (\ref{eq:vol6}) nel caso $n=0$; il caso $n\geq 0$ si studia con un argomento del tutto analogo a quello che stiamo per presentare.

Consideriamo due sequenze $\underline{\sigma}',\underline{\sigma}''\in \{1,...,q\}^{\mathbb{Z}}_{T}$ con $\sigma'_{j}=\sigma''_{j}=\sigma_{j}$ per $|j|>0$, e poniamo $X(\underline{\sigma})=x$, $X(\underline{\sigma}')=y$; assumiamo per semplicit\` a che $x,y\notin\cup_{i=1}^{q}\partial Q$; \` e chiaro che i due punti $x,y$ appartengono alla stesse variet\` a stabili e instabili, dal momento che hanno storie uguali nel passato e nel futuro. Per il teorema di Doob\index{teorema di Doob}, {\em cf.} \cite{ergo}, si ha che, $\mu$-quasi ovunque,

\begin{equation}
\frac{m_{0}(\sigma'_{0}|\sigma_{j},|j|>0)}{m_{0}(\sigma''_{0}|\sigma_{j},|j|>0)}=\lim_{N\rightarrow\infty}\frac{m_{0}\left(C^{-N...-1\quad0\quad1...N}_{\sigma_{-N}...\sigma_{-1}\sigma'_{0}\sigma_{1}...\sigma_{N}}\right)}{m_{0}\left(C^{-N...-1\quad0\quad1...N}_{\sigma_{-N}...\sigma_{-1}\sigma''_{0}\sigma_{1}...\sigma_{N}}\right)}; \label{eq:vol8}
\end{equation}

quindi, per dimostrare la (\ref{eq:vol6}) dobbiamo semplicemente calcolare il limite nella parte destra della (\ref{eq:vol8}).

L'area dell'$S$-rettangolo $C^{-N...N}_{\sigma_{-N}...\sigma'_{0}...\sigma_{N}}$, e dunque la sua misura di Lebesgue $m_{0}$, pu\` o essere approssimata con

\begin{equation}
A\left(C^{-N...N}_{\sigma_{-N}...\sigma'_{0}...\sigma_{N}}\right)=|\delta_{u}^{x}|\cdot |\delta_{s}^{x}|\sin\varphi(x)+o(\lambda^{2N}), \label{eq:vol9}
\end{equation}

dove $\delta_{u}^{x}$, $\delta_{s}^{x}$ sono vettori tangenti alla variet\` a instabile e alla variet\` a stabile applicati in un vertice $x'$ (non importa quale) dell'$S$-rettangolo che congiungono $x'$ ai punti $x''$, $x'''$ (vedere figure \ref{fig:vol1}, \ref{fig:vol2}),  e $\varphi(x)$ \` e il seno dell'angolo compreso tra $\delta_{u}^{x}$, $\delta_{s}^{x}$; nella formula (\ref{eq:vol9}) abbiamo sostituito $\varphi(x')$ con $\varphi(x)$ perch\'e\footnote{Ci\` o \` e dovuto al fatto che $W^{u}(x)$, $W^{s}(x)$ dipendono in modo H\"older continuo da $x$.} $|\sin\varphi(x)-\sin\varphi(x')|\leq C\lambda^{cN}$, con $c,C>0$.

 \begin{figure}[htbp]
\centering
\includegraphics[width=0.7\textwidth]{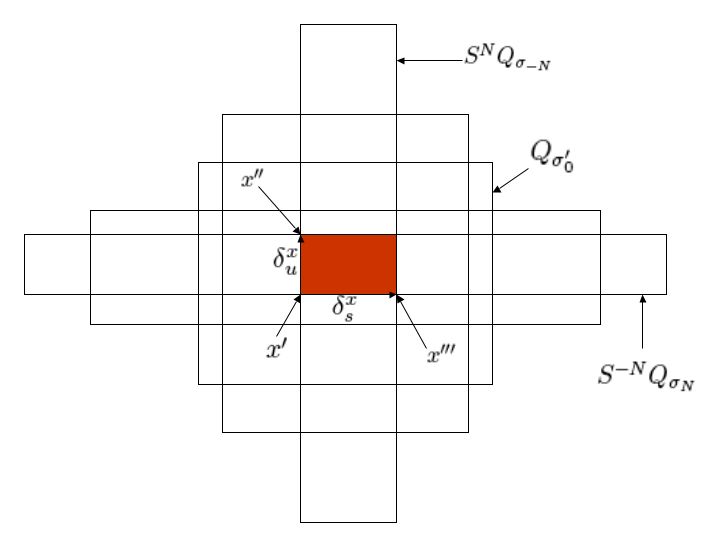}
\caption{Rappresentazione schematica del cilindro $C_{\sigma_{-N}...\sigma_{0}...\sigma_{N}}^{-N...N}$. In questo disegno abbiamo approssimato il cilindro con un rettangolo, e questo \` e possibile a meno di $o(\lambda^{N})$; vedere figura \ref{fig:vol2} per una rappresentazione pi\` u realistica della situazione.} \label{fig:vol1}
\end{figure}

 \begin{figure}[htbp]
\centering
\includegraphics[width=0.7\textwidth]{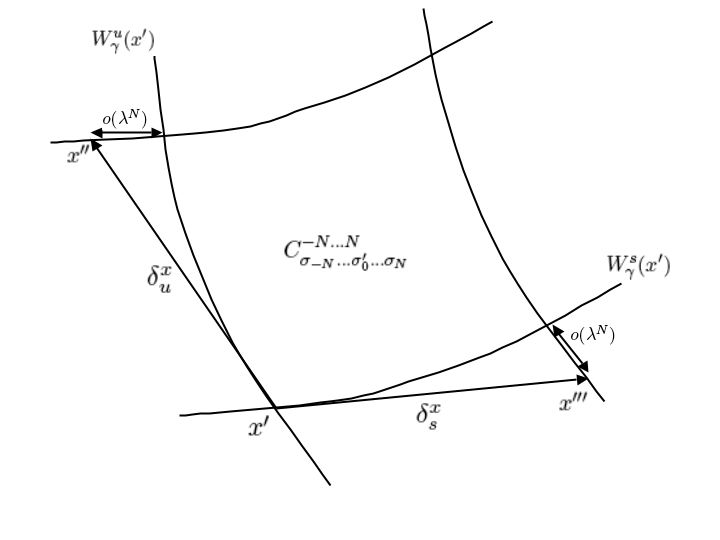}
\caption{} \label{fig:vol2}
\end{figure}

Lo stesso argomento pu\` o essere ripetuto per il cilindro $C^{-N...N}_{\sigma_{-N}...\sigma''_{0}...\sigma_{N}}$, e quindi il limite nella parte destra della (\ref{eq:vol8}) coincide con 

\begin{equation}
\frac{\sin\varphi(x)}{\sin\varphi(y)}\lim_{N\rightarrow\infty}\frac{|\delta_{u}^{x}|\cdot|\delta_{s}^{x}|}{|\delta_{u}^{y}|\cdot|\delta_{s}^{y}|}; \label{eq:vol10}
\end{equation}

possiamo riscrivere i rapporti nella (\ref{eq:vol10}) come, poich\'e $S^{-k}\circ S^{k}=1$,

\begin{eqnarray}
\frac{|\delta_{u}^{x}|}{|\delta_{u}^{y}|}&=&\frac{\prod_{j=1}^{k}\lambda_{u}^{-1}(S^{j}x')|S^{k}\delta_{u}^{x}|}{\prod_{j=1}^{k}\lambda_{u}^{-1}(S^{j}y')|S^{k}\delta_{u}^{y}|} \label{eq:vol11}\\
\frac{|\delta_{s}^{x}|}{|\delta_{s}^{y}|}&=&\frac{\prod_{j=1}^{k}\lambda_{s}(S^{-j}x')|S^{-k}\delta_{s}^{x}|}{\prod_{j=1}^{k}\lambda_{s}(S^{-j}y')|S^{-k}\delta_{s}^{y}|}, \label{eq:vol12}
\end{eqnarray}

e poich\'e $k$ \` e arbitrario poniamo $k=\omega N$ con $\omega\in(0,1)$. Per $N\rightarrow\infty$ i punti $x'$, $y'$ tendono a $x$, $y$, i quali si trovano sulla stessa variet\` a stabile e instabile dal momento che $x=X(\underline{\sigma}')$, $y=X(\underline{\sigma}'')$ con $\sigma'_{i}=\sigma''_{i}$ se $|i|>0$. Quindi, grazie alla H\"older continuit\` a di $\lambda_{u}(x)$, $\lambda_{s}(x)$ i rapporti di prodotti nelle (\ref{eq:vol11}), (\ref{eq:vol12}) tenderanno ad un limite finito; l'ultima cosa che rimane da discutere \` e il comportamento di $\frac{|S^{\omega N}\delta_{u}^{x}|}{|S^{\omega N}\delta_{u}^{y}|}$, $\frac{|S^{-\omega N}\delta_{s}^{x}|}{|S^{-\omega N}\delta_{s}^{y}|}$ per $N\rightarrow\infty$. I due vettori $S^{\omega N}\delta_{u}^{x}$, $S^{\omega N}\delta_{u}^{y}$ diventano ``molto vicini'' per $N$ grandi, nel senso che la distanza tra $x$ e $y$ \` e $O(\lambda^{\omega N})$ (inizialmente \` e $O(1)$), e inoltre $|S^{\omega N}\delta_{u}^{x,y}|=O(\lambda^{(1-\omega)N})$; quindi, poich\'e l'angolo $\varphi(x)$ H\"older continuo in $x$,

\begin{equation}
\lim_{N\rightarrow\infty}\frac{|S^{\omega N}\delta_{u}^{x}|}{|S^{\omega N}\delta_{u}^{y}|}=1. \label{eq:vol13}
\end{equation}

 \begin{figure}[htbp]
\centering
\includegraphics[width=0.7\textwidth]{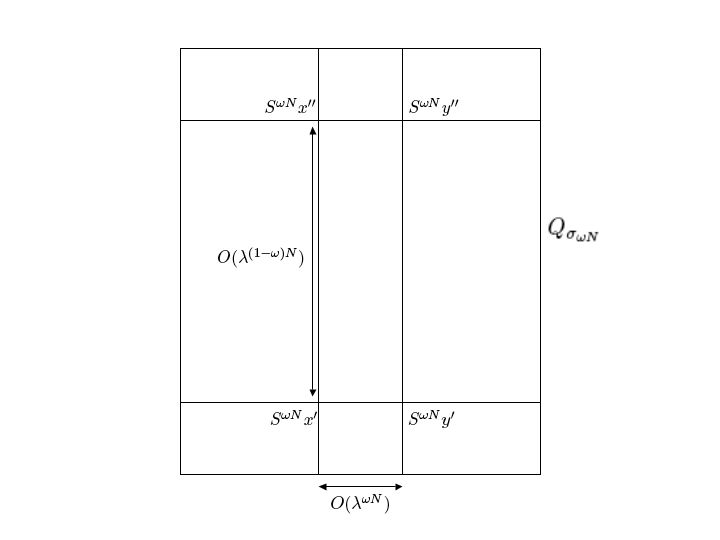}
\caption{Avvicinamento dei segmenti $S^{\omega N}\delta_{u}^{x}$, $S^{\omega N}\delta_{u}^{y}$.} \label{fig:vol3}
\end{figure}

Un argomento dello stesso tipo pu\` o essere ripetuto per $\delta_{s}^{x}$, $\delta_{s}^{y}$, quindi il risultato finale \` e che

\begin{equation}
\lim_{N\rightarrow\infty}\frac{\delta_{u}^{x}}{\delta_{u}^{y}}=\prod_{k=1}^{\infty}\frac{\lambda_{u}^{-1}(S^{k}x)}{\lambda_{u}^{-1}(S^{k}y)},\qquad\lim_{N\rightarrow\infty}\frac{\delta_{s}^{x}}{\delta_{s}^{y}}=\prod_{k=1}^{\infty}\frac{\lambda_{s}(S^{-k}x)}{\lambda_{s}(S^{-k}y)}, \label{eq:vol14}
\end{equation}

e la (\ref{eq:vol6}) \` e dimostrata ponendo $A_{u}(\underline{\sigma})=\log\lambda_{u}(X(\underline{\sigma}))$, $A_{s}(\underline{\sigma})=-\log\lambda_{s}(X(\underline{\sigma}))$.

\end{proof}

Quindi, come abbiamo gi\` a notato con l'osservazione \ref{oss:vol}, possiamo immaginare la misura di volume codificata in dinamica simbolica come la misura di probabilit\` a di un modello di Ising unidimensionale interagente attraverso potenziali a molti corpi {\em diversi} a seconda che ci si trovi a destra o a sinistra dal centro del volume nel quale \` e confinato il modello. \` E abbastanza naturale chiedersi se nel limite in cui ci si sposta {\em infinitamente lontano} verso destra o verso sinistra, {\em i.e.} nel passato o nel futuro del sistema di Anosov, la misura di probabilit\` a del corrispondente modello di spin sia determinata da potenziali d'interazione di {\em un'unico tipo}; nel caso ci\` o fosse vero, la misura non potrebbe che coincidere con la codifica in simboli della misura di probabilit\` a invariante del sistema dinamico.

Rispondere a questa domanda equivale a risolvere il problema dell'esistenza del {\em limite termodinamico} per un modello di Ising unidimensionale con potenziali a decadimento esponenziale; ma dalla teoria dei sistemi di spin sappiamo che questo limite esiste e che {\em non dipende dalle condizioni al bordo}, dunque nel nostro caso dai potenziali localizzati nel passato e nell'origine. Riprenderemo il problema in termini pi\` u precisi nella prossima sezione.

\section{Una misura invariante per i sistemi di Anosov}\label{cap:gibbs}

\subsection{Misure di Gibbs}

Prima di tornare a discutere la questione dell'esistenza di una misura di probabilit\` a invariante per un sistema di Anosov, diamo una definizione generale di {\em misura di Gibbs}; come vedremo, sia la misura invariante di un sistema di Anosov che la codifica in simboli della misura di volume apparterranno a questa classe di misure di probabilit\` a.  Prima di fare ci\` o, specifichiamo le condizioni ``fisiche'' che devono soddisfare i potenziali a molti corpi di una misura di Gibbs.

\begin{definizione}{{\em (Potenziali per la dinamica simbolica)}}\label{def:potsimb}

Sia $T$ una matrice di compatibilit\`a\index{matrice di compatibilit\` a} mescolante per le sequenze in $\{0,...,n\}^{\mathbb{Z}}$. Chiameremo $B$ lo spazio delle sequenze $\Phi=\{\Phi_{X}\}_{X\subset\mathbb{Z}}$ parametrizzate dai sottoinsiemi finiti di $\mathbb{Z}$, formato dalle funzioni

\begin{equation}
\Phi_{X}:\{0,...,n\}_{T}^{X}\rightarrow\mathbb{R}
\end{equation}

che, ponendo $\|\Phi_{X}\|=\max_{\underline{\sigma}\in\{0,...,n\}_{T}^{X}}|\Phi_{X}(\underline{\sigma})|$, verificano le seguenti propriet\` a.

\begin{enumerate}
\item $\Phi$ \` e invariante per traslazioni su $\mathbb{Z}$, ovvero $\forall\;\sigma_{1},...,\sigma_{p}$

\begin{equation}
\Phi_{X}(\sigma_{1},...,\sigma_{p})=\Phi_{\tau X}(\sigma_{1},...,\sigma_{p})
\end{equation}

se $X=(\xi_{1},...,\xi_{p})$ e $\tau X=(\xi_{1}+1,...,\xi_{p}+1)$.

\item $\Phi$ \` e ``sommabile'':

\begin{equation}
\|\Phi\|\equiv\sum_{X\ni 0}\frac{\|\Phi_{X}\|}{|X|}<+\infty.
\end{equation}

\end{enumerate}

Diremo che $B$ \` e lo spazio dei {\em potenziali} su $\{0,...,n\}^{\mathbb{Z}}_{T}$. La funzione

\begin{equation}
A_{\Phi}(\underline{\sigma})=\sum_{X\ni 0}\frac{\Phi_{X}(\underline{\sigma}_{X})}{|X|}
\end{equation}

sar\` a chiamata {\em energia potenziale per sito} oppure {\em funzione energia} associata a $\Phi$.

\end{definizione}

\begin{oss}
Possiamo prendere in prestito dalla meccanica statistica classica un semplice esempio di funzione energia, considerando ad esempio un sistema di spin interagenti su un reticolo bidimensionale con energia potenziale totale data da

\begin{equation}
\sum_{(x,y)\in\mathbb{Z}^{2}}\sigma_{x}\sigma_{y}\Phi(x,y)=\sum_{x\in\mathbb{Z}}\sum_{y\in\mathbb{Z}, y\neq x}\frac{\sigma_{x}\sigma_{y}\Phi(x,y)}{2};
\end{equation}

l'argomento della prima sommatoria \` e l'energia potenziale per sito, e $\sigma_{x}\sigma_{y}\Phi(x,y)\equiv \Phi_{x,y}(\sigma,\sigma')$ \` e un {\em potenziale a due corpi}.
\end{oss}

In questo contesto, possiamo dare una definizione generale di {\em misura}, o {\em stato}, {\em di Gibbs}.

\begin{definizione}{{\em (Misura di Gibbs)}}

La misura $m$ \` e una misura di Gibbs su $\{0,...,n\}^{\mathbb{Z}}_{T}$ con potenziali $\Phi=\left\{\Phi_{X}\right\}_{X\in\mathbb{Z}}\in B$ se \` e una misura di probabilit\` a sui boreliani di $\{0,...,n\}^{\mathbb{Z}}_{T}$ le cui probabilit\` a condizionate verificano $m$-quasi ovunque la propriet\` a

\begin{equation}
\frac{m(\underline{\sigma}_{\Lambda}'|\underline{\sigma}_{\Lambda^{c}})}{m(\underline{\sigma}_{\Lambda}''|\underline{\sigma}_{\Lambda^{c}})}=\exp\left(-\sum_{k=-\infty}^{\infty}\left[A_{\Phi}(\tau^{k}\underline{\sigma}')-A_{\Phi}(\tau^{k}\underline{\sigma}'')\right]\right), \label{eq:gibbs1}
\end{equation}

per ogni intervallo connesso $\Lambda\subset\mathbb{Z}$ tale che $|\Lambda|>a(T)$, se $a(T)$ \` e il tempo di mescolamento di $T$ ({\em cf.} definizione \ref{def:matrcomp}), e per $\underline{\sigma}'=(\underline{\sigma}'_{\Lambda}\underline{\sigma}_{\Lambda^{c}}), \underline{\sigma}''=(\underline{\sigma}''_{\Lambda}\underline{\sigma}_{\Lambda^{c}})\in\{0,...,n\}^{\mathbb{Z}}_{T}$.

La propriet\` a (\ref{eq:gibbs1}), insieme alla condizione di compatibilit\` a $\sum_{\underline{\sigma}_{\Lambda}}m(\underline{\sigma}_{\Lambda}|\underline{\sigma}_{\Lambda^{c}})=1$, implica che

\begin{equation}
m(\underline{\sigma}_{\Lambda}|\underline{\sigma}_{\Lambda^{c}})=\frac{e^{-\sum_{R\cap\Lambda\neq\emptyset}\Phi_{R}(\underline{\sigma}_{R})}}{\sum_{\underline{\sigma}'_{\Lambda}}e^{-\sum_{R\cap\Lambda\neq\emptyset}\Phi_{R}(\underline{\sigma}'_{R})}},\label{eq:gibbs2}
\end{equation}

interpretando la (\ref{eq:gibbs2}) come zero se $\underline{\sigma}\notin\{0,...,n\}^{\mathbb{Z}}_{T}$.

\end{definizione}

\begin{oss}
\begin{enumerate}
\item Il paragone del risultato (\ref{eq:vol6}) ottenuto per la misura di volume con la (\ref{eq:gibbs1}) ci permette di dire con cognizione di causa che la misura di volume \` e uno stato di Gibbs.
\item In altri approcci le misure di Gibbs vengono introdotte diversamente, ad esempio come soluzione di un problema variazionale; in quei casi la propriet\` a (\ref{eq:gibbs1}), che abbiamo assunto nel nostro caso essere la propriet\` a che definisce una misura di Gibbs, diventa un teorema. La (\ref{eq:gibbs1}) forma un insieme di equazioni per le probabilit\` a condizionate che vengono chiamate {\em equazioni di Dobrushin-Lanford-Ruelle}, o pi\` u semplicemente {\em equazioni DLR}\index{equazioni DLR}.
\end{enumerate}
\end{oss}

\subsection{La misura SRB}\label{sez:srb}

 Concludiamo la nostra introduzione alla teoria dei sistemi di Anosov discutendo il problema dell'esistenza di misure di probabilit\` a invarianti, anche dette {\em statistiche del moto}, per questa classe di sistemi dinamici; in generale $\det S\neq 1$, dunque la misura di volume non sar\` a invariante sotto l'azione dell'evoluzione temporale. Ciononostante, attribuiremo una particolare importanza alle statistiche del moto i cui dati iniziali siano scelti a caso con una misura di probabilit\` a assolutamente continua rispetto al volume.
 
 \begin{definizione}{{\em (Misura SRB)}}
 
 Sia $(\Omega,S)$ un sistema di Anosov. Se $\mu_{0}$ \` e una misura di probabilit\` a assolutamente continua rispetto alla misura di volume su $\Omega$ e per tutte le funzioni continue $F$ definite in $\Omega$ il limite $\lim_{N\rightarrow\infty}N^{-1}\sum_{j=0}^{N-1}F(S^{j}x)$ esiste $\mu_{0}$-quasi ovunque e pu\` o essere scritto come
 
 \begin{equation}
 \lim_{N\rightarrow\infty}N^{-1}\sum_{j=0}^{N-1}F(S^{j}x)=\int_{\Omega}\mu(dy)F(y),
 \end{equation}
 
 allora $\mu$ \` e chiamata la {\em misura SRB}\index{misura SRB} di $(\Omega,S)$.
 \end{definizione} 

\begin{oss}
La propriet\` a di invarianza sotto $S$ della misura SRB discende banalmente dalla sua definizione, infatti

\begin{eqnarray}
\int_{\Omega}\mu(dy)F(Sy)&=&\lim_{N\rightarrow\infty}N^{-1}\sum_{j=0}^{N-1}F(S^{j+1}x)\nonumber\\&=&\lim_{N\rightarrow\infty}N^{-1}\sum_{j=0}^{N-1}F(S^{j}x)=\int_{\Omega}\mu(dy)F(y), \label{eq:srb1}
\end{eqnarray}

se $F$ \` e limitata, e ci\` o \` e implicato dalla continuit\` a di $F$ e dalla compattezza di $\Omega$.

\end{oss}

\begin{prop}{{\em (Esistenza della misura SRB)}}\label{prop:srb2}

Sia $(\Omega,S)$ un sistema di Anosov. Esistono due misure di probabilit\` a $\mu_{+}$, $\mu_{-}$ tali che, per ogni $F\in C^{\alpha}(\Omega)$, $\alpha>0$,

\begin{equation}
\lim_{N\rightarrow\infty}N^{-1}\sum_{j=0}^{N-1}F(S^{\pm j}x)=\int\mu_{\pm}(dy)F(y) \label{eq:srb2}
\end{equation}

quasi ovunque in $x$ rispetto alla misura di volume $\mu_{0}$ su $\Omega$.

\end{prop}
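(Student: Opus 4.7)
The plan is to exploit the Markov partition of Proposition \ref{prop:pavmark} together with the explicit symbolic encoding of the volume measure (formula (\ref{eq:vol6})) to reduce the statement to a question about one-dimensional Gibbs states with exponentially decaying many-body potentials, for which existence, uniqueness and ergodic theorems are classical.

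Passing through the code $X$ I would replace $(\Omega,S)$ by the subshift $(\{1,\ldots,q\}^{\mathbb{Z}}_T,\tau)$. Motivated by the form (\ref{eq:vol7}) of the conditional probabilities of the image measure $m_0$ of the volume, where the potentials centered to the right of the origin come from $-A_u$ and those to the left from $-A_s$, I would define $\mu_+$ as $X_* m_+$, with $m_+$ the translation-invariant Gibbs state on $\{1,\ldots,q\}^{\mathbb{Z}}_T$ whose DLR equations use everywhere the potential family associated to $-A_u$ via Proposition \ref{prop:svilpot}; analogously $\mu_- = X_* m_-$ with $-A_s$ in place of $-A_u$. Existence, uniqueness, translation invariance and mixing of $m_\pm$ are standard for one-dimensional systems with exponentially decaying many-body interactions, since in one dimension there are no phase transitions. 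The intertwining $X\circ\tau = S\circ X$ immediately yields $S$-invariance of $\mu_\pm$, and mixing of $m_+$ combined with Proposition \ref{th:birk} and Corollary \ref{cor:birk} shows that the Birkhoff averages $N^{-1}\sum_{j=0}^{N-1}F(S^j x)$ converge to $\int F\,d\mu_+$ for $\mu_+$-almost every $x$ and every $F\in C^\alpha(\Omega)$.

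The heart of the proof -- and its main obstacle -- is the upgrade from ``$\mu_+$-almost every'' to ``$\mu_0$-almost every'', which is exactly where the distinguished role of the volume enters. I would establish mutual absolute continuity $m_0\sim m_+$ as follows. Comparing the ratios (\ref{eq:vol7}) for $m_0$ with the analogous DLR ratios for $m_+$, the only difference involves the potentials centered at negative sites, which come from the Hölder-continuous function $A_s$; by Proposition \ref{prop:svilpot} these contributions are uniformly bounded, so the conditional probabilities of $m_0$ and $m_+$ on every finite window are equivalent with ratios bounded above and below independently of the conditioning. By the Doob martingale identity already used in the proof of (\ref{eq:vol6}), this local comparison passes to the limit and yields a global equivalence $m_0\sim m_+$ with bounded Radon--Nikodym derivative. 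The $\mu_+$-full measure set of points on which the forward Birkhoff averages converge to $\int F\,d\mu_+$ is therefore also of full $m_0$-measure; pulling back via $X$, and using that the exceptional boundary set $\cup_{i}S^{-i}\partial$ has zero $\mu_0$-measure (so that the finite multiplicity of the code is immaterial), gives the claim for $\mu_+$. The case of $\mu_-$ is obtained by applying the same argument to the inverse dynamics $S^{-1}$, swapping the roles of $A_u$ and $A_s$.
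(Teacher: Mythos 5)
Il punto debole della tua proposta \`e l'affermazione centrale di equivalenza globale $m_{0}\sim m_{+}$ con derivata di Radon--Nikodym limitata: questa affermazione \`e in generale falsa, e di fatto contraddice quanto la tesi stessa sottolinea pi\`u volte (capitolo \ref{cap:noneq} e conclusioni), ovvero che per un sistema dissipativo la misura SRB \`e {\em singolare} rispetto al volume, concentrata su un insieme di misura di volume nulla. Il passaggio tecnico che non regge \`e il confronto delle probabilit\`a condizionate: su una finestra $[-n,n]$ il rapporto tra la condizionata di $m_{0}$ (formula (\ref{eq:vol7}), con potenziali $\Phi^{s}_{X}$ per gli $X$ con centro negativo) e quella di $m_{+}$ (potenziali $\Phi^{u}_{X}$ ovunque) coinvolge la somma di $\Phi^{s}_{X}-\Phi^{u}_{X}$ su un numero $O(n)$ di intervalli $X$, ciascuno dei quali contribuisce $O(1)$; il logaritmo del rapporto cresce quindi linearmente in $n$ e non esiste alcuna maggiorazione uniforme nella taglia della finestra. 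La ``equivalenza locale'' con costanti dipendenti dalla finestra non implica l'equivalenza globale: due stati di Gibbs unidimensionali con potenziali invarianti per traslazione {\em diversi} sono tipicamente mutuamente singolari, e dunque l'argomento alla Doob non passa al limite. Con esso cade l'intero passaggio da ``$\mu_{+}$-quasi ovunque'' a ``$\mu_{0}$-quasi ovunque'', che \`e esattamente il cuore dell'enunciato; la prima parte della tua costruzione (definizione di $m_{\pm}$ tramite la proposizione \ref{prop:svilpot}, $S$-invarianza via il codice, teorema di Birkhoff per $\mu_{+}$-quasi ogni punto, riduzione del caso $\mu_{-}$ alla dinamica inversa) \`e invece corretta.

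Ci\`o che rende vero il teorema \`e una forma {\em direzionale}, non globale, di assoluta continuit\`a: il volume, evoluto con $S^{N}$ o disintegrato lungo la foliazione instabile, ha densit\`a condizionate sulle fibre instabili controllate da $\prod_{k}\lambda_{u}^{-1}(S^{k}x)$, cio\`e dal peso $e^{-\sum_{k}A_{u}}$; \`e il meccanismo della misura SRB ``approssimata'' $\mu_{T,\tau}$ della sezione \ref{sez:srbappr} e dell'argomento euristico dell'appendice \ref{app:eur}. La dimostrazione corretta stima le medie temporali in avanti di un dato $\mu_{0}$-tipico con le medie rispetto a $N^{-1}\sum_{j}\mu_{0}\circ S^{-j}$, i cui pesi sui cilindri markoviani sono proporzionali a $\lambda_{u,N}^{-1}$ a meno di costanti uniformi in $N$, e conclude con il limite termodinamico e l'assenza di transizioni di fase per potenziali a decadimento esponenziale, senza mai invocare $m_{0}\sim m_{+}$; \`e la strada seguita in \cite{ergo}, al quale la tesi rimanda per la dimostrazione completa. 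Per correggere la tua proposta dovresti quindi sostituire il passo di equivalenza globale con il controllo delle misure condizionate lungo le variet\`a instabili (o, equivalentemente, con la costruzione di $\mu_{T,\tau}$ e la stima uniforme dei termini di bordo).
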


\begin{oss}
\begin{enumerate}
\item Notare che, a differenza del teorema di Birkhoff, {\em cf.} proposizione \ref{th:birk}, la misura di probabilit\` a con la quale vengono scelti i dati iniziali {\em non} \` e invariante.
\item In generale, la {\em misura SRB nel futuro} $\mu_{+}$ e la {\em misura SRB nel passato} $\mu_{-}$ sono {\em diverse}, ovvero per una generica funzione H\"older continua $F$ $\mu_{+}(F)\neq\mu_{-}(F)$.
\end{enumerate}
\end{oss} 

\begin{proof}
Per una dimostrazione di questo teorema rimandiamo a \cite{ergo}; la dimostrazione non \` e particolarmente difficile, ma si basa su alcuni risultati che, dato il carattere introduttivo di questi primi due capitoli, non abbiamo discusso.
\end{proof}

Dalla dimostrazione della proposizione \ref{prop:srb2} scopriamo che la misura SRB \` e uno stato di Gibbs determinato dai potenziali a molti corpi generati dallo sviluppo del tasso di espansione $A_{u}(\underline{\sigma})$, per la statistica nel futuro, o da $A_{s}(\underline{\sigma})$, per la statistica nel passato; ci\` o equivale a dire che, ad esempio nel futuro, le probabilit\` a condizionate soddisfano le equazioni DLR

\begin{equation}
\frac{m(\underline{\sigma}'_{\Lambda}|\underline{\sigma}_{\Lambda^{c}})}{m(\underline{\sigma}''_{\Lambda}|\underline{\sigma}_{\Lambda^{c}})}=\exp{\left(\sum_{k=-\infty}^{\infty}A_{u}(\tau^{k}\underline{\sigma}'')-A_{u}(\tau^{k}\underline{\sigma}')\right)}, \label{eq:srb3}
\end{equation}

dove $\underline{\sigma}'=(\underline{\sigma}'_{\Lambda}\underline{\sigma}_{\Lambda^{c}})$, $\underline{\sigma}''=(\underline{\sigma}''_{\Lambda}\underline{\sigma}_{\Lambda^{c}})$. Questo fatto pu\` o essere capito facilmente in modo euristico, {\em cf.} appendice \ref{app:eur}.

\` E piuttosto curioso notare come il problema apparentemente senza speranza della statistica dei moti di un sistema estremamente caotico, in un certo senso il {\em paradigma} di un sistema caotico, possa essere risolto attraverso lo studio di un modello di Ising unidimensionale, uno dei pi\` u semplici modelli noti in meccanica statistica; in particolare, nelle appendici \ref{app:D}, \ref{app:B} dimostreremo alcune propriet\` a di analiticit\` a della misura SRB attraverso una tecnica ben nota in meccanica statistica dei sistemi di spin.

\section{Un'applicazione: il gatto di Arnold perturbato}\label{sez:cat}

Come abbiamo gi\` a visto nella sezione \ref{sez:pavan}, {\em gatto di Arnold}\index{gatto di Arnold} \` e un semplice esempio di sistema di Anosov; l'evoluzione di questo sistema avviene sul toro $\mathbb{T}^{2}$ con la mappa

\begin{equation}
S=\left(\begin{array}{cc} 1 & 1 \\ 1 & 2 \end{array}\right). \label{eq:cat0}
\end{equation}

La misura SRB di questo sistema di Anosov \` e banalmente la misura di volume, dal momento che $|\det S|=1$. Come vedremo, perturbando ``poco'' la mappa $S$ il sistema perturbato sar\` a ancora un sistema di Anosov; in particolare, potremo calcolare esplicitamente i potenziali della nuova misura SRB. Tutta la nostra analisi sar\` a resa possibile dal seguente teorema, dovuto ad Anosov.

\begin{prop}{{\em (Teorema di stabilit\` a strutturale)}}\index{teorema di stabilit\` a strutturale}

Siano $(\Omega,S)$ e $(\Omega,S')$ due sistemi di Anosov. Se le mappe $S$, $S'$ sono abbastanza vicine insieme alle loro derivate prime allora esiste un omeomorfismo $H:\Omega\leftrightarrow\Omega$ tale che

\begin{equation}
S\circ H=H\circ S'. \label{eq:cat1}
\end{equation}

\end{prop}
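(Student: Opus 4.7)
Il piano è di usare un argomento di punto fisso alla Moser, cercando $H$ nella forma $H=\mathrm{id}+h$ dove $h:\Omega\to T\Omega$ è una sezione continua ``piccola'' del fibrato tangente (localmente: una funzione a valori in $\mathbb{R}^{d}$ se $d=\dim\Omega$). Sostituendo in $S\circ H=H\circ S'$ e scrivendo in carte locali si ottiene $S(x+h(x))=S'(x)+h(S'(x))$; sviluppando $S$ al primo ordine attorno a $x$ questa diventa
\begin{equation}
h(S'(x))-DS(x)\,h(x)=S(x)-S'(x)+R(x,h(x)),\nonumber
\end{equation}
con $R$ un resto quadratico, che è $o(\|h\|_{C^{0}})$ uniformemente in $x$ non appena $\|h\|_{C^{0}}$ e $\|S-S'\|_{C^{1}}$ sono sufficientemente piccole. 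Se $S=S'$ l'equazione è ovviamente risolta da $h=0$, quindi ci si aspetta che una perturbazione ``piccola'' dia una $h$ ``piccola''.

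Introduco allora l'operatore lineare $L:C^{0}(\Omega,T\Omega)\to C^{0}(\Omega,T\Omega)$ definito da $(Lh)(x)=h(S'(x))-DS(x)\,h(x)$, e riformulo l'equazione di coniugio come problema di punto fisso $h=L^{-1}(S-S'+R(\,\cdot\,,h))$ su una palla $\|h\|_{C^{0}}\leq\varepsilon$. Il dato $S-S'$ è piccolo per ipotesi e $R$ è quadratico in $h$, per cui, una volta stabilita una stima uniforme $\|L^{-1}\|\leq C$, il membro di destra è una contrazione stretta della palla in sé e il teorema delle contrazioni di Banach fornisce un'unica $h$ piccola, e dunque un'unica $H$ vicina all'identità.

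Il passo cruciale, e di gran lunga il più delicato, è provare l'invertibilità di $L$ con norma dell'inversa uniformemente limitata per $S'$ in un $C^{1}$-intorno di $S$. Qui sfrutto la struttura di Anosov: la decomposizione $T_{x}\Omega=E^{u}(x)\oplus E^{s}(x)$ è $DS$-covariante e iperbolica, e \emph{persiste} sotto piccole perturbazioni $C^{1}$, ovvero esiste una decomposizione analoga $T_{x}\Omega=E^{u}_{S'}(x)\oplus E^{s}_{S'}(x)$, H\"older continua in $x$ e uniformemente trasversa (questa persistenza si dimostra a sua volta con un punto fisso nei fibrati di Grassmann, sfruttando l'azione contrattiva del \emph{graph transform}). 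Decomponendo $h=h^{u}+h^{s}$ e $g=g^{u}+g^{s}$ rispetto a questa decomposizione, l'equazione $Lh=g$ si scinde in due equazioni indipendenti: sulla parte instabile $h^{u}(S'x)-DS(x)\,h^{u}(x)=g^{u}(x)$ si risolve iterando all'indietro, ottenendo la serie $h^{u}(x)=-\sum_{n=0}^{\infty}\bigl(\prod_{k=1}^{n}DS(S'^{-k}x)\bigr)^{-1}g^{u}(S'^{-n}x)$, che converge uniformemente per l'espansione esponenziale di $DS$ su $E^{u}$; simmetricamente per la parte stabile si itera in avanti. Le costanti di espansione/contrazione essendo stabili sotto piccole perturbazioni, si ottiene la stima voluta $\|L^{-1}\|\leq C$ uniforme in un intorno di $S$.

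Per concludere che $H$ è un omeomorfismo, applico la stessa costruzione scambiando i ruoli di $S$ e $S'$, ottenendo una $H'$ vicina all'identità con $S'\circ H'=H'\circ S$. Allora $H\circ H'$ coniuga $S$ con sé stessa ed è vicina all'identità; per l'unicità della soluzione piccola del problema di coniugio (applicato al caso banale $S=S$, la cui soluzione è $\mathrm{id}$) deve essere $H\circ H'=\mathrm{id}$, e analogamente $H'\circ H=\mathrm{id}$. Dunque $H$ è un omeomorfismo con inversa $H'$ continua, e ciò conclude la dimostrazione. L'ostacolo principale è la stima uniforme su $\|L^{-1}\|$: tutto dipende dal fatto che la scomposizione iperbolica \emph{sopravvive} a piccole perturbazioni $C^{1}$ con costanti di iperbolicità che variano con continuità, fatto non banale ma standard nella teoria dei sistemi di Anosov.
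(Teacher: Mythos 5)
La tua proposta è, nella sostanza, la dimostrazione classica ``alla Anosov--Moser'' del teorema di stabilità strutturale: si scrive $H=\mathrm{id}+h$, si linearizza l'equazione di coniugio, si inverte l'operatore $Lh=h\circ S'-DS\cdot h$ sfruttando l'iperbolicità, e si conclude con il teorema delle contrazioni, più l'argomento a due facce ($H$ e $H'$) per l'invertibilità. La tesi segue una strada genuinamente diversa: non dimostra l'enunciato generale (per quello rimanda ad \cite{AA}), ma costruisce esplicitamente $H$ per il gatto di Arnold perturbato $S_{\varepsilon}=S_{0}+\varepsilon\underline{f}$, risolvendo l'equazione $H\circ S_{0}=S_{\varepsilon}\circ H$ ordine per ordine in $\varepsilon$ (rappresentazione ad alberi, stime di Cauchy, raggio di convergenza esplicito), il che fornisce in più l'analiticità in $\varepsilon$ e l'H\"older continuità in $\underline{\psi}$ con esponente controllato — informazioni essenziali nel seguito per i potenziali della misura SRB — ma solo per quella classe di perturbazioni. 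Il tuo argomento copre invece l'enunciato generale, al prezzo di non dare alcuna informazione quantitativa sulla dipendenza di $H$ dalla perturbazione.

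C'è però un errore concreto proprio nel passo che tu stesso indichi come cruciale, l'inversione di $L$: le direzioni temporali sono scambiate. Con la tua equazione $h^{u}(S'x)-DS(x)h^{u}(x)=g^{u}(x)$, la componente instabile si risolve sommando lungo l'orbita \emph{futura} con i fattori contraenti $DS^{-1}$: nel modello scalare $h(S'x)-\lambda h(x)=g(x)$ con $|\lambda|>1$ la soluzione è $h(x)=-\sum_{n\geq0}\lambda^{-(n+1)}g(S'^{n}x)$, mentre la serie che scrivi tu, valutata sull'orbita passata $S'^{-n}x$, converge ma \emph{non} risolve l'equazione (sostituendola resta un termine $-g(S'x)$ non cancellato); simmetricamente la parte stabile va risolta iterando all'indietro, non in avanti. È la stessa struttura della (\ref{eq:cat7}) del testo, dove per $\alpha=+$ si somma su $p\geq0$ e per $\alpha=-$ su $p<0$. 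Corretto questo scambio l'argomento funziona. Segnalo anche un'imprecisione minore: decomponendo lungo $E^{u}_{S'}\oplus E^{s}_{S'}$, l'operatore $DS(x)$ non preserva esattamente quella decomposizione, quindi le due equazioni non si disaccoppiano davvero; i termini fuori diagonale sono piccoli (vicinanza $C^{1}$ e continuità dei fibrati) e si assorbono con una serie di Neumann o con campi di coni, ma il passaggio andrebbe esplicitato.
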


\begin{oss}
In altre parole, ci\` o vuol dire che \` e possibile mappare una dinamica nell'altra attraverso un ``cambiamento di coordinate'', {\em cf. figura \ref{fig:cat0}}, generalmente non differenziabile.
\end{oss}

Tra poco costruiremo esplicitamente $H$ nel caso del gatto di Arnold perturbato, e dimostreremo che $H$ \` e analitica nella perturbazione e solo H\"older continuo nel punto. Per una ``idea'' della dimostrazione generale di questo teorema, ovvero che non si basi su un caso specifico, rimandiamo a \cite{AA}. Nel nostro caso, seguiremo l'analisi svolta in \cite{ergo}, \cite{BFG03}.

\begin{figure}[htbp]
\centering
\includegraphics[width=0.7\textwidth]{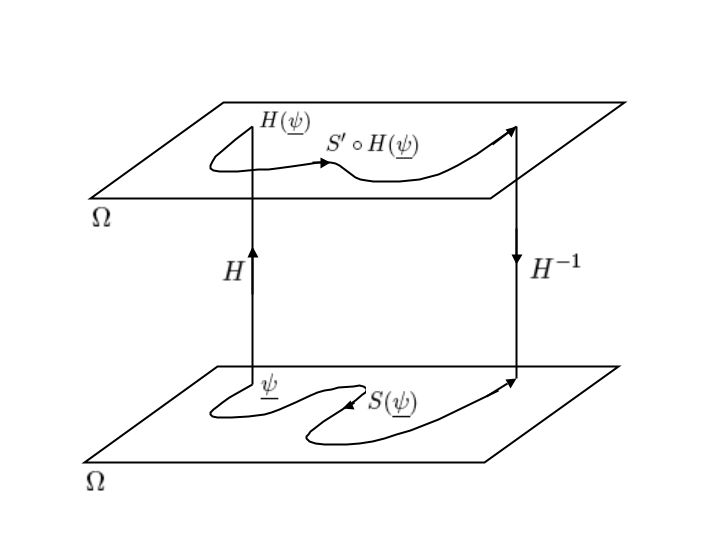}
\caption{Rappresentazione schematica dell'azione della coniugazione $H$ (``diagramma commutativo''\index{diagramma commutativo}). Per semplicit\` a, nella figura abbiamo disegnato una traiettoria continua.} \label{fig:cat0}
\end{figure}

\subsection{Costruzione della coniugazione $H$}

Ci proponiamo di calcolare esplicitamente la coniugazione\index{coniugazione} $H$ partendo dalla propriet\` a (\ref{eq:cat1}), nel caso in cui l'evoluzione del sistema perturbato sia descritta da

\begin{equation}
S_{\varepsilon}\underline{\psi}=S_{0}\underline{\psi}+\varepsilon\underline{f}(\underline{\psi}) \label{eq:cat2}
\end{equation}

dove $S_{0}$ \` e data dalla (\ref{eq:cat0}) e $\underline{f}(\underline{\psi})$ \` e un polinomio trigonometrico\index{polinomio trigonometrico}, ovvero $\underline{f}(\underline{\psi})=\sum_{\underline{\nu}\in\mathbb{Z}^{2},|\nu|\leq N}e^{i\underline{\nu}\cdot\underline{\psi}}\underline{f}_{\underline{\nu}}$; assumeremo che $f_{\underline{\nu}}(\underline{\psi})$ sia una funzione analitica in $\underline{\psi}$. Per fare ci\` o, cercheremo una soluzione dell'equazione

\begin{equation}
H\circ S=S_{\varepsilon}\circ H
\end{equation}

che sia esprimibile come sviluppo in serie di potenze in $\varepsilon$, ovvero $H(\underline{\psi})=\underline{\psi}+\sum_{k\geq 1}\varepsilon^{k}\underline{h}(\underline{\psi})$, e dimostreremo che la serie \`e convergente per $|\varepsilon|<\varepsilon_{0}>0$; quindi, in questo intervallo la coniugazione $H$ sar\` a una funzione analitica di $\varepsilon$. Inoltre, 
con la stessa tecnica otterremo anche l'H\"older continuit\` a in $\underline{\psi}$ con esponente $\beta$ per $\varepsilon<\varepsilon(\beta)\leq\varepsilon(0)=\varepsilon_{0}$. Scriviamo la (\ref{eq:cat2}) come

\begin{equation}
\underline{h}\circ S_{0}=S_{0}\circ \underline{h}+\varepsilon\underline{f}\circ (Id+\underline{h}); \label{eq:cat3}
\end{equation}

al primo ordine in $\varepsilon$ la (\ref{eq:cat3}) diventa

\begin{equation}
S_{0}\circ \underline{h}^{(1)}+\underline{f}=\underline{h}^{(1)}\circ S_{0}. \label{eq:cat4}
\end{equation}

Chiamiamo $\underline{v}_{+}$, $\underline{v}_{-}$ i due autovettori normalizzati di $S_{0}$ corrispondenti rispettivamente all'autovalore $\lambda_{+}$ e all'autovalore $\lambda_{-}$, e definiamo $\lambda\equiv\lambda_{+}^{-1}=\lambda_{-}$; poich\'e

\begin{eqnarray}
\underline{f}(\underline{\psi})&=&f_{+}(\underline{\psi})\underline{v}_{+}+f_{-}(\underline{\psi}) \underline{v}_{-} \nonumber\\
\underline{h}(\underline{\psi})&=&A_{u}(\underline{\psi})\underline{v}_{+}+h_{-}(\underline{\psi})\underline{v}_{-}, \label{eq:cat5}
\end{eqnarray}

le due componenti della (\ref{eq:cat4}) nelle direzioni $\underline{v}_{+}$, $\underline{v}_{-}$ sono

\begin{eqnarray}
\lambda_{+}A_{u}^{(1)}(\underline{\psi})-A_{u}^{(1)}(S_{0}\underline{\psi})&=&-f_{+}(\underline{\psi})\\
\lambda_{-}h_{-}^{(1)}(\underline{\psi})-h_{-}^{(1)}(S_{0}\underline{\psi})&=&-f_{-}(\underline{\psi}). \label{eq:cat6}
\end{eqnarray}

Possiamo risolvere le equazioni (\ref{eq:cat6}) iterativamente, e il risultato \` e

\begin{equation}
h_{\alpha}^{(1)}(\underline{\psi})=-\alpha\sum_{p\in\mathbb{Z}_{\alpha}}\lambda_{\alpha}^{-|p+1|\alpha}f_{\alpha}(S_{0}^{p}\underline{\psi})\qquad\alpha=\pm, \label{eq:cat7}
\end{equation}

dove $\mathbb{Z}_{+}=[0,\infty)\cap\mathbb{Z}$ e $\mathbb{Z}_{-}=(-\infty,0)\cap\mathbb{Z}$. In generale, all'ordine $k$-esimo si avr\` a che

\begin{eqnarray}
h_{\alpha}^{(k)}(\underline{\psi})=\sum_{s=0}^{\infty}\frac{1}{s!}\sum_{\begin{array}{c} k_{1}+...+k_{s}=k-1,k_{i}\geq 1\nonumber\\\alpha_{1},...,\alpha_{s}=\pm \end{array}}\sum_{p\in\mathbb{Z}_{\alpha}}\alpha\lambda_{\alpha}^{-|p+1|\alpha}\\\times\left(\prod_{j=1}^{s}\underline{v}_{\alpha_{j}}\cdot\partial_{\underline{\varphi}}\right)f_{\alpha}(S_{0}^{p}\underline{\psi})\left(\prod_{j=1}^{s}h_{\alpha_{j}}^{(k_{j})}(S_{0}^{p}\underline{\psi})\right), \label{eq:cat8}
\end{eqnarray}

dunque \` e chiaro che tutti gli ordini dello sviluppo di  $\underline{h}$ possono essere costruiti in funzione di $\underline{h}^{(1)}$; inoltre, l'espressione (\ref{eq:cat8}) pu\` o essere rappresentata in termine di {\em alberi}, {\em cf.} figura \ref{fig:cat1}.

\begin{figure}[htbp]
\centering
\includegraphics[width=0.7\textwidth]{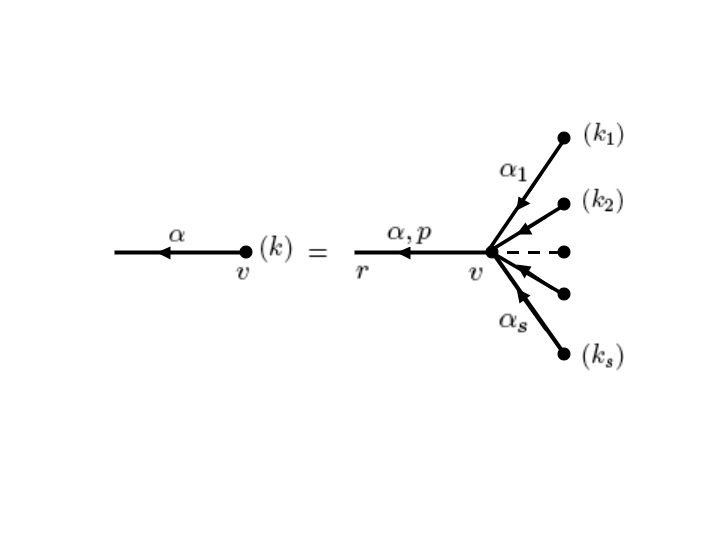}
\caption{Rappresentazione grafica della (\ref{eq:cat8}).} \label{fig:cat1}
\end{figure}

La parte sinistra della figura \ref{fig:cat1} corrisponde a $h_{\alpha}^{(k)}(\underline{\psi})$, i rami della parte destra sono $h^{(k_{1})}_{\alpha_{1}}$, $h^{(k_{2})}_{\alpha_{2}}$ ecc., mentre il vertice $v$ equivale a 

\begin{equation}
\frac{1}{s!}\alpha\lambda_{\alpha}^{-|p+1|\alpha}\left(\prod_{j=1}^{s}\underline{v}_{\alpha_{j}}\cdot\partial_{\underline{\varphi}}\right)f_{\alpha}(S_{0}^{p}\underline{\psi}); \label{eq:cat9}
\end{equation}

 le somme su $k_{i}$, $\alpha$, $p$, $s$ sono sottintese. Un albero\index{albero} $\vartheta$ con $k$ nodi porta su un generico ramo $l$ una coppia di indici $\alpha_{l}$, $p_{l}$ con $p_{l}\in\mathbb{Z}$ e $\alpha_{l}\in\{-,+\}$, e sui nodi $v$ due simboli $\alpha_{v}$, $p_{v}$ con $\alpha_{v}=\alpha_{l_{v}}$ e $p_{v}\in\mathbb{Z}_{\alpha}$ tali che
 
 \begin{equation}
 p(v)\equiv p_{l_{v}}=\sum_{w\succeq v}p_{w}, \label{eq:cat9a}
 \end{equation}

dove la somma \`e sui nodi che seguono $v$ (l'ordinamento \` e specificato dalle frecce), e $l_{v}$ denota il ramo $v'v$ uscente dal nodo $v$; ad ogni albero assegniamo un valore

\begin{equation}
\mbox{Val}(\vartheta)=\prod_{v\in V(\vartheta)}\frac{\alpha}{s_{v}!}\lambda_{\alpha_{v}}^{-|p_{v}+1|\alpha_{v}}\left(\prod_{j=1}^{s_{v}}\partial_{\alpha_{v_{j}}}\right)f_{\alpha_{v}}(S_{0}^{p(v)}\underline{\psi}), \label{eq:cat10}
\end{equation} 

dove $\partial_{\alpha}\equiv \underline{v}_{\alpha}\cdot\partial_{\underline{\varphi}}$, $V(\vartheta)$ \` e l'insieme di nodi in $\vartheta$, e $v_{1},...,v_{s_{v}}$ sono gli $s_{v}$ nodi che precedono\footnote{Nel caso in cui $v$ sia un nodo iniziale, {\em i.e.} in cui corrisponda ad una ``foglia'' dell'albero, nella (\ref{eq:cat10}) non c'\` e nessuna derivata} $v$. Se con $\Theta_{k,\alpha}$ indichiamo l'insieme di tutti gli alberi con $k$ nodi e con etichetta $\alpha$ associata alla {\em radice} dell'albero, ovvero al ramo uscente dall'ultimo vertice, allora la quantit\` a $h_{\alpha}(\underline{\psi})$ pu\` o essere rappresentata come

\begin{equation}
h_{\alpha}(\underline{\psi})=\sum_{k=1}^{\infty}\varepsilon^{k}\sum_{\vartheta\in\Theta_{k,\alpha}}\mbox{Val}(\vartheta), \label{eq:cat11}
\end{equation} 

e l'ultima cosa che rimane da fare \` e stimare il raggio di convergenza della serie di potenze. Per fare ci\` o useremo la {\em stima di Cauchy}\index{stima di Cauchy}, grazie alla quale possiamo maggiorare le derivate nella (\ref{eq:cat10}) come, chiamando $G$ il massimo di $f_{\alpha}$ nella regione $\mathcal{D}=\left\{\psi_{i}:\left|\mbox{Im}\psi_{i}\right|<r_{0},i=1,2\right\}$ per un certo $r_{0}>0$ e $m_{+}$, $m_{-}$ le molteplicit\` a delle derivate $\partial_{+}$, $\partial_{-}$

\begin{equation}
\prod_{j=1}^{s_{v}}\partial_{\alpha_{v_{j}}}f_{\alpha_{v}}(S_{0}^{p(v)}\underline{\psi})\leq G\frac{m_{+}!m_{-}!}{r_{0}^{s_{v}}}\leq G\frac{s_{v}!}{r_{0}^{s_{v}}}, \label{eq:cauchy}
\end{equation}

dal momento che $m_{1}+m_{2}=s_{v}$; grazie alla (\ref{eq:cauchy}) abbiamo che\footnote{Per ottenere l'ultima uguaglianza nella (\ref{eq:cat0011}) abbiamo usato che $\sum_{v\in\Theta_{k,\alpha}}s_{k}=k-1$.}

\begin{equation}
\left|\mbox{Val}(\vartheta)\right|\leq \prod_{v\in V(\vartheta)}\lambda_{\alpha_{v}}^{-|p_{v}+1|\alpha_{v}}\frac{G}{r_{0}^{s_{v}}}= \frac{G}{r_{0}^{k-1}}\prod_{v\in V(\vartheta)}\lambda_{\alpha_{v}}^{-|p_{v}+1|\alpha_{v}}, \label{eq:cat0011}
\end{equation}

dunque poich\'e il numero di possibili alberi con $k$ nodi \` e $\leq 2^{2k}$,

\begin{equation}
\left|\sum_{\vartheta\in\Theta_{k,\alpha}}\mbox{Val}(\vartheta)\right|\leq 2^{k}2^{2k}\frac{G^{k}}{r_{0}^{k-1}}\left(\frac{1}{1-\lambda}\right)^{k}. \label{eq:cat011}
\end{equation} 

Quindi, grazie alle (\ref{eq:cat11}), (\ref{eq:cat011}) il raggio di convergenza di $h_{\alpha}$ \` e dato da

\begin{equation}
\varepsilon_{0}=\left[2^{3}\frac{G}{r_{0}}\frac{1}{1-\lambda}\right]^{-1}.
\end{equation}

L'H\"older continuit\` a di $H$ pu\` o essere dimostrata semplicemente notando che\footnote{Per ottenere la (\ref{eq:cauchy2}) abbiamo scritto la differenza nella parte sinistra come l'integrale di una derivata, abbiamo stimato la derivata con la formula di Cauchy e abbiamo usato il fatto che la massima distanza tra due punti sul toro $\mathbb{T}^{2}$ \` e $\sqrt{2}\pi$.}, per $0\leq \beta<1$,

\begin{eqnarray}
\left|\prod_{j=1}^{s_{v}}\partial_{\alpha_{v_{j}}}f_{\alpha_{v}}(S_{0}^{p(v)}\underline{\psi})-\prod_{j=1}^{s_{v}}\partial_{\alpha_{v_{j}}}f_{\alpha_{v}}(S_{0}^{p(v)}\underline{\psi'})\right|\leq \frac{G(s_{v}+1)!}{r_{0}^{s_{v}+1}}\left(2\pi^{2}\right)^{\frac{1-\beta}{2}}\lambda^{-\beta p}\left|\underline{\psi}-\underline{\psi}'\right|\label{eq:cauchy2};\nonumber\\
\end{eqnarray}

infatti, la (\ref{eq:cauchy2}) implica che

\begin{equation}
\frac{\left|\underline{h}^{(k)}_{\alpha}(\underline{\psi})-\underline{h}^{(k)}_{\alpha}(\underline{\psi}')\right|}{\left|\underline{\psi}-\underline{\psi}'\right|^{\beta}}\leq 2^{2k}2^{k}\left(\frac{1}{1-\lambda^{1-\beta}}\right)\left(2\pi^{2}\right)^{\frac{1-\beta}{2}}\frac{G^{k}}{r_{0}^{k}}(2k-1).
\end{equation}

Quindi, per $|\varepsilon|<\varepsilon(\beta)$, con

\begin{equation}
\varepsilon(\beta)=\left[2^{3}\frac{G}{r_{0}}\frac{1}{1-\lambda^{1-\beta}}\right]^{-1},
\end{equation}

la coniugazione $H$ \` e H\"older continua in $\underline{\psi}$ con esponente $\beta$ e analitica in $\varepsilon$, dal momento che $\varepsilon(\beta)<\varepsilon(0)=\varepsilon_{0}$. Notare che $H$, in generale, {\em non} \` e lipschitziana, perch\'e $\lim_{\beta\rightarrow 1}\varepsilon(\beta)=0$.

Infine, la mappa $H$ \` e un omeomorfismo. Infatti, se $\psi_{1}\neq\psi_{2}$ allora $H(\psi_{1})\neq H(\psi_{2})$, perch\'e ci\` o implicherebbe per ogni $k$ $S_{\varepsilon}^{k}(H(\psi_{1}))=S_{\varepsilon}^{k}(H(\psi_{2}))\Rightarrow H(S_{0}^{k}\psi_{1})=H(S_{0}^{k}\psi_{2})$; ma ci\` o \` e assurdo, dal momento che $S_{0}$ \` e mescolante e $H$ \` e H\"older continua. Inoltre per ogni $\varphi$ esiste $\psi$ tale che $H(\psi)=\varphi$; questo \` e facile da visualizzare sul toro $\mathbb{T}^{1}$, dove se $\psi$ ``gira'' attorno al cerchio anche il punto $H(\psi)$ lo far\` a, e dunque per ogni fissato $\psi$ esiste $\varphi$ e una funzione continua $k(\varphi)$ tali che $\psi=\varphi+k(\varphi)$. Per $d$ generico l'argomento \` e lo stesso, solo ripetuto $d$ volte.

\subsection{Costruzione della misura SRB}

Il prossimo passo consiste nel riuscire a costruire esplicitamente i potenziali di Gibbs della misura SRB del sistema perturbato; per fare ci\` o, abbiamo bisogno di calcolare esplicitamente il coefficiente di espansione, se siamo interessati alla statistica nel futuro, oppure il coefficiente di contrazione, se vogliamo descrivere la statistica del sistema nel passato. Ingenuamente, potremmo pensare di parametrizzare la variet\` a stabile e la variet\` a instabile del sistema perturbato attraverso la coniugazione $H$, ovvero con delle equazioni tipo

\begin{equation}
\underline{\varphi}(t)=H(\underline{\psi}+t\underline{v}_{\alpha}),\qquad\alpha=\pm, \label{eq:cat16}
\end{equation}

dal momento che, per $\alpha=\pm$, $t\rightarrow\underline{\psi}+t\underline{v}_{\alpha}$ \` e la variet\` a instabile o stabile di $S_{0}$. Questo tentativo per\` o \` e destinato a rivelarsi infruttuoso, perch\'e la parametrizzazione definita tramite $H$ non pu\` o che essere in generale solo H\"older continua in $t$, e dunque inutile ai fini della costruzione dei coefficienti di contrazione e di espansione, {\em cf.} definizione \ref{def:contr}.

Chiamando $\widetilde{\underline{w}}_{+}(\underline{\psi})$, $\widetilde{\underline{w}}_{-}(\underline{\psi})$ i vettori tangenti alla variet\` a instabile e alla variet\` a stabile del sistema perturbato nel punto $\underline{\psi}$, questi vettori devono soddisfare l'equazione

\begin{equation}
DS_{\varepsilon}(\underline{\psi})\widetilde{\underline{w}}_{\pm}(\underline{\psi})=\widetilde{\lambda}_{\pm}(\underline{\psi})\widetilde{\underline{w}}_{\pm}(S_{\varepsilon}(\underline{\psi})), \label{eq:cat016}
\end{equation}

dove $\widetilde{\lambda}_{\pm}$ \` e in generale una funzione H\"older continua di $\underline{\psi}$, ed \` e ci\` o di cui avremmo bisogno per costruire i potenziali della misura SRB.

Per\` o, per delle ragioni che saranno chiare tra poco, saremo in grado di costruire esplicitamente queste quantit\` a calcolate nel punto $H(\underline{\psi})$ e {\em non} in $\underline{\psi}$; l'equazione (\ref{eq:cat016}) diventa, sostituendo $\underline{\psi}$ con $H(\underline{\psi})$ e definendo $\underline{w}_{\pm}\equiv \widetilde{\underline{w}}_{\pm}\circ H$, $\lambda_{\pm}=\widetilde{\lambda}_{\pm}\circ H$,

\begin{equation}
DS_{\varepsilon}(H(\underline{\psi}))\underline{w}_{\pm}(\underline{\psi})=\lambda_{\pm}(\underline{\psi})\underline{w}_{\pm}(S_{0}\underline{\psi}). \label{eq:cat017}
\end{equation}

Come nel caso della coniugazione $H$, possiamo provare a risolvere la (\ref{eq:cat017}) perturbativamente, ovvero cercando una soluzione del tipo $\lambda_{\pm}(\underline{\psi})=\lambda_{\pm}+\delta\lambda_{\pm}(\underline{\psi})$, $\underline{w}_{\pm}(\underline{\psi})=\underline{v}_{\pm}+\delta\underline{w}_{\pm}(\underline{\psi})$ con

\begin{eqnarray}
\delta\lambda_{\pm}(\underline{\psi})&\equiv&\gamma_{\pm}(\underline{\psi})=\sum_{k\geq 1}\gamma^{(1)}_{\pm}(\underline{\psi})\\
\delta\underline{w}_{\pm}(\underline{\psi})&\equiv&k_{\mp}(\underline{\psi})\underline{v}_{\mp}= \sum_{k\geq 1}k^{(k)}_{\mp}(\underline{\psi})\underline{v}_{\mp},
\end{eqnarray}

$\gamma_{\pm}^{(k)},k_{\mp}^{(k)}=O\left(\varepsilon^{k}\right)$. Scrivendo la (\ref{eq:cat017}) nella base degli autovettori del gatto di Arnold $\underline{v}_{\pm}$, ponendo $DS_{\varepsilon}(\underline{\psi})=S_{0}+\varepsilon D\underline{f}(\underline{\psi})$, la (\ref{eq:cat017}) implica che:

\begin{eqnarray}
\gamma_{\alpha}(\underline{\psi})&=&\varepsilon\partial_{\alpha}f_{\alpha}(\underline{\varphi})+\varepsilon k_{\beta}(\underline{\psi})\partial_{\beta}f_{\alpha}(\underline{\varphi})\nonumber\\
k_{\alpha}(\underline{\psi})&+&\lambda^{2}k_{\alpha}(S_{0}^{\alpha}\underline{\psi})=\nonumber\\&&-\lambda\left(\varepsilon\partial_{\beta}f_{\alpha}(\underline{\varphi}^{\alpha})+\varepsilon k_{\alpha}(\underline{\psi}^{\alpha})\partial_{\alpha}f_{\alpha}(\underline{\varphi}^{\alpha})-k_{\alpha}(S_{0}\underline{\psi}^{\alpha})\gamma_{\beta}(\underline{\psi}^{\alpha})\right), \label{eq:cat26}
\end{eqnarray}

dove $\alpha=\pm$ e $\beta=-\alpha$. Le equazioni (\ref{eq:cat26}) possono essere risolte ad ogni ordine in $\varepsilon$ in funzione di tutti gli ordini precedenti, e dunque alla fine del primo ordine, dato da\footnote{Notare che se al posto del gatto di Arnold (\ref{eq:cat0}) consideriamo la sua ``radice quadrata''

\begin{equation}
\left(\begin{array}{cc} 1 & 1 \\ 1 & 0 \end{array}\right)
\end{equation}

i fattori di convergenza $\lambda_{+}^{-(2n+1)}$ saranno sostituiti da $(-1)^{n}\lambda_{+}^{-(2n+1)}$, dal momento che in questo caso $\lambda_{-}=-\lambda_{+}^{-1}$.

}

\begin{eqnarray}
\gamma_{\alpha}^{(1)}&=&\partial_{\alpha}f_{\alpha}(\underline{\psi})\\
k_{+}^{(1)}(\underline{\psi})&=&-\sum_{n=0}^{\infty}\lambda_{+}^{-(2n+1)}\partial_{-}f_{+}(S_{0}^{n}\underline{\psi})\\
k_{-}^{(1)}(\underline{\psi})&=&\sum_{n=0}^{\infty}\lambda_{+}^{-(2n+1)}\partial_{+}f_{-}(S_{0}^{-(n+1)}\underline{\psi}). \label{eq:cat270}
\end{eqnarray}

 Con una tecnica analoga a quella usata nel caso della coniugazione $H$ si pu\` o dimostrare che esistono $\varepsilon'_{0}, \varepsilon'(\beta)$ tali che gli sviluppi in $\varepsilon$ di $\gamma_{\pm}$, $k_{\pm}$ convergono per $|\varepsilon|<\varepsilon'_{0}$, e che le soluzioni $\gamma_{\pm}$, $k_{\pm}$ sono H\"older continue con esponente $\beta$ in $|\varepsilon|<\varepsilon'(\beta)<\varepsilon'(0)=\varepsilon'_{0}$, {\em cf.} \cite{ergo}, \cite{BFG03}.
 
 Grazie questi sviluppi perturbativi siamo ora in grado di dire esplicitamente chi sono i coefficienti di espansione e di contrazione  e i vettori tangenti alle variet\` a stabili e instabili {\em nel punto $H(\underline{\psi})$}. Se non avessimo composto sin dall'inizio con la coniugazione $H$ avremmo trovato la (\ref{eq:cat26}) con $S_{0}$ sostituito da $S_{\varepsilon}$; dunque, per dimostrare l'analiticit\` a in $\varepsilon$ avremmo dovuto prima verificare la differenziabilit\` a in $\underline{\psi}$, e viceversa. Ad ogni modo, il fatto di essere in grado di calcolare le quantit\` a necessarie per la costruzione della misura SRB solo nel punto $H(\underline{\psi})$ non rappresenta un problema, perch\' e se $\mathcal{P}_{0}$ \` e una partizione di Markov per il sistema imperturbato $(\Omega,S_{0})$ possiamo introdurre una partizione di Markov $\mathcal{P}_{\varepsilon}$ per $(\Omega,S_{\varepsilon})$ come
 
 \begin{equation}
 \mathcal{P}_{\varepsilon}=H(\mathcal{P}_{0}); \label{eq:cat027}
 \end{equation}
 
 dunque, $H(\underline{\psi})$ pu\` o essere rappresentato con la stessa storia $\underline{\sigma}$ di $\underline{\psi}$, solo codificata con un {\em nuovo codice} $X_{\varepsilon}=H\circ X$. Quindi, possiamo scrivere i valori medi SRB come delle somme sulle storie simboliche del sistema imperturbato codificate con il nuovo codice $X_{\varepsilon}$, pesando ogni addendo con un fattore
 
 \begin{equation}
 \frac{e^{-\sum_{k=-\Lambda/2}^{\Lambda/2-1}A_{u}(\tau^{k}\underline{\sigma})}}{Z_{\Lambda}},
 \end{equation}
 
 dove $A_{u}\equiv A_{u}\circ X_{\varepsilon}$ \` e dato da
 
 \begin{equation}
 A_{u}(\tau\underline{\sigma})=\log\left(\left[\lambda_{+}+\gamma_{+}(X_{\varepsilon}(\underline{\sigma}))\right]\frac{\left|\underline{w}_{+}(X_{\varepsilon}(\tau\underline{\sigma}))\right|}{\left|\underline{w}_{+}(X_{\varepsilon}(\underline{\sigma}))\right|}\right). \label{eq:cat0027}
 \end{equation}
 
 Riassumiamo il contenuto di questa sezione nella seguente proposizione.
 
\begin{prop}{{\em (Potenziali del gatto di Arnold perturbato)}}

Data una partizione di Markov $\mathcal{P}_{0}=\{P_{1},...,P_{n}\}$ in $\mathbb{T}^{2}$ per $S_{0}$, sia $\underline{\sigma}$ la rappresentazione simbolica rispetto a $\mathcal{P}_{\varepsilon}=H(\mathcal{P}_{0})$ di un punto $\underline{\varphi}=X_{\varepsilon}(\underline{\sigma})$. Esiste $\varepsilon'(\beta)$ tale che il coefficiente di espansione $\lambda_{u}(\underline{\sigma})$ di $S_{\varepsilon}(\underline{\sigma})$ lungo la variet\` a instabile di $\underline{\varphi}$ \` e analitico in $\varepsilon$ nel disco $|\varepsilon|<\varepsilon'(\beta)$, e come funzione di $\underline{\sigma}$ \` e H\"older continua con esponente $\beta$ e modulo $C(\beta)$; lo stesso vale per il tasso di espansione

 \begin{equation}
 A_{u}(\tau\underline{\sigma})=\log\left(\left[\lambda_{+}+\gamma_{+}(X_{\varepsilon}(\underline{\sigma}))\right]\frac{\left|\underline{w}_{+}(X_{\varepsilon}(\tau\underline{\sigma}))\right|}{\left|\underline{w}_{+}(X_{\varepsilon}(\underline{\sigma}))\right|}\right), \label{eq:cat27}
 \end{equation}

che genera (nel senso discusso nella proposizione \ref{prop:svilpot}) i potenziali $\Phi_{[-n,n]}(\underline{\sigma}_{[-n,n]})$ dello stato di Gibbs associato alla misura SRB.

\end{prop}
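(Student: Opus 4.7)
Il piano consiste nell'assemblare la costruzione perturbativa appena effettuata per $\gamma_{\pm}$ e $k_{\pm}$ con la proposizione \ref{prop:svilpot} sullo sviluppo in potenziali a decadimento esponenziale. Come primo passo osservo che, applicando l'equazione (\ref{eq:cat017}) con $\underline{\psi}=X(\underline{\sigma})$ e prendendo i moduli, il coefficiente di espansione di $S_{\varepsilon}$ nel punto $X_{\varepsilon}(\underline{\sigma})=H(\underline{\psi})$ lungo la variet\` a instabile \` e esattamente
\begin{equation*}
\lambda_{u}(\underline{\sigma})=\left(\lambda_{+}+\gamma_{+}(X_{\varepsilon}(\underline{\sigma}))\right)\frac{\left|\underline{w}_{+}(X_{\varepsilon}(\tau\underline{\sigma}))\right|}{\left|\underline{w}_{+}(X_{\varepsilon}(\underline{\sigma}))\right|},
\end{equation*}
cosicch\'e il tasso di espansione dell'enunciato si riduce a $A_{u}(\tau\underline{\sigma})=\log\lambda_{u}(\underline{\sigma})$. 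Per $|\varepsilon|$ piccolo si ha $\lambda_{+}+\gamma_{+}\geq\lambda_{+}/2>1$ uniformemente, e $\underline{w}_{+}=\underline{v}_{+}+k_{-}\underline{v}_{-}$ resta di norma limitata dal basso, per cui il logaritmo \` e ben definito e analitico come funzione del proprio argomento.

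Il secondo passo \` e la deduzione dell'analiticit\` a in $\varepsilon$. Gli sviluppi perturbativi per $\gamma_{\pm}$ e $k_{\pm}$, ottenuti risolvendo le (\ref{eq:cat26}) ordine per ordine come nelle (\ref{eq:cat270}), ammettono una rappresentazione in alberi del tutto analoga a quella costruita per $\underline{h}$ e sono dunque maggiorabili con una stima di Cauchy del tipo (\ref{eq:cat0011}). Ci\` o fornisce convergenza uniforme in $\underline{\psi}$ su un disco $|\varepsilon|<\varepsilon'_{0}$. Componendo con logaritmi, moduli e quozienti, tutte operazioni che preservano l'analiticit\` a finch\'e l'argomento del logaritmo resta nel suo dominio di olomorfia, si ottiene l'analiticit\` a in $\varepsilon$ di $\lambda_{u}$ e di $A_{u}$ sul medesimo disco.

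Il terzo passo tratta la H\"older continuit\` a in $\underline{\sigma}$. Adattando la stima (\ref{eq:cauchy2}), gi\` a usata per la coniugazione $H$, agli alberi che definiscono $\gamma_{\pm}$ e $k_{\pm}$ si dimostra la H\"older continuit\` a di esponente $\beta$ in $\underline{\psi}$ per $|\varepsilon|<\varepsilon'(\beta)\leq\varepsilon'_{0}$. Componendo poi con il codice $X_{\varepsilon}=H\circ X$, che \` e H\"older continuo grazie alla proposizione \ref{prop:pavmark} e all'H\"older continuit\` a di $H$ gi\` a dimostrata, si ottiene la H\"older continuit\` a di $A_{u}$ in $\underline{\sigma}$ con un modulo $C(\beta)$ opportuno; l'eventuale degradazione dell'esponente derivante dalla composizione pu\` o essere assorbita ridefinendo $\varepsilon'(\beta)$ in modo che l'esponente risultante sia esattamente $\beta$.

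Per concludere, l'interpretazione come potenziali di Gibbs \` e immediata applicando la proposizione \ref{prop:svilpot} ad $A_{u}$: si ottiene uno sviluppo $A_{u}=cost+\sum_{n\geq 0}\Phi_{2n+1}^{u}(\sigma_{-n},\ldots,\sigma_{n})$ con $|\Phi_{2n+1}^{u}|\leq Ce^{-kn}$, e la caratterizzazione DLR (\ref{eq:srb3}) mostra che, a meno del termine di bordo del tipo $\mathcal{B}$ dell'osservazione \ref{oss:vol}, queste $\Phi_{2n+1}^{u}$ sono precisamente le interazioni dello stato di Gibbs associato alla misura SRB nel futuro. L'ostacolo reale \` e gi\` a stato superato dal testo che precede la proposizione: aver scelto di risolvere la (\ref{eq:cat017}) nel punto $H(\underline{\psi})$ anzich\'e in $\underline{\psi}$ separa l'analiticit\` a in $\varepsilon$ dalla non differenziabilit\` a di $H$, cos\` i che tutta la parte delicata resta a monte, nella costruzione delle serie di alberi e nelle loro stime di Cauchy.
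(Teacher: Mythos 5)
La tua proposta è corretta e segue essenzialmente la stessa strada del testo: la proposizione nel lavoro non ha una dimostrazione separata, ma riassume la costruzione precedente (equazione (\ref{eq:cat017}) risolta nel punto $H(\underline{\psi})$, sviluppi in alberi con stime di Cauchy per $\gamma_{\pm}$ e $k_{\pm}$, H\"older continuit\` a via l'analogo della (\ref{eq:cauchy2}), codice $X_{\varepsilon}=H\circ X$ e sviluppo in potenziali tramite la proposizione \ref{prop:svilpot}), che tu riassembli fedelmente, inclusa l'osservazione chiave che comporre con $H$ fin dall'inizio separa l'analiticit\` a in $\varepsilon$ dalla non differenziabilit\` a in $\underline{\psi}$.
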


\begin{oss}
\begin{enumerate}
\item Dal momento che tutte le funzioni che appaiono nella (\ref{eq:cat27}) sono H\"older continue i potenziali $\Phi_{X}(\underline{\sigma}_{X})$, con $X=[-n,n]$, decadono esponenzialmente, {\em i.e.} esistono $C,c>0$ tali che, se $\underline{\sigma}_{[-n,n]}$ corrisponde ad una sequenza $\underline{\sigma}$ ``troncata'' nell'intervallo $[-n,n]$ e continuata con opportune condizioni al bordo,

\begin{equation}
|\Phi_{[-n,n]}(\underline{\sigma}_{[-n,n]})|=|A_{u}(\underline{\sigma}_{[-(n+1),(n+1)]})-A_{u}(\underline{\sigma}_{[-n,n]})|\leq Ce^{-cn}; \label{eq:cat28}
\end{equation}

nella teoria dei sistemi di spin unidimensionali, un potenziale di questo tipo viene chiamato un {\em potenziale di Fisher}\index{potenziale di Fisher}.

\item Notare che, nel limite termodinamico, il rapporto $\frac{\left|\underline{w}_{+}(X_{\varepsilon}(\tau\underline{\sigma}))\right|}{\left|\underline{w}_{+}(X_{\varepsilon}(\underline{\sigma}))\right|}$ {\em non modifica lo stato di Gibbs}\index{stato di Gibbs}, perch\'e sia il numeratore che il denominatore sono funzioni H\"older continue; dunque 

\begin{equation}
\sum_{k=-\Lambda/2}^{\Lambda/2-1}\left(|\underline{w}_{+}(X_{\varepsilon}(\tau^{k+1}\underline{\psi}))|-|\underline{w}_{+}(X_{\varepsilon}(\tau^{k}\underline{\sigma})|\right)=O(1), \label{eq:cat29}
\end{equation}

mentre $\sum_{k=-\Lambda/2}^{\Lambda/2-1}A_{u}(\tau^{k}\underline{\sigma})=O(\Lambda)$, perci\` o la (\ref{eq:cat29}) rappresenta un {\em termine di bordo}\index{termine di bordo}. Ci\` o vuol dire che lo stato di Gibbs del sistema \` e {\em indipendente} dalla particolare metrica scelta per misurare i moduli dei vettori, a patto che si tratti di scelte ``ragionevoli'', ovvero che gli elementi di matrice del tensore metrico siano funzioni H\"older continue.

\end{enumerate}
\end{oss}

\chapter{Non equilibrio}\label{cap:noneq}

Il contatto tra la teoria matematica dei sistemi di Anosov che abbiamo appena discusso e la meccanica statistica dei sistemi fuori dall'equilibrio \` e contenuto nell'{\em ipotesi caotica}, che presenteremo nella sezione \ref{sez:IC}; prima di fare ci\` o per\` o, discutiamo brevemente il significato dell'{\em ipotesi ergodica} e le sue conseguenze nella descrizione statistica dei sistemi all'equilibrio, \cite{trattgal}. 

\section{L'ipotesi ergodica}

La {\em meccanica statistica dell'equilibrio} ha come scopo la deduzione delle propriet\` a macroscopiche di insiemi di particelle che evolvono seguendo una {\em dinamica hamiltoniana}; ci\` o vuol dire che se $\mathcal{H}$ \` e la funzione hamiltoniana e $\underline{p}=\left(p_{1},...,p_{3N}\right)$, $\underline{q}=\left(q_{1},...,q_{3N}\right)$ sono rispettivamente gli impulsi e le coordinate spaziali di $N$ particelle, allora il sistema evolver\` a risolvendo le equazioni differenziali

\begin{equation}
\dot{p}_{i}=-\frac{\partial\mathcal{H}}{\partial q_{i}},\qquad \dot{q}_{i}=\frac{\partial\mathcal{H}}{\partial p_{i}}. \label{eq:meceq1}
\end{equation}

Nella realt\` a fisica l'impulso e la posizione non possono essere misurati {\em simultaneamente} con precisione arbitraria; una misura ``molto precisa'' dell'impulso implicher\` a una misura ``poco precisa'' della posizione, nel senso che se indichiamo con $\delta p$, $\delta q$ le incertezze su impulso e posizione allora la massima precisione raggiungibile sar\` a limitata dalla condizione

\begin{equation}
\delta p\delta q =h, \label{eq:meceq2}
\end{equation}

dove $h$ \` e un'opportuna costante. In generale avremo che

\begin{equation}
\delta p\delta q\geq h, \label{eq:meceq2b}
\end{equation}

e, se $h$ coincide con la costante di Planck, l'affermazione (\ref{eq:meceq2b}) \` e il contenuto del {\em principio di indeterminazione di Heisenberg}.

Quindi, possiamo considerare lo spazio delle fasi come decomposto in tante piccole cellette numerate $\Delta_{i}$, con $i=1,...,\mathcal{N}$, ognuna di volume $h$; ogni celletta contiene un {\em continuo} di possibili configurazioni microscopiche $(\underline{p},\underline{q})$, ma {\em tutte} le configurazioni contenute all'interno di ogni $\Delta_{i}$ si riferiscono allo stesso {\em stato fisico}, nel senso che le propriet\` a macroscopiche del sistema non cambiano in modo fisicamente apprezzabile se ci si sposta all'interno di una celletta. Nonostante il numero di possibili configurazioni microscopiche sia {\em infinito}, lo stato {\em fisico} della totalit\` a delle particelle pu\` o variare solo in un numero {\em finito} di modi; ad ogni modo, per un sistema di $10^{23}$ particelle (e anche per sistemi molto pi\` u piccoli) questo numero \` e enormemente pi\` u grande di qualunque scala astronomica.

Supponiamo che l'hamiltoniana totale del sistema non dipenda esplicitamente dal tempo, ovvero $\frac{\partial}{\partial t}\mathcal{H}=0$; ci\` o implica che l'energia totale sia uguale ad una costante, che chiameremo $E$. Dunque, possiamo pensare che la dinamica hamiltoniana (\ref{eq:meceq1}) induca una mappa $S$ sulla superficie dello spazio delle fasi ad energia costante la cui azione consista in una {\em permutazione} delle cellette $\Delta_{i}$; chiaramente, questa mappa rifletter\` a tutte le propriet\` a della dinamica hamiltoniana, in particolare, per il teorema di Liouville, {\em il volume delle cellette non varia sotto $S$}.

Nonostante questa schematizzazione, il problema della deduzione di propriet\` a macroscopiche di un sistema - e dunque della sua {\em termodinamica} - a partire dalla dinamica hamiltoniana rimane ancora inaccessibile; per riuscire nel nostro scopo abbiamo bisogno di sapere, o di assumere, qualcosa in pi\` u sulla natura della dinamica microscopica. L'idea  di Boltzmann, tradotta nel nostro linguaggio, \` e la seguente.

\begin{quote}
Ipotesi ergodica (IE): {\em l'azione della trasformazione di evoluzione $S$, vista come permutazione di celle dello spazio delle fasi sulla superficie con energia costante, \` e una permutazione a ciclo unico;

\begin{equation}
S\Delta_{k}=\Delta_{k+1}\qquad k=1,2,...,\mathcal{N}
\end{equation}

se le celle sono opportunamente numerate (e $\Delta_{\mathcal{N}+1}\equiv\Delta_{1}$}).
\end{quote}

L'ipotesi ergodica rende {\em il pi\` u semplice possibile} l'azione di $S$ sulla superficie dello spazio delle fasi ad energia costante\footnote{Chiaramente, questa ipotesi \` e manifestamente falsa nel caso in cui il sistema ammetta altre costanti del moto oltre all'energia totale; ci\` o implica che la dinamica, vista come permutazione di celle, non pu\` o essere un ciclo unico. In questi casi l'ipotesi va riformulata considerando l'azione di $S$ come permutazione delle cellette sulla superficie dello spazio delle fasi determinata da $\mathcal{H}=E$ e da tutti gli integrali primi della dinamica.}; come vedremo in modo euristico, la conseguenza di questa ipotesi \` e che le medie temporali possono essere calcolate come {\em integrali di volume} sullo spazio delle fasi.

Chiamiamo $\mathcal{C}$ l'insieme di cellette $\Delta$ che appartengono alla superficie ad energia costante, $\mathcal{N}$ il loro numero totale, e definiamo la misura di probabilit\` a $\mu$ come

\begin{equation}
\mu(\Delta)=\left\{\begin{array}{c} 1/\mathcal{N}\quad \mbox{se $\Delta\in\mathcal{C}$}\\ 0 \qquad \mbox{altrimenti} \end{array}\right.;
\end{equation}

chiaramente $\mu$ \` e {\em stazionaria}, ovvero $\mu(S\Delta)=\mu(\Delta)$, dal momento che $S$ permuta celle dello stesso ciclo. Dunque, assumendo (IE) per il teorema di Birkhoff ({\em cf.} proposizione \ref{th:birk}) abbiamo che, scegliendo un ``dato iniziale'' $\overline{\Delta}$,

\begin{equation}
\lim_{T\rightarrow\infty}\frac{1}{T}\sum_{j=0}^{T-1}\mathcal{O}\left(S^{k}\overline{\Delta}\right)=\sum_{k}\mathcal{O}\left(\Delta_{k}\right)\mu(\Delta_{k}); \label{eq:meceq5}
\end{equation}

inoltre, dal momento che $h$ \` e piccolo possiamo approssimare la somma sulle celle con un integrale,

\begin{equation}
\lim_{T\rightarrow\infty}\frac{1}{T}\sum_{j=0}^{T-1}\mathcal{O}\left(S^{k}\overline{\Delta}\right) = \frac{\int_{E=cost}\mathcal{O}\left(\underline{p},\underline{q}\right)d^{3N}pd^{3N}q}{Z}, \label{eq:meceq6}
\end{equation}

dove $Z=\int_{E=cost}d^{3N}pd^{3N}q$ \` e la normalizzazione. Quindi, l'ipotesi ergodica attribuisce una particolare importanza alle misure di probabilit\` a {\em assolutamente continue} rispetto al volume, ovvero del tipo $\mu\left(d^{3N}pd^{3N}q\right)=f(\underline{p},\underline{q})d^{3N}pd^{3N}q$; l'integrale (\ref{eq:meceq6}), ad esempio, pu\` o essere riscritto come un integrale su tutto lo spazio delle fasi $\Omega$ con una misura $\mu\left(d^{3N}pd^{3N}q\right)=\delta\left(\mathcal{H}=E\right)d^{3N}pd^{3N}q$, se $\delta$ \` e la delta di Dirac. 

In generale, per un sistema fisico l'integrale nella (\ref{eq:meceq6}) \` e estremamente difficile da calcolare; ad ogni modo, ci\` o che \` e davvero fisicamente rilevante sono le {\em relazioni} tra valori medi di osservabili macroscopiche piuttosto che i valori medi in s\'e, e molto spesso queste relazioni possono essere dimostrate partendo dalla (\ref{eq:meceq6}) senza dover necessariamente calcolare gli integrali. Quindi, accettando l'ipotesi ergodica la termodinamica di un sistema con un numero arbitrario di gradi di libert\` a diventa il riflesso della dinamica hamiltoniana, o meglio di poche sue propriet\` a.

\section{L'ipotesi caotica} \label{sez:IC}

I {\em sistemi fuori dall'equilibrio} rappresentano una classe di sistemi di particelle per i quali l'ipotesi ergodica {\em non} fornisce le risposte giuste. Come abbiamo discusso nell'introduzione, questi sistemi sono sottoposti a delle forze non conservative, ovvero {\em dissipano calore}. In particolare, poich\' e la dinamica non \` e hamiltoniana la divergenza delle equazioni del moto sar\` a in generale diversa da zero; ci\` o rende il teorema di Liouville inapplicabile, e con esso anche l'ipotesi ergodica. Per avere una comprensione microscopica della fenomenologia di un sistema fuori dall'equilibrio sembra essere indispensabile fare delle {\em nuove assunzioni} sulla dinamica delle particelle, necessariamente diverse dall'ipotesi ergodica, e verificare che queste ipotesi abbiano delle conseguenze osservabili.

Il fatto che il volume di una celletta $\Delta$ non sia conservato dall'evoluzione temporale implica che se il sistema \` e confinato in uno spazio delle fasi limitato la {\em contrazione dello spazio} delle fasi non potr\` a che essere in media minore o uguale di zero; nel caso in cui sia strettamente minore di zero il sistema, nel limite di tempi infiniti, evolver\` a  in un insieme di {\em misura di volume nulla}. Dunque, ammesso che esista la misura di probabilit\` a stazionaria che descrive lo stato del sistema {\em non pu\` o} essere assolutamente continua rispetto al volume. 

Nel capitolo \ref{cap:ergo} abbiamo visto che i sistemi di Anosov ({\em cf.} definizione \ref{def:Anosov}) ammettono una misura di probabilit\` a invariante - chiamata misura SRB - che ha dei potenziali {\em diversi} da quelli della misura di volume, dunque in generale \` e {\em singolare}\footnote{Due misure $\mu$, $\nu$ definite in $\Omega$ sono {\em singolari} se esistono due insiemi disgiunti $A,B\in\Omega$ con $A\cup B=\Omega$ tali che $\mu$ \`e zero su tutti i sottoinsiemi misurabili di $B$ mentre $\nu$ \` e zero su tutti i sottoinsiemi misurabili di $A$.} rispetto a quest'ultima; ci chiediamo se questi sistemi dinamici, oltre ad essere interessanti da un punto di vista matematico, abbiano qualcosa a che vedere con la Fisica. A prima vista non sembrerebbe: un sistema reale, ad esempio un gas composto da $10^{23}$ particelle, \` e probabilmente caotico {\em in qualche senso}, ma \` e decisamente troppo restrittivo pensare che la dinamica sia {\em strettamente} assimilabile a quella di un sistema di Anosov.

Anche l'ipotesi ergodica per\`o, se presa alla lettera, \` e praticamente impossibile da verificare nelle applicazioni, e gli unici sistemi per i quali pu\`o essere dimostrata rigorosamente sono estremamente ``semplici'', tanto da non avere nulla a che vedere con la realt\` a fisica studiata dagli sperimentatori. Ciononostante, le {\em conseguenze} dell'ipotesi ergodica, ovvero le relazioni tra valori medi di osservabili che formano la {\em termodinamica}, sono largamente verificate. Ci\`o suggerisce che non abbia senso, o perlomeno che sia superfluo, verificare la {\em stretta} validit\` a dell'ipotesi ergodica in ogni sistema fisico all'equilibrio che si vuole studiare con le tecniche della meccanica statistica; la cosa che ha senso fare, invece, \` e accettare che {\em al fine di calcolare quantit\` a macroscopiche} il sistema si comporti {\em come se fosse}  ergodico.

Il nostro punto di vista sar\` a esattamente lo stesso; {\em assumeremo} che su scala macroscopica il fatto che il sistema non sia strettamente un sistema di Anosov {\em non abbia nessuna importanza}. Riassumiamo queste idee con la seguente {\em ipotesi caotica}, \cite{GC95a}, \cite{GC95}, \cite{trattgal}, \cite{conv}.

\begin{quote}
Ipotesi caotica (IC): \em{Allo scopo di misurare propriet\` a macroscopiche, un sistema di particelle in uno stato stazionario si comporta come un sistema di Anosov reversibile.}
\end{quote}

In (IC), per sistema {\em reversibile}\index{sistema reversibile} intendiamo un sistema dinamico $(\Omega,S)$ per il quale esiste una mappa (differenziabile) $I$ - un' {\em inversione temporale}\index{inversione temporale} - tale che

\begin{equation}
I \circ S=S^{-1} \circ I\qquad\mbox{con $I^{2}=\pm 1$}; \label{eq;IC2}
\end{equation}

vedremo in seguito che la reversibilit\` a giocher\` a un ruolo importante per la deduzione di conseguenze osservabili dell'ipotesi caotica.

L'ipotesi caotica \` e un'assunzione che prescinde dal fatto che il sistema si trovi o meno all'equilibrio. Un sistema di Anosov {\em conservativo}\index{sistema di Anosov conservativo}, ovvero per il quale la contrazione dello spazio delle fasi \` e identicamente nulla, ammette come misura di probabilit\` a invariante la misura di volume; dunque, l'ipotesi caotica all'equilibrio {\em implica} l'ipotesi ergodica.

La validit\` a di (IC) pu\` o essere decretata solo ed esclusivamente partendo da risultati sperimentali, e per avere qualcosa da verificare sperimentalmente dobbiamo dimostrare che (IC) abbia delle conseguenze osservabili. Il {\em teorema di fluttuazione di Gallavotti - Cohen} \` e un esempio (per il momento uno dei pochi) di conseguenza osservabile dell'ipotesi caotica; come abbiamo accennato nell'introduzione, questo teorema stabilisce una propriet\` a di simmetria sulle fluttuazioni del tasso di produzione di entropia. Ma, per il momento, non \` e chiaro cosa siano l'entropia e il suo tasso di produzione in un sistema fuori dall'equilibrio; discuteremo questo punto nella prossima sezione.

\section{Il problema della definizione di entropia}\label{sez:problentr}

In generale, un sistema fuori dall'equilibrio in uno stato stazionario dissipa calore ad un regime costante; dunque il sistema {\em produce} e {\em cede} entropia, nel senso che il valor medio del {\em tasso di produzione di entropia} \` e strettamente positivo e indipendente dal tempo. Quindi, l'entropia totale del sistema {\em non pu\` o} essere una quantit\` a ben definita.

Nelle prossime due sezioni discuteremo l'identificazione del tasso di produzione di entropia con il {\em tasso di contrazione dello spazio delle fasi}; ci baseremo prima su un particolare modello di sistema fuori dall'equilibrio molto importante nelle applicazioni, \cite{ECM93}, \cite{BGGtest}, \cite{therm}, e in seguito faremo vedere che, se la misura di volume ammette un {\em limite debole} per tempi infiniti, questa identificazione \` e del tutto generale, \cite{An}, \cite{trattgal}.

\subsection{Un modello} \label{sez:term}

Indichiamo le posizioni delle particelle all'interno di un contenitore $C_{0}$ con $\underline{x}=(\underline{x}_{1},...,\underline{x}_{N})$, chiamiamo $V(\underline{x}_{1},...,\underline{x}_{N})=V(\underline{x})$ l'energia potenziale d'interazione e $\underline{F}_{i}(\underline{x};\underline{G})$, $i=1,...,N$, le forze esterne non conservative, dove $\underline{G}=(G_{1},...,G_{s})$ sono i parametri che regolano le intensit\` a delle $\underline{F}_{i}$; dunque, $\underline{F}(\underline{x};0)\equiv 0$. Come abbiamo accennato in precedenza, per evitare che l'energia aumenti indefinitamente, ovvero per garantire l'esistenza di uno stato stazionario, \` e necessario che il sistema sia posto a contatto con un termostato; quindi, se $\left\{-\underline{\vartheta}_{i}\right\}$ sono le forze dovute alla presenza del termostato, $i=1,...,N$, le equazioni del moto delle particelle all'interno di $C_{0}$ possono essere scritte come, indicando con $m_{i}$ le masse delle $N$ particelle,

\begin{equation}
m_{i}\ddot{\underline{x}}_{i}=-\partial_{\underline{x}_{i}}V(\underline{x})+\underline{F}_{i}(\underline{x};\underline{G})-\underline{\vartheta}_{i},\qquad\mbox{$i=1,...,N$}.
\end{equation}

Si pu\` o pensare ad un tipo di termostato come ad un insieme di {\em serbatoi infiniti}, ad esempio contenenti un gas, messi a contatto con il sistema $\Sigma$ e asintoticamente in equilibrio termico;  ogni serbatoio, che chiameremo $C_{a}$, dove $a=1,2....,m$, sar\` a caratterizzato da una temperatura $T_{a}=cost$. Questi termostati, per\` o, sono ovviamente intrattabili da un punto di vista sperimantale, {\em i.e.} numerico. Dunque, possiamo immaginare un'altra classe di termostati pensando che il sistema $C_{0}$ sia a contatto con un numero $m$ di serbatoi {\em finiti} $C_{a}$, e che le forze non conservative agiscano solo su $C_{0}$; sui sistemi $\{\Sigma_{a}\}$, contenuti nei $\{C_{a}\}$, faremo agire delle forze $\{\vartheta_{a}\}$ in modo da imporre il {\em vincolo anolonomo}\index{vincolo anolonomo}

\begin{equation}
K_{a}=\sum_{j}\frac{1}{2}m_{a}\dot{\underline{x}}_{j}^{a}=cost. \label{eq:nequ1} 
\end{equation} 

Definiamo la temperatura del sistema $\Sigma_{a}$ come

\begin{equation}
T_{a}\equiv\frac{2K_{a}}{3N_{a}k_{B}}, \label{eq:nequ2}
\end{equation}

dove $k_{B}$ \` e la costante di Boltzmann e $N_{a}$ \` e il numero di particelle contenute in $C_{a}$; accetando la definizione ($\ref{eq:nequ2}$), le forze $\{\vartheta_{a}\}$ mantengono i sistemi $\Sigma_{a}$ a temperatura costante. Notare che, a rigore, {\em ogni} modello di termostato finito\index{termostato finito} deve essere considerato ``non fisico'', perch\' e in generale \` e impossibile separare il sistema dal resto dell'universo; ma d'altro canto non ci rimane scelta, se vogliamo simulare l'evoluzione di questo modello su un calcolatore. Ad ogni modo, \` e un'opinione diffusa che, nel limite termodinamico, il modo in cui venga estratto calore dal sistema sia {\em inessenziale}, ovvero che lo stato stazionario sia {\em indipendente} dal tipo di termostato utilizzato; il fatto che lo stato fisico di un sistema sia indipendente dal tipo di termostato scelto darebbe luogo ad una ``equivalenza degli ensemble''\index{equivalenza degli ensemble} per la meccanica statistica fuori dall'equilibrio, \cite{equi}.

Se le particelle contenute in $C_{0}$ interagiscono mediante un potenziale $\varphi(\underline{q}_{i}-\underline{q}_{j})$, dove $\underline{q}_{i}$, $\underline{q}_{j}$ rappresentano le coordinate spaziali di due particelle, l'energia potenziale d'interazione del sistema $\Sigma_{0}$ sar\` a data da

\begin{equation}
V_{0}(\underline{q})=\sum_{i<j}\varphi(\underline{q}_{i}-\underline{q}_{j}); \label{eq:neq7}
\end{equation} 

analogamente, indicando le coordinate delle particelle contenute in $C_{a}$ con $\underline{x}_{j}^{a}$, $j=1,...,N_{a}$, si ha che 

\begin{equation}
V_{a}(\underline{x})=\sum_{i<j}^{N_{a}}\varphi_{a}(\underline{x}_{i}^{a}-\underline{x}_{j}^{a}), \label{eq:neq7b}
\end{equation}

mentre l'energia potenziale dovuta all'interazione tra le particelle all'interno di $C_{a}$ e quelle contenute in $C_{0}$ potr\` a essere espressa come

\begin{equation}
W_{a}(\underline{q},\underline{x}^{a})=\sum_{i=1}^{N}\sum_{j<i}w_{a}(\underline{q}_{i}-\underline{x}_{j}^{a}), \label{eq:neq7c}
\end{equation}
 
dove $w_{a}(\underline{q}_{i}-\underline{x}_{j}^{a})$ rappresenta l'interazione tra due particelle contenute rispettivamente in $C_{a}$ e in $C_{0}$ nelle posizioni $\underline{x}_{j}^{a}$, $\underline{q}_{j}$. Infine, ipotizziamo che i sistemi $\{\Sigma_{a}\}$ non interagiscano direttamente tra di loro, ma solo attraverso $C_{0}$. 

Grazie alle (\ref{eq:neq7}), (\ref{eq:neq7b}), (\ref{eq:neq7c}), le equazioni del moto del sistema possono essere scritte come (assumiamo per semplicit\` a che sia in $\Sigma$ che nei $\{\Sigma_{a}\}$ le masse siano tutte uguali):

\begin{eqnarray}
m\ddot{\underline{q}}_{j}&=&-\partial_{\underline{q}_{j}}\left(V_{0}(\underline{q})+\sum_{a=1}^{m}W_{a}(\underline{q},\underline{x}^{a})\right)+\underline{F}_{j}(\underline{q};\underline{G}) \label{eq:neq10a}\\
m_{a}\ddot{\underline{x}}_{j}^{a}&=&-\partial_{\underline{x}_{j}^{a}}\left(V_{a}(\underline{x}^{a})+W_{a}(\underline{q},\underline{x}^{a})\right)-\underline{\vartheta}_{j}^{a}, \label{eq:neq10}
\end{eqnarray}

e, per realizzare la condizione (\ref{eq:nequ2}), imponiamo che le forze $\{\underline{\vartheta}_{j}\}$ verifichino il {\em principio di minimo sforzo di Gauss}\index{principio di Gauss}, \cite{trattgal}. Definiamo la forza $\underline{B}_{i}^{a}$ e lo {\em sforzo} $Z_{a}$\index{sforzo} relativo alle accelerazioni $\{\underline{a}_{i}^{a}\}$ come

\begin{eqnarray}
\underline{B}_{i}^{a}&\equiv&-\partial_{\dot{\underline{x}}_{i}}\left(V_{a}(\underline{x})+W_{a}(\underline{q},\underline{x}^{a})\right)\\ 
Z_{a}&\equiv&\sum_{i}\frac{(\underline{B}_{i}^{a}-m_{a}\underline{a}_{i}^{a})^{2}}{m_{a}} \label{eq:nequ3};\label{eq:nequ4}
\end{eqnarray}

si pu\` o dimostrare che, {\em cf.} \cite{trattgal}, la forza da aggiungere alle $\{\underline{B}_{i}^{a}\}$ che minimizza lo sforzo $Z_{a}$ tra tutte le possibili accelerazioni $\{\underline{a}_{i}^{a}\}$ compatibili con un vincolo $\psi_{a}(\dot{\underline{x}},\underline{x})=0$ \` e data da

\begin{equation}
\underline{\vartheta}_{i}^{a}(\dot{\underline{x}},\underline{x})=\partial_{\dot{\underline{x}}_{i}}\psi_{a}(\dot{\underline{x}},\underline{x})\cdot\left(\frac{\sum_{j}\dot{\underline{x}}_{j}\cdot\partial_{\underline{x}_{j}}\psi_{a}(\dot{\underline{x}},\underline{x})+\frac{1}{m_{a}}\underline{B}_{j}^{a}\cdot\partial_{\dot{\underline{x}}_{j}}\psi_{a}(\dot{\underline{x}},\underline{x})}{\sum_{j}\frac{1}{m_{a}}\left(\partial_{\dot{\underline{x}}_{j}}\psi_{a}(\dot{\underline{x}},\underline{x})\right)^{2}}\right). 
\label{eq:neq2}
\end{equation} 

Quindi, se $\psi_{a}=K_{a}$ la (\ref{eq:neq2}) diventa

\begin{equation}
\underline{\vartheta}_{j}^{a}=\frac{L_{a}-\dot{V}_{a}}{3N_{a}k_{B}T_{a}}\dot{\underline{x}}_{j}^{a}\equiv\alpha^{a}\dot{\underline{x}}_{j}^{a}, \label{eq:neq11}
\end{equation}

e $L_{a}$, $\dot{V}_{a}$ rappresentano, rispettivamente, il lavoro esercitato dalle particelle di $\Sigma$ su quelle appartenenti a $\Sigma_{a}$ e una derivata totale, ovvero

\begin{eqnarray}
L_{a}&=&-\sum_{j=1}^{N_{a}}\partial_{\underline{x}_{j}^{a}}W_{a}(\underline{q},\underline{x}^{a})\cdot\dot{\underline{x}}_{j}^{a} \label{eq:neq12}\\
\dot{V}_{a}&=&\sum_{j=1}^{N_{a}}\partial_{\underline{x}_{j}^{a}}V_{a}(\underline{x}^{a})\cdot\dot{\underline{x}_{j}^{a}}. \label{eq:neq13}
\end{eqnarray}

La {\em quantit\`a di calore prodotta per unit\` a di tempo\index{calore prodotto per unit\` a di tempo}} $\dot{Q}$ da un sistema allo stato stazionario \` e naturalmente definita come il lavoro che le forze del termostato $\{\underline{\vartheta}_{i}\}$ esercitano sul sistema $\Sigma$ per unit\` a di tempo, ovvero

\begin{equation}
\dot{Q}\equiv\sum_{i}\underline{\vartheta}_{i}\cdot\dot{\underline{x}}_{i}; \label{eq:neq3}
\end{equation}

dal momento che

\begin{equation}
\underline{\vartheta}=\sum_{a}\underline{\vartheta}_{a}(\dot{\underline{x}},\underline{x}), \label{eq:neq4}
\end{equation}

la quantit\` a $\dot{Q}$ \`e data da

\begin{equation}
\dot{Q}=\sum_{a}\dot{Q}_{a}. \label{eq:neq5}
\end{equation}

Introduciamo la contrazione dello spazio delle fasi\index{contrazione dello spazio delle fasi} dovuta al termostato $a$-esimo $\Sigma_{a}(\dot{\underline{x}},\underline{x})$ come

\begin{equation}
\sigma_{a}(\dot{\underline{x}},\underline{x})\equiv\sum_{j}\partial_{\dot{\underline{x}}_{j}}\cdot\underline{\vartheta}^{a}_{j}(\dot{\underline{x}},\underline{x}); \label{eq:neq6}
\end{equation}

come vedremo, \` e ragionevole identificare questa quantit\` a con il {\em tasso di produzione di entropia\index{tasso di produzione di entropia}} nell'$a$-esimo termostato. La divergenza (\ref{eq:neq6}) $\sigma_{a}\equiv(3N_{a}-1)\alpha^{a}$ (nella (\ref{eq:neq6}), $\underline{\vartheta}_{\alpha}$ \` e data dalla (\ref{eq:neq11})) pu\` o essere scritta esplicitamente come

\begin{equation}
\sigma_{a}\equiv\left[\frac{L_{a}}{k_{B}T_{a}}-\frac{\dot{V}_{a}}{k_{B}T_{a}}\right]\frac{(3N_{a}-1)}{3N_{a}}, \label{eq:neq14}
\end{equation}

dove $L_{a}$ corrisponde al calore ceduto per unit\` a di tempo da $\Sigma$ a $\Sigma_{a}$, e $\dot{V_{a}}$ \` e una derivata temporale totale; assumendo che $V_{a}$ sia limitata, $\dot{V}$ non contribuisce in valor medio\footnote{Infatti,

\begin{equation}
\lim_{T\rightarrow\infty}\frac{1}{T}\int_{0}^{T} dt\dot{V}=\lim_{T\rightarrow\infty}\frac{V_{T}-V_{0}}{T}=0,
\end{equation}

se $V$ \` e limitata. Ad ogni modo, \` e possibile trattare anche il caso in cui $V$ non \` e limitata, {\em cf.} \cite{GZG2}.
}.

Quindi, ponendo $L_{a}=\dot{Q}_{a}$ possiamo definire una nuova contrazione dello spazio delle fasi come

\begin{equation}
\widehat{\sigma}(\dot{\underline{x}},\underline{x})=\sum_{a=1}^{m}\frac{\dot{Q}_{a}}{k_{B}T_{a}}, \label{eq:neq15}
\end{equation} 

dove stiamo trascurando di sommare una derivata totale e dei termini $O\left(\frac{1}{N_{a}}\right)$. Chiaramente, l'espressione nella (\ref{eq:neq15}) \` e {\em diversa} dalla vera contrazione dello spazio delle fasi $\sigma$; ma, per $N_{a}$ ``grandi'', in valor medio la differenza \` e trascurabile.

 Per questa classe di termostati la (\ref{eq:neq15}) giustifica l'identificazione tra contrazione dello spazio delle fasi ed entropia prodotta per unit\` a di tempo\index{entropia prodotta per unit\` a di tempo}. Notare che {\em non} stiamo cercando in nessun modo di definire l'entropia per un sistema dinamico  fuori dall'equilibrio: la (\ref{eq:neq15}) \` e uguale all'entropia ceduta per unit\` a di tempo al termostato, che \` e all'equilibrio.

Oltre a mettere in luce una relazione tra contrazione dello spazio delle fasi e tasso di produzione di entropia, i termostati gaussiani {\em isocinetici}\index{termostati gaussiani} rendono le equazioni del moto (\ref{eq:neq10a}), (\ref{eq:neq10}) invarianti sotto l'{\em inversione temporale}\index{inversione temporale} $\dot{\underline{x}}\rightarrow-\dot{\underline{x}}$, dal momento che $\underline{\vartheta}^{a}_{j}=\alpha^{a}\dot{\underline{x}}_{j}$ \` e {\em pari} sotto questa trasformazione, e sia le forze $\underline{F}_{j}$ che i potenziali d'interazione sono puramente posizionali; inoltre, per le definizioni (\ref{eq:neq15}), (\ref{eq:neq12}), la quantit\` a (\ref{eq:neq15}) {\em cambia segno} sotto l'azione dell'{\em inversione temporale}, naturalmente identificata come l'inversione dei segni delle velocit\` a di tutte le particelle. Come vedremo in seguito, la reversibilit\` a giocher\` a un ruolo importante nella deduzione di conseguenze osservabili dell'ipotesi caotica.
 
\subsection{Un'interpretazione generale} \label{sez:interp}

Se il sistema tende ad uno stato stazionario l'identificazione tra tasso di contrazione dello spazio delle fasi e tasso di produzione di entropia pu\` o essere dedotta in modo del tutto generale, {\em i.e.} senza senza riferirsi ad un modello particolare, \cite{An}. Consideriamo le equazioni del moto

\begin{equation}
\dot{x}=f(x),
\end{equation}

le quali inducono una mappa $S_{t}$ sullo spazio delle fasi, e immaginiamo che al tempo $t=0$ lo stato del sistema sia descritto da una misura $\mu_{0}(dx)$ assolutamente continua rispetto al volume\index{misura assolutamente continua rispetto al volume}, ovvero della forma $\mu_{0}(dx)=\rho_{0}(x)dx$; al tempo $t$ la misura evolver\` a in $\mu_{t}(dx)=\rho_{t}(x)dx$ 

\begin{equation}
\rho_{t}(x)=\left|\frac{\partial S_{-t}(x)}{\partial x}\right|\rho_{0}(x), \label{eq:ent}
\end{equation}

dove $J_{t}(x)\equiv\left|\frac{\partial S_{-t}(x)}{\partial x}\right|$ \` e il determinante jacobiano in $x$ della mappa $x\rightarrow S_{-t}x$. Definiamo l'entropia\index{entropia} di Gibbs al tempo $t$ come

\begin{equation}
\mathcal{E}(t)\equiv-\int\rho_{t}(x)\log\rho_{t}(x)dx, \label{eq:en1}
\end{equation}

che diventa, per la (\ref{eq:ent}),

\begin{eqnarray}
\mathcal{E}&=&-\int\rho_{0}(S_{-t}x)\left|\frac{\partial S_{-t}(x)}{\partial x}\right|\log\left(\rho_{0}(S_{-t}x)\left|\frac{\partial S_{-t}(x)}{\partial x}\right|\right)dx\nonumber\\&=&-\int\rho_{0}(S_{-t}x)\left|\frac{\partial S_{-t}(x)}{\partial x}\right|\log\left(\rho_{0}(S_{-t}x)\right)dx\nonumber\\&&-\int\rho_{0}(S_{-t}x)\left|\frac{\partial S_{-t}(x)}{\partial x}\right|\log\left(\left|\frac{\partial S_{-t}(x)}{\partial x}\right|\right)dx; \label{eq:en2}
\end{eqnarray}

il primo termine nella parte destra della (\ref{eq:en2}) non contribuisce a $\dot{\mathcal{E}}$ perch\'e ponendo $S_{-t}x=y$ diventa uguale alla costante $-\int\rho_{0}(y)\log\rho_{0}(y)dy$, mentre il secondo pu\`o essere riscritto come, con lo stesso cambio di variabile,  

\begin{eqnarray}
-\int\rho_{0}(y)\log\left|\frac{\partial S_{-t}(S_{t}y)}{\partial (S_{t}y)}\right|dy=\int\rho_{0}(y)\log\left|\frac{\partial S_{t}y}{\partial y}\right|, \label{eq:en3}
\end{eqnarray}

dove l'uguaglianza nella (\ref{eq:en3}) discende dall'identit\` a

\begin{equation}
\frac{\partial S_{-t}(S_{t}x)}{\partial S_{t}x}\frac{\partial S_{t}x}{\partial x}=1. \label{eq:en4}
\end{equation}

Dalle equazioni del moto $\dot{x}=f(x)$ si ottiene che, se $x=S_{t}y$ e $\sigma(x)=-\mbox{div} f(x)$,

\begin{equation}
\frac{d}{dt}\left|\frac{\partial S_{t}y}{\partial y}\right|=-\left|\frac{\partial S_{t}y}{\partial y}\right|\sigma(S_{t}y), \label{eq:en5}
\end{equation}

e il tasso di creazione di entropia\index{tasso di creazione di entropia} $\dot{\mathcal{E}}$ diventa, grazie alla (\ref{eq:en5}),

\begin{eqnarray}
\dot{\mathcal{E}}&=&-\int\rho_{0}(y)\sigma(S_{t}y)dy\nonumber\\&=&-\int\rho_{0}(S_{-t}z)\left|\frac{\partial S_{-t}z}{\partial z}\right|\sigma(z)dz\nonumber\\&=&-\int\mu_{t}(dz)\sigma(z)\rightarrow_{t\rightarrow\infty}-\int\mu(dz)\sigma(z), \label{eq:en6}
\end{eqnarray}

ammettendo che $\mu_{t}(dx)$ converga debolmente a $\mu(dx)$.

Quindi, se il sistema tende ad uno stato stazionario \index{stato stazionario} descritto da una misura di probabilit\` a invariante $\mu$ (che accettando l'ipotesi caotica coincide con la misura SRB nel futuro $\mu_{+}$), interpretiamo la (\ref{eq:en6}) come l'entropia creata dal sistema e ceduta al termostato per unit\` a di tempo {\em nello} stato stazionario. Infine, \` e un risultato noto, {\em cf.} \cite{thH}, che per un sistema che verifica l'ipotesi caotica $\left\langle\sigma\right\rangle\geq 0$; in particolare, se la misura stazionaria non \` e assolutamente continua rispetto al volume si ha che $\left\langle\sigma\right\rangle>0$.

\section{Il teorema di fluttuazione: introduzione}

Accettando l'ipotesi caotica, il teorema di fluttuazione di Gallavotti - Cohen afferma che, chiamando $\sigma(x)$ il tasso di contrazione del volume dello spazio delle fasi e $\langle\sigma\rangle_{+}$ il suo valor medio con la misura SRB, introducendo la quantit\` a $\varepsilon_{\tau}$ come

\begin{equation}
\varepsilon_{\tau}(x)=\frac{1}{\langle\sigma\rangle_{+}\tau}\int_{-\frac{\tau}{2}}^{\frac{\tau}{2}}\sigma(x(t))dt,
\end{equation}

le fluttuazioni di $\varepsilon_{\tau}$ verificano la propriet\` a

\begin{equation}
\lim_{\tau\rightarrow\infty}\frac{1}{\tau\langle\sigma\rangle_{+}p}\log\frac{\mu_{+}\left(\varepsilon_{\tau}=p\right)}{\mu_{+}\left(\varepsilon_{\tau}=-p\right)}=1,\label{eq:FT0b}
\end{equation}

se $\mu_{+}\left(\varepsilon_{\tau}=p\right)$ \` e la probabilit\` a SRB dell'evento $\left\{x:\varepsilon_{\tau}(x)=p\right\}$. Nella prossima sezione vogliamo presentare una dimostrazione rigorosa del risultato (\ref{eq:FT0b}) nel caso di un sistema di Anosov discreto $(\Omega,S)$; il caso nel continuo pu\` o essere ricondotto al discreto mediante la tecnica della sezione di Poincar\'e\index{sezione di Poincar\'e}, {\em cf.} \cite{Ge96}.

\section{Dimostrazione del teorema di fluttuazione} \label{sez:dimFT}

In questa sezione riportiamo una dimostrazione, tratta da \cite{GdimFT}, della (\ref{eq:FT0b}) nel caso di sistemi dinamici discreti $(\Omega,S)$. Grazie alle tecniche esposte nel capitolo \ref{cap:ergo}, vedremo che il problema sar\` a ridotto allo studio di un modello di Ising unidimensionale\index{modello di Ising unidimensionale} con potenziali a decadimento esponenziale.

\subsection{Notazioni} 

Prima di iniziare, ricordiamo qualche definizione. Se $(\Omega,S)$ \` e un sistema di Anosov, chiamiamo $W^{u}_{\gamma}(x)$, $W^{s}_{\gamma}(x)$ le porzioni connesse di variet\` a instabile e variet\` a stabile passanti per il punto $x$ e contenute all'interno di una sfera di raggio $\gamma$ ({\em cf.} definizione \ref{def:Anosov}), $\lambda(x)$ il determinante jacobiano della mappa $S:\Omega\rightarrow\Omega$, e indichiamo con $\lambda_{n}(x)$ la quantit\` a 

\begin{equation}
\lambda_{n}(x)=\prod_{j=-\frac{n}{2}}^{\frac{n}{2}-1}\lambda\left(S^{j}x\right); \label{eq:detit}
\end{equation}

infine, con $\lambda_{u}$, $\lambda_{s}$ ci riferiremo ai determinanti jacobiani di $S$ vista rispettivamente come mappa sulla variet\` a instabile e sulla variet\` a stabile e, analogamente alla (\ref{eq:detit}),

\begin{eqnarray}
\lambda_{u,n}(x)&=&\prod_{j=-\frac{n}{2}}^{\frac{n}{2}-1}\lambda_{u}\left(S^{j}x\right)\\
\lambda_{s,n}(x)&=&\prod_{j=-\frac{n}{2}}^{\frac{n}{2}-1}\lambda_{s}\left(S^{j}x\right).
\end{eqnarray} 

Dunque, $\lambda_{n}$ pu\` o essere scritto come
 
\begin{equation}
\lambda_{n}(x)=\lambda_{u,n}(x)\lambda_{s,n}(x)\chi_{n}(x), \label{eq:dimFT0}
\end{equation}

dove $\chi_{n}(x)=\frac{\sin\alpha\left(S^{\frac{n}{2}-1}x\right)}{\sin\alpha\left(S^{-\frac{n}{2}}x\right)}$ \` e il rapporto dei seni degli angoli tra $W^{u}_{\gamma}$ e $W^{s}_{\gamma}$ nei punti $S^{\frac{n}{2}-1}x$ e $S^{-\frac{n}{2}}x$.

\subsection{La misura SRB approssimata}\label{sez:srbappr}
 
Introduciamo la misura $\mu_{T,\tau}$ come, per $T\geq\frac{\tau}{2}$,

\begin{eqnarray}
\mu_{T,\tau}(F)&\equiv&\int_{\Omega}\mu_{T,\tau}(dx)F(x)\nonumber\\&=&\frac{\sum_{\underline{\sigma}_{T}\in\{1,...,n\}^{[-T,T]}_{C}}F(X(\underline{\sigma}_{T}))\lambda_{u,\tau}^{-1}(X(\underline{\sigma}_{T}))}{\sum_{\underline{\sigma}_{T}\in\{1,...,n\}^{[-T,T]}_{C}}\lambda_{u,\tau}^{-1}(X(\underline{\sigma}_{T}))}\\
&=&\frac{\sum_{\underline{\sigma}_{T}\in\{1,...,n\}^{[-T,T]}_{C}}F(X(\underline{\sigma}_{T}))e^{-A_{u,\tau}(\underline{\sigma}_{T})}}{\sum_{\underline{\sigma}_{T}\in\{1,...,n\}^{[-T,T]}_{C}}e^{-A_{u,\tau}(\underline{\sigma}_{T})}}, \label{eq:FT03}
\end{eqnarray}

dove $\underline{\sigma}_{T}$ \` e una sequenza $\sigma_{-T},...,\sigma_{T}$ con $\sigma_{i}\in\{1,...,n\}$, $\{1,...,n\}^{[-T,T]}_{C}$ \` e l'insieme delle sequenze compatibili\footnote{{\em cf.} definizione \ref{def:matrcomp}; per evitare confusione con il ``volume'' $T$ della misura $\mu_{T,\tau}$ indichiamo l'insieme delle sequenze di $n$ simboli compatibi in un insieme $A$ con $\{1,...,n\}^{A}_{C}$ e {\em non} con $\{1,...,n\}^{A}_{T}$.} in $[-T,T]$, $X$ \` e il codice\index{codice della dinamica simbolica} della dinamica simbolica e $A_{u,\tau}=\sum_{j=-\tau/2}^{\tau/2-1}A_{u}\circ\theta^{j}=\sum_{j=-\tau/2}^{\tau/2-1}\log\lambda_{u}\circ X\circ\theta^{j}$, se $\theta$ \` e l'operatore di traslazione verso sinistra nello spazio delle sequenze\footnote{In questa sezione chiamiamo $\theta$ e non $\tau$ l'operatore di traslazione  verso sinistra nello spazio delle sequenze per non creare confusione con il tempo $\tau$.}; le funzioni $F(X(\underline{\sigma}_{T}))$, $\lambda_{u}(X(\underline{\sigma}_{T}))$ vanno pensate calcolate nella configurazione che si ottiene ``prolungando'' in modo compatibile le sequenze finite $\underline{\sigma}_{T}\equiv \underline{\sigma}_{[-T,T]}$ all'infinito. In questo modo abbiamo {\em raffinato} la partizione di Markov di $(\Omega,S)$, {\em cf.} definizione \ref{def:pavmark}, ovvero abbiamo diviso $\Omega$ in piccoli rettangoli identificati dalle sequenze $\underline{\sigma}_{T}$, e assegniamo ad $F$, $\lambda_{u}$ un valore per ogni rettangolo scegliendo un punto in ognuno di essi; ci\` o pu\` o essere fatto in modo completamente costruttivo, ovvero senza ricorrere all'{\em assioma della scelta}, dal momento che il numero $\mathcal{N}$ di rettangoli \` e finito, $\mathcal{N}\leq n^{2T}$. Inoltre, se $T$ \` e molto grande questa scelta influenzer\` a ``poco'' il risultato del valor medio perch\'e $F$, $\lambda_{u}$, $X$ sono funzioni H\"older continue. Sempre grazie alla H\"older continuit\` a, nel linguaggio della meccanica statistica dei sistemi di spin la scelta della continuazione delle sequenze $\underline{\sigma}_{T}$ equivale a fissare le {\em condizioni al bordo} di un modello di Ising unidimensionale con potenziali a decadimento esponenziale; \` e noto che questi modelli non presentano {\em transizioni di fase}, {\em cf.} appendice $D$, ovvero non limite $T\rightarrow\infty$ lo stato di Gibbs {\em non \` e sensibile alla scelta delle condizioni al bordo}. Quindi, possiamo definire il valor medio SRB nel futuro di qualunque osservabile H\"older continua con il {\em limite termodinamico}

\begin{equation}
\lim_{T\geq\frac{\tau}{2},\tau\rightarrow\infty}\mu_{T,\tau}(F)=\mu_{+}(F);
\end{equation}

ci riferiremo alla misura $\mu_{T,\tau}$, per $T,\tau$ fissati con $T\geq\frac{\tau}{2}$, parlando di una {\em misura SRB ``approssimata''}\index{misura SRB ``approssimata''}. 

\subsection{Sistemi dissipativi reversibili}\label{sez:disrev}

Assumiamo che il nostro sistema sia {\em dissipativo}\index{sistema dissipativo}, ovvero che

\begin{eqnarray}
\int_{\Omega}\mu_{+}(dx)\log\lambda^{-1}(x)&=&\left\langle\sigma\right\rangle_{+}>0,
\end{eqnarray}

e che esista\footnote{\` E importante verificare che sistemi di Anosov dissipativi reversibili\index{sistema di Anosov dissipativo reversibile} esistano; ci\` o pu\` o essere fatto ad esempio notando che se il sistema $(\Omega,S)$ \` e dissipativo allora anche il sistema $(\Omega',S')$ con $\Omega=\Omega\times\Omega$ e $S'(x,y)=\left(Sx,S^{-1}y\right)$ lo sar\` a, e $I\circ S=I\circ S^{-1}$ se $I(x,y)=(y,x)$.} un'isometria differenziabile $I$ con $I^{2}=\pm 1$ tale che

\begin{equation}
I\circ S=S^{-1}\circ I; \label{eq:FT04}
\end{equation}

con la (\ref{eq:FT04}) stiamo assumendo che la dinamica sia {\em reversibile}. Considerando per semplicit\` a il caso bidimensionale, chiamando $w_{u}(y)$, $w_{s}(y)$ due vettori normalizzati tangenti alla variet\` a instabile e alla variet\` a stabile nel punto $y$, possiamo definire $\lambda_{u}$ e $\lambda_{s}$ mediante le relazioni\footnote{{\em cf.} definizione \ref{def:contr}.} (indicando con $DS$ il {\em differenziale} di $S$)

\begin{eqnarray}
DS^{-1}(x)w_{u}(x)&=&\lambda_{u}^{-1}(x)w_{u}(S^{-1}x) \label{eq:detit1}\\
DS(x)w_{s}(x)&=&\lambda_{s}(x)w_{s}(Sx); \label{eq:detit2}
\end{eqnarray}

poich\'e\footnote{Questa propriet\` a \` e una diretta conseguenza delle definizioni di $W^{u}(x)$, $W^{s}(x)$, {\em cf.} definizione \ref{def:Anosov}; infatti, se $d$ \` e una distanza in $\Omega$,

\begin{eqnarray}
W^{u}(x)&\equiv&\left\{y\in\Omega\left|d\left(S^{-n}x,S^{-n}y\right)\leq \lambda^{n}d(x,y)\right.\right\}\\
W^{s}(x)&\equiv&\left\{y\in\Omega\left|d\left(S^{n}x,S^{n}y\right)\leq \lambda^{n}d(x,y)\right.\right\},
\end{eqnarray}

e poich\'e $I$ \` e un'isometria, {\em i.e.} non altera le distanze, ponedo $y'=Iy$ e assumendo che $I^{2}=1$,

\begin{eqnarray}
W^{u}(Ix)&=&\left\{y\in\Omega\left|d\left(S^{-n}Ix,S^{-n}y\right)\leq \lambda^{n}d(Ix,y)\right.\right\}\nonumber\\
&=&\left\{Iy'\in\Omega\left|d\left(IS^{n}x,IS^{n}y'\right)\leq \lambda^{n}d(Ix,Iy')\right.\right\}\nonumber\\
&=& IW^{s}(x).
\end{eqnarray}

} $W^{u}\left(Ix\right)=IW^{s}(x)$ e quindi

\begin{equation}
w_{u}\left(Ix\right)=DI(x)w_{s}(x), \label{eq:FT004}
\end{equation}

grazie alla (\ref{eq:FT04}),

\begin{eqnarray}
DS^{-1}(Ix)w_{u}\left(Ix\right)&=&DS^{-1}(Ix)DI(x)w_{s}(x)\nonumber\\&=&D(S^{-1}\circ I)(x)w_{s}(x)\nonumber\\&=&DI\left(Sx\right)DS(x)w_{s}(x). \label{eq:FT0004}
\end{eqnarray}

Dunque, per la (\ref{eq:FT0004}) la (\ref{eq:detit1}) calcolata in $Ix$ diventa, moltiplicando la parte destra e la parte sinistra per $\left[DI(Sx)\right]^{-1}$,

\begin{equation}
DS(x)w_{s}(x)=\lambda_{u}^{-1}(Ix)w_{s}(Sx),
\end{equation}

e questa equazione confrontata con la (\ref{eq:detit2}) implica che

\begin{equation}
\lambda_{u}^{-1}\circ I = \lambda_{s}. \label{eq:consrev}
\end{equation}

Nel caso $d$-dimensionale il ragionamento \` e analogo, con l'unica differenza che i vettori $w_{u}$, $w_{s}$ saranno sostituiti da due sottospazi con dimensioni $d_{u}$, $d_{s}$, con $d_{u}+d_{s}=d$. 

\subsection{Dimostrazione del teorema}\label{sez:dimoFT}

Introduciamo la media adimensionale su un tempo $\tau$ del tasso di contrazione dello spazio delle fasi\index{tasso di contrazione dello spazio delle fasi} come

\begin{equation}
\varepsilon_{\tau}(x)=\frac{1}{\left\langle\sigma\right\rangle_{+}\tau}\sum_{j=-\tau/2}^{\tau/2-1}\log\lambda^{-1}(S^{j}x)=\frac{1}{\left\langle\sigma\right\rangle_{+}\tau}\log\lambda_{\tau}^{-1}(x); \label{eq:dimFT1}
\end{equation}

per quanto discusso nella sezione \ref{sez:problentr}, la quantit\` a definita nella (\ref{eq:dimFT1}) pu\` o essere interpretata come la media adimensionale del tasso di produzione di entropia. Se $\mu_{0}$ \` e una misura di probabilit\` a assolutamente continua rispetto al volume, per $\mu_{0}$-quasi tutti i punti di $\Omega$ si ha che

\begin{equation}
\left\langle\varepsilon_{\tau}\right\rangle_{+}=\lim_{T\rightarrow\infty}\frac{1}{T}\sum_{j=0}^{T-1}\varepsilon_{\tau}(S^{j}x)\equiv\int_{\Omega}\mu_{+}(dx)\varepsilon_{\tau}(y)=1 \label{eq:dimFT2}
\end{equation}

 e
 
 \begin{equation}
 \lim_{\tau\rightarrow\infty}\varepsilon_{\tau}(x)=0 \label{eq:dimFT3}
 \end{equation}
 
dal momento che nel limite $\tau\rightarrow\infty$ la quantit\` a $\varepsilon_{\tau}(x)$ non dipende pi\` u dal punto ({\em cf.} proposizione \ref{prop:srb2}) e, per le (\ref{eq:dimFT0}), (\ref{eq:consrev}), 
 
 \begin{equation}
 \lim_{\tau\rightarrow\infty}\varepsilon_{\tau}(x)=-\lim_{\tau\rightarrow\infty}\varepsilon_{\tau}(Ix); \label{eq:dimFT013}
 \end{equation}
 
 nella (\ref{eq:dimFT013}) l'argomento del limite nella parte destra differisce da quello nella parte sinistra per una quantit\` a che pu\` o essere maggiorata con $\frac{B_{2}}{\tau}$, se $B_{2}$ \` e una costante positiva ({\em cf.} formule (\ref{eq:dimFT0}), (\ref{eq:dimFT1})). Siamo pronti per dimostrare il seguente risultato.
 
 \begin{teorema}[di fluttuazione]\index{teorema di fluttuazione}
 Esiste $p^{*}\geq 1$ tale che la misura SRB $\mu_{+}$ verifica
 
 \begin{equation}
 p-\delta\leq\lim_{\tau\rightarrow\infty}\frac{1}{\tau\left\langle\sigma\right\rangle_{+}}\log\frac{\mu_{+}(\varepsilon_{\tau}(x)\in[p-\delta,p+\delta])}{\mu_{+}(\varepsilon_{\tau}(x)\in-[p-\delta,p+\delta])}\leq p+\delta \label{eq:GCFT}
 \end{equation}
 
 $\forall p$, $|p|<p^{*}$.
 
 \end{teorema}
 
 \begin{proof}
 
Se $X$ \` e il codice della dinamica simbolica\index{codice della dinamica simbolica} ({\em cf.} capitolo \ref{cap:ergo}, proposizione \ref{prop:pavmark}), la funzione $\varepsilon_{\tau}(x)$ pu\` o essere convertita in una funzione sulle configurazioni di spin\footnote{Con un abuso di notazione indicheremo con lo stesso simbolo le funzioni $\varepsilon_{\tau}$, $\varepsilon_{\tau}\circ X$.},
 
 \begin{equation}
\varepsilon_{\tau}(\underline{\sigma})=\frac{1}{\tau}\sum_{k=-\tau/2}^{\tau/2-1}L(\theta^{k}\underline{\sigma}), \label{eq:dimFT4a}
 \end{equation}
 
 dove $\theta$ \` e l'operatore di traslazione\index{operatore di traslazione} verso sinistra nello spazio delle sequenze compatibili\index{sequenza compatibile} e $L(\underline{\sigma})=\frac{1}{\left\langle\sigma\right\rangle_{+}}\log\lambda^{-1}(X(\underline{\sigma}))$. Poich\'e $L$ \` e H\"older continua pu\` o essere espansa in potenziali a decadimento esponenziale, {\em cf.} proposizione \ref{prop:svilpot},
 
 \begin{equation}
 L(\underline{\sigma})=\sum_{n\geq 0}l_{[-n,n]}\left(\underline{\sigma}_{[-n,n]}\right);
 \end{equation}
 
 dunque, chiamando $X=(x,x+1,...,x+2m)$ un generico intervallo reticolare connesso composto da $2m+1$ simboli e $\overline{X}$ il suo centro, riscriviamo la (\ref{eq:dimFT4}) come
 
 \begin{equation}
 \varepsilon_{\tau}(\underline{\sigma})=\frac{1}{\tau}\sum_{X:\overline{X}\in[-\frac{\tau}{2},\frac{\tau}{2}-1]}l_{X}\left(\underline{\sigma}_{X}\right).
 \end{equation} 
 
Le seguenti proposizioni sono equivalenti, {\em cf.} \cite{Ge96}, \cite{RuStat}.
 
 \begin{prop}\label{prop:zeta}
 
 Esiste $p^{*}\geq 1$ tale che il limite
 
 \begin{equation}
  \frac{1}{\tau}\log\mu_{+}(\varepsilon_{\tau}(\underline{\sigma})\in[p-\delta,p+\delta])\rightarrow_{\tau\rightarrow\infty}\max_{s\in[p-\delta,p+\delta]}-\zeta(s) \label{eq:dimFT4}
 \end{equation}
 
 esiste e la funzione $\zeta(s)$ \` e analitica e strettamente convessa per $p\in(-p^{*},p^{*})$; inoltre, la differenza tra la parte destra e la parte sinistra della (\ref{eq:dimFT4}) tende a zero maggiorata da $\frac{D}{\tau}$, per un'opportuna costante $D>0$.
 
 \end{prop}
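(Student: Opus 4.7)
The plan is to recognize the proposition as a (local) large deviation statement for the time average of the Hölder observable $L$ under the SRB state of a one-dimensional Gibbs system, and to deduce it from the standard analyticity theory of the thermodynamic pressure for chains with exponentially decaying potentials. Accordingly I would first translate the problem into the symbolic side: using the approximating measures $\mu_{T,\tau}$ of section \ref{sez:srbappr} (valid because $L$ is Hölder and the underlying Gibbs state has no phase transitions, \emph{cf.} appendice \ref{app:D}), the SRB expectation
\begin{equation}
\int e^{\beta\tau\varepsilon_{\tau}}\,d\mu_{+}=\lim_{T\to\infty}\frac{\sum_{\underline{\sigma}_{T}}\exp\Bigl(-A_{u,\tau}(\underline{\sigma}_{T})+\beta\sum_{k=-\tau/2}^{\tau/2-1}L(\theta^{k}\underline{\sigma}_{T})\Bigr)}{\sum_{\underline{\sigma}_{T}}\exp\bigl(-A_{u,\tau}(\underline{\sigma}_{T})\bigr)}
\end{equation}
becomes the ratio of two partition functions for one-dimensional spin systems with exponentially decaying many-body potentials, differing only by the tilted contribution $\beta\,l_{X}(\underline{\sigma}_{X})$.

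Next I would invoke the transfer operator / cluster expansion machinery (Ruelle--Perron--Frobenius for subshifts of finite type, or the technique sketched in appendice \ref{app:D}) to conclude that the tilted pressure
\begin{equation}
\pi(\beta)\;\equiv\;\lim_{\tau\to\infty}\tfrac{1}{\tau}\log\int e^{\beta\tau\varepsilon_{\tau}}\,d\mu_{+}
\end{equation}
exists for every real $\beta$ and is real analytic in $\beta$ on an open interval $(-\beta^{*},\beta^{*})$ around the origin, with finite-volume corrections bounded by $D'/\tau$ (a consequence of the uniform exponential decay of truncated correlations in one-dimensional Gibbs states with Fisher-type potentials). Strict convexity $\pi''(\beta)>0$ follows because $L\circ X$ is not cohomologous to a constant, so the variance of the tilted partial sums is strictly positive; if it vanished, $\varepsilon_{\tau}$ would be asymptotically deterministic, contradicting the dissipative, non-equilibrium character of the system.

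Then I would set $\zeta(s)=\sup_{\beta}(\beta s-\pi(\beta))$, the Legendre transform. Because $\pi$ is $C^{\omega}$ and strictly convex on $(-\beta^{*},\beta^{*})$, the transform $\zeta$ is $C^{\omega}$ and strictly convex on the image interval $(-p^{*},p^{*})=(\pi'(-\beta^{*}),\pi'(\beta^{*}))$, and the two are related by $\zeta'=(\pi')^{-1}$. The local large-deviation estimate
\begin{equation}
\tfrac{1}{\tau}\log\mu_{+}\bigl(\varepsilon_{\tau}\in[p-\delta,p+\delta]\bigr)\;\xrightarrow[\tau\to\infty]{}\;\max_{s\in[p-\delta,p+\delta]}-\zeta(s)
\end{equation}
is then the classical Gärtner--Ellis conclusion: the upper bound by exponential Chebyshev $e^{-\beta\tau p}\int e^{\beta\tau\varepsilon_{\tau}}d\mu_{+}$ optimized in $\beta$, and the matching lower bound by a standard change-of-measure (tilt by $\beta=\zeta'(p)$) together with a local central-limit type estimate on the tilted Gibbs state, itself available in this one-dimensional exponentially-mixing setting. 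The explicit $D/\tau$ correction inherits directly from the analogous correction in $\pi$.

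The main technical obstacle is the quantitative control at both ends: on one hand, proving \emph{strict} convexity and analyticity of $\pi(\beta)$ uniformly in the thermodynamic limit (which requires a careful cluster/transfer-operator argument in the spirit of appendice \ref{app:D} and quantitative decay of correlations), and on the other hand upgrading the exponential estimate to the sharp local form with the explicit $O(1/\tau)$ error, which needs a local (not merely logarithmic-asymptotic) large deviation refinement — this is where the analyticity of $\pi$ and the non-lattice character of $L$ (otherwise one only obtains the estimate along a sublattice of values) must be exploited.
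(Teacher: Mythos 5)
Your route --- pass to the symbolic Gibbs representation via the approximate SRB measures, prove existence and analyticity of the tilted pressure by the decimation/cluster-expansion (transfer-operator) machinery of appendice \ref{app:D}, and recover $\zeta$ as its Legendre transform through a G\"artner--Ellis argument with the $O(1/\tau)$ correction inherited from the finite-volume estimates --- is essentially the paper's own route, which presents the proposition as equivalent to the statement on $\lambda(\beta)$ (proposizione \ref{prop:lambda}, {\em cf.} \cite{Ge96}, \cite{RuStat}) and proves the analyticity of $\lambda$ in appendice \ref{app:D}. The only real divergence is that the paper takes $\lambda(\beta)$ analytic on all of $\mathbb{R}$ and {\em asymptotically linear}, reading $p^{*}=\sup_{x}\limsup_{\tau}\varepsilon_{\tau}(x)\geq 1$ off the asymptotic slope, whereas you confine analyticity to a neighbourhood of the origin and define $p^{*}$ from the derivative at its endpoints, which leaves the bound $p^{*}\geq 1$ (and the symmetry of the interval, which rests on reversibility) implicit rather than established.
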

 
 \begin{prop}\label{prop:lambda}
 
 La funzione $\lambda(\beta)$ definita come
 
 \begin{equation}
 \lambda(\beta)=\lim_{\tau\rightarrow\infty}\frac{1}{\tau}\log \left\langle e^{\beta\tau\langle\sigma\rangle_{+}\left(\varepsilon_{\tau}-1\right)}\right\rangle_{+},
 \end{equation}
 
 ovvero
 
 \begin{equation}
 \lambda(\beta)=\max_{s}\left[\beta\langle\sigma\rangle_{+}(s-1)-\zeta(s)\right],
 \end{equation}
 
\` e analitica in $\beta\in(-\infty,\infty)$ e {\em asintoticamente lineare}, {\em i.e.} esiste $p^{*}\geq 1$ tale che

\begin{equation}
\lim_{\beta\rightarrow\infty}\beta^{-1}\lambda(\beta)=\langle\sigma\rangle_{+}\left(p^{*}-1\right).
\end{equation}
 
 \end{prop}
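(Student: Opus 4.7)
The plan is to prove Proposition \ref{prop:lambda} by reducing it, via symbolic dynamics, to statements about the pressure of a one-dimensional spin system with exponentially decaying potentials, and then to recover the equivalence with Proposition \ref{prop:zeta} by Legendre duality.

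First I would transcribe the generating function $\langle e^{\beta\tau\langle\sigma\rangle_+(\varepsilon_\tau-1)}\rangle_+$ into the symbolic setting through the Markov code $X$. By the preceding results $\mu_+$ corresponds to a Gibbs state on the space of compatible sequences with potentials generated by $A_u=\log\lambda_u\circ X$, while the observable $L=\langle\sigma\rangle_+^{-1}\log\lambda^{-1}\circ X$ is Hölder continuous and hence, by Proposition \ref{prop:svilpot}, expands into cylindrical potentials with exponential decay. Up to boundary corrections of order $O(1)$ the generating function becomes the ratio of two partition functions on the interval $[-\tau/2,\tau/2-1]$, one associated to the ``tilted'' potential $A_u-\beta\langle\sigma\rangle_+L$ and one to $A_u$; therefore $\lambda(\beta)$ equals the difference of the corresponding pressures per site in the thermodynamic limit $\tau\to\infty$.

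Second, I would establish analyticity of $\lambda(\beta)$ on all of $\mathbb{R}$ by invoking the absence of phase transitions for 1D lattice systems with short-range interactions. Concretely, the Ruelle--Perron--Frobenius transfer operator associated to $A_u-\beta\langle\sigma\rangle_+L$ has a leading eigenvalue $\lambda_{\max}(\beta)$ which is simple and isolated by a spectral gap, so standard analytic perturbation theory makes $\log\lambda_{\max}(\beta)$ real-analytic in $\beta$; equivalently, one may run a convergent cluster expansion, which works uniformly in real $\beta$ because the tilted potentials remain exponentially decaying with parameters varying analytically in $\beta$. Strict convexity in $\beta$ follows from the simplicity of the leading eigenvalue, since a vanishing second derivative would force a spectral degeneracy of the tilted operator.

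Third, the equivalence with Proposition \ref{prop:zeta} follows from the Gärtner--Ellis theorem: analyticity and strict convexity of $\lambda(\beta)$ yield the large deviation principle with rate function $\zeta(s)=\sup_\beta[\beta\langle\sigma\rangle_+(s-1)-\lambda(\beta)]$, together with the dual identity $\lambda(\beta)=\max_s[\beta\langle\sigma\rangle_+(s-1)-\zeta(s)]$ of the proposition. For the asymptotic linearity, compactness of $\Omega$ and continuity of $\log\lambda^{-1}$ force $\varepsilon_\tau$ to take values in a fixed interval $[-p^*,p^*]$ with $p^*<\infty$, so that $\zeta$ has compact support, while the normalization $\langle\varepsilon_\tau\rangle_+=1$ combined with dissipativity forces $p^*\geq 1$; for $\beta\to+\infty$ the maximum in the variational formula is attained at $s=p^*$, giving $\lambda(\beta)=\beta\langle\sigma\rangle_+(p^*-1)+O(1)$ and hence the claimed slope.

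The main obstacle is the analyticity step, which demands uniform control of the spectral gap of the tilted Ruelle--Perron--Frobenius operator (or, equivalently, convergence of the cluster expansion) for \emph{every} real $\beta$, not only in a neighborhood of $\beta=0$. Once analyticity is secured, the remaining steps are standard consequences of convex analysis and large deviation theory for 1D Gibbs measures with Fisher-type potentials.
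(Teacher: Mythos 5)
Your reduction of $\lambda(\beta)$ to a difference of pressures of a tilted one-dimensional Gibbs ensemble is the same starting point as the paper's, but for the crucial analyticity step you and the paper genuinely diverge. The paper (Appendix \ref{app:D}) does \emph{not} tilt-and-expand directly: precisely because a naive cluster expansion requires the interaction to be \emph{small} (not merely exponentially decaying) and must cope with the hard-core compatibility constraints, it first performs a decimation/renormalization of the chain into long blocks separated by gaps, integrates out the gap variables, and only then runs the polymer expansion on the resulting effective potential, whose size and range can be made arbitrarily small by taking the blocks long enough; analyticity of the finite-volume pressures plus uniform bounds then yields analyticity of the limit via Vitali's theorem. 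So your parenthetical claim that the cluster expansion ``works uniformly in real $\beta$ because the tilted potentials remain exponentially decaying'' is incorrect as stated — that is exactly the difficulty the paper's decimation is designed to remove. Your alternative route through the Ruelle--Perron--Frobenius operator is viable and standard (it is Ruelle's one-dimensional thermodynamic formalism, which the paper itself invokes in Appendix \ref{app:D} for a different purpose), and it buys a shorter argument at the price of quoting spectral theory; the paper's route is longer but self-contained within the cluster-expansion technology it develops anyway.

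Within your own route, the ``main obstacle'' you flag is not actually an obstacle: analyticity in $\beta$ is a local statement, so for each fixed real $\beta$ the tilted potential is H\"older, the RPF operator has a simple isolated maximal eigenvalue, and analytic (indeed entire) dependence of the operator on $\beta$ plus standard perturbation theory gives analyticity near that $\beta$; no uniform-in-$\beta$ control of the spectral gap is required. Two further caveats. First, strict convexity does not follow from ``a vanishing second derivative would force a spectral degeneracy'': the second derivative is the asymptotic variance, which vanishes iff the tilted observable is cohomologous to a constant, and it is this that must be excluded (in any case strict convexity is part of Proposition \ref{prop:zeta}, not of the statement at hand). Second, be aware that the paper does not reprove the equivalence of Propositions \ref{prop:zeta} and \ref{prop:lambda} nor the asymptotic linearity — it quotes them from \cite{Ge96}, \cite{RuStat}, \cite{GdimFT}, with $p^{*}=\sup_{x}\limsup_{\tau}\varepsilon_{\tau}(x)$ — so your Legendre-duality argument is an addition; it is reasonable, but to get $\lim_{\beta\to\infty}\beta^{-1}\lambda(\beta)=\langle\sigma\rangle_{+}(p^{*}-1)$ you should justify that the maximizer in the variational formula migrates to the endpoint, i.e. that $\zeta$ is finite up to $p^{*}$ and $+\infty$ beyond, rather than only that $\varepsilon_{\tau}$ is bounded.
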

 
 \begin{oss}
 \begin{enumerate}
 \item Il valore $p^{*}$ \` e il supremo dei valori possibili di $\varepsilon_{\tau}(x)$ nel limite $\tau\rightarrow\infty$, ovvero $p^{*}=\sup_{x\in\Omega}\limsup_{\tau}\varepsilon_{\tau}(x)$, \cite{GdimFT}.
 \item Quindi, per $|p|<p^{*}$ il limite nella (\ref{eq:GCFT}) esiste; al di fuori di questo intervallo nulla pu\` o essere detto.
 \end{enumerate} 
 \end{oss}
 
Rimandiamo all'appendice \ref{app:D} per una dimostrazione dell'analiticit\` a di $\lambda(\beta)$. Introduciamo la funzione $\tilde{\varepsilon}_{\tau}$ come
 
 \begin{equation}
 \tilde{\varepsilon}_{\tau}(\underline{\sigma})=\frac{1}{\tau}\sum_{X\subseteq [-\frac{\tau}{2},\frac{\tau}{2}-1]}\overline{l}_{X}\left(\underline{\sigma}_{X}\right), \label{eq:dimFT40}
 \end{equation}
 
 dove con $\overline{l}_{X}\left(\underline{\sigma}_{X}\right)$ indichiamo ci\` o che rimane sottraendo ai potenziali $l_{X}\left(\underline{\sigma}_{X}\right)$ la parte dovuta al rapporto dei seni, {\em cf.} formula (\ref{eq:dimFT0}). La (\ref{eq:dimFT40}) ``approssimer\` a'' $\varepsilon_{\tau}$ nel senso che esiste $C>0$ tale che
 
 \begin{equation}
 \left|\varepsilon_{\tau}-\tilde{\varepsilon}_{\tau}\right|\leq \frac{C}{\tau};
 \end{equation}
 
 notare che ora $\widetilde\varepsilon\circ I=-\widetilde\varepsilon$. Quindi, poich\'e
 
 \begin{equation}
 \left\{\underline{\sigma}:\tilde{\varepsilon}_{\tau}(\underline{\sigma})\in I_{p,\delta-\frac{C}{\tau}}\right\}\subseteq \left\{\underline{\sigma}:\varepsilon_{\tau}(\underline{\sigma})\in I_{p,\delta}\right\}\subseteq\left\{\underline{\sigma}:\tilde{\varepsilon}_{\tau}(\underline{\sigma})\in I_{p,\delta+\frac{C}{\tau}}\right\},
 \end{equation}
 
 avremo
 
 \begin{equation}
 \mu_{+}\left(\tilde{\varepsilon}_{\tau}\in I_{p,\delta-\frac{C}{\tau}}\right)\leq \mu_{+}\left(\varepsilon_{\tau}\in I_{p,\delta}\right)\leq \mu_{+}\left(\tilde{\varepsilon}_{\tau}\in I_{p,\delta+\frac{C}{\tau}}\right); \label{eq:dimFT04}
 \end{equation}

un discorso analogo pu\` o essere fatto per la misura approssimata $\mu_{T,\tau}$. Per dimostrare il teorema sar\` a sufficiente provare che, se $I_{p,\delta}=[p-\delta,p+\delta]$,

\begin{equation}
\frac{1}{\left\langle\sigma\right\rangle_{+}\tau}\log\frac{\mu_{+}(\varepsilon_{\tau}(\underline{\sigma})\in I_{p,\delta\mp\eta(\tau)})}{\mu_{+}(\varepsilon_{\tau}(\underline{\sigma})\in I_{-p,\delta\pm\eta(\tau)})}\left\{
\begin{array}{c}
<p+\delta+\eta'(\tau) \\ >p-\delta-\eta'(\tau)
\end{array},\right.
\end{equation}

con $\eta'(\tau),\,\eta(\tau)\rightarrow_{\tau\rightarrow\infty}0$. Per fare ci\` o useremo il fatto che le probabilit\` a calcolate con la misura SRB\index{misura SRB} possono essere stimate con la misura $\mu_{T,\tau}$\index{misura SRB ``approssimata''}, se $T\geq \frac{\tau}{2}$; in seguito considereremo il caso $T=\frac{\tau}{2}$. 

Per quanto detto in precedenza, {\em cf.} formula (\ref{eq:FT03}) e discussione successiva, la probabilit\` a $SRB$ di una configurazione di spin $\underline{\sigma}_{\tau}=\sigma_{-\frac{\tau}{2}}...\sigma_{\frac{\tau}{2}}$ \` e data da, se $\Phi_{X}$ sono i potenziali che si ottengono sviluppando $A_{u}$ nel senso della proposizione \ref{prop:svilpot},

 \begin{equation}
 \mu_{+}(\sigma_{-\frac{\tau}{2}}...\sigma_{\frac{\tau}{2}})=\lim_{|\Gamma|\rightarrow\infty}\frac{\sum_{\underline{\sigma}}^{*}e^{-\sum_{X\subset\Gamma}\Phi_{X}(\underline{\sigma}_{X})}}{\sum_{\underline{\sigma}}e^{-\sum_{X\subset\Gamma}\Phi_{X}(\underline{\sigma}_{X})}}, \label{eq:dimFT13}
 \end{equation}  
 
 dove l'asterisco sulla sommatoria indica che stiamo fissando la configurazione $\sigma_{-\frac{\tau}{2}}...\sigma_{\frac{\tau}{2}}$, e sommando sulle altre compatibili. La (\ref{eq:dimFT13}) pu\` o essere riscritta come
 
 \begin{equation}
  \mu_{+}(\sigma_{-\frac{\tau}{2}}...\sigma_{\frac{\tau}{2}})=\frac{e^{-\sum_{X\subseteq[-\frac{\tau}{2},\frac{\tau}{2}]}\Phi_{X}(\underline{\sigma}_{X})}\sum^{*}_{\underline{\sigma}}e^{-\sum_{X\nsubseteq[-\frac{\tau}{2},\frac{\tau}{2}]}\Phi_{X}(\underline{\sigma}_{X})}}{\sum_{\sigma_{[-\frac{\tau}{2},\frac{\tau}{2}]}}e^{-\sum_{X\subseteq[-\frac{\tau}{2},\frac{\tau}{2}]}\Phi_{X}(\underline{\sigma}_{X})}\sum_{\underline{\sigma}}^{*}e^{-\sum_{X\nsubseteq[-\frac{\tau}{2},\frac{\tau}{2}]}\Phi_{X}(\underline{\sigma}_{X})}} \label{eq:dimFT14}
 \end{equation}
 
con, se $a$ \` e il tempo di mescolamento della matrice di compatibilit\` a ({\em cf.} definizione \ref{def:matrcomp}),
 
 \begin{eqnarray}
 \sum_{X\nsubseteq[-\frac{\tau}{2},\frac{\tau}{2}]}\Phi_{X}(\underline{\sigma}_{X})&=&\sum_{X\nsubseteq[-\frac{\tau}{2},\frac{\tau}{2}],X\cap[-\frac{\tau}{2}-a,\frac{\tau}{2}+a]=\emptyset}\Phi_{X}(\underline{\sigma}_{X})\nonumber\\&&+\sum_{X\nsubseteq[-\frac{\tau}{2},\frac{\tau}{2}],X\cap[-\frac{\tau}{2}-a,\frac{\tau}{2}+a]\neq\emptyset}\Phi_{X}(\underline{\sigma}_{X}); \label{eq:dimFT014}
 \end{eqnarray}
 
poich\' e $A_{u}$ \` e h\"older continua esistono $b,b'>0$ tali che ({\em cf.} proposizione \ref{prop:svilpot}) $\left|\Phi_{X}\left(\underline{\sigma}_{X}\right)\right|\leq be^{-(|X|-1)b'}$, dunque
 
 \begin{equation}
 \left|\sum_{X\nsubseteq[-\frac{\tau}{2},\frac{\tau}{2}], X\cap[-\frac{\tau}{2}-a,\frac{\tau}{2}+a]\neq\emptyset}\Phi_{X}(\underline{\sigma}_{X})\right|\leq2a\sup_{\xi}\sum_{X\ni\xi}\left|\Phi_{X}\left(\underline{\sigma}_{X}\right)\right|\equiv\log B. \label{eq:dimFT0140}
 \end{equation}
 
Grazie alla (\ref{eq:dimFT0140}) possiamo maggiorare la (\ref{eq:dimFT14}) come, ricordando che ogni ``spin'' pu\` o assumere $n$ valori,
 
 \begin{equation}
 (\ref{eq:dimFT14})\leq \frac{e^{-\sum_{X\subseteq[-\frac{\tau}{2},\frac{\tau}{2}]}\Phi_{X}(\underline{\sigma}_{X})}n^{2a}\sum^{**}e^{-\sum_{X\cap[-\frac{\tau}{2}-a,\frac{\tau}{2}+a]=\emptyset}\Phi_{X}(\underline{\sigma}_{X})}}{\sum_{\sigma_{[-\frac{\tau}{2},\frac{\tau}{2}]}}e^{-\sum_{X\subseteq[-\frac{\tau}{2},\frac{\tau}{2}]}\Phi_{X}(\underline{\sigma}_{X})}\sum_{\underline{\sigma}}^{**}e^{-\sum_{X\nsubseteq[-\frac{\tau}{2},\frac{\tau}{2}]}\Phi_{X}(\underline{\sigma}_{X})}}B^{2},
 \end{equation}
 
 dove con il doppio asterisco indichiamo che stiamo sommando sulle sequenze $\underline{\sigma}_{\left[-\infty,-\frac{\tau}{2}-a\right)\cup\left(\frac{\tau}{2}+a,\infty\right]}$, indipendenti da $\underline{\sigma}_{\tau}$; quindi, i potenziali relativi ad intervalli reticolari che non intersecano $\left[-\frac{\tau}{2}-a,\frac{\tau}{2}+a\right]$ fattorizzano tra numeratore e denominatore. Per la minorazione il discorso \` e analogo a quello che abbiamo appena fatto, quindi si ottiene che
 
 \begin{equation}
  \mu_{+}(\sigma_{-\frac{\tau}{2}}...\sigma_{\frac{\tau}{2}})\left\{
\begin{array}{c}
\leq\frac{e^{-\sum_{X\subset [-\frac{\tau}{2},\frac{\tau}{2}]}\Phi_{X}(\underline{\sigma}_{X})}}{\sum_{\sigma_{[-\frac{\tau}{2},\frac{\tau}{2}]}}e^{-\sum_{X\subset[-\frac{\tau}{2},\frac{\tau}{2}]}\Phi_{X}(\underline{\sigma}_{X})}}B^{2}n^{2a} \\ \geq\frac{e^{-\sum_{X\subset [-\frac{\tau}{2},\frac{\tau}{2}]}\Phi_{X}(\underline{\sigma}_{X})}}{\sum_{\sigma_{[-\frac{\tau}{2},\frac{\tau}{2}]}}e^{-\sum_{X\subset[-\frac{\tau}{2},\frac{\tau}{2}]}\Phi_{X}(\underline{\sigma}_{X})}}B^{-2}n^{-2a}.
\end{array}\right. \label{eq:dimFT15}
 \end{equation}
 
Dalle (\ref{eq:dimFT04}), (\ref{eq:dimFT15}), dal momento che $\tilde{\varepsilon}_{\tau}$ dipende da un numero finito di simboli, \` e chiaro che possiamo trovare una costante $B_{1}$ tale che, per ogni $T\geq\frac{\tau}{2}$,

 \begin{eqnarray}
\mu_{+}\left(\varepsilon_{\tau}\in I_{p,\delta}\right)&\leq&\mu_{+}\left(\tilde{\varepsilon}_{\tau}\in I_{p,\delta+\frac{C}{\tau}}\right)\nonumber\\ &\leq& B_{1}\mu_{T,\tau}\left(\tilde{\varepsilon}_{\tau}\in I_{p,\delta+\frac{C}{\tau}}\right)\leq B_{1}\mu_{T,\tau}\left(\overline{\varepsilon}_{\tau}\in I_{p,\delta+2\frac{C}{\tau}}\right), \label{eq:dimFT16}
 \end{eqnarray}

dove $\overline{\varepsilon}_{\tau}$ \` e ci\` o che si ottiene sottraendo a $\varepsilon_{\tau}$ la parte relativa al rapporto dei seni; in questo modo, $\overline{\varepsilon}_{\tau}=\frac{1}{\tau\langle\sigma\rangle_{+}}\lambda_{u,\tau}^{-1}\lambda_{s,\tau}^{-1}=-\overline{\varepsilon}_{\tau}\circ I$. Una stima analoga pu\` o essere fatta per la minorazione, dunque

 \begin{equation}
 \frac{1}{\tau\left\langle\sigma\right\rangle_{+}}\log\frac{\mu_{+}\left(\varepsilon_{\tau}\in I_{p,\delta\mp\frac{2C}{\tau}}\right)}{\mu_{+}\left(\varepsilon_{\tau}\in I_{-p,\delta\pm\frac{2C}{\tau}}\right)}\left\{
\begin{array}{c}
\leq\frac{\log B_{1}^{2}}{\tau\left\langle\sigma\right\rangle_{+}}+\frac{1}{\tau\left\langle\sigma\right\rangle_{+}}\log\frac{\mu_{T,\tau}(\overline{\varepsilon}_{\tau}\in I_{p,\delta})}{\mu_{T,\tau}(\overline{\varepsilon}_{\tau}\in I_{-p,\delta})} \\ \geq-\frac{\log B_{1}^{2}}{\tau\left\langle\sigma\right\rangle_{+}}+\frac{1}{\tau\left\langle\sigma\right\rangle_{+}}\log\frac{\mu_{T,\tau}(\overline{\varepsilon}_{\tau}\in I_{p,\delta})}{\mu_{T,\tau}(\overline{\varepsilon}_{\tau}\in I_{-p,\delta})}
\end{array}\right.,
 \end{equation}
 
 e otterremo la (\ref{eq:GCFT}) mostrando che
 
 \begin{equation}
 \frac{1}{\tau\left\langle\sigma\right\rangle_{+}}\log\frac{\mu_{T,\tau}(\overline{\varepsilon}_{\tau}\in I_{p,\delta})}{\mu_{T,\tau}(\overline{\varepsilon}_{\tau}\in I_{-p,\delta})}
 \left\{
\begin{array}{c}
\leq p+\delta \\ \geq p-\delta
\end{array}\right.. \label{eq:dimFT17}
 \end{equation}
  
 Il rapporto di probabilit\` a nella (\ref{eq:dimFT17}) pu\` o essere scritto esplicitamente come
  
  \begin{equation}
  \frac{\sum_{\overline{\varepsilon}_{\tau}\in I_{p,\delta}}\lambda^{-1}_{u,\tau}(X(\underline{\sigma}_{T}))}{\sum_{\overline{\varepsilon}_{\tau}\in I_{-p,\delta}}\lambda^{-1}_{u,\tau}(X(\underline{\sigma}_{T}))}=\frac{\sum_{\overline{\varepsilon}_{\tau}\in I_{p,\delta}}\lambda^{-1}_{u,\tau}(X(\underline{\sigma}_{T}))}{\sum_{\overline{\varepsilon}_{\tau}\in I_{p,\delta}}\lambda^{-1}_{u,\tau}(IX(\underline{\sigma}_{T}))}=\frac{\sum_{\overline{\varepsilon}_{\tau}\in I_{p,\delta}}\lambda^{-1}_{u,\tau}(X(\underline{\sigma}_{T}))}{\sum_{\overline{\varepsilon}_{\tau}\in I_{p,\delta}}\lambda_{s,\tau}(X(\underline{\sigma}_{T}))}, \label{eq:dimFT18}
  \end{equation}
  
  e
  
  \begin{equation}
 \frac{\sum_{\overline{\varepsilon}_{\tau}\in I_{p,\delta}}\lambda^{-1}_{u,\tau}(X(\underline{\sigma}_{T}))}{\sum_{\overline{\varepsilon}_{\tau}\in I_{p,\delta}}\lambda_{s,\tau}(X(\underline{\sigma}_{T}))}\left\{
\begin{array}{c}
\leq\max_{\underline{\sigma}_{T}}\lambda_{u,\tau}^{-1}(X(\underline{\sigma}_{T}))\lambda_{s,\tau}^{-1}(X(\underline{\sigma}_{T})) \\ \geq\min_{\underline{\sigma}_{T}}\lambda_{u,\tau}^{-1}(X(\underline{\sigma}_{T}))\lambda_{s,\tau}^{-1}(X(\underline{\sigma}_{T}))
\end{array}\right.; \label{eq:dimFT19}
  \end{equation}
  
poich\' e il massimo e il minimo nella (\ref{eq:dimFT19}) sono calcolati rispettando la condizione $$\frac{1}{\left\langle\sigma\right\rangle_{+}\tau}\log\lambda_{u,\tau}^{-1}(X(\underline{\sigma}_{T}))\lambda_{s,\tau}^{-1}(X(\underline{\sigma}_{T}))\in I_{p,\delta}\;,$$ la (\ref{eq:dimFT17}) \` e immediata; ci\` o conclude la dimostrazione del teorema.
 
 \end{proof}
 
 \section{La reversibilit\` a condizionata} \label{sez:revcond}
 
 Consideriamo un'osservabile $F$ con una parit\` a ben definita\index{parit\` a di un'osservabile} sotto l'azione dell'inversione temporale $I$, ovvero
 
 \begin{equation}
 F(Ix)=\eta_{F}F(x) \label{eq:cond1}
 \end{equation} 
 
 con $\eta_{F}=\pm1$, e introduciamo una funzione continua $\varphi(t)$. Vogliamo dimostrare che, nel contesto dell'ipotesi caotica\index{ipotesi caotica} e considerando sistemi dissipativi\index{sistema dissipativo},
 
 \begin{equation}
 \frac{1}{\tau p\left\langle\sigma\right\rangle_{+}}\log\frac{\mu_{+}\left(\left\{F(S^{k}x)\right\}_{k=-\frac{\tau}{2}}^{\frac{\tau}{2}}\in U_{\gamma},\varepsilon_{\tau}\in I_{p,\delta}\right)}{\mu_{+}\left(\left\{F(S^{k}x)\right\}_{k=-\frac{\tau}{2}}^{\frac{\tau}{2}}\in \eta_{F} U_{\gamma},\varepsilon_{\tau}\in I_{-p,\delta}\right)}\rightarrow_{\tau\rightarrow\infty}1, \label{eq:cond2}
 \end{equation}
 
 dove $U_{\gamma}$, $\eta_{F}U_{\gamma}$ sono dei ``tubi'' di larghezza $\gamma$ attorno a $\varphi(k)$, $\eta_{F}\varphi(-k)$. In quel che segue nulla cambier\` a se al posto di $F$ consideriamo un insieme di osservabili $\{F_{1},...,F_{n}\}$ con parit\` a definita, ognuna con una traiettoria $\varphi_{j}$; per comodit\` a di notazione ne considereremo una sola.
  
 La dimostrazione della (\ref{eq:cond2}) \` e praticamente identica alla dimostrazione del teorema di fluttuazione riportata nella sezione \ref{sez:dimFT}; l'unica differenza \` e che la (\ref{eq:dimFT18}) \` e sostituita da
 
 \begin{eqnarray}
  \frac{\sum_{\overline{\varepsilon}_{\tau}\in I_{p,\delta},F(S^{k}X(\underline{\sigma}_{T}))\in U_{\gamma}}\lambda^{-1}_{u,\tau}(X(\underline{\sigma}_{T}))}{\sum_{\overline{\varepsilon}_{\tau}\in I_{-p,\delta},F(S^{k}X(\underline{\sigma}_{T})\in\eta_{F}U_{\gamma}}\lambda^{-1}_{u,\tau}(X(\underline{\sigma}_{T}))}\nonumber\\=\frac{\sum_{\overline{\varepsilon}_{\tau}\in I_{p,\delta},F(S^{k}X(\underline{\sigma}_{T}))\in U_{\gamma}}\lambda^{-1}_{u,\tau}(X(\underline{\sigma}_{T}))}{\sum_{\overline{\varepsilon}_{\tau}\in I_{p,\delta},F(S^{k}X(\underline{\sigma}_{T}))\in U_{\gamma}}\lambda^{-1}_{u,\tau}(IX(\underline{\sigma}_{T}))}\nonumber\\=\frac{\sum_{\overline{\varepsilon}_{\tau}\in I_{p,\delta},F(S^{k}X(\underline{\sigma}_{T}))\in U_{\gamma}}\lambda^{-1}_{u,\tau}(X(\underline{\sigma}_{T}))}{\sum_{\overline{\varepsilon}_{\tau}\in I_{p,\delta},F(S^{k}X(\underline{\sigma}_{T}))\in U_{\gamma}}\lambda_{s,\tau}(X(\underline{\sigma}_{T}))}, \label{eq:cond3}
 \end{eqnarray}
 
 e da qui in poi tutto procede come gi\` a visto in precedenza. 
 
 Possiamo riscrivere il teorema di fluttuazione (\ref{eq:cond2}) nella forma approssimata
 
 \begin{equation}
 \frac{\mu_{+}\left(\left\{F(S^{k}x)\right\}_{k=-\frac{\tau}{2}}^{\frac{\tau}{2}}\in U_{\gamma},\varepsilon_{\tau}\in I_{p,\delta}\right)}{\mu_{+}\left(\left\{F(S^{k}x)\right\}_{k=-\frac{\tau}{2}}^{\frac{\tau}{2}}\in \eta_{F} U_{\gamma},\varepsilon_{\tau}\in I_{-p,\delta}\right)}\sim e^{\tau p\left\langle\sigma\right\rangle_{+}}\sim\frac{\mu_{+}(\varepsilon_{\tau}\in I_{p,\delta})}{\mu_{+}(\varepsilon_{\tau}\in I_{-p,\delta})},
 \end{equation} 
 
 ovvero
 
 \begin{equation}
 \frac{\mu_{+}\left(\left.\left\{F(S^{k}x)\right\}_{k=-\frac{\tau}{2}}^{\frac{\tau}{2}}\in U_{\gamma}\right|\varepsilon_{\tau}\in I_{p,\delta}\right)}{\mu_{+}\left(\left.\left\{F(S^{k}x)\right\}_{k=-\frac{\tau}{2}}^{\frac{\tau}{2}}\in \eta_{F} U_{\gamma}\right|\varepsilon_{\tau}\in I_{-p,\delta}\right)}\sim 1; \label{eq:cond03}
 \end{equation}
  
 la formula (\ref{eq:cond03}) ha un interessante significato fisico.
 
  Pensiamo di osservare la quantit\` a $\varepsilon_{\tau}$ su un intervallo temporale $T\gg\tau\gg 1$, in modo da averne $\frac{T}{\tau}$ diverse misure; questi valori saranno molto spesso vicini ad $1$ (dal momento che $1$ \` e il valor medio di $\varepsilon_{\tau}$), e i valori lontani da $1$ occorreranno in modo essenzialmente casuale. La frequenza con la quale si manifestano i valori $-p$ \` e legata alla frequenza di $p$ dal teorema di fluttuazione (\ref{eq:GCFT}); ad esempio, la frequenza con la quale il valore $p=-1$ si manifester\` a rispetto alla frequenza del valore $p=1$ \` e proporzionale a $e^{-\tau\left\langle\sigma\right\rangle_{+}}$. Ci\` o vuol dire che nella nostra raccolta di dati sperimentali per $\sim e^{\tau\left\langle\sigma\right\rangle_{+}}$ valori $p=1$ avremo un valore $p=-1$, e in corrispondenza di questo dato, per il teorema (\ref{eq:cond2}), la probabilit\` a che un'osservabile fisica segua il cammino invertito nel tempo sar\` a asintoticamente uguale alla probabilit\` a che segua il cammino ``corretto'' con $p=1$; per questa ragione ha senso dire che in un sistema che verifica l'ipotesi caotica la freccia temporale \` e {\em intermittente}\index{intermittenza della freccia temporale}, \cite{fluttrev}.
 
\section{Altri risultati noti fuori dall'equilibrio}

Per concludere il capitolo, in questa sezione riassumiamo brevemente qualche altro risultato noto fuori dall'equilibrio. In realt\` a, per ``non equilibrio'' in questi casi si intender\` a una cosa diversa da quella che abbiamo discusso finora; il primo risultato che presenteremo, noto come {\em identit\` a di Evans - Searles} o {\em teorema di fluttuazione transiente}, descrive il comportamento di sistemi fuori dall'equilibrio in uno stato non stazionario, mentre gli altri due, la {\em formula di Jarzynski} e i {\em teoremi di fluttuazione a molti indici} di Bochov e Kuzovlev, trattano sistemi che vengono portati fuori dall'equilibrio attraverso trasformazioni hamiltoniane, {\em i.e.} senza dissipazione.

Nonostante la loro generalit\` a dunque, questi risultati (a parte il primo), non avranno la pretesa di descrivere il non equilibrio come lo abbiamo inteso finora, ovvero come stato stazionario di un sistema sottoposto a forze non conservative; anche se a prima vista potrebbero esserci delle similitudini con il teorema di fluttuazione di Gallavotti - Cohen, e bene tenere in mente che la situazione fisica che si sta prendendo in considerazione \` e completamente diversa.

\subsection{L'identit\` a di Evans e Searles}

In un articolo successivo (\cite{ES94}) al lavoro ``pioneristico'' \cite{ECM93}, Evans e Searles dimostrarono che, assumendo semplicemente che il sistema sia reversibile\index{sistema reversibile}\footnote{Affermando che un sistema \` e {\em reversibile} Evans e Searles intendono dire che se le equazioni del moto ammettono come soluzione la traiettoria $\gamma_{\Gamma(t_{0})\rightarrow \Gamma(t_{1})}$ allora ammetteranno anche la soluzione invertita nel tempo $\gamma_{\Gamma(t_{1})\rightarrow \Gamma(t_{0})}$, dove $\Gamma(t)$ \` e lo stato del sistema al tempo $t$, specificato da tutti i valori di posizione e impulso delle particelle.}, chiamando $E_{\pm p}$ l'insieme delle condizioni iniziali delle traiettorie lungo le quali il volume dello spazio delle fasi si contrae di $e^{\mp p\sigma_{+}\tau}$ in un tempo $\tau$, con $\sigma_{+}$ pari al valore medio di $\sigma$, allora

\begin{equation}
\frac{\mu_{L}(E_{p})}{\mu_{L}(E_{-p})}=e^{p\sigma_{+}\tau}, \label{eq:IES}
\end{equation}

dove $\mu_{L}$ \` e la misura di Liouville\index{misura di Liouville} nello spazio delle fasi. 

La dimostrazione della (\ref{eq:IES}) pu\` o essere riassunta nel modo seguente:

\begin{eqnarray}
\frac{\mu_{L}(E_{p})}{\mu_{L}(E_{-p})}&=&\frac{\mu_{L}(E_{p})}{\mu_{L}(IS_{\tau}E_{p})}\nonumber\\&=&\frac{\mu_{L}(E_{p})}{\mu_{L}(S_{\tau}E_{p})}\nonumber\\&=&\frac{\mu_{L}(E_{p})}{\mu_{L}(E_{p})e^{-\tau p\sigma_{+}}}\nonumber\\&=&e^{\tau p\sigma_{+}}, \label{eq:dimIES}
\end{eqnarray}

dove \` e stato usato il fatto che $\mu_{L}(IA)=\mu_{L}(A)$ ($|\det DI|=1$).

Le differenze principali con il teorema di fluttuazione di Gallavotti-Cohen sono le seguenti:

\begin{itemize}
\item la misura \` e assolutamente continua rispetto al volume, al contrario della misura SRB la quale, nel contesto dell'ipotesi caotica, per un sistema dissipativo \` e concentrata in un insieme di misura di volume nulla;
\item la (\ref{eq:IES}) non ha termini di errore, ovvero \` e un'identit\` a esatta a tutti i tempi $\tau$.
\end{itemize}

Al contrario di quanto sostennero Evans e Searles in seguito, {\em non} \` e ovvio che la relazione (\ref{eq:IES}) implichi il teorema di fluttuazione di Gallavotti-Cohen; secondo Evans e Searles, l'argomento per cui ci\` o dovrebbe essere possibile \` e il seguente. Scriviamo la (\ref{eq:IES}) applicando l'operatore $S_{T}$ alla la misura di Liouville $\mu_{L}$,
\begin{equation}
\frac{S_{T}(\mu_{L})(E_{p})}{S_{T}(\mu_{L})(E_{-p})}=\frac{\mu_{L}(S^{-T}E_{p})}{\mu_{L}(S^{-T}E_{-p})}=e^{\tau(p\sigma_{+}+O\left(\frac{T}{\tau}\right))}, \label{IES2}
\end{equation}

e effettuiamo il limite $\tau\rightarrow\infty$ {\em prima} del limite $T\rightarrow\infty$, in modo che il termine di errore all'esponente vada a zero; infine, nel limite $T\rightarrow\infty$ la misura $S_{T}(\mu_{L})$ dovrebbe tendere alla misura stazionaria del sistema fuori dall'equilibrio, ovvero alla misura SRB accettando l'ipotesi caotica.
  
 La critica pi\` u spontanea a questo argomento \` e che non \` e detto che sia possibile invertire i limiti in questo modo (nel teorema di Gallavotti-Cohen si effettua {\em prima} il limite che definisce la SRB e {\em poi} quello su $\tau$), e sicuramente non lo \` e senza fare prima delle assunzioni particolari. Da un punto di vista fisico si potrebbe obiettare che dovrebbe essere sufficiente prendere $T$ ``molto grande'', in modo tale che $S_{T}(\mu_{L})$ sia ``vicina'' alla misura SRB; ma questa affermazione \` e decisamente poco ovvia, dal momento che la misura SRB \` e {\em singolare} rispetto alla misura di volume.
 
 Il fatto che il teorema di fluttuazione e l'identit\` a di Evans e Searles\index{identit\` a di Evans e Searles} siano due cose ben distinte \` e ulteriormente evidenziato dalla presenza di esempi di semplici sistemi per i quali la (\ref{eq:IES}) \` e valida, ma non il teorema di fluttuazione, \cite{twoth}.
 
 \subsection{La formula di Jarzynski}
 
 La formula di Jarzynsky\index{formula di Jarzynski}, \cite{jar}, permette di calcolare la differenza di energia libera $\Delta F$ tra due stati all'equilibrio collegati da una generica trasformazione hamiltoniana (un {\em protocollo}), tramite successive misure del lavoro $W$ effettuato dal sistema sull'ambiente esterno al termine della trasformazione; il risultato di Jarzynski \` e il seguente:
 
 \begin{equation}
 e^{-\beta\Delta F}=\left\langle e^{-\beta W}\right\rangle, \label{eq:jar}
 \end{equation}
 
con $\beta=\frac{1}{k_{B}T}$, dove $T$ \` e la temperatura dei termostati (che supponiamo costante) e il valore di aspettazione $\left\langle...\right\rangle$ \` e calcolato con la misura $e^{-\beta H_{2}}dpdq$, se con $H_{1}$, $H_{2}$ indichiamo l'hamiltoniana del sistema rispettivamente allo stato iniziale e allo stato finale.

Per una dimostrazione della (\ref{eq:jar}) rimandiamo alla letteratura originale, \cite{jar}. Le ipotesi essenziali affinch\' e la (\ref{eq:jar}) sia vera sono due.

\begin{itemize}
\item La trasformazione che collega i due stati deve essere hamiltoniana, ovvero se immaginiamo che il processo sia governato da un parametro $\lambda(t)$ ad ogni tempo l'hamiltoniana del sistema sar\` a data da $H_{\lambda(t)}$, con $H_{\lambda(0)=H_{1}}$ e $H_{\lambda(t_{finale})}=H_{2}$.
\item L'accoppiamento tra il sistema e il termostato deve essere {\em debole}:  ci\`o vuol dire che se con $G_{\lambda(t)}$ indichiamo l'hamiltoniana totale di sistema e termostato, {\em i.e.}

\begin{equation}
G_{\lambda(t)}(\underline{q},\underline{x})=H_{\lambda(t)}(\underline{q})+\tilde{H}(\underline{x})+h(\underline{q},\underline{x}),
\end{equation}

dove $\tilde{H}(\underline{x})$ rappresenta l'hamiltoniana del termostato e $h(\underline{q},\underline{x})$ l'interazione tra termostato e sistema, si deve avere che $|h(\underline{q},\underline{x})|\ll|H(\underline{q})|$ $\forall \underline{q}, \underline{x}$; in questo modo, l'interazione $h(\underline{q},\underline{x})$ viene trascurata\footnote{Non \` e importante che $|h|\ll|\tilde{H}|$, dal momento che alla fine $\Delta F$ non dipender\` a dal termostato.} .
\end{itemize}

La quantit\` a $W$ che compare nella (\ref{eq:jar}) merita una discussione; in realt\` a questa quantit\` a \` e identificata con la variazione di energia {\em totale} di sistema e termostato, ovvero

\begin{equation}
W=G_{\lambda(t_{fin})}-G_{\lambda(0)},
\end{equation}

 dal momento che sistema e termostato sono considerati {\em isolati} dal mondo esterno. Quindi in un esperimento, per avere le ``risposte giuste'' dalla (\ref{eq:jar}), bisogna essere in grado di controllare completamente gli scambi di energia che avvengono con l'ambiente.
 
\subsection{Due precursori} 
 
Prima di passare a studiare le conseguenze del teorema di fluttuazione vogliamo concludere questa breve rassegna di risultati in meccanica statistica fuori dall'equilibrio discutendo un lavoro di G. N. Bochov e Y. E. Kuzovlev del 1981, {\em cf.} \cite{BK81a}, \cite{BK81b}, che, ben $12$  anni prima delle simulazioni numeriche di Evans, Cohen e Morriss, conteneva gi\` a alcuni degli elementi delle teorie sviluppate successivamente.

Consideriamo un sistema classico che evolve con hamiltoniana $H_{0}=H_{0}(\underline{q},\underline{p})$, dove $\underline{q}=(q_{1},...,q_{3N})$, $\underline{p}=(p_{1},...,p_{3N})$ sono le coordinate e gli impulsi di $N$ particelle; indichiamo un insieme di $m$ osservabili macroscopiche con $Q(\underline{q},\underline{p})=\{Q_{\alpha}(\underline{q},\underline{p})\}$, dove $\alpha=1,...m$. Supponiamo che il sistema sia all'equilibrio, e che dunque possa essere descritto con la misura canonica

\begin{equation}
\rho_{0}(\underline{q},\underline{p})=\frac{e^{-\beta H_{0}(\underline{q},\underline{p})}}{\int_{\Omega} d\underline{p}d\underline{q}e^{-\beta H_{0}(\underline{q},\underline{p})}}, \label{eq:BK1}
\end{equation}

dove con $\Omega$ indichiamo lo spazio delle fasi. 

Al tempo $t=0$ accendiamo una perturbazione, che si manifesta con un cambiamento di hamiltoniana

\begin{equation}
H(\underline{q},\underline{p},t)=H_{0}(\underline{q},\underline{p})-h(\underline{q},\underline{p},t), \label{eq:BK2}
\end{equation}

con

\begin{equation}
h(\underline{q},\underline{p},t)=X_{\alpha}(t)Q_{\alpha}(\underline{q},\underline{p})\equiv X(t)Q, \label{eq:BK3}
\end{equation}

dove gli indici ripetuti sono sommati; notare che, in questo modo, l'evoluzione del sistema rimane hamiltoniana, e quindi l'elemento di volume dello spazio delle fasi pu\` o contrarsi, ma solo mantendendo il suo volume costante.

Ipotizziamo che il nostro insieme di osservabili $Q=\{Q_{\alpha}\}$ abbia parit\` a definita sotto inversione temporale\footnote{L'inversione temporale \` e naturalmente identificata con  il cambiamento di segno delle velocit\` a delle particelle.}\index{ inversione temporale}, ovvero

\begin{equation}
Q_{\alpha}(\underline{q},-\underline{p})=\varepsilon_{\alpha}Q(\underline{q},\underline{p}),\qquad\varepsilon_{\alpha}=\pm 1
\end{equation}

e inoltre che

\begin{equation}
H_{0}(\underline{q},-\underline{p})=H_{0}(\underline{q},\underline{p});
\end{equation}

nel seguito, con $\left\langle...\right\rangle_{X(\theta)}$ indichiamo i valori medi calcolati con la misura canonica relativa all'hamiltoniana (\ref{eq:BK2}), dove $h(\underline{p},\underline{q},t)$ dato dalla (\ref{eq:BK3}), e la perturbazione $X(\theta)$ \` e stata accesa al tempo $\theta$. Definiamo $I_{\alpha}$, $E$ come

\begin{eqnarray}
 I_{\alpha}(\underline{q}(t),\underline{p}(t))&=&\frac{d}{dt}Q(\underline{q}(t),\underline{p}(t))\\
 E&\equiv&\int_{0}^{t}X(\tau)I(\tau)d\tau\nonumber\\&=&\int_{0}^{t}X(\tau)\dot{Q}(\tau)d\tau\nonumber\\&=&\int_{0}^{t}\frac{d}{d\tau}H_{0}(\underline{q}(\tau),\underline{p}(\tau))d\tau; \label{eq:BK4}
\end{eqnarray}

introducendo una {\em funzione di prova} $u(t)$ tale che $\lim_{t\rightarrow\pm\infty}u(t)=0$, e considerando\footnote{Stranamente, Bochov e Kuzovlev nel loro articolo \cite{BK81a} non discutono affatto il problema della convergenza degli integrali per $t\rightarrow\infty$; effettuano i limiti senza fare assunzioni speciali su $X(t)$ e su $u(t)$.} delle perturbazioni $X(t)$ caratterizzate da $\lim_{t\rightarrow\pm\infty}X(t)=0$, nel loro lavoro \cite{BK81a} gli autori dimostrano che 

\begin{equation}
\left\langle e^{\int_{-\infty}^{\infty}u(\tau)Q(\tau)d\tau}e^{-\beta E}\right\rangle_{X(\theta)}=\left\langle e^{\int_{-\infty}^{\infty}u(-\tau)\varepsilon Q(\tau)}\right\rangle_{\varepsilon X(-\theta)}. \label{eq:BK5}
\end{equation}

Notare che $X(t)I(t)$ \` e l'incremento di energia interna per unit\` a di tempo dovuto alle forze esterne (conservative), dunque la quantit\` a $E$ che appare nella (\ref{eq:BK5}) rappresenta l'incremento totale di energia interna nel sistema. Introduciamo il funzionale $D_{\theta}[u(\tau);X(\tau)]$ come

\begin{equation}
D_{\theta}[u(\tau);X(\tau)]\equiv\ln\left\langle e^{\int_{-\infty}^{\infty}u(\tau)I(\tau)d\tau}\right\rangle_{X(\theta)}; \label{eq:BK6}
\end{equation}

\` e possibile dimostrare che, partendo  dalla (\ref{eq:BK5}),

\begin{equation}
D_{\theta}[u(\tau)-\beta X(\tau);X(\tau)]=D_{-\theta}[-\varepsilon u(-\tau);\varepsilon X(-\tau)]. \label{eq:BK7}
\end{equation}

Grazie alla (\ref{eq:BK5}), o equivalentemente alla (\ref{eq:BK7}), Bochov e Kuzovlev riescono ad ottenere delle relazioni simili alla generalizzazione del teorema di fluttuazione che abbiamo discusso nella sezione \ref{sez:revcond}, {\em i.e.}

\begin{eqnarray}
P_{\theta}[\{Q_{\alpha}(\tau)\};X(\theta)]e^{-\beta E}&=&P_{\theta}[\left\{\varepsilon_{\alpha} Q_{\alpha}(-\tau)\right\};\varepsilon X(-\theta)] \label{eq:BK8}\\
P_{\theta}[\{I_{\alpha}(\tau)\};X(\theta)]e^{-\beta E}&=&P_{\theta}[\left\{\varepsilon_{\alpha} I_{\alpha}(-\tau)\right\};\varepsilon X(-\theta)], \label{eq:BK9}
\end{eqnarray}

dove $P_{\theta}[A(\tau);X(\tau)]$ \` e la probabilit\` a di $A$ al tempo $\tau$ quando il sistema \` e sottoposto ad una perturbazione $X(\tau)$ accesa al tempo $\theta$. Una conseguenza diretta delle (\ref{eq:BK8}), (\ref{eq:BK9}) \`e che

\begin{equation}
\left\langle e^{-\beta E}\right\rangle_{X(\theta)}=1, \label{eq:BK10}
\end{equation}

quindi, poich\'e per la convessit\` a dell'esponenziale $\left\langle e^{-x}\right\rangle\geq e^{-\left\langle x\right\rangle}$, dal momento che $\beta> 0$ la (\ref{eq:BK10}) implica la condizione

\begin{equation}
\left\langle E\right\rangle_{X(\theta)}\geq 0. \label{eq:BK11}
\end{equation}

Infine, \` e possibile dimostrare che la (\ref{eq:BK7}) implica delle relazioni tra funzioni di correlazione a tutti gli ordini nella perturbazione. Indichiamo con $\left\langle...\right\rangle_{X(\theta)}^{T}$ le {\em aspettazioni troncate}\index{aspettazione troncata} rispetto alla misura di probabilit\` a canonica con hamiltioniana (\ref{eq:BK2}); se $\frac{\delta}{\delta u(t)}$ \` e la derivata funzionale rispetto a $u(t)$ allora l'{\em aspettazione troncata} di ordine $n$ \` e definita come

\begin{equation}
\left\langle I_{\alpha_{1}}(t_{1})...I_{\alpha_{n}}(t_{n})\right\rangle_{X(\theta)}^{T}=\frac{\delta^{n}}{\delta u_{\alpha_{1}}(t_{1})...\delta u_{\alpha_{n}}(t_{n})}D_{\theta}[u(\tau);X(\tau)], \label{eq:BK12}
\end{equation}

{\em i.e.} $D_{\theta}$ \` e il {\em funzionale generatore delle aspettazioni troncate}\index{funzionale generatore delle aspettazioni troncate} di $I(t)$. Grazie alle (\ref{eq:BK7}), (\ref{eq:BK12}) Bochov e Kuzovlev dimostrano che

\begin{eqnarray}
\prod_{i=1}^{n}(-\varepsilon_{\alpha_{i}})\left\langle I_{\alpha_{1}}(t_{1})...I_{\alpha_{n}}(t_{n})\right\rangle_{\varepsilon X(-\theta)}^{T}\label{eq:BK13}\nonumber\\=\sum_{k=0}^{\infty}\frac{(-\beta)^{k}}{k!}\int_{-\infty}^{\infty}\left\langle I_{\alpha_{1}}(t_{1})...I_{\alpha_{n}}(t_{n})I_{\beta_{1}}(\tau_{1})...I_{\beta_{k}}(\tau_{k})\right\rangle_{X(\theta)}^{T}\prod_{j=1}^{k}X_{\beta_{j}}(\tau_{j})d\tau_{1}...d\tau_{k};\nonumber\\ \label{eq:BK120}
\end{eqnarray}

considerando il caso $n=1$, $t_{1}=t$, la causalit\` a implica

\begin{equation}
\left\langle I(-t)\right\rangle_{\varepsilon X(-\theta)}=\left\langle I(-t)\right\rangle_{0}=-\frac{d}{dt}\left\langle Q(-t)\right\rangle_{0}=0\qquad\mbox{per $t>\theta$},\label{eq:BK14}
\end{equation}

e sempre per la causalit\` a nel caso $n=1,t_{1}=t$ gli integrali nella parte destra della (\ref{eq:BK120}) sono identicamente nulli se $\tau_{i}>t$. Con queste considerazioni la (\ref{eq:BK120}) si riduce a 

\begin{equation}
\left\langle I_{\alpha}(t)\right\rangle_{X(\theta)}=\sum_{k=1}^{\infty}\frac{(-1)^{k-1}\beta^{k}}{k!}\int_{-\infty}^{t}\left\langle I_{\alpha}(t)I_{\beta_{1}}(\tau_{1})...I_{\beta_{k}}(\tau_{k})\right\rangle_{X(\theta)}^{T}\prod_{j=1}^{k}X_{\beta_{j}}(\tau_{k})d\tau_{1}...d\tau_{k}, \label{eq:BK15}
\end{equation}

ovvero \` e possibile determinare completamente la {\em risposta non lineare del sistema}. Al primo ordine in $X$ abbiamo che

\begin{equation}
\left\langle I_{\alpha}(t)\right\rangle_{X(\theta)}=\beta\int_{-\infty}^{t}\left\langle I_{\alpha}(t)I_{\gamma}(\tau)\right\rangle_{0}X_{\gamma}(\tau)d\tau, \label{eq:BK16}
\end{equation}

ovvero l'usuale teorema di fluttuazione-dissipazione\index{teorema di fluttuazione-dissipazione}.

Il resto del lavoro \` e dedicato alla discussione di applicazioni delle relazioni (\ref{eq:BK13}) nel caso di modelli particolari, e al tentativo di ricavare equazioni simili per sistemi inizialmente in uno stato stazionario fuori dall'equilibrio; purtroppo per\`o, la discussione diventa molto oscura.

Ad ogni modo, i risultati da loro ottenuti ricordano molto nella forma quelli ottenuti a partire dalla met\` a degli anni '90; in particolare, vedremo che relazioni simili alle (\ref{eq:BK13}) (chiamate da Bochov e Kuzovlev {\em teoremi di fluttuazione a molti indici}\index{teorema di fluttuazione a molti indici}) potranno essere dedotte dal teorema di fluttuazione di Gallavotti-Cohen, {\em cf.} capitolo \ref{cap:appl}, sezioni \ref{sez:GK}, \ref{sez:ONS}.

\chapter{Applicazioni del teorema di fluttuazione} \label{cap:appl}

In questo capitolo mostreremo che, accettando l'ipotesi caotica, il teorema di fluttuazione di Gallavotti-Cohen implica delle relazioni tra funzioni di correlazione che includono la formula di Green-Kubo\index{formula di Green-Kubo} e le relazioni di reciprocit\` a di Onsager, e in generale estendono la prima al di fuori della teoria della risposta lineare.

\section{Risposta non lineare} \label{sez:GK}

Consideriamo un sistema di Anosov\footnote{In generale, possiamo pensare che $(\Omega,S)$ sia il sistema dinamico discreto che si ottiene osservando ad intervalli di tempo opportunamente prefissati un sistema dinamico che evolve in tempo continuo, {\em cf.} \cite{Ge96}, {\em i.e.} attraverso la tecnica della sezione di Poincar\'e.} $(\Omega,S)$, e definiamo $\varepsilon_{\tau}$ come la media adimensionale su un tempo $\tau$ del tasso di contrazione del volume dello spazio delle fasi $\sigma(x)$, ovvero

\begin{equation}
\varepsilon_{\tau}(x)=\frac{1}{\tau\left\langle\sigma\right\rangle_{+}}\sum_{j=-\tau/2}^{\tau/2-1}\sigma(S^{j}x);
\end{equation}

il teorema di fluttuazione afferma che\footnote{Per comodit\` a di notazione, nel seguito con $\mu_{+}(\varepsilon_{\tau}=\pm p)$ indicheremo sempre $\mu_{+}(\varepsilon_{\tau}\in I_{p,\delta})$, se $I_{p,\delta}$ \` e un intervallo $[p-\delta,p+\delta]$ con $\delta>0$ arbitrariamente piccolo.}

\begin{equation}
\frac{1}{\tau\left\langle\sigma\right\rangle_{+}p}\log\frac{\mu_{+}(\varepsilon_{\tau}=p)}{\mu_{+}(\varepsilon_{\tau}=-p)}\rightarrow_{\tau\rightarrow\infty}1\qquad\mbox{se $|p|<p^{*}$,} \label{eq:1}
\end{equation}

dove $\mu_{+}$ \` e la misura $SRB$ nel futuro, e $\mu_{+}(\varepsilon_{\tau}=p)$, $\left\langle\sigma\right\rangle_{+}$ sono rispettivamente la probabilit\` a dell'evento $\{x:\varepsilon(x)=p\}$ e il valor medio di $\sigma(x)$ calcolati con la misura SRB. Nel seguito studieremo esclusivamente sistemi {\em dissipativi}, ovvero per i quali $\left\langle\sigma\right\rangle_{+} > 0$; diremo che il sistema \` e dissipativo se $\underline{G}\neq\underline{0}$, dove $\underline{G}$ sono i parametri che regolano le intensit\` a delle forze non conservative. In generale, assumeremo che $\sigma(x)$ sia analitica in $x$ e in $\underline{G}$, e che dunque abbia la forma, sottintendendo una somma sugli indici ripetuti,

\begin{equation}
\sigma(x)=\sum_{i=1}^{s}G_{i}J_{i}^{(0)}(x)+\sum_{k\geq 2}G_{i_{1}}...G_{i_{k}}\sigma_{i_{1}...i_{k}}(x). \label{eq:1b}
\end{equation}

\subsection{Conseguenze del teorema di fluttuazione}

Introduciamo la funzione $\lambda(\beta)$ come

\begin{equation}
\lambda(\beta)=\lim_{\tau\rightarrow\infty}\frac{1}{\tau}\log\left\langle e^{\beta\tau(\varepsilon_{\tau}(x)-1)\left\langle\sigma\right\rangle_{+}}\right\rangle_{+}; \label{eq:2}
\end{equation}

si pu\` o dimostrare che $\lambda(\beta)$ \` e analitica in $\beta$ e che \` e possibile portare le derivate in $\beta$ dentro al valor medio SRB,  {\em cf.} appendici \ref{app:D}, \ref{app:B}. Dunque, definendo la quantit\` a $C_{k}$ come

\begin{equation}
C_{k}\equiv\left.\frac{\partial^{k}}{\partial\beta^{k}}\lambda(\beta)\right|_{\beta=0}\equiv\lim_{\tau\rightarrow\infty}\left\langle\tau^{k-1}\sigma_{+}^{k}\left[\varepsilon_{\tau}(x)-1\right]^{k}\right\rangle_{+}^{T}
\end{equation}

se $\sigma_{+}\equiv\left\langle\sigma\right\rangle_{+}$, e dove con $\left\langle...\right\rangle_{+}^{T}$ indichiamo l'aspettazione troncata, {\em i.e.} 

\begin{equation}
\left\langle\tau^{k-1}\sigma_{+}^{k}\left[\varepsilon_{\tau}(x)-1\right]^{k}\right\rangle_{+}^{T}\equiv\partial_{\beta}^{k}\lambda_{\tau}(\beta),\qquad\lambda_{\tau}(\beta)\equiv \frac{1}{\tau}\left\langle e^{\beta\tau(\varepsilon_{\tau}(x)-1)\left\langle\sigma\right\rangle_{+}}\right\rangle_{+},
\end{equation}

 lo sviluppo in serie di Taylor della (\ref{eq:2}) sar\` a dato da 

\begin{equation}
\lambda(\beta)=\sum_{k\geq 2}\frac{C_{k}\beta^{k}}{k!}. \label{eq:2b}
\end{equation}

Il seguente lemma sar\` a ci\` o di cui avremo bisogno per studiare la risposta non lineare.

\begin{lemma}
Il teorema di fluttuazione (\ref{eq:1}) implica che

\begin{equation}
\lambda(\beta)=\lambda(-1-\beta)-2\left\langle\sigma\right\rangle_{+}\beta-\left\langle\sigma\right\rangle_{+}. \label{eq:rel1}
\end{equation}

\end{lemma}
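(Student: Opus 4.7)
The plan is to exploit the fluctuation theorem (\ref{eq:1}) directly at the level of the exponential moment generating function. Write
\begin{equation}
\langle e^{\beta\tau(\varepsilon_{\tau}-1)\sigma_{+}}\rangle_{+}=\int e^{\beta\tau\sigma_{+}(p-1)}\,\pi_{\tau}(dp),
\end{equation}
where $\pi_{\tau}(dp)$ denotes the SRB-distribution of $\varepsilon_{\tau}$. First I would perform the change of variable $p\to-p$, obtaining
\begin{equation}
\int e^{-\beta\tau\sigma_{+}p-\beta\tau\sigma_{+}}\,\pi_{\tau}(-dp).
\end{equation}
Then I would invoke the fluctuation theorem in the form $\pi_{\tau}(-dp)=\pi_{\tau}(dp)\,e^{-\tau\sigma_{+}p\,(1+o(1))}$ (valid for $|p|<p^{*}$, with the error uniform in $p$ on compact subsets thanks to Proposition~\ref{prop:zeta}), giving
\begin{equation}
\langle e^{\beta\tau(\varepsilon_{\tau}-1)\sigma_{+}}\rangle_{+}=e^{-\beta\tau\sigma_{+}+o(\tau)}\int e^{-(1+\beta)\tau\sigma_{+}p}\,\pi_{\tau}(dp).
\end{equation}
Finally I would recognize the remaining integral: pulling out $e^{(1+\beta)\tau\sigma_{+}}$ converts $-(1+\beta)\tau\sigma_{+}p$ into $(-1-\beta)\tau\sigma_{+}(p-1)$, so that
\begin{equation}
\frac{1}{\tau}\log\!\int e^{-(1+\beta)\tau\sigma_{+}p}\,\pi_{\tau}(dp)\;\xrightarrow[\tau\to\infty]{}\;\lambda(-1-\beta)-(1+\beta)\sigma_{+}.
\end{equation}
Adding the contributions $-\beta\sigma_{+}$ and $-(1+\beta)\sigma_{+}$ gives the claimed identity $\lambda(\beta)=\lambda(-1-\beta)-2\beta\sigma_{+}-\sigma_{+}$.

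The main obstacle to make this rigorous is controlling the error in the fluctuation theorem uniformly on the range of $p$ that actually contributes to the integral, and justifying that only the region $|p|<p^{*}$ matters. By Proposition~\ref{prop:zeta} the large-deviation functional $\zeta(s)$ is analytic and strictly convex on $(-p^{*},p^{*})$; by Proposition~\ref{prop:lambda} the Legendre dual $\lambda(\beta)=\max_{s}[\beta\sigma_{+}(s-1)-\zeta(s)]$ is analytic on all of $\mathbb{R}$ and asymptotically linear. Hence a Laplace/saddle-point argument shows that, for every $\beta\in\mathbb{R}$, the integral defining $\lambda(\beta)$ is asymptotically saturated by a unique $s_{\beta}\in(-p^{*},p^{*})$, which is precisely the region in which the fluctuation-theorem estimate (\ref{eq:1}) applies with a correction vanishing uniformly. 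The error term $o(1)$ in the exponent of $\pi_{\tau}(-dp)/\pi_{\tau}(dp)$ therefore contributes only $o(\tau)$ inside the logarithm and disappears in the limit.

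A completely equivalent, more transparent route is the Legendre-transform argument: starting from $\lambda(\beta)=\max_{s}[\beta\sigma_{+}(s-1)-\zeta(s)]$, substitute $s\mapsto -s$ in the expression for $\lambda(-1-\beta)$ and use the symmetry $\zeta(-s)-\zeta(s)=\sigma_{+}s$ (which is exactly the content of the fluctuation theorem translated to the rate function via Proposition~\ref{prop:zeta}). A short algebraic manipulation then yields $\lambda(-1-\beta)=\lambda(\beta)+\sigma_{+}+2\beta\sigma_{+}$, i.e.\ (\ref{eq:rel1}). This variant sidesteps the estimation of the integral but relies on the saddle-point identification of Proposition~\ref{prop:lambda}, whose proof is in Appendix~\ref{app:D}; I would present the direct integral argument first and remark on the Legendre version as a consistency check.
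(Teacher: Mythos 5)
La tua dimostrazione è corretta e segue essenzialmente la stessa strada del testo: si scrive $\lambda(\beta)$ come limite di $\tau^{-1}\log\int e^{\beta\tau\langle\sigma\rangle_{+}(p-1)}\pi_{\tau}(dp)$, si cambia variabile $p\to-p$, si usa la forma a $\tau$ finito del teorema di fluttuazione $\pi_{\tau}(p)=\pi_{\tau}(-p)e^{\tau\langle\sigma\rangle_{+}p}e^{O(1)}$ e si riconosce $\lambda(-1-\beta)$ più i termini lineari. Le tue osservazioni aggiuntive sull'uniformità dell'errore per $|p|<p^{*}$ e la variante via trasformata di Legendre (con la simmetria $\zeta(-s)-\zeta(s)=\langle\sigma\rangle_{+}s$) sono coerenti con le proposizioni \ref{prop:zeta} e \ref{prop:lambda} e non cambiano la sostanza dell'argomento.
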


\begin{proof}
Se non effettuiamo il limite $\tau\rightarrow\infty$ il teorema di fluttuazione diventa, ponendo $\pi_{\tau}(p)\equiv\mu_{+}(\varepsilon_{\tau}=p)$,

\begin{equation}
\pi_{\tau}(p)=\pi_{\tau}(-p)e^{\tau\left\langle\sigma\right\rangle_{+}p}e^{O(1)}; \label{eq:nnlin3b}
\end{equation}

quindi, grazie alla (\ref{eq:nnlin3b}),

\begin{eqnarray}
\lambda(\beta)&=&\lim_{\tau\rightarrow\infty}\frac{1}{\tau}\log\int e^{\tau\left\langle\sigma\right\rangle_{+}(p-1)\beta}\pi_{\tau}(p)dp\nonumber\\&=&\lim_{\tau\rightarrow\infty}\frac{1}{\tau}\log\int e^{\tau\left\langle\sigma\right\rangle_{+}(p-1)\beta+\tau p\left\langle\sigma\right\rangle_{+}}\pi_{\tau}(-p)dp\nonumber\\&=&\lim_{\tau\rightarrow\infty}\frac{1}{\tau}\log\int e^{\tau\left\langle\sigma\right\rangle_{+}(p-1)(-\beta-1)}\pi_{\tau}(p)dp-2\beta\left\langle\sigma\right\rangle_{+}-\left\langle\sigma\right\rangle_{+}\nonumber\\&=&\lambda(-1-\beta)-2\beta\left\langle\sigma\right\rangle_{+}-\left\langle\sigma\right\rangle_{+},
\end{eqnarray}

e il lemma \` e dimostrato.
\end{proof}

Ponendo $\beta=0$ la (\ref{eq:rel1}) diventa

\begin{equation}
\left\langle\sigma\right\rangle_{+}=\lambda(-1)=\sum_{k\geq2}\frac{(-1)^{k}C_{k}}{k!}; \label{eq:rel2}
\end{equation}

come vedremo, a partire dalla (\ref{eq:rel2}) al secondo ordine, ovvero $\left\langle\sigma\right\rangle_{+}^{(2)}=\frac{C_{2}^{(2)}}{2}$, si pu\` o dedurre la formula di Green-Kubo, risultato ben noto in teoria della risposta lineare. Concludiamo la sezione dimostrando una generalizzazione della (\ref{eq:rel2}).

\begin{lemma}
Il teorema di fluttuazione per il funzionale generatore delle aspettazioni troncate $\lambda(\beta)$ (\ref{eq:rel1}) implica che

\begin{equation}
C_{n}=\sum_{k\geq0}\frac{(-1)^{k+n}C_{k+n}}{k!}\qquad\mbox{$n\geq2$}. \label{eq:rel3}
\end{equation}

\end{lemma}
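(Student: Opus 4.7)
The plan is to exploit directly the functional symmetry
\begin{equation}
\lambda(\beta)=\lambda(-1-\beta)-2\langle\sigma\rangle_{+}\beta-\langle\sigma\rangle_{+}
\end{equation}
established in the previous lemma, together with the Taylor expansion
\begin{equation}
\lambda(\beta)=\sum_{k\geq 2}\frac{C_{k}\beta^{k}}{k!}
\end{equation}
whose validity (and convergence on all of $\mathbb{R}$) follows from the analyticity of $\lambda$ proved in Appendix \ref{app:D}. The idea is simply to differentiate the functional equation $n$ times at $\beta=0$: for $n\geq 2$ the affine terms $-2\langle\sigma\rangle_{+}\beta-\langle\sigma\rangle_{+}$ drop out, and the chain rule applied to $\lambda(-1-\beta)$ produces a factor $(-1)^{n}$, yielding
\begin{equation}
C_{n}=\lambda^{(n)}(0)=(-1)^{n}\lambda^{(n)}(-1).
\end{equation}

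Next I would evaluate $\lambda^{(n)}(-1)$ by differentiating the power series term by term, which is legal because $\lambda$ is entire. Since
\begin{equation}
\lambda^{(n)}(\beta)=\sum_{k\geq n}\frac{C_{k}\,\beta^{k-n}}{(k-n)!},
\end{equation}
setting $\beta=-1$ and re-indexing $j=k-n$ gives
\begin{equation}
\lambda^{(n)}(-1)=\sum_{j\geq 0}\frac{(-1)^{j}C_{j+n}}{j!}.
\end{equation}
Substituting this into $C_{n}=(-1)^{n}\lambda^{(n)}(-1)$ produces exactly the asserted identity
\begin{equation}
C_{n}=\sum_{k\geq 0}\frac{(-1)^{k+n}C_{k+n}}{k!}.
\end{equation}

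The only step that is not purely formal is the interchange of differentiation and summation at $\beta=-1$, i.e.\ the assertion that the Taylor series of $\lambda$ around $0$ actually represents $\lambda$ at the point $-1$ and that its derivatives at $-1$ are obtained by term-by-term differentiation. This is where I expect the ``real'' work to have been already done: the results of Appendices \ref{app:D} and \ref{app:B} tell us that $\lambda(\beta)$ is entire (indeed asymptotically linear), so the series converges absolutely on every compact subset of $\mathbb{R}$ and in particular at $\beta=-1$, and term-by-term differentiation is legitimate. Once this analytic input is taken for granted, the proof reduces to the elementary manipulation above, so no further estimate is needed and the lemma follows immediately.
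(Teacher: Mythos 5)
Your proof is correct and follows essentially the same route as the paper: both start from the symmetry (\ref{eq:rel1}) together with the expansion (\ref{eq:2b}) and extract the $n$-th derivative at $\beta=0$, the only cosmetic difference being that the paper expands $(1+\beta)^{k}$ binomially and compares term by term, while you phrase the same computation as $C_{n}=(-1)^{n}\lambda^{(n)}(-1)$ evaluated through term-by-term differentiation of the series. The analytic justification you invoke (analyticity of $\lambda$ from Appendix \ref{app:D}) is exactly the input the paper relies on as well.
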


\begin{proof}
Scriviamo la relazione (\ref{eq:rel1}) usando lo sviluppo di $\lambda(\beta)$ (\ref{eq:2b}),

\begin{equation}
\sum_{k\geq2}\frac{\beta^{k}C_{k}}{k!}=\sum_{k\geq2}\frac{(-1)^{k}(\beta+1)^{k}C_{k}}{k!}-2\left\langle\sigma\right\rangle_{+}\beta-\left\langle\sigma\right\rangle_{+}; \label{eq:dimrel1}
\end{equation}

poich\'e

\begin{equation}
(1+\beta)^{k}=\sum_{j=0}^{k}\left(\begin{array}{c} k \\ j\end{array}\right)\beta^{k-j} \label{eq:dimrel2}
\end{equation}

la (\ref{eq:dimrel1}) diventa

\begin{equation}
\sum_{k\geq2}\frac{\beta^{k}C_{k}}{k!}=\sum_{k\geq2}\frac{(-1)^{k}C_{k}}{k!}\sum_{j=0}^{k}\left(\begin{array}{c} k \\ j\end{array}\right)\beta^{k-j}-2\left\langle\sigma\right\rangle_{+}\beta-\left\langle\sigma\right\rangle_{+}, \label{eq:dimrel3}
\end{equation}

e dunque, per $n\geq2$,

\begin{eqnarray}
\frac{\partial^{n}}{\partial\beta^{n}}\lambda(\beta)|_{\beta=0}=C_{n}&=&\sum_{k\geq n}(-1)^{k}\frac{C_{k}}{k!}\left(\begin{array}{c} k \\ k-n\end{array}\right)n!\\&=&\sum_{k\geq n}\frac{(-1)^{k}C_{k}}{(k-n)!}\\&=&\sum_{k\geq0}\frac{(-1)^{k+n}C_{k+n}}{k!}. \label{eq:dimrel4}
\end{eqnarray}

\end{proof}

\subsection{Conseguenze del teorema di fluttuazione generalizzato}

Adesso, consideriamo una generica osservabile $\mathcal{O}$ antisimmetrica sotto l'azione dell'inversione temporale $I$, ovvero tale che $\mathcal{O}\circ I = -\mathcal{O}$, e chiamiamo $\eta_{\tau}$ la sua media temporale adimensionale su un tempo $\tau$, ovvero

\begin{equation}
\eta_{\tau}=\frac{1}{\tau\langle\mathcal{O}\rangle_{+}}\sum_{j=-\frac{\tau}{2}}^{\frac{\tau}{2}-1}\mathcal{O}\circ S^{j}.
\end{equation}

Se $\pi_{\tau}(p,q)\equiv \mu_{+}(\varepsilon_{\tau}=p,\eta_{\tau}=q)$ \` e la probabilit\` a congiunta delle variabili $p,q$, con una dimostrazione praticamente identica a quella del  teorema di fluttuazione (\ref{eq:1}) ({\em cf.} capitolo \ref{cap:noneq}, sezione \ref{sez:revcond}) si ottiene che

\begin{equation}
\lim_{\tau\rightarrow\infty}\frac{1}{\tau\left\langle\sigma\right\rangle_{+}p}\log\frac{\pi_{\tau}(p,q)}{\pi_{\tau}(-p,-q)}=1. \label{eq:GK9}
\end{equation}

Analogamente al caso precedente, ponendo $\sigma_{+}\equiv\langle\sigma\rangle_{+}$, $\mathcal{O}_{+}\equiv\langle\mathcal{O}\rangle_{+}$, introduciamo il funzionale generatore delle aspettazioni troncate delle variabili $\tau\sigma_{+}\left[\varepsilon_{\tau}-1\right]$, $\tau\mathcal{O}_{+}\left[\eta_{\tau}-1\right]$ come

\begin{equation}
\lambda(\beta_{1},\beta_{2})=\lim_{\tau\rightarrow\infty}\frac{1}{\tau}\log\left\langle e^{\tau\left(\beta_{1}\left(\varepsilon_{\tau}(x)-1\right)\left\langle\sigma\right\rangle_{+}+\beta_{2}\left(\eta_{\tau}(x)-1\right)\left\langle\mathcal{O}\right\rangle_{+}\right)}\right\rangle_{+} \label{eq:GK10}.
\end{equation}

Dal momento che $\lambda(\beta_{1},\beta_{2})$ \` e una funzione analitica\footnote{Come per il caso precedente, questa affermazione pu\` o essere dimostrata grazie alla tecnica illustrata nell'appendice \ref{app:D}.} per $\beta_{1},\beta_{2}\in(-\infty,\infty)$ pu\` o essere scritta come, sommando gli indici ripetuti,

\begin{equation}
\lambda(\beta_{1},\beta_{2})=\sum_{k=2}^{\infty}\frac{C_{\alpha_{1}...\alpha_{k}}}{k!}\beta_{\alpha_{1}}...\beta_{\alpha_{k}},\qquad\alpha_{i}=1,2; \label{eq:GKb1}
\end{equation}

se $n_{1}$, $n_{2}$, $n_{1}+n_{2}=k$, sono rispettivamente il numero di $\alpha_{i}=1$, $\alpha_{j}=2$ in $\left\{\alpha_{1},...,\alpha_{k}\right\}$, i coefficienti $C_{\alpha_{1}...\alpha_{k}}$ sono dati da

\begin{equation}
C_{\alpha_{1}...\alpha_{k}}\equiv \left.\partial_{\beta_{\alpha_{1}}}...\partial_{\beta_{\alpha_{k}}}\lambda(\beta_{1},\beta_{2})\right|_{\beta_{1}=\beta_{2}=0}\equiv \lim_{\tau\rightarrow\infty}\left\langle\tau^{k-1}\sigma_{+}^{n_{1}}\left[\varepsilon_{\tau}-1\right]^{n_{1}}\mathcal{O}_{+}^{n_{2}}\left[\eta_{\tau}-1\right]^{n_{2}}\right\rangle_{+}^{T},
\end{equation}

dove

\begin{eqnarray}
\left\langle\tau^{k-1}\sigma_{+}^{n_{1}}\left[\varepsilon_{\tau}-1\right]^{n_{1}}\mathcal{O}_{+}^{n_{2}}\left[\eta_{\tau}-1\right]^{n_{2}}\right\rangle_{+}^{T}&\equiv&\partial_{\beta_{1}}^{n_{1}}\partial_{\beta_{2}}^{n_{2}}\lambda_{\tau}(\beta_{1},\beta_{2})\nonumber\\&\equiv&\partial_{\beta_{1}}^{n_{1}}\partial_{\beta_{2}}^{n_{2}}\frac{1}{\tau}\left\langle e^{\tau\left(\beta_{1}\left(\varepsilon_{\tau}(x)-1\right)\left\langle\sigma\right\rangle_{+}+\beta_{2}\left(\eta_{\tau}(x)-1\right)\left\langle\mathcal{O}\right\rangle_{+}\right)}\right\rangle_{+}.
\end{eqnarray}

In modo del tutto simile al caso precedente, dimostreremo che il teorema di fluttuazione (\ref{eq:GK9}) implica una relazione per il funzionale generatore delle aspettazioni troncate $\lambda(\beta_{1},\beta_{2})$.

\begin{lemma}\label{lem:rel2}
Il teorema di fluttuazione (\ref{eq:GK9}) implica che

\begin{equation}
\lambda(\beta_{1},\beta_{2})=\lambda(-1-\beta_{1},-\beta_{2})-2\beta_{1}\left\langle\sigma\right\rangle_{+}-\left\langle\sigma\right\rangle_{+}-2\beta_{2}\left\langle \mathcal{O}\right\rangle_{+}. \label{eq:relaz1}
\end{equation}

\end{lemma}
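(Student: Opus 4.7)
The strategy is to mimic step-by-step the proof of the single-variable relation (\ref{eq:rel1}), now using the joint fluctuation theorem (\ref{eq:GK9}) in its pre-limit form. Let $\pi_\tau(p,q)\equiv\mu_+(\varepsilon_\tau=p,\,\eta_\tau=q)$. Writing (\ref{eq:GK9}) without taking $\tau\to\infty$ gives
\begin{equation}
\pi_\tau(p,q)=\pi_\tau(-p,-q)\,e^{\tau\langle\sigma\rangle_+ p}\,e^{O(1)},
\end{equation}
exactly as in the derivation of (\ref{eq:nnlin3b}); crucially, the $O(1)$ error does not involve $\tau$ and is uniform in the allowed range of $p$, so it will drop out after taking $\tau^{-1}\log$.

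Next I would rewrite $\lambda(\beta_1,\beta_2)$ as an integral over the joint distribution,
\begin{equation}
\lambda(\beta_1,\beta_2)=\lim_{\tau\to\infty}\frac{1}{\tau}\log\int e^{\tau[\beta_1\langle\sigma\rangle_+(p-1)+\beta_2\langle\mathcal O\rangle_+(q-1)]}\pi_\tau(p,q)\,dp\,dq,
\end{equation}
substitute $\pi_\tau(p,q)=\pi_\tau(-p,-q)e^{\tau\langle\sigma\rangle_+ p}$, and then change integration variables $(p,q)\to(-p,-q)$. The exponent becomes
\begin{equation}
\tau\langle\sigma\rangle_+\bigl[-p(\beta_1+1)-\beta_1\bigr]+\tau\langle\mathcal O\rangle_+\bigl[-q\beta_2-\beta_2\bigr].
\end{equation}
The algebraic identities $-p(\beta_1{+}1)-\beta_1=(p-1)(-1-\beta_1)-1-2\beta_1$ and $-q\beta_2-\beta_2=(q-1)(-\beta_2)-2\beta_2$ then rearrange the exponent into the form that defines $\lambda(-1-\beta_1,-\beta_2)$ plus the three constant correction terms on the right-hand side of (\ref{eq:relaz1}). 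Taking $\tau^{-1}\log$ yields exactly the desired identity.

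Essentially no obstacle is expected: the argument is a direct two-variable generalization of the one used for (\ref{eq:rel1}), and the only subtlety is bookkeeping the affine rearrangement of the exponent. The one point that deserves a brief comment is that all manipulations are carried out \emph{inside} the region $|p|<p^*$ (and the analogous admissible range for $q$), where the joint fluctuation theorem (\ref{eq:GK9}) holds; outside this range the densities $\pi_\tau$ are so strongly suppressed that, by the analyticity of $\lambda(\beta_1,\beta_2)$ established in appendix \ref{app:D}, they give no contribution to the Laplace-type limit. This guarantees that the $O(1)$ correction from (\ref{eq:GK9}) really is absorbed by $\tau^{-1}\log$ and that the change of variables is legitimate, completing the proof.
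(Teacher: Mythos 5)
La tua dimostrazione è corretta e segue essenzialmente la stessa strada della prova riportata nel testo: si parte dalla forma pre-limite $\pi_{\tau}(p,q)=\pi_{\tau}(-p,-q)e^{\tau\langle\sigma\rangle_{+}p}e^{O(1)}$, si riscrive $\lambda(\beta_{1},\beta_{2})$ come integrale sulla distribuzione congiunta, si effettua il cambio di variabili $(p,q)\rightarrow(-p,-q)$ e si riorganizza l'esponente, ottenendo (\ref{eq:relaz1}). Le identità algebriche e l'assorbimento del termine $O(1)$ nel $\tau^{-1}\log$ coincidono con i passaggi della (\ref{eq:dimrelaz2}).
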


\begin{proof}
Come per la (\ref{eq:rel1}), questa relazione si dimostra partendo da

\begin{equation}
\pi_{\tau}(p,q)=\pi_{\tau}(-p,-q)e^{\tau\left\langle\sigma\right\rangle_{+} p}e^{O(1)} \label{eq:dimrelaz1};
\end{equation}

infatti, grazie alla (\ref{eq:dimrelaz1}) si ha che

\begin{eqnarray}
\lambda(\beta_{1},\beta_{2})&=&\lim_{\tau\rightarrow\infty}\frac{1}{\tau}\log\int e^{\tau\left(\beta_{1}(p-1)\left\langle\sigma\right\rangle_{+}+\beta_{2}(q-1)\left\langle \mathcal{O}\right\rangle_{+}\right)}\pi_{\tau}(p,q)dpdq\nonumber\\&=&\lim_{\tau\rightarrow\infty}\frac{1}{\tau}\log\int e^{\tau\left(\beta_{1}(p-1)\left\langle\sigma\right\rangle_{+}+\beta_{2}(q-1)\left\langle \mathcal{O}\right\rangle_{+}\right)+\tau p\left\langle\sigma\right\rangle_{+}}\nonumber\\&&\times\pi_{\tau}(-p,-q)dpdq\nonumber\\&=&\lim_{\tau\rightarrow\infty}\frac{1}{\tau}\log\int e^{\tau\left(\beta_{1}(-p-1)\left\langle\sigma\right\rangle_{+}+\beta_{2}(-q-1)\left\langle \mathcal{O}\right\rangle_{+}\right)-\tau p\left\langle\sigma\right\rangle_{+}}\nonumber\\&&\times\pi_{\tau}(p,q)dpdq\nonumber\\&=&\lim_{\tau\rightarrow\infty}\frac{1}{\tau}\log\int e^{\tau\left((-\beta_{1}-1)(p-1)\left\langle\sigma\right\rangle_{+}-\beta_{2}(q-1)\left\langle \mathcal{O}\right\rangle_{+}\right)}\nonumber\\&&\times e^{\tau\left(-2\beta_{1}\left\langle\sigma\right\rangle_{+}-2\beta_{2}\left\langle \mathcal{O}\right\rangle_{+}-\left\langle\sigma\right\rangle_{+}\right)}\pi_{\tau}(p,q)dpdq\nonumber\\&=&\lambda(-1-\beta_{1},-\beta_{2})-2\left\langle\sigma\right\rangle_{+}\beta_{1}-2\left\langle \mathcal{O}\right\rangle_{+}\beta_{2}-\left\langle\sigma\right\rangle_{+}\label{eq:dimrelaz2}.
\end{eqnarray}

\end{proof}

Ponendo $\beta_{1}=-\frac{1}{2}$, $\beta_{2}=-\frac{1}{2}$ nella (\ref{eq:relaz1}) si ottiene che

\begin{equation}
\left\langle \mathcal{O}\right\rangle_{+}=\lambda\left(-\frac{1}{2},-\frac{1}{2}\right)-\lambda\left(-\frac{1}{2},\frac{1}{2}\right), \label{eq:GK15}
\end{equation}

ovvero, sostituendo nella (\ref{eq:GK15}) $\lambda(\beta_{1},\beta_{2})$ con il suo sviluppo (\ref{eq:GKb1}),

\begin{eqnarray}
\left\langle \mathcal{O}\right\rangle_{+}&=&\sum_{k\geq 2}\sum_{\alpha_{1}...\alpha_{k}}\frac{C_{\alpha_{1}...\alpha_{k}}}{k!}\left(-\frac{1}{2}\right)^{k}\nonumber\\&&-\sum_{k\geq 2}\sum_{\alpha_{1}...\alpha_{k}}\frac{C_{\alpha_{1}...\alpha_{k}}}{k!}\frac{(-1)^{\sum_{i=1}^{k}2-\alpha_{i}}}{2^{k}}\nonumber\\
&=&\sum_{k\geq 2}\left(-\frac{1}{2}\right)^{k}\frac{1}{k!}\sum_{\alpha_{1}...\alpha_{k}}C_{\alpha_{1}...\alpha_{k}}\left(1-(-1)^{\sum_{i=1}^{k}1-\alpha_{i}}\right)\label{eq:GK16}\\
&=&\sum_{k\geq 2}\frac{(-1)^{k}}{2^{k-1}}\sum_{l=0}^{\left\|\frac{k-1}{2}\right\|}\frac{C_{\underline{1}_{k-(2l+1)}\underline{2}_{(2l+1)}}}{(2l+1)!(k-(2l+1))!},\label{eq:GK016}
\end{eqnarray}

se con $\|a\|$ indichiamo la {\em parte intera} di $a$, e \

\begin{equation}
\underline{1}_{k}=\underbrace{1,1,1,...}_{k\; \scriptstyle{volte}}. 
\end{equation}

Dunque, mentre all'equilibrio il valor medio di un'osservabile antisimmetrica sotto $I$ \` e identicamente zero\footnote{Infatti, poich\'e $|\det DI|=1$ ($\det D I$ \` e il determinante della matrice jacobiana di $I$), indicando con $\left\langle\mathcal{O}\right\rangle_{0}$ la media di $\mathcal{O}$ calcolata con la misura di Liouville,

\begin{equation}
\left\langle\mathcal{O}\right\rangle_{0}=-\left\langle\mathcal{O}\circ I\right\rangle_{0}=-\left\langle\mathcal{O}\right\rangle_{0}\Rightarrow\left\langle\mathcal{O}\right\rangle_{0}\equiv 0.
\end{equation}

}, fuori dall'equilibrio diventa una somma di quantit\` a $O\left(G^{k}\right)$ per $k\geq 1$; in particolare, $\left\langle\mathcal{O}\right\rangle_{+}$ pu\` o essere espresso come somma di aspettazioni troncate con un numero {\em dispari} di\footnote{Ricordiamo che nella notazione $C_{\alpha_{1}...\alpha_{j}}$ $\alpha_{i}=1$ indica la presenza di $\sigma_{+}\left[\varepsilon_{\tau}-1\right]$, mentre $\alpha_{i}=2$ corrisponde a $\mathcal{O}_{+}\left[\eta_{\tau}-1\right]$.} $\alpha_{i}=2$. Questo fatto non \` e sorprendente, perch\'e la propriet\` a $\mathcal{O}\circ I=-\mathcal{O}$ implica che una qualunque rappresentazione di $\left\langle\mathcal{O}\right\rangle_{+}$ debba necessariamente cambiare segno sostituendo $\mathcal{O}$ con $\mathcal{O}\circ I$; ci\` o non potrebbe accadere se nello sviluppo ci fossero dei termini con un numero pari di $\alpha_{i}=2$. Nella prossima sezione vedremo che la (\ref{eq:GK016}) all'ordine pi\` u basso diventer\` a la formula di Green-Kubo; dunque, accettando l'ipotesi caotica la (\ref{eq:GK016}) pu\` o essere vista come una generalizzazione\index{generalizzazione della formula di Green-Kubo} di quest'ultimo risultato.

Notare che ponendo $\mathcal{O}=\sigma$, la (\ref{eq:GK016}) si riduce alla (\ref{eq:rel1}), poich\'e in questo caso $C_{\alpha_{1}...\alpha_{k}}\equiv C_{k}$ e, nella (\ref{eq:GK16}),

\begin{equation}
\sum_{\alpha_{1}...\alpha_{k}}\left(1-(-1)^{\sum_{i=1}^{k}1-\alpha_{i}}\right)=2^{k}-(-1)^{k}\sum_{\alpha_{1}=1}^{2}(-1)^{\alpha_{1}}...\sum_{\alpha_{k}=1}^{2}(-1)^{\alpha_{k}}=2^{k}.
\end{equation}

Infine, per concludere la sezione dimostriamo il seguente risultato.

\begin{lemma}
L'identit\` a (\ref{eq:relaz1}) implica che

\begin{equation}
C_{\underbrace{\scriptstyle{1,1,1,....}}_{l\: volte},\underbrace{\scriptstyle{2,2,2,....}}_{n-l\:volte}}=\sum_{k\geq 0}\frac{(-1)^{n+k}}{k!}C_{\underbrace{\scriptstyle{1,1,1....}}_{k+l\:volte},\underbrace{\scriptstyle{2,2,2....}}_{n-l\:volte}},\qquad n\geq 2. \label{eq:GK18}
\end{equation}

\end{lemma}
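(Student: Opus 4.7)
Il piano \`e generalizzare l'argomento usato per la (\ref{eq:rel3}), sfruttando direttamente la relazione funzionale (\ref{eq:relaz1}) del lemma \ref{lem:rel2}. Si tratta di applicare a entrambi i membri di (\ref{eq:relaz1}) l'operatore $\partial_{\beta_{1}}^{l}\partial_{\beta_{2}}^{n-l}$ e poi valutare in $\beta_{1}=\beta_{2}=0$; poich\'e $n\geq 2$, il contributo dei termini di grado $\leq 1$ $-2\beta_{1}\langle\sigma\rangle_{+}-\langle\sigma\rangle_{+}-2\beta_{2}\langle\mathcal{O}\rangle_{+}$ si annulla identicamente, e dunque rimane solo da confrontare le derivate miste di $\lambda(\beta_{1},\beta_{2})$ con quelle di $\lambda(-1-\beta_{1},-\beta_{2})$.

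Conviene riorganizzare lo sviluppo (\ref{eq:GKb1}) raggruppando i termini secondo il numero $n_{1}$ di indici uguali a $1$ e $n_{2}$ di indici uguali a $2$; ponendo $C_{n_{1},n_{2}}$ pari al coefficiente $C_{\alpha_{1}\ldots\alpha_{k}}$ in cui $\alpha_{i}=1$ compare $n_{1}$ volte e $\alpha_{j}=2$ compare $n_{2}$ volte, si ottiene la forma compatta $\lambda(\beta_{1},\beta_{2})=\sum_{n_{1}+n_{2}\geq 2}\frac{C_{n_{1},n_{2}}}{n_{1}!\,n_{2}!}\beta_{1}^{n_{1}}\beta_{2}^{n_{2}}$. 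Il membro sinistro della (\ref{eq:GK18}) coincide allora con $C_{l,n-l}=\partial_{\beta_{1}}^{l}\partial_{\beta_{2}}^{n-l}\lambda(\beta_{1},\beta_{2})|_{0}$. Per il membro destro, la regola della catena produce un fattore $(-1)^{l}(-1)^{n-l}=(-1)^{n}$ e sposta la valutazione al punto $(-1,0)$: calcolando le derivate a partire dalla stessa serie di Taylor e valutandole in $(-1,0)$ sopravvivono unicamente i termini con $n_{2}=n-l$, e il cambio di indice $k=n_{1}-l$ d\`a $(\partial_{x_{1}}^{l}\partial_{x_{2}}^{n-l}\lambda)(-1,0)=\sum_{k\geq 0}\frac{(-1)^{k}}{k!}C_{k+l,n-l}$. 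Moltiplicando per $(-1)^{n}$ si ottiene esattamente la somma a destra in (\ref{eq:GK18}).

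L'unico punto da giustificare con attenzione \`e la convergenza della serie in $k$ e lo scambio tra derivate e sommatoria nel valutare $\lambda$ in $(-1,0)$ a partire dallo sviluppo centrato nell'origine; ci\`o \`e lecito in virt\`u dell'analiticit\`a di $\lambda(\beta_{1},\beta_{2})$ discussa nelle appendici \ref{app:D} e \ref{app:B}, che garantisce in particolare che $(-1,0)$ giaccia nel dominio di convergenza della serie di potenze centrata in $(0,0)$. Dal punto di vista combinatorio il conto ricalca passo passo quello svolto nella (\ref{eq:dimrel1})--(\ref{eq:dimrel4}) per il caso univariato, e non presenta ostacoli nuovi rispetto a quest'ultimo.
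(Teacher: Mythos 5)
La tua dimostrazione \`e corretta e segue in sostanza la stessa strada del testo: derivare la relazione funzionale (\ref{eq:relaz1}) con $\partial_{\beta_{1}}^{l}\partial_{\beta_{2}}^{n-l}$ in $\beta_{1}=\beta_{2}=0$ equivale esattamente allo sviluppo binomiale di $(1+\beta_{1})^{j}$ usato nelle (\ref{eq:GK19})--(\ref{eq:GK21}), e il conto combinatorio \`e identico. L'osservazione sulla liceit\`a dello scambio tra derivate e somma e sulla valutazione della serie in $(-1,0)$, garantita dall'analiticit\`a di $\lambda(\beta_{1},\beta_{2})$ su tutto $(-\infty,\infty)$, \`e un'accortezza in pi\`u che il testo lascia implicita.
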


\begin{proof}
La (\ref{eq:relaz1}) diventa, sviluppando $\lambda(\beta_{1},\beta_{2})$ aspettazioni troncate,

\begin{eqnarray}
\sum_{k\geq2}\frac{\prod_{i=1}^{k}\beta_{\alpha_{i}}}{k!}C_{\alpha_{1}....\alpha_{k}}&=&\sum_{k\geq 2}\frac{(-1)^{k}\prod_{i=1}^{k}\tilde{\beta}_{\alpha_{i}}}{k!}C_{\alpha_{1}....\alpha_{k}}-2\beta_{1}\left\langle\sigma\right\rangle_{+}\nonumber\\&&-2\beta_{2}\left\langle\mathcal{O}\right\rangle_{+}-\left\langle\sigma\right\rangle_{+}, \label{eq:GK19}
\end{eqnarray}

dove $\tilde{\beta}_{1}=(1+\beta_{1})$ e $\tilde{\beta_{2}}=\beta_{2}$; poich\' e

\begin{equation}
\sum_{\alpha_{1}...\alpha_{k}}\prod_{i=1}^{k}\tilde{\beta}_{\alpha_{i}}C_{\alpha_{1}...\alpha_{k}}=\sum_{j=1}^{k}\left(\begin{array}{c} k\\j \end{array}\right)(1+\beta_{1})^{j}\beta_{2}^{k-j}C_{\underbrace{\scriptstyle{1,1...}}_{j\:volte},\underbrace{\scriptstyle{2,2...}}_{k-j\:volte}}, \label{eq:GK20}
\end{equation}

la (\ref{eq:GK19}) pu\` o essere scritta come

\begin{eqnarray}
\sum_{k\geq2}\frac{\prod_{i=1}^{k}\beta_{\alpha_{i}}}{k!}C_{\alpha_{1}....\alpha_{k}}&=&\sum_{k\geq 2}\frac{(-1)^{k}}{k!}\sum_{j=1}^{k}\left(\begin{array}{c} k\\j \end{array}\right)(1+\beta_{1})^{j}\beta_{2}^{k-j}C_{\underbrace{\scriptstyle{1,1...}}_{j\:volte},\underbrace{\scriptstyle{2,2...}}_{k-j\:volte}}\nonumber\\&&-2\beta_{1}\left\langle\sigma\right\rangle_{+}-2\beta_{2}\left\langle \mathcal{O}\right\rangle_{+}-\left\langle\sigma\right\rangle_{+}\nonumber\\&=&\sum_{k\geq 2}\frac{(-1)^{k}}{k!}\sum_{j=1}^{k}\sum_{i=1}^{j}\left(\begin{array}{c} k\\j \end{array}\right)\left(\begin{array}{c} j\\i \end{array}\right)\beta_{1}^{i}\beta_{2}^{k-j}C_{\underline{1}_{j},\underline{2}_{k-j}}\nonumber\\&&-2\beta_{1}\left\langle\sigma\right\rangle_{+}-2\beta_{2}\left\langle \mathcal{O}\right\rangle_{+}-\left\langle\sigma\right\rangle_{+}, \label{eq:GK20a}
\end{eqnarray}

dove $\underline{1}_{j}=(\overbrace{1,1,1....}^{j\:volte})$, $\underline{2}_{k-j}=(\overbrace{2,2,2...}^{k-j\:volte})$.

Quindi, per $n\geq2$,

\begin{eqnarray}
\overbrace{\partial_{\beta_{1}}\partial_{\beta_{1}}...}^{l\:volte}\overbrace{\partial_{\beta_{2}}\partial_{\beta_{2}}...}^{n-l\;volte}\sum_{k\geq2}\frac{\prod_{i=1}^{k}\beta_{\alpha_{i}}}{k!}C_{\alpha_{1}....\alpha_{k}}=C_{\underbrace{\scriptstyle{1,1,1...}}_{l\:volte},\underbrace{\scriptstyle{2,2,2...}}_{n-l\:volte}}\nonumber\\=\sum_{k\geq n}\frac{(-1)^{k}}{k!}\left(\begin{array}{c} k-n+l\\l \end{array}\right)\left(\begin{array}{c} k\\k-n+l \end{array}\right)l!(n-l)!C_{\underbrace{\scriptstyle{1,1,1...}}_{k-n+l\:volte},\underbrace{\scriptstyle{2,2,2...}}_{n-l\:volte}}\nonumber\\=\sum_{k\geq 0}\frac{(-1)^{k+n}}{k!}C_{\underbrace{\scriptstyle{1,1,1...}}_{k+l\:volte},\underbrace{\scriptstyle{2,2,2...}}_{n-l\:volte}}.\label{eq:GK21}
\end{eqnarray}

\end{proof}

Ponendo $\mathcal{O}=\sigma$, la (\ref{eq:GK21}) si riduce alla (\ref{eq:rel3}).

\section{Risposta lineare}\label{sez:ONS}

Affinch\'e la teoria sia consistente \` e necessario verificare che i risultati appena ottenuti implichino all'ordine pi\` u basso l'insieme di relazioni tra funzioni di correlazione che formano la teoria della risposta lineare, \cite{DGM}; come vedremo tra poco, dal teorema di fluttuazione sar\` a possibile ricavare due risultati notevoli come la formula di Green - Kubo e le relazioni di reciprocit\` a di Onsager. Il collegamento tra la teoria della risposta lineare e il teorema di fluttuazione \` e stato notato per la prima volta in \cite{BGGtest}, ed approfondito nei lavori successivi \cite{G96Ons}, \cite{G96Ons2}.

Per la (\ref{eq:rel2}), il teorema di fluttuazione all'ordine $G^{2}$ diventa

\begin{equation}
\left\langle\sigma\right\rangle_{+}^{(2)}=\frac{C_{2}^{(2)}}{2}; \label{eq:GK1}
\end{equation}

vogliamo mostrare che la relazione (\ref{eq:GK1}) contiene la formula di Green-Kubo\index{formula di Green-Kubo}, risultato meglio noto come {\em teorema di fluttuazione-dissipazione\index{teorema di fluttuazione-dissipazione}}.

Poich\'e $\sigma(x)=\sum_{i=1}^{s}G_{i}J_{i}^{(0)}(x)+O(G^{2})$ la (\ref{eq:GK1}) pu\` o essere riscritta come

\begin{equation}
\partial_{G_{i}}\partial_{G_{j}}\left.\left\langle\sigma\right\rangle_{+}\right|_{\underline{G}=\underline{0}}=\sum_{k=-\infty}^{\infty}\left.\left(\left\langle J_{i}^{(0)}(S^{k}\cdot)J_{j}^{(0)}(\cdot)\right\rangle_{+}-\left\langle J_{i}^{(0)}\right\rangle_{+}\left\langle J_{j}^{(0)}\right\rangle_{+}\right)\right|_{\underline{G}=\underline{0}}, \label{eq:GK1a}
\end{equation}

dove\footnote{Questa espressione \` e chiaramente formale, dal momento che stiamo scambiando l'integrale con la derivata, e non \` e ben chiaro cosa sia la derivata della misura. Rimandiamo all'appendice \ref{app:B} per una discussione rigorosa di questa questione. Notare che se il sistema imperturbato \` e un sistema di Anosov per $\underline{G}$ abbastanza piccoli il teorema di stabilit\` a strutturale, {\em cf.} \cite{AA}, afferma che anche il sistema perturbato lo sar\` a; in particolare, con una tecnica analoga a quella esposta nella sezione \ref{sez:cat} del capitolo \ref{cap:ergo}, si pu\` o dimostrare che i potenziali della misura SRB sono funzioni analitiche in $\underline{G}$.}

\begin{eqnarray}
\partial_{G_{i}}\partial_{G_{j}}\left.\left\langle\sigma\right\rangle_{+}\right|_{\underline{G}=\underline{0}}&=&\partial_{G_{i}}\partial_{G_{j}}\left.\left(\int\sigma(x)\mu_{+}(dx)\right)\right|_{\underline{G}=\underline{0}}\nonumber\\&=&\left.\int\partial_{G_{i}}\partial_{G_{j}}\sigma(x)\mu_{+}(dx)\right|_{\underline{G}=\underline{0}}\nonumber\\&&+\left.\left(\int\partial_{G_{i}}\sigma(x)\partial_{G_{j}}\mu_{+}(dx)+(i\leftrightarrow j)\right)\right|_{\underline{G}=\underline{0}}\nonumber\\&&+\left.\int\sigma(x)\partial_{G_{i}}\partial_{G_{j}}\mu_{+}(dx)\right|_{\underline{G}=\underline{0}}.\label{eq:GK2}
\end{eqnarray}

Il primo addendo della (\ref{eq:GK2}) \` e zero perch\'e $\sigma=-\sigma\circ I$, $|\mbox{det}\partial I|=1$ e la misura $\mu_{+}$ per $\underline{G}=\underline{0}$ diventa la misura di volume; poich\'e $\left.\sigma(x)\right|_{\underline{G}=\underline{0}}=0$ anche il terzo termine \` e nullo,  dunque

\begin{equation}
\partial_{G_{i}}\partial_{G_{j}}\left.\left\langle\sigma\right\rangle_{+}\right|_{\underline{G}=\underline{0}}=\left.\left(\partial_{G_{j}}\left\langle J_{i}^{(0)}\right\rangle_{+}+\partial_{G_{i}}\left\langle J_{j}^{(0)}\right\rangle_{+}\right)\right|_{\underline{G}=\underline{0}}. \label{eq:GK02}
\end{equation}

Definiamo la {\em corrente}\index{corrente} $J_{i}(x)$ e il {\em coefficiente di trasporto}\index{coefficiente di trasporto} $L_{ij}$ come

\begin{eqnarray}
J_{i}(x)&=&\partial_{G_{i}}\sigma(x) \label{eq:GK4}\\
L_{ij}&=&\left.\partial_{G_{j}}\left\langle J_{i}(x)\right\rangle_{+}\right|_{\underline{G}=\underline{0}}; \label{eq:GK5}
\end{eqnarray}

\` e chiaro che $J_{i}^{(0)}(x)=\left.J_{i}(x)\right|_{\underline{G}=\underline{0}}$, e poich\'e $J_{i}(x)\equiv\sum_{k\geq 0}J_{i}^{(k)}$ con $J_{i}^{(k)}=O\left(G^{k}\right)$, $J_{i}^{(k)}(Ix)=-J_{i}^{(k)}(x)$,

\begin{eqnarray}
L_{ij}=\partial_{G_{j}}\left.\left\langle J_{i}\right\rangle_{+}\right|_{\underline{G}=\underline{0}}&=&\partial_{G_{j}}\left.\left\langle J_{i}^{(0)}\right\rangle_{+}\right|_{\underline{G}=\underline{0}}\nonumber\\&&+\sum_{k\geq 1}\left.\partial_{G_{j}}\left\langle J_{i}^{(k)}\right\rangle_{+}\right|_{\underline{G}=\underline{0}}=\partial_{G_{j}}\left.\left\langle J_{i}^{(0)}\right\rangle_{+}\right|_{\underline{G}=\underline{0}}.\label{eq:GK6}
\end{eqnarray}

Grazie alle (\ref{eq:GK02}), (\ref{eq:GK6}) la (\ref{eq:GK1a}) diventa (moltiplicando per $\frac{1}{2}$ entrambi i membri)

\begin{equation}
\frac{L_{ij}+L_{ji}}{2}=\frac{1}{2}\sum_{k=-\infty}^{\infty}\left.\left(\left\langle J_{i}^{(0)}(S^{k}\cdot)J_{j}^{(0)}(\cdot)\right\rangle_{+}-\left\langle J_{i}^{(0)}\right\rangle_{+}\left\langle J_{j}^{(0)}\right\rangle_{+}\right)\right|_{\underline{G}=\underline{0}}, \label{eq:GK7}
\end{equation}

che, almeno per $i=j$, \` e la formula di Green-Kubo. La (\ref{eq:GK016}) al secondo ordine diventa

\begin{equation}
\left\langle G_{j}\partial_{G_{j}}\sigma\right\rangle_{+}^{(2)}=\frac{C_{12}^{(2)}}{2},
\end{equation}

che pu\` o essere riscritta esplicitamente come

\begin{eqnarray}
\left.\partial_{G_{i}}\left\langle\partial_{G_{j}}\sigma\right\rangle_{+}\right|_{\underline{G}=\underline{0}}=L_{ij}&=&\frac{1}{2}\sum_{k=-\infty}^{\infty}\left.\left(\left\langle J_{i}^{(0)}(S^{k}\cdot)J_{j}^{(0)}(\cdot)\right\rangle_{+}-\left\langle J_{i}^{(0)}\right\rangle_{+}\left\langle J_{j}^{(0)}\right\rangle_{+}\right)\right|_{\underline{G}=\underline{0}}\nonumber\\&=&\frac{1}{2}\sum_{k=-\infty}^{\infty}\left(\left\langle J_{i}^{(0)}(S^{k}\cdot)J_{j}^{(0)}(\cdot)\right\rangle_{0}-\left\langle J_{i}^{(0)}\right\rangle_{0}\left\langle J_{j}^{(0)}\right\rangle_{0}\right), \label{eq:GK13}
\end{eqnarray}

indicando con $\left\langle...\right\rangle_{0}$ la misura invariante all'equilibrio, {\em i.e.} la misura di Liouville. La (\ref{eq:GK13}) \` e il teorema di fluttuazione - dissipazione, e poich\'e la parte destra \` e simmetrica sotto lo scambio $i\leftrightarrow j$ con la (\ref{eq:GK13}) abbiamo ottenuto anche le relazioni di reciprocit\` a di Onsager 

\begin{equation}
L_{ij}=L_{ji}. \label{eq:recipons}
\end{equation}

Quindi, se il sistema ammette un'inversione temporale $I$ il teorema di fluttuazione all'ordine pi\` u basso, il secondo, equivale alla formula di Green - Kubo.

\section{Una verifica del teorema di fluttuazione}\label{sez:test}

Per concludere, applicheremo le relazioni ottenute nelle sezione \ref{sez:GK} di questo capitolo per verificare il teorema di fluttuazione in un caso semplice. Consideriamo un sistema dinamico definito sul toro $\mathbb{T}^{2}$, ottenuto identificando i lati del quadrato $[0,2\pi]\times [0,2\pi]$, che evolve ad intervalli di tempo discreti con la mappa $S$ data da

\begin{eqnarray}
S=\left(
\begin{array}{cc}
1 & 1 \\ 1 & 2
\end{array}
\right); \label{eq:10b}
\end{eqnarray}

questo semplice sistema dinamico \` e chiamato {\em gatto di Arnold}\index{gatto di Arnold}, ed \` e un tipico esempio di sistema di Anosov\index{sistema di Anosov}, {\em cf.} capitolo \ref{cap:ergo}. 

Perturbiamo $S$ in modo da avere che l'evoluzione sul toro sia descritta da

\begin{equation}
S_{\varepsilon}\underline{\psi} = S\underline{\psi}+\varepsilon\underline{f}(\underline{\psi})\qquad\mbox{mod $2\pi$}; \label{eq:11}
\end{equation}

il teorema di Anosov afferma che il sistema perturbato, per $\varepsilon$ sufficientemente piccoli, \` e ancora un sistema di Anosov dal momento che per $\varepsilon<\varepsilon_{0}$ esiste un omeomorfismo $H$ (un ``cambiamento di coordinate'') tale che

\begin{equation}
S_{\varepsilon}\circ H=H\circ S, \label{eq:12}
\end{equation}

{\em cf.} capitolo \ref{cap:ergo}. In generale, questo omeomorfismo \` e analitico in $\varepsilon$ in un disco di raggio $\varepsilon_{0}$ (dove pu\` o essere espresso come $H=1+\sum_{k\geq1}h^{(k)}$, con $h^{(k)}=O(\varepsilon^{k})$), e solo H\"older continuo in $\underline{\varphi}$.

Come abbiamo visto nel capitolo \ref{cap:noneq}, affinch\'e il teorema di fluttuazione sia vero \` e necessario che il sistema $(\Omega,S)$ sia {\em reversibile}, ovvero che esista $I$ tale che

\begin{equation}
I\circ S=S^{-1} \circ I,\label{eq:0120}
\end{equation}

e la mappa $I$ deve essere {\em differenziabile};  indicando con

\begin{equation}
\varepsilon_{\tau}=\frac{1}{\tau}\sum_{j=-\tau/2}^{\tau/2-1}\sigma\circ S^{j}
\end{equation}

la media su un tempo $\tau$ del tasso di contrazione del volume dello spazio delle fasi $\sigma$, l'esistenza di un'inversione temporale differenziabile implica che, {\em cf.} capitolo \ref{cap:noneq}, sezione \ref{sez:dimFT}, se $A_{u}$, $A_{s}$ sono rispettivamente il logaritmo del coefficiente di espansione e di contrazione, {\em cf.} formula (\ref{eq:consrev}) e definizione \ref{def:contr},

\begin{equation}
A_{u}\circ I = -A_{s},  \label{eq:excat1}
\end{equation}

ovvero

\begin{equation}
\varepsilon_{\tau}\circ I=-\varepsilon_{\tau}+O\left(\frac{1}{\tau}\right).
\end{equation}

Nel caso del gatto di Arnold imperturbato l'inversione temporale esiste ed \` e data da

\begin{equation}
I_{0}=\left(
\begin{array}{cc}
-1 & 0 \\ -1 & 1
\end{array}
\right), \label{eq:12b}
\end{equation}

ed \` e (banalmente) differenziabile; ma in questo caso il teorema di fluttuazione diventa {\em vuoto}, dal momento che la contrazione dello spazio delle fasi $\sigma=-\log\left|\mbox{det}DS\right|$ \` e identicamente uguale a zero. Nel caso della mappa perturbata $S_{\varepsilon}$, invece, questa quantit\` a \` e una funzione non banale di $\varepsilon$, $x$; inoltre, per $\varepsilon$ abbastanza piccoli la perturbazione non rompe la simmetria per inversione temporale (\ref{eq:0120}), dal momento che possiamo definire una mappa $I_{\varepsilon}$ con la propriet\` a $I_{\varepsilon}\circ S_{\varepsilon}=S_{\varepsilon}\circ I_{\varepsilon}$ ponendo

\begin{equation}
I_{\varepsilon}=H\circ I_{0}\circ H^{-1}.
\end{equation} 

Ma come per la coniugazione\index{coniugazione} $H$, $I_{\varepsilon}(\underline{\psi})$ sar\` a in generale analitica in $\varepsilon$ per $|\varepsilon|\leq\varepsilon_{0}$ e solo H\"older continua in $\underline{\psi}$, dunque l'esistenza di $I_{\varepsilon}$ non basta per soddisfare le ipotesi del teorema di fluttuazione. Inoltre, con un calcolo esplicito al primo ordine in $\varepsilon$ abbiamo che, {\em cf.} capitolo \ref{cap:ergo}, sezione \ref{sez:cat},

\begin{eqnarray}
A_{u}^{(1)}(\underline{\psi})&=&\lambda_{+}^{-1}\partial_{+}f_{+}(\underline{\psi})\\
A_{s}^{(1)}(\underline{\psi})&=&-\lambda_{-}^{-1}\partial_{-}f_{-}(\underline{\psi}),
\end{eqnarray}  

quindi\footnote{All'argomento di $A^{(1)}$ compare la mappa $I_{0}$ e non $I_{\varepsilon}$ perch\'e nel sistema perturbato i coefficienti di espansione e di contrazione sono definiti nel punto $H(\underline{\psi})$, {\em cf.} capitolo \ref{cap:ergo}, sezione \ref{sez:cat}; in realt\` a, la funzione $A_{u}(\underline{\psi})$ corrisponde a $\widetilde{A}_{u}(H(\underline{\psi}))$, se $\widetilde{A}_{u}$ \` e il logaritmo del coefficiente di contrazione definito nella (\ref{eq:cat016}).} $A^{(1)}_{u}\circ I_{0}\neq -A^{(1)}_{s}$. Perci\`o, ci aspettiamo che nel caso del gatto di Arnold perturbato il teorema di fluttuazione sia {\em falso}.

\subsection{Risultati numerici}

In realt\` a, dai risultati di alcune simulazioni numeriche, figure \ref{fig:sim1}, \ref{fig:sim2}, non \` e affatto chiaro che in questo caso il teorema di fluttuazione sia violato, dal momento che l'andamento teorico (retta verde nelle figure \ref{fig:sim1}, \ref{fig:sim2}) \` e consistente con la dispersione sperimentale; riportiamo i dettagli delle simulazioni.

\begin{itemize}
\item Le dispersioni di punti delle figure \ref{fig:sim1}, \ref{fig:sim2} sono state ottenute effettuando per $N=100$ volte un numero $T=10^{8}$ di iterazioni della mappa perturbata rispettivamente con 

\begin{equation}
\underline{f}_{1}(\underline{\psi})=\left(\begin{array}{c} \sin(\psi_{1}) \\ 0 \end{array}\right)\qquad\underline{f}_{2}(\underline{\psi})=\left(\begin{array}{c} \sin(\psi_{1})+\sin(2\psi_{1}) \\ 0 \end{array}\right), 
\end{equation}

scegliendo ogni volta delle condizioni iniziali casuali con un generatore di numeri random. 
\item In ogni punto l'errore coincide con la deviazione standard della media di $100$ misure.
\item Il tempo $\tau$ in $\varepsilon_{\tau}$ \` e pari a $100$.
\item La media $\left\langle\sigma\right\rangle_{T}$ equivale a $\frac{1}{T}\sum_{j=0}^{T-1}\sigma\circ S_{\varepsilon}^{j}$ con $T=10^{8}$.
\end{itemize} 

\begin{figure}[htbp]
\centering
\includegraphics[width=0.7\textwidth]{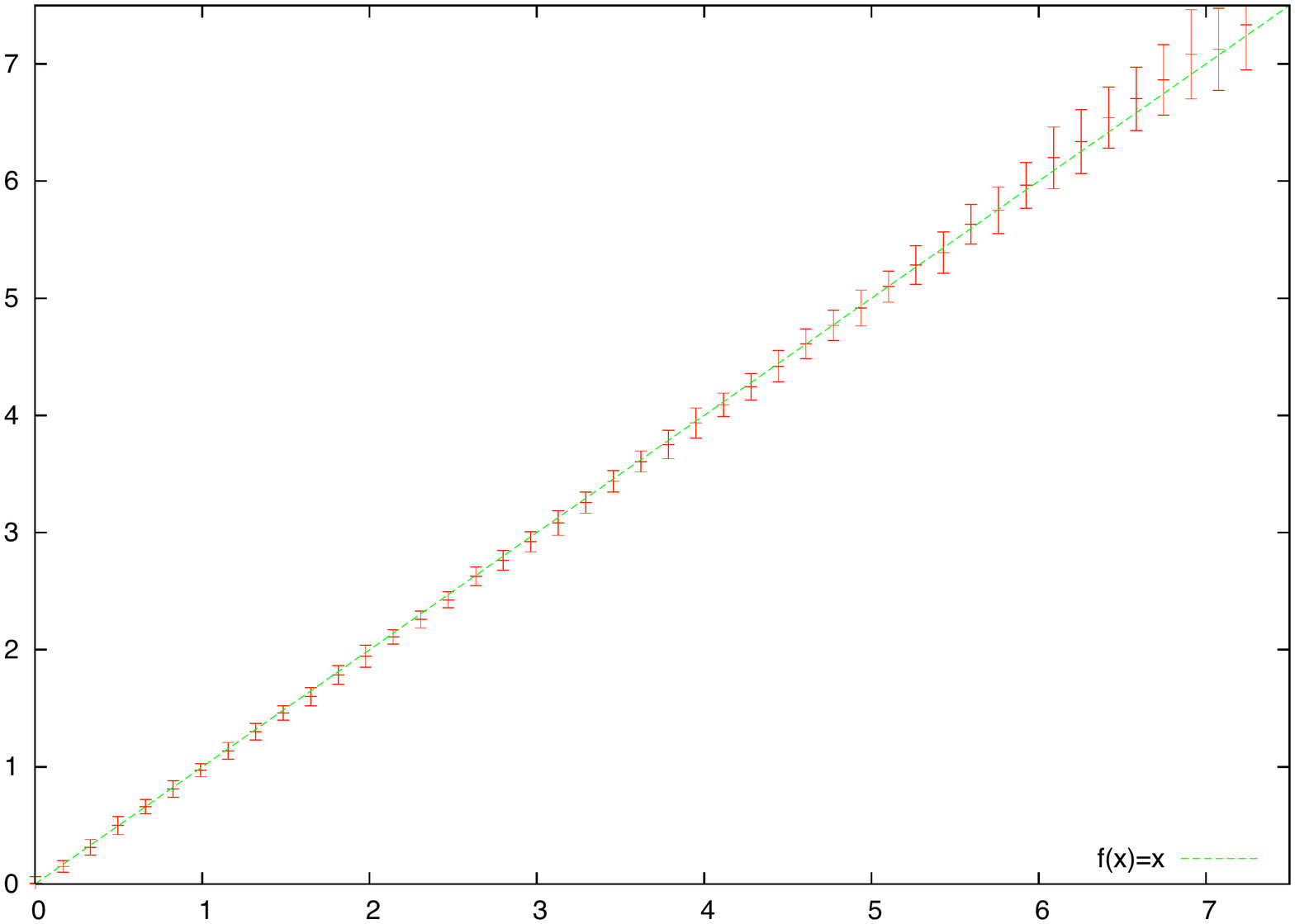}
\caption{Andamento $\frac{1}{\tau\left\langle\sigma\right\rangle_{T}}\log\frac{\mbox{Num. occorrenze di} \{\varepsilon_{\tau}=p\}}{\mbox{Num. occorrenze di} \{\varepsilon_{\tau}=-p\}}\;vs\;p$ per $T=10^{8},\tau=100,\varepsilon=0.05$, perturbando con una singola armonica; ogni punto \` e il risultato della media di $N=100$ simulazioni.} \label{fig:sim1}
\end{figure}

\begin{figure}[htbp]
\centering
\includegraphics[width=0.7\textwidth]{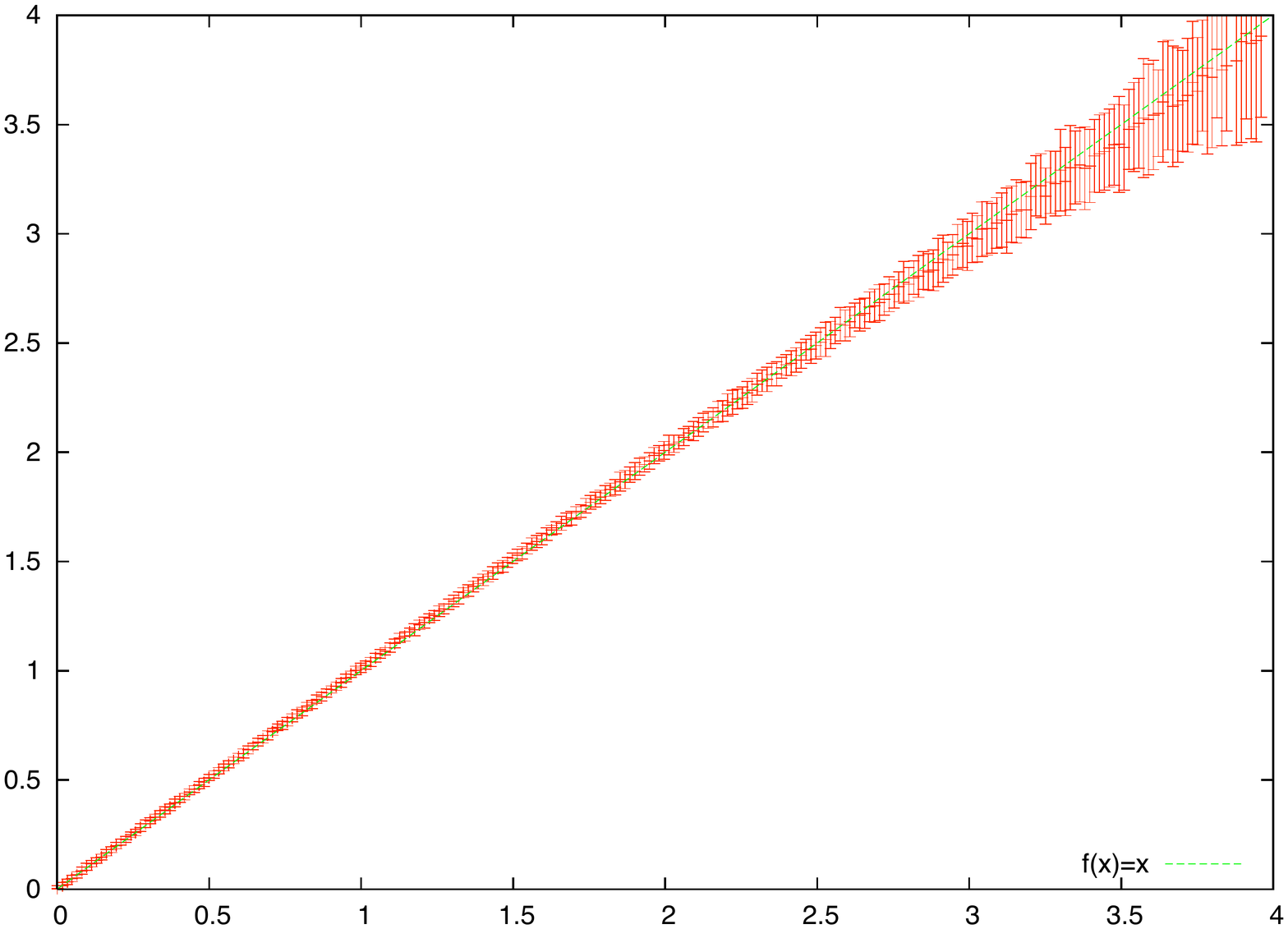}
\caption{Andamento $\frac{1}{\tau\left\langle\sigma\right\rangle_{T}}\log\frac{\mbox{Numero occorrenze di} \{\varepsilon_{\tau}=p\}}{\mbox{Numero occorrenze di} \{\varepsilon_{\tau}=-p\}}\;vs\;p$ per $T=10^{8},\tau=100$ e $\varepsilon=0.05$, perturbando con due armoniche; ogni punto \` e il risultato della media di $N=100$ simulazioni.} \label{fig:sim2}
\end{figure}

Ispirati da questi risultati numerici, poich\'e i valori medi di funzioni analitiche in $\underline{\psi}$ e $\varepsilon$ sono analitici in $\varepsilon$, {\em cf.} appendice \ref{app:B}, proviamo a verificare ordine per ordine il teorema di fluttuazione mediante le relazioni tra funzioni di correlazione ricavate nella sezione \ref{sez:GK} di questo capitolo.

\subsection{Primo caso: perturbazione composta da una singola armonica}

Perturbiamo il gatto di Arnold con

\begin{equation}
\underline{f}(\underline{\psi})=\left(
\begin{array}{c}
\sin(\psi_{1}) \\ 0
\end{array}
\right); \label{eq:16}
\end{equation}

con questa perturbazione, il tasso di contrazione dello spazio delle fasi nel punto $\underline{\varphi}=H(\underline{\psi})\in\mathbb{T}^{2}$ \` e dato da, se con $|DS_{\varepsilon}|$ indichiamo il modulo del determinante della matrice jacobiana di $S_{\varepsilon}$,

\begin{equation}
\sigma(\underline{\psi})\equiv -\log\left|DS_{\varepsilon}\right| = -\log(1+2\varepsilon\cos(\psi_{1})). \label{eq:17}
\end{equation} 

Sia $\sigma$ che la coniugazione $H$ e il tasso di espansione $A_{u}$ sono funzioni di $\varepsilon$ analitiche in $|\varepsilon|<\varepsilon_{0}$, dunque esprimibili come

\begin{eqnarray}
\sigma(\underline{\psi})&=&\sum_{k\geq 1}\sigma^{(k)}(\underline{\psi})\\
H(\underline{\psi})&=&\underline{\psi}+\sum_{k\geq 1}\underline{h}^{(k)}(\underline{\psi})\\
A_{u}(\underline{\psi})&=& \log\left(\lambda_{+}\right)+\sum_{k\geq 1}A_{u}^{(k)}(\underline{\psi}),
\end{eqnarray}

con $\sigma^{(k)},\underline{h}^{(k)},A_{u}^{(k)}=O(\varepsilon^{k})$. In quel che segue dovremo calcolare esplicitamente le derivate di valori medi relativi alla misura SRB; la derivata coinvolger\` a l'osservabile contenuto nel valor medio e la misura SRB, intesa come stato di Gibbs di un modello di Ising unidimensionale. Quindi diremo che, {\em cf.} appendice \ref{app:B},

\begin{eqnarray}
\partial_{\varepsilon}\left\langle\mathcal{O}\right\rangle_{+}&=&\left\langle\partial_{\varepsilon}\mathcal{O}\right\rangle_{+}\nonumber\\&&-\sum_{k=-\infty}^{\infty}\left(\left\langle\mathcal{O}\partial_{\varepsilon}A_{u}\circ S^{k}\right\rangle_{+}-\left\langle\mathcal{O}\right\rangle_{+}\left\langle\partial_{\varepsilon}A_{u}\circ S^{k}\right\rangle_{+}\right),\label{eq:0170}
\end{eqnarray}

e la sommatoria converge perch\'e la misura SRB \` e mescolante, {\em cf.} \cite{ergo}.

\subsubsection{Secondo ordine}

Il teorema di fluttuazione all'ordine $\varepsilon^{2}$ equivale alla relazione\footnote{Come abbiamo visto nella sezione \ref{sez:ONS}, la relazione (\ref{eq:excat2}) contiene la formula di Green-Kubo.}

\begin{equation}
\left\langle\sigma\right\rangle_{+}^{(2)}=\frac{C_{2}^{(2)}}{2};\label{eq:excat2}
\end{equation}

 le quantit\` a $\left\langle\sigma\right\rangle_{+}^{(2)}$, $C_{2}^{(2)}$ che vi compaiono possono essere scritte esplicitamente come

\begin{eqnarray}
\left\langle\sigma\right\rangle_{+}^{(2)}&=& \left\langle\sigma^{(1)}U^{(1)}\right\rangle_{0}+\left\langle\sigma^{(2)}+\partial_{\alpha}\sigma^{(1)}h_{\alpha}^{(1)}\right\rangle_{0}\label{eq:18}\\
C_{2}^{(2)}&=&\sum_{k=-\infty}^{\infty}\left\langle(\sigma\circ S^{k}-\left\langle\sigma\right\rangle_{+})\left(\sigma-\left\langle\sigma\right\rangle_{+}\right)\right\rangle_{+}^{(2)}\nonumber\\
&=& \sum_{k=-\infty}^{\infty}\left\langle\sigma^{(1)}\circ S^{k}\sigma^{(1)}\right\rangle_{0}\label{eq:19},
\end{eqnarray}

dove con $\left\langle...\right\rangle_{0}$ indichiamo che i valori di aspettazione sono calcolati con la misura di Lebesgue\index{misura di Lebesgue} $\mu_{0}$, $U(\cdot)=-\sum_{k=-\infty}^{\infty}A_{u}\circ S^{k}$, e gli indici ripetuti $\alpha=1,2$ sono sommati. Le quantit\` a $\sigma^{(1)}$, $\sigma^{(2)}$, $A_{u}^{(1)}$, $h_{\alpha}^{(1)}$ sono date da ({\em cf.} capitolo \ref{cap:ergo}, sezione \ref{sez:cat})

\begin{eqnarray}
A_{u}^{(1)}(\underline{\psi})&=&\lambda_{+}^{-1}\partial_{+}f_{+}(\underline{\psi})=\frac{\lambda_{+}^{-1}}{\lambda_{+}+1}\cos(\psi_{1})\\
h_{+}^{(1)}(\underline{\psi})&=&-\sum_{p=0}^{\infty}\lambda_{+}^{-(p+1)}f_{+}(S^{p}\underline{\psi})\nonumber\\&=&-\sum_{p=0}^{\infty}\frac{\lambda_{+}^{-(p+1)}}{(\lambda_{+}+1)^{\frac{1}{2}}}\sin(S^{p}\underline{\psi}\cdot  e_{1})
\end{eqnarray}

\begin{eqnarray}
h_{-}^{(1)}(\underline{\psi})&=&\sum_{p=1}^{\infty}\lambda_{-}^{p-1}f_{-}(S^{-p}\underline{\psi})\nonumber\\&=&\sum_{p=1}^{\infty}\frac{\lambda_{-}^{p-1}}{(\lambda_{-}+1)^{\frac{1}{2}}}\sin(S^{-p}\underline{\psi}\cdot  e_{1})\\
\sigma^{(1)}(\underline{\psi})&=&-2\cos(\psi_{1})\\
\sigma^{(2)}(\underline{\psi})&=&2\cos^{2}(\psi_{1}).
\end{eqnarray}

\` E facile vedere che i valori di aspettazione nelle (\ref{eq:18}), (\ref{eq:19}) si riducono essenzialmente a degli integrali del prodotto di due funzioni trigonometriche, calcolate in $\varphi_{1}$ e in $(S^{k}\underline{\varphi})\cdot e_{1}=S^{k}_{11}\varphi_{1}+S^{k}_{12}\varphi_{2}$ ($e_{1}^{T}=(1\quad0)$) con $k$ arbitrario; l'unico contributo diverso da zero \` e quello dovuto a $k=0$, dal momento che $S^{k=0}_{11}=1$ e $S^{k}_{11}$ \` e una funzione strettamente crescente di $|k|$\footnote{Per $k\geq0$ ci\` o \` e ovvio, dal momento che $S$ ha elementi di matrice strettamente positivi. Per $k<0$ l'affermazione \` e vera perch\' e $S^{k}_{11}=S^{-k}_{22}$, e per $k$ negativi $S^{-k}_{22}$ \` e una funzione strettamente crescente di $|k|$.\label{nota:1}}. Un semplice calcolo mostra che

\begin{eqnarray}
\left\langle\sigma\right\rangle_{+}^{(2)}&=&\varepsilon^{2}\\
C_{2}^{(2)}&=&2\varepsilon^{2},
\end{eqnarray}

 dunque la relazione $\left\langle\sigma\right\rangle_{+}^{(2)}=\frac{C_{2}^{(2)}}{2}$ \` e verificata (e con essa il teorema di fluttuazione al secondo ordine in $\varepsilon$).
 
 \subsubsection{Terzo ordine} 
 
 Al terzo ordine, le relazioni da verificare sono
 
 \begin{eqnarray}
 \left\langle\sigma\right\rangle_{+}^{(3)}&=&\frac{C_{2}^{(3)}}{2}\\
 C_{3}^{(3)}&=&0,
 \end{eqnarray}

dove

\begin{eqnarray}
\left\langle\sigma\right\rangle_{+}^{(3)}&=&\left\langle\sigma^{(3)}+\partial_{\alpha}\sigma^{(2)}h_{\alpha}^{(1)}+\partial_{\alpha}\sigma^{(1)}h_{\alpha}^{(2)}+\frac{1}{2}\partial_{\alpha\beta}\sigma^{(1)}h_{\alpha}^{(1)}h_{\beta}^{(1)}\right\rangle_{0}\nonumber\\&&+\left\langle\left(\sigma^{(2)}+\partial_{\alpha}\sigma^{(1)}h_{\alpha}^{(1)}\right)U^{(1)}\right\rangle_{0}+\left\langle\sigma^{(1)}U^{(2)}\right\rangle_{0}\nonumber\\&&+\frac{1}{2}\left\langle\sigma^{(1)}U^{(1)}U^{(1)}\right\rangle_{0}\label{eq:20}
\end{eqnarray}

\begin{eqnarray}
C_{2}^{(3)}&=&\sum_{k=-\infty}^{\infty}\left\langle(\sigma\circ S^{k}-\left\langle\sigma\right\rangle_{+})\left(\sigma-\left\langle\sigma\right\rangle_{+}\right)\right\rangle_{+}^{(3)}\nonumber\\&=&\sum_{k=-\infty}^{\infty}2\left\langle\sigma^{(1)}\circ S^{k}\left(\sigma^{(2)}+2\partial_{\alpha}\sigma^{(1)}h_{\alpha}^{(1)}-\left\langle\sigma\right\rangle_{+}^{(2)}\right)\right\rangle_{0}\nonumber\\&&+\sum_{k=-\infty}^{\infty}\left\langle\sigma^{(1)}\circ S^{k}\sigma^{(1)}U^{(1)}\right\rangle_{0}\label{eq:21}\\
C_{3}^{(3)}&=&\sum_{k,j=-\infty}^{\infty}\left\langle(\sigma\circ S^{k}-\left\langle\sigma\right\rangle_{+})(\sigma\circ S^{j}-\left\langle\sigma\right\rangle_{+})(\sigma-\left\langle\sigma\right\rangle_{+})\right\rangle_{+}^{(3)}\nonumber\\&=&\sum_{k,j=-\infty}^{\infty}\left\langle\sigma^{(1)}\circ S^{k}\sigma^{(1)}\circ S^{j}\sigma^{(1)}\right\rangle_{0}\label{eq:22}.
\end{eqnarray}

Oltre alle quantit\` a gi\` a discusse al secondo ordine, nelle (\ref{eq:20})-(\ref{eq:22}) appaiono anche ({\em cf.} capitolo \ref{cap:ergo}, sezione \ref{sez:cat})

\begin{eqnarray}
A_{u}^{(2)}(\underline{\psi})&=&-\sum_{p=0}^{\infty}\lambda_{+}^{-(p+1)}\partial_{\alpha}f_{+}(S^{p}\underline{\psi})h_{\alpha}^{(1)}(S^{p}\underline{\psi})\nonumber\\&=&-\sum_{p=0}^{\infty}\frac{\lambda_{+}^{-(p+1)}}{(\lambda_{+}+1)^{\frac{1}{2}}(\lambda_{\alpha}+1)^{\frac{1}{2}}}\cos(S^{p}\underline{\psi}\cdot  e_{1})h_{\alpha}^{(1)}(S^{p}\underline{\psi})\\
h_{-}^{(2)}(\underline{\psi})&=&\sum_{p=1}^{\infty}\lambda_{+}^{p-1}\partial_{\alpha}f_{-}(S^{-p}\underline{\psi})h_{\alpha}^{(1)}(S^{-p}\underline{\psi})\nonumber\\&=&\sum_{p=1}^{\infty}\frac{\lambda_{-}^{p-1}}{(\lambda_{-}+1)^{\frac{1}{2}}(\lambda_{\alpha}+1)^{\frac{1}{2}}}\cos(S^{-p}\underline{\psi}\cdot  e_{1})h_{\alpha}^{(1)}(S^{-p}\underline{\psi})
\end{eqnarray}

\begin{eqnarray}
A_{u}^{(2)}(\underline{\psi})&=&\lambda_{+}^{-1}\partial_{\alpha}\partial_{+}f_{+}(\underline{\psi})h_{\alpha}^{(1)}+\lambda_{+}^{-1}\partial_{-}f_{+}(\underline{\psi})k_{-}^{(1)}(\underline{\psi})\nonumber\\&&-\frac{\lambda_{+}^{-2}}{2}\partial_{+}f_{+}(\underline{\psi})\nonumber\\&=&-\frac{\lambda_{+}^{-1}}{(\lambda_{+}+1)(\lambda_{\alpha}+1)^{\frac{1}{2}}}\sin(\psi_{1})h_{\alpha}^{(1)}(\underline{\psi})\nonumber\\&&+\sum_{p=0}^{\infty}\frac{\lambda_{+}^{-2(p+1)}}{(\lambda_{+}+1)(\lambda_{-}+1)}\cos(\psi_{1})\cos(S^{-(p+1)}\underline{\psi}\cdot  e_{1})\nonumber\\&&-\frac{\lambda_{+}^{-2}}{2(\lambda_{+}+1)}\cos(\psi_{1})
\end{eqnarray}

\begin{equation}
\sigma^{(3)}(\underline{\psi})=-\frac{8}{3}\cos^{3}(\psi_{1}).
\end{equation}

Analogamente al caso precedente, i valori di aspettazione presenti nelle $(\ref{eq:20})$, $(\ref{eq:21})$, $(\ref{eq:22})$ consistono essenzialmente in integrali come

\begin{eqnarray}
\int_{0}^{2\pi}\frac{d\varphi_{1}}{2\pi}\frac{d\varphi_{2}}{2\pi}\cos(S^{k}\underline{\varphi}\cdot e_{1})\cos(S^{j}\underline{\varphi}\cdot e_{1})\cos(\varphi_{1})\label{eq:24}\\=\int_{0}^{2\pi}\frac{d\varphi_{1}}{2\pi}\frac{d\varphi_{2}}{2\pi}\cos(S^{k}_{11}\varphi_{1}+S^{k}_{12}\varphi_{2})\cos(S^{j}_{11}\varphi_{1}+S^{j}_{12}\varphi_{2})\cos(\varphi_{1})\nonumber,
\end{eqnarray}

che danno un contributo diverso da zero solo per coppie di indici $k$, $j$ tali che

\begin{eqnarray}
a_{1}S^{k}_{11} + a_{2}S^{j}_{11} + a_{3} &=& 0 \label{eq:25}\\
a_{1}S^{k}_{12} + a_{2}S^{j}_{12} &=& 0, \label{eq:26}
\end{eqnarray}

con $a_{i} = \pm 1$; affinch\' e l'integrale (\ref{eq:24}) dia un contributo diverso da zero almeno una coppia di queste equazioni deve essere soddisfatta. Come dimostreremo facilmente tra poco, ci\` o sar\` a impossibile.

L'equazione (\ref{eq:26}) con $a_{1} = -a_{2}$ ha una soluzione ovvia per $k=j$, mentre se $a_{1} = a_{2}$ la soluzione sar\` a $k=-j$, dal momento che

\begin{equation}
S^{k}_{\alpha\beta}=-S^{-k}_{\alpha\beta} \qquad\mbox{per $\alpha\neq\beta$}; \label{eq:26b}
\end{equation}

 le equazioni del tipo (\ref{eq:26}) non possono essere soddisfatte da nessun'altra coppia di indici,  perch\'e $|S^{k}_{\alpha\beta}|$ con $\alpha\neq\beta$ \` e una funzione strettamente crescente di $|k|$\footnote{Per $k>0$ l'affermazione \` e ovvia, vedere nota (\ref{nota:1}); per $k<0$ ci si riconduce al caso $k>0$ usando la propriet\` a (\ref{eq:26b}).}. Per $j=k$ la (\ref{eq:25}) diventa $\pm1=0$, mentre per $j=-k$ l'equazione da studiare \` e

\begin{equation}
S^{k}_{11}+S^{-k}_{11} + a_{3} =0, \label{eq:27}
\end{equation}

la quale non ha soluzione perch\'e $S^{k}_{11}\geq1$ e $S^{-k}_{11}=S^{k}_{22}\geq1$.

Quindi $C_{3}^{(3)}=0$, e la relazione $\left\langle\sigma\right\rangle_{+}^{(3)}=\frac{C_{2}^{(3)}}{2}$ diventa $0=0$; anche al terzo ordine il teorema di fluttuazione \` e (banalmente) verificato.

\subsubsection{Quarto ordine}

Proseguiamo nella nostra ricerca di una violazione del teorema di fluttuazione provando a verificare le relazioni

\begin{eqnarray}
C_{3}^{(4)} &=& \frac{C_{4}^{(4)}}{2} \label{eq:28}\\
\left\langle\sigma\right\rangle_{+}^{(4)} &=& \frac{C_{2}^{(4)}}{2}-\frac{C_{3}^{(4)}}{6}+\frac{C_{4}^{(4)}}{24}; \label{eq:29}
\end{eqnarray} 

per dimostrare che a questo ordine il teorema di fluttuazione \` e violato \` e sufficiente trovare che $C_{3}^{(4)}\neq\frac{C_{4}^{(4)}}{2}$. Le quantit\` a che appaiono nella (\ref{eq:28}) sono date da

\begin{eqnarray}
C_{3}^{(4)} &=& \sum_{k,j=-\infty}^{\infty}\left\langle(\sigma\circ S^{k}-\left\langle\sigma\right\rangle_{+})(\sigma\circ S^{j}-\left\langle\sigma\right\rangle_{+})(\sigma-\left\langle\sigma\right\rangle_{+})\right\rangle_{+}^{(4)}\nonumber\\&=&\sum_{k,j=-\infty}^{\infty}3\left\langle(\sigma^{(1)}\circ S^{k}\sigma^{(1)}\circ S^{j})(\sigma^{(2)}+\partial_{\alpha}\sigma^{(1)}h_{\alpha}^{(1)}-\left\langle\sigma\right\rangle_{+}^{(2)})\right\rangle_{0}\nonumber\\&&+\sum_{k,j=-\infty}^{\infty}\left\langle\sigma^{(1)}\circ S^{k}\sigma^{(1)}\circ S^{j}\sigma^{(1)}U^{(1)}\right\rangle_{0} \label{eq:30}
\end{eqnarray}

\begin{eqnarray}
C_{4}^{(4)} &=& \sum_{k,j,l=-\infty}^{\infty}\left(\left\langle(\sigma\circ S^{k}-\left\langle\sigma\right\rangle_{+})(\sigma\circ S^{j}-\left\langle\sigma\right\rangle_{+})(\sigma\circ S^{l}-\left\langle\sigma\right\rangle_{+})(\sigma-\left\langle\sigma\right\rangle_{+})\right\rangle_{+}\right.\nonumber\\&&\left.-3\left\langle(\sigma\circ S^{k}-\left\langle\sigma\right\rangle_{+})(\sigma\circ S^{j}-\left\langle\sigma\right\rangle_{+})\right\rangle_{+}\left\langle(\sigma\circ S^{l}-\left\langle\sigma\right\rangle_{+})(\sigma-\left\langle\sigma\right\rangle_{+})\right\rangle_{+}\right)^{(4)}\nonumber\\&=&\sum_{k,j,l=-\infty}^{\infty}\left\langle\sigma^{(1)}\circ S^{k}\sigma^{(1)}\circ S^{j}\sigma^{(1)}\circ S^{l}\sigma^{(1)}\right\rangle_{0}\label{eq:31}\nonumber\\&&-\sum_{k,j,l=-\infty}^{\infty}3\left\langle\sigma^{(1)}\circ S^{k}\sigma^{(1)}\circ S^{j}\right\rangle_{0}\left\langle\sigma^{(1)}\circ S^{l}\sigma^{(1)}\right\rangle_{0};\nonumber\\&&
\end{eqnarray}

discutiamo le (\ref{eq:30}), (\ref{eq:31}) separatamente. Nella (\ref{eq:31}) compaiono sia degli integrali tipo

\begin{equation}
\int_{0}^{2\pi}\frac{d\varphi_{1}}{2\pi}\frac{d\varphi_{2}}{2\pi}\cos(S^{k}\underline{\varphi}\cdot e_{1})\cos(S^{j}\underline{\varphi}\cdot e_{1})\cos(S^{l}\underline{\varphi}\cdot e_{1})\cos(\underline{\varphi}\cdot  e_{1}), \label{eq:32}
\end{equation} 

che dei prodotti di integrali come

\begin{equation}
\int_{0}^{2\pi}\frac{d\varphi_{1}}{2\pi}\frac{d\varphi_{2}}{2\pi}\cos(S^{k}\underline{\varphi}\cdot e_{1})\cos(S^{j}\underline{\varphi}\cdot e_{1})\int_{0}^{2\pi}\frac{d\varphi_{1}}{2\pi}\frac{d\varphi_{2}}{2\pi}\cos(S^{l}\underline{\varphi}\cdot e_{1})\cos(\underline{\varphi}\cdot e_{1}), \label{eq:33}
\end{equation}

e dobbiamo capire per quali terne $(k,\,j,\,l)$ le (\ref{eq:32}), (\ref{eq:33}) danno un contributo diverso da zero. Ripetendo l'analisi fatta per il secondo ordine, \` e chiaro che la (\ref{eq:33}) \` e diversa da zero solo per $k=j$, $l=0$, mentre per la (\ref{eq:32}) il discorso \` e pi\` u complesso; le equazioni da studiare sono

\begin{eqnarray}
a_{1}S^{k}_{11} + a_{2}S^{j}_{11} + a_{3}S^{l}_{11} + a_4 &=& 0 \label{eq:34}\\
a_{1}S^{k}_{12} + a_{2}S^{j}_{12} + a_{3}S^{l}_{12} &=& 0, \label{eq:35}
\end{eqnarray}

dove $a_{i}=\pm 1$. Le equazioni $(\ref{eq:34})$, $(\ref{eq:35})$, scegliendo opportunamente gli $a_{i}$, sono sicuramente vere per gli insiemi di indici $\{k,j,l\in\mathbb{Z}: k=j,\, l=0\}$, $\{k,j,l\in\mathbb{Z}: k=l,\, j=0\}$, $\{k,j,l\in\mathbb{Z}: j=l,\, k=0\}$; vogliamo dimostrare che queste scelte sono le uniche possibili. Per fare ci\` o basta notare che, se $S$ \` e data dalla (\ref{eq:10b}),

\begin{eqnarray}
S^{k+1}_{12}&=&2S^{k}_{12}+S^{k}_{11}>2S^{k}_{12}\qquad\mbox{se $k\geq0$} \label{eq:34b}\\
S^{k}_{12}&=&-S^{-k}_{12}<-2S^{-(k+1)}_{12}=2S^{k+1}_{12}\qquad\mbox{se $k<0$}, \label{eq:34c}
\end{eqnarray}

quindi, per due generici indici $m,n\in\mathbb{Z}$ tali che $|n|>|m|$

\begin{equation}
|S^{n}_{12}|>2|S^{m}_{12}|. \label{eq:36}
\end{equation} 

Grazie alla (\ref{eq:36}) \` e facile rendersi conto che l'unico modo per soddisfare la (\ref{eq:35}) \` e con una combinazione tipo $k=\pm j,\,l=0$; tra queste, il caso $k=-j$,  $l=0$ \` e da escludere perch\'e la (\ref{eq:34}) corrispondente diventerebbe

\begin{equation}
S^{k}_{11}+S^{-k}_{11} + a_{3} + a_{4} = 0, \label{eq:37}
\end{equation}

che, poich\'e $S^{k}_{11},S^{-k}_{11}>1$ per ogni $k\in\mathbb{Z}$, pu\` o essere risolta solo imponendo $k=0$, e questa scelta \` e gi\` a compresa nell'insieme $\{k,j,l\in\mathbb{Z}:k=j,\,l=0\}$. Dunque, l'insieme delle terne $(k,\,j,\, l)$ per le quali l'integrale (\ref{eq:32}) \` e diverso da zero \` e determinato dalle condizioni $k=j$,  $l=0$, e da quelle analoghe ottenute permutando $k$, $j$, $l$.

Passiamo adesso alla (\ref{eq:30}); dal momento che

\begin{equation}
\left\langle\sigma\right\rangle_{+}^{(2)}=\left\langle\sigma^{(1)}U^{(1)}\right\rangle_{0}+\left\langle\sigma^{(2)}+\partial_{\alpha}\sigma^{(1)}h_{\alpha}^{(1)}\right\rangle_{0},
\end{equation} 

indicando $\sigma\circ S^{k}$ con $\sigma_{k}$ e sottintendendo le somme su $k,\,j$, la (\ref{eq:30}) pu\` o essere riscritta come la somma delle seguenti tre quantit\` a:

\begin{eqnarray}
3\left\langle\sigma^{(1)}_{k}\sigma^{(1)}_{j}\partial_{\alpha}\sigma^{(1)}h_{\alpha}^{(1)}\right\rangle_{0}&-&3\left\langle\sigma^{(1)}_{k}\sigma^{(1)}_{j}\right\rangle_{0}\left\langle\partial_{\alpha}\sigma^{(1)}h_{\alpha}^{(1)}\right\rangle_{0}\label{eq:37b}\\3\left\langle\sigma^{(1)}_{k}\sigma^{(1)}_{j}\sigma^{(2)}\right\rangle_{0}&-&3\left\langle\sigma^{(1)}_{k}\sigma^{(1)}_{j}\right\rangle_{0}\left\langle\sigma^{(2)}\right\rangle_{0}\label{eq:37c}\\\left\langle\sigma^{(1)}_{k}\sigma^{(1)}_{j}\sigma^{(1)}U^{(1)}\right\rangle_{0}&-&3\left\langle\sigma^{(1)}_{k}\sigma^{(1)}_{j}\right\rangle_{0}\left\langle\sigma^{(1)}U^{(1)}\right\rangle_{0}.\label{eq:37d}
\end{eqnarray}

Per quanto appena discusso, la parte destra e la parte sinistra della (\ref{eq:37c}) sono diverse da zero solo per $k=j$, mentre la (\ref{eq:37d}), poich\'e $U^{(1)}=-\sum_{l=-\infty}^{\infty}A_{u}^{(1)}\circ S^{l}=-\frac{\lambda_{-}}{(\lambda_{+}+1)}\sum_{l=-\infty}^{\infty}\cos\circ S^{l}$,  d\` a un contributo non nullo considerando gli indici $\{k,j,l\in\mathbb{Z}: k=j,\,l=0\}$, $\{k,j,l\in\mathbb{Z}: k=l,\,j=0\}$, $\{k,j,l\in\mathbb{Z}: j=l,\,k=0\}$ per la parte sinistra e $\{k,j,l\in\mathbb{Z}: k=j,\,l=0\}$ per la parte destra. 

Per quanto riguarda la (\ref{eq:37b}), invece, scrivendo esplicitamente le quantit\` a contenute nei valori di aspettazione si ottiene

\begin{eqnarray}
\sum_{k,j=-\infty}^{\infty}\sum_{l=0}^{\infty}-\frac{24\lambda_{-}^{l+1}}{(\lambda_{+}+1)}\left(\left\langle\cos(S^{k}\cdot)\cos(S^{j}\cdot)\sin(\cdot)\sin(S^{l}\cdot)\right\rangle_{0}\right.\nonumber\\\left.-\left\langle\cos(S^{k}\cdot)\cos(S^{j}\cdot)\right\rangle_{0}\left\langle\sin(\cdot)\sin(S^{l}\cdot)\right\rangle_{0}\right)\nonumber\\+\sum_{k,j=-\infty}^{\infty}\sum_{l=1}^{\infty}\frac{24\lambda_{-}^{l-1}}{(\lambda_{-}+1)}\left(\left\langle\cos(S^{k}\cdot)\cos(S^{j}\cdot)\sin(\cdot)\sin(S^{-l}\cdot)\right\rangle_{0}\right.\nonumber\\\left.-\left\langle\cos(S^{k}\cdot)\cos(S^{j}\cdot)\right\rangle_{0}\left\langle\sin(\cdot)\sin(S^{-l}\cdot)\right\rangle_{0}\right); \label{eq:37e}
\end{eqnarray}

i prodotti di valori di aspettazione presenti nella (\ref{eq:37e}) sono diversi da zero solo per combinazioni di indici che verificano le condizioni $k=j,\,l=0$, ed \` e possibile verificare che queste condizioni sono le uniche che rendono non nulli gli altri termini presenti nella (\ref{eq:37e}). Infatti, per quanto discusso in precedenza riguardo alla (\ref{eq:31}), le scelte di indici che rendono il valore di aspettazione del prodotto di quattro funzioni trigonometriche\footnote{Noi l'abbiamo dimostrato solo per quattro coseni, ad ogni modo il discorso pu\` o essere ripetuto, ottenendo gli stessi risultati, considerando in generale combinazioni di seni e coseni.} (calcolate rispettivamente in $S^{k}\underline{\varphi}\cdot e_{1}$, $S^{j}\underline{\varphi}\cdot e_{1}$, $S^{l}\underline{\varphi}\cdot e_{1}$, $\varphi_{1}$) diverso da zero devono avere due indici uguali e l'indice rimanente uguale a zero; vogliamo dimostrare che nel caso dei valori di aspettazione del prodotto di due seni con due coseni presenti nella (\ref{eq:37e}) le combinazioni utili sono quelle che rispettano $k=j,\,l=0$.

Supponiamo che non sia cos\` i e consideriamo ad esempio una terna $(k,\,j,\,l)$ con $k=0,\, j=l$, per la quale l'integrale da calcolare \` e

\begin{equation}
\left\langle\cos(\cdot)\cos(S^{l}\cdot)\sin(\cdot)\sin(S^{l}\cdot)\right\rangle_{0}; \label{eq:37f}
\end{equation}

la (\ref{eq:37f}) pu\` o essere riscritta come

\begin{equation}
\frac{1}{4}\left\langle\sin(2S^{l}\cdot)\sin(2\cdot)\right\rangle_{0}, \label{eq:37g}
\end{equation}

e la (\ref{eq:37g}) \` e diversa da zero solo per $l=0$. Dunque, l'insieme di indici caratterizzato dalle condizioni $k=0,\,j=l$ contiene una sola combinazione utile, ovvero $k=j=l=0$, la quale per\` o \` e gi\` a compresa tra quelle che verificano $k=j,\,l=0$.

In conclusione, grazie a tutte queste considerazioni \` e facile verificare che 

\begin{eqnarray}
C_{3}^{(4)}&=&\frac{6\lambda_{-}\varepsilon^{4}}{(\lambda_{+}+1)}+\frac{3}{2}\varepsilon^{4}\\
C_{4}^{(4)}&=&3\varepsilon^{4},
\end{eqnarray}

quindi nel caso di una perturbazione composta da una singola armonica il teorema di fluttuazione \` e violato all'ordine $\varepsilon^{4}$.

\subsection{Secondo caso: perturbazione composta da due armoniche}

Perturbiamo adesso il sistema con

\begin{equation}
\underline{f}(\underline{\psi})=\left(
\begin{array}{c}
\sin(\psi_{1})+\sin(2\psi_{1}) \\ 0
\end{array}
\right), \label{eq:38}
\end{equation}

e con questa perturbazione il tasso di contrazione dello spazio delle fasi \` e dato da

\begin{equation}
\sigma(\underline{\psi})=-\log(1+2\varepsilon(\cos(\psi_{1})+2\cos(2\psi_{2}))); \label{eq:39}
\end{equation}

\` e facile vedere che anche in questo caso il teorema di fluttuazione \` e vero all'ordine $\varepsilon^{2}$. All'ordine $\varepsilon^{3}$, invece, le equazioni da studiare sono

\begin{eqnarray}
b_1 S^{k}_{11} + b_2 S^{j}_{11} + b_3 &=& 0 \label{eq:41}\\
b_1 S^{k}_{12} + b_2 S^{j}_{12} &=& 0, \label{eq:42}
\end{eqnarray}

 dove $b_i$ pu\` o essere uguale a $\pm 1,\pm 2$. Per quanto detto in precedenza, le equazioni del tipo (\ref{eq:42}) possono avere soluzione solo per $k=\pm j$; tra queste scelte, la (\ref{eq:41}) \` e soddisfatta solo con $k=j=0$, nella forma 

\begin{equation}
S^{k}_{11}+S^{j}_{11}-2=0
\end{equation}

oppure

\begin{equation}
-S^{k}_{11}-S^{j}_{11}+2=0.
\end{equation}

Quindi, al contrario del caso caratterizzato da una perturbazione composta da una singola armonica, esiste una combinazione di indici che risolve le equazioni (\ref{eq:41}), (\ref{eq:42}) e dunque

\begin{equation}
C_{3}^{(3)}=\sum_{k,j=-\infty}^{\infty}\left\langle\sigma^{(1)}\circ S^{k}\sigma^{(1)}\circ S^{j}\sigma^{(1)}\right\rangle_{0}=-12\varepsilon^{3}\neq0; \label{eq:40}
\end{equation}

questo risultato implica che il teorema di fluttuazione sia violato gi\` a a partire dal terzo ordine in $\varepsilon$.

\subsection{Conferme numeriche {\em a posteriori}}

Possiamo provare a verificare numericamente i risultati ottenuti calcolando esplicitamente il {\em funzionale di grandi deviazioni} $\zeta(p)$ nei casi relativi alle due perturbazioni, e confrontando la differenza $-\zeta(p)+\zeta(-p)$ con le dispersioni sperimentali ottenute per vari valori di $\varepsilon$. Il risultato \` e che la funzione $\zeta(p)$ \` e data da, al quarto ordine in $\varepsilon$,

\begin{equation}
\zeta(p)=\frac{(p-1)^{2}}{2}\left[\left\langle\sigma\right\rangle_{+}-\frac{C_{2}}{4}\right]-\frac{(p-1)^{3}}{48}C_{3}-\frac{(p-1)^{4}}{384}C_{4}+O\left(\varepsilon^{5}\right); \label{eq:funz}
\end{equation}

rimandiamo all'appendice \ref{app:C} per i dettagli di questo calcolo. Notare che la presenza di potenze di $(p-1)^{k}$ con $k>2$ nella (\ref{eq:funz}) implica che le fluttuazioni del tasso di produzione di entropia {\em non sono governate da una legge gaussiana}. 

Grazie alla (\ref{eq:funz}), poich\'e $\left\langle\sigma\right\rangle_{+}=O\left(\varepsilon^{2}\right)$ ({\em cf.} nota \ref{nota:appC1}, appendice \ref{app:C}),

\begin{eqnarray}
-\zeta(p)+\zeta(-p)&=&p\left\langle\sigma\right\rangle_{+}\left[1+A\right]+Bp^{3}+O\left(\varepsilon^{5}\right) \label{eq:funz1}\\
A&=&\frac{1}{\left\langle\sigma\right\rangle_{+}}\left[\left\langle\sigma\right\rangle_{+}-\frac{C_{2}}{2}+\frac{C_{3}}{8}-\frac{C_{4}}{48}\right]+O\left(\varepsilon^{3}\right) \label{eq:funz2}\\
B&=&\frac{1}{24}\left[C_{3}-\frac{C_{4}}{2}\right]+\left(\varepsilon^{5}\right); \label{eq:funz3}
\end{eqnarray}

per verificare la violazione del teorema di fluttuazione studieremo l'andamento di $A$ in funzione di $\varepsilon$. In un andamento tipo quelli riportati nelle figure \ref{fig:sim1},\ref{fig:sim2}, la quantit\` a $A$ pu\` o essere misurata a partire dalla pendenza intorno a $p=0$ della retta che interpola la dispersione di punti sperimentali, e calcolando la deviazione da $1$.

In generale, gli andamenti sperimentali saranno affetti da un ulteriore errore dovuto al fatto che il tempo $\tau$ di $\varepsilon_{\tau}$ \` e finito (nel nostro caso pari a $100$); la correzione pu\` o essere stimata, \cite{GZG}, e indicando con $\zeta_{\tau}(p)$ il funzionale di grandi deviazioni della probabilit\` a $Prob(\varepsilon_{\tau}=p)$ si ottiene che

\begin{equation}
\zeta_{\tau}(p)=\zeta(p)+O\left(\frac{\varepsilon}{\tau}\right).
\end{equation}

Nel caso di una perturbazione composta da una singola armonica ci aspettiamo che, a meno di errori dovuti alla finitezza di $\tau$, la quantit\` a $A$ definita nelle (\ref{eq:funz1}), (\ref{eq:funz2}) sia $O\left(\varepsilon^{2}\right)$, mentre se la perturbazione contiene due armoniche $A=O\left(\varepsilon\right)$. Ci\` o \` e confermato dagli andamenti riportati nelle figure \ref{fig:A1}, \ref{fig:A2}, in cui le dispersioni di punti sperimentali sono interpolate rispettivamente dalle curve

\begin{eqnarray}
f_{1}(\varepsilon)&=&a_{1}\varepsilon^{2}+\frac{b_{1}}{\tau\varepsilon}\\
f_{2}(\varepsilon)&=&a_{2}\varepsilon + b_{2}\varepsilon^{2} +\frac{c_{2}}{\tau\varepsilon},
\end{eqnarray}

dove i parametri sono determinati minimizzando la somma dei quadrati dei residui.
 
\begin{figure}[htbp]
\centering
\includegraphics[width=0.7\textwidth]{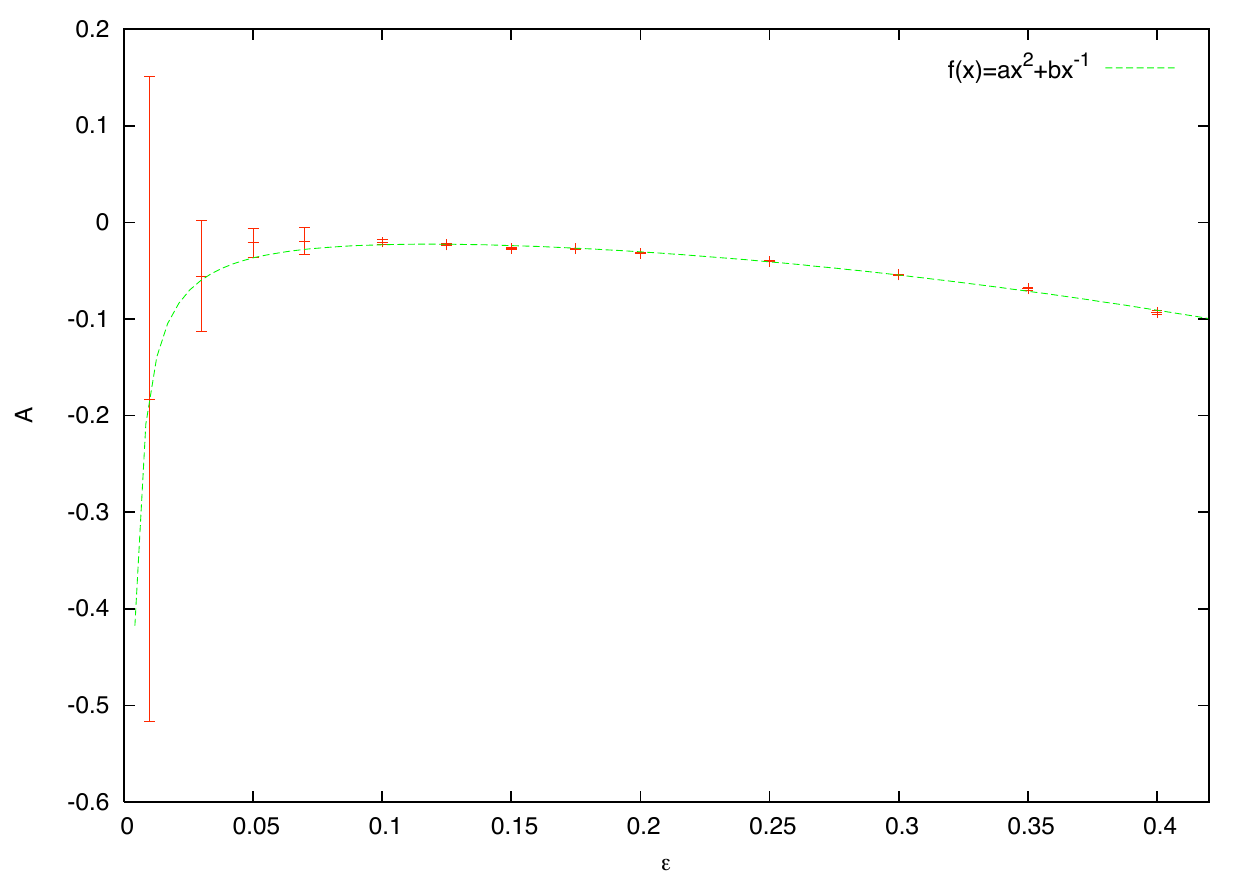}
\caption{Andamento di $A$ in funzione di $\varepsilon$ nel caso di una perturbazione composta da una singola armonica; la dispersione di punti \` e interpolata dalla curva $f_{1}(\varepsilon)$.} \label{fig:A1}
\end{figure}

\begin{figure}[htbp]
\centering
\includegraphics[width=0.7\textwidth]{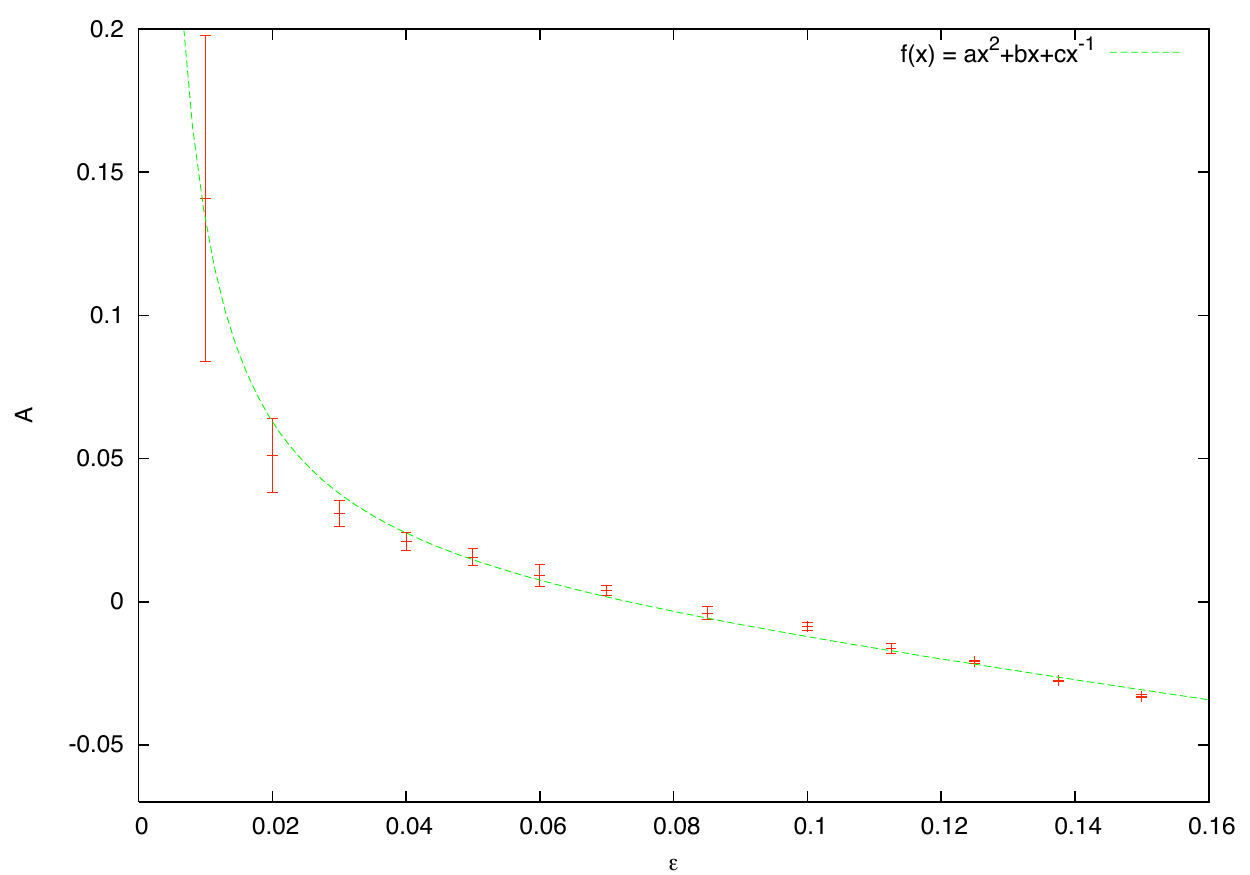}
\caption{Andamento di $A$ in funzione di $\varepsilon$ nel caso di una perturbazione composta da due armoniche; la dispersione di punti \` e interpolata dalla curva $f_{2}(\varepsilon)$.} \label{fig:A2}
\end{figure}

\chapter*{Conclusioni}

In questa tesi abbiamo discusso il problema della descrizione statistica dei sistemi fuori dall'equilibrio; in particolare, abbiamo studiato gli aspetti fisici e matematici di una possibile teoria microscopica degli stati stazionari di non equilibrio proposta da G. Gallavotti e da E. G. D. Cohen, \cite{GC95}, \cite{GC95a}, \cite{trattgal}, \cite{conv}. Questa teoria si basa su un'{\em assunzione}, motivata da risultati sperimentali, che consiste nella seguente affermazione:

\begin{quote}
Ipotesi caotica (IC): \em{Allo scopo di misurare propriet\` a macroscopiche, un sistema di particelle in uno stato stazionario si comporta come un sistema di Anosov reversibile.}
\end{quote}

La conseguenza principale di (IC) \` e l'esistenza di una misura di probabilit\` a invariante, la {\em misura SRB}, che determina la statistica di dati iniziali scelti a caso con la misura di volume, {\em cf.} capitolo \ref{cap:ergo}, sezione \ref{sez:srb}; in particolare, se il sistema dinamico \` e sottoposto a forze non conservative la misura invariante \` e {\em singolare} rispetto alla misura di Lebesgue, ovvero \` e concentrata in un insieme di misura di volume nulla.

Una caratteristica molto importante della misura SRB \` e che pu\` o essere rappresentata come lo stato di Gibbs di un {\em modello di Fisher}, ovvero di un modello di spin unidimensionale con potenziali a molti corpi invarianti per traslazione che decadono esponenzialmente; questi sistemi possono essere studiati in dettaglio, e in particolare \` e possibile dimostrare che non presentano transizioni di fase, nel senso che se i potenziali sono analitici in un parametro $\beta$ allora anche la {\em pressione} lo sar\` a, indipendentemente dalla taglia dei potenziali, {\em cf.} appendici \ref{app:D}, \ref{app:B}. 

Il contatto tra la teoria matematica dei sistemi di Anosov e la Fisica \` e contenuto nel {\em teorema di fluttuazione di Gallavotti - Cohen}; questo risultato \` e stato ispirato da una simulazione numerica in \cite{ECM93}, ed in seguito \` e stato verificato in molti altri esperimenti, {\em cf.} ad es. \cite{BGGtest}, \cite{GZG}, \cite{BCG}. Consideriamo un sistema di Anosov $(\Omega,S)$ che evolve ad intervalli di tempo discreti\footnote{In generale, possiamo pensare che $(\Omega,S)$ sia il sistema dinamico che si ottiene osservando ad intervalli di tempo ``opportuni'' un sistema dinamico che evolve in tempo continuo (un {\em flusso}, {\em cf.} definizione \ref{def:flux}); questa idea \` e alla base della tecnica della {\em sezione di Poincar\'e}. Il sistema dinamico risultante dalle ``osservazioni'' ha tutte le caratteristiche dei sistemi di Anosov, \cite{Ge96}, e il teorema di fluttuazione pu\` o essere dimostrato in un modo completamente analogo a quello che abbiamo visto nel capitolo \ref{cap:noneq}, sezione \ref{sez:dimFT}.}, e indichiamo con $\langle\sigma\rangle_{+}$ il valor medio SRB nel futuro del tasso di produzione di entropia\footnote{In realt\` a, il teorema \` e una relazione sulla fluttuazioni del {\em tasso di contrazione dello spazio delle fasi}; ad ogni modo, come abbiamo discusso nella sezione \ref{sez:problentr} del capitolo \ref{cap:noneq}, questa quantit\` a pu\` o essere interpretata come tasso di produzione di entropia.} $\sigma$ e con $\varepsilon_{\tau}$ la sua media adimensionale su un tempo $\tau$, ovvero

\begin{equation}
\varepsilon_{\tau}\equiv \frac{1}{\tau\langle\sigma\rangle_{+}}\sum_{j=-\frac{\tau}{2}}^{\frac{\tau}{2}-1}\sigma\circ S^{j};
\end{equation}

il teorema afferma che, ponendo $I_{p,\delta}\equiv[p-\delta,p+\delta]$ con $\delta>0$ arbitrariamente piccolo, e chiamando $\pi(\varepsilon_{\tau}\in I_{p,\delta})$ la probabilit\` a dell'evento $\{x:\varepsilon_{\tau}(x)\in I_{p,\delta}\}$,

\begin{equation}
\frac{\pi(\varepsilon_{\tau}\in I_{p,\delta})}{\pi(\varepsilon_{\tau}\in I_{-p,\delta})}=e^{\tau p \langle\sigma\rangle_{+}+O(1)}\qquad |p|\leq p^{*}\geq 1. \label{eq:conflu}
\end{equation}

Inoltre, se $\{F_{1},...,F_{n}\}$ \` e un insieme di osservabili con parit\` a definita sotto l'azione dell'inversione temporale $I$, il risultato (\ref{eq:conflu}) pu\` o essere generalizzato nel seguente modo, {\em cf.} sezione \ref{sez:revcond} del capitolo \ref{cap:noneq}:

\begin{equation}
 \frac{\pi(F_{j}(S^{k}x)\sim \varphi_{j}(k),\varepsilon_{\tau}\in I_{p,\delta}, j=1,...,n, k\in [-\frac{\tau}{2},\frac{\tau}{2}-1])}{\pi(F_{j}(S^{k}x)\sim I\varphi_{j}(k),\varepsilon_{\tau}\in I_{-p,\delta}, j=1,...,n, t\in [-\frac{\tau}{2},\frac{\tau}{2}-1])}=e^{\tau p \langle\sigma\rangle_{+}+O(1)} \quad|p|\leq p^{*},
\end{equation}

dove $I\varphi(k)\equiv \eta_{j}\varphi(-k)$ se $\eta_{j}$ \` e la parit\` a di $F_{j}$ sotto $I$, e con la notazione $F_{j}(S^{k}x)\sim \varphi_{j}(k)$ indichiamo che $F_{j}(S^{k}x)$ \` e ``vicina'' a $\varphi(k)$, nel senso che l'evoluto di $F_{j}$ in un intervallo temporale $[-\frac{\tau}{2},\frac{\tau}{2}-1]$ \` e contenuto in un ``tubo'' di larghezza $\gamma$ attorno a $\varphi_{j}$.

Nella parte originale della tesi abbiamo dimostrato che questi teoremi equivalgono ad un insieme di infinite relazioni tra funzioni di correlazione, che includono come casi particolari ``vicino all'equilibrio'' la formula di Green - Kubo e le relazioni di reciprocit\` a di Onsager, {\em cf.} capitolo \ref{cap:appl}. In particolare, il valor medio di una qualunque osservabile $\mathcal{O}$ dispari sotto inversione temporale pu\` o essere scritto come

\begin{equation}
\langle\mathcal{O}\rangle_{+}=\sum_{k\geq 2}\frac{(-1)^{k}}{2^{k-1}}\sum_{l=0}^{\left\|\frac{k-1}{2}\right\|}\frac{C_{\underline{1}_{k-(2l+1)}\underline{2}_{(2l+1)}}}{(2l+1)!(k-(2l+1))!},\label{eq:GK016bis}
\end{equation}

dove $\underline{1}_{k}=\underbrace{1,1,1,...}_{k\; volte}$, $\|a\|$ \` e la parte intera di $a$ e\footnote{Come dimostriamo nell'appendice \ref{app:B}, nella (\ref{eq:conasp}) \` e possibile portare le derivate in $\beta_{1}$, $\beta_{2}$ dentro al valor medio SRB.}

\begin{eqnarray}
C_{\underline{1}_{n}\underline{2}_{m}}&\equiv&\lim_{\tau\rightarrow\infty}\partial_{\beta_{1}}^{n}\partial_{\beta_{2}}^{m}\lambda_{\tau}(\beta_{1},\beta_{2})  \label{eq:conasp}\\
\lambda_{\tau}(\beta_{1},\beta_{2})&\equiv&\frac{1}{\tau}\left\langle e^{\tau\left(\beta_{1}\left(\varepsilon_{\tau}-1\right)\left\langle\sigma\right\rangle_{+}+\beta_{2}\left(\eta_{\tau}-1\right)\left\langle\mathcal{O}\right\rangle_{+}\right)}\right\rangle_{+}\\
\eta_{\tau}&\equiv&\frac{1}{\tau\langle\mathcal{O}\rangle_{+}}\sum_{j=-\frac{\tau}{2}}^{\frac{\tau}{2}-1}\mathcal{O}\circ S^{j};
\end{eqnarray}

in modo analogo \` e possibile dimostrare anche che

\begin{equation}
C_{\underbrace{\scriptstyle{1,1,1,....}}_{l\: \scriptstyle{volte}},\underbrace{\scriptstyle{2,2,2,....}}_{n-l\:\scriptstyle{volte}}}=\sum_{k\geq 0}\frac{(-1)^{n+k}}{k!}C_{\underbrace{\scriptstyle{1,1,1....}}_{k+l\:\scriptstyle{volte}},\underbrace{\scriptstyle{2,2,2....}}_{n-l\:\scriptstyle{volte}}},\qquad n\geq 2. \label{eq:conrel}
\end{equation}

Inoltre, {\em cf.} appendice \ref{app:C}, ammettendo che le forze non conservative siano caratterizzate da ``piccole'' intensit\` a $\{G_{i}\}$, abbiamo trovato un'espressione esplicita in serie di potenze di $\{G_{i}\}$ del funzionale di grandi deviazioni $\zeta(p)$, definito come

\begin{equation}
\zeta(p)=-\lim_{\tau\rightarrow\infty}\tau^{-1}\pi(\varepsilon_{\tau}\in I_{p,\delta}); \label{eq:conzeta}
\end{equation}

il nostro risultato \` e che $\zeta(p)=\sum_{k\geq 2}a_{k}(p-1)^{k}$, dove i coefficienti $a_{k}$ consistono in opportune combinazioni di funzioni di correlazione  $\sigma - \sigma$. Ad esempio, fino al quarto ordine in $\{G_{i}\}$ si ottiene che, se $C_{k}=C_{\underline{1}_{k},0}$ e $C_{\underline{1}_{k},0}$ \` e definita nella (\ref{eq:conasp}),

\begin{equation}
\zeta(p)=\frac{(p-1)^{2}}{2}\left[\left\langle\sigma\right\rangle_{+}-\frac{C_{2}}{4}\right]-\frac{(p-1)^{3}}{48}C_{3}-\frac{(p-1)^{4}}{384}C_{4}+O\left(G^{5}\right), \label{eq:appC8bis}
\end{equation}

indicando con la notazione $O(G^{k})$ un termine che all'ordine dominante \` e proporzionale al prodotto di $k$ variabili appartenenti a $\{G_{i}\}$. Dunque, poich\'e in generale $a_{k}\neq 0$ se $\underline{G}\neq \underline{0}$, le fluttuazioni asintotiche di $\varepsilon_{\tau}$ {\em non} seguono una distribuzione gaussiana\footnote{Ci\` o non \` e sorprendente, dal momento che la media $\varepsilon_{\tau}$ \` e normalizzata con $\tau$ e non con $\sqrt{\tau}$.}. 

Il teorema di fluttuazione (\ref{eq:conflu}), nella forma

\begin{equation}
-\zeta(p) + \zeta(-p)=p\langle\sigma\rangle_{+}\qquad |p|\leq p^{*},
\end{equation}

implica che i coefficienti $a_{k}$ siano legati da delle precise relazioni; com'era prevedibile, abbiamo verificato fino al quarto ordine in $\{G_{i}\}$ che le relazioni che si ottengono sono esattamente le (\ref{eq:GK016bis}), (\ref{eq:conrel}), sostituendo $\mathcal{O}$ con $\sigma$. In particolare, queste relazioni non modificano il carattere non gaussiano di $\zeta(p)$, che al quarto ordine diventa

\begin{equation}
\zeta(p)=\frac{(p-1)^{2}}{8}C_{2}-\frac{(p-1)^{2}}{48}C_{4}\left(1+\frac{(p-1)}{2}+\frac{(p-1)^{2}}{8}\right)+O\left(G^{5}\right).
\end{equation}

Infine, abbiamo sfruttato i risultati ottenuti per verificare il teorema di fluttuazione ordine per ordine in un caso molto semplice, {\em cf.} sezione \ref{sez:test} del capitolo \ref{cap:appl}, in cui da alcune simulazioni numeriche il teorema sembrava essere verificato sebbene non ne fossero rispettate le ipotesi; la conclusione \` e che la violazione c'\` e, ma era troppo piccola per essere apprezzata nelle simulazioni. {\em A posteriori} abbiamo verificato numericamente questa violazione con delle nuove simulazioni, finalizzate allo studio della pendenza vicino a $p=0$ di $p^{-1}\langle\sigma\rangle_{+}^{-1}\left[-\zeta(p)+\zeta(-p)\right]$; le deviazioni osservate di questa quantit\` a da $1$, il valore atteso nel caso in cui il teorema di fluttuazione sia vero, sono decisamente consistenti con le nostre previsioni teoriche.

\begin{appendix}

\appendix

\chapter{Derivazione euristica della misura SRB}\label{app:eur}

In questa appendice vogliamo presentare un semplice argomento euristico per dimostrare che la misura SRB pu\` o essere rappresentata come uno stato di Gibbs le cui probabilit\` a condizionate verificano le equazioni

\begin{equation}
\frac{m(\underline{\sigma}'_{\Lambda}|\underline{\sigma}_{\Lambda^{c}})}{m(\underline{\sigma}''_{\Lambda}|\underline{\sigma}_{\Lambda^{c}})}=\exp{\left(\sum_{k=-\infty}^{\infty}A_{u}(\tau^{k}\underline{\sigma}'')-A_{u}(\tau^{k}\underline{\sigma}')\right)}, \label{eq:appA1}
\end{equation}

se $\underline{\sigma}'=\left(\underline{\sigma}'_{\Lambda}\underline{\sigma}'_{\Lambda^{c}}\right)$, $\underline{\sigma}''=\left(\underline{\sigma}''_{\Lambda}\underline{\sigma}''_{\Lambda^{c}}\right)$; seguiremo l'analisi svolta in \cite{trattgal}.

Consideriamo un sistema di Anosov $(\Omega,S)$, ed un suo {\em punto fisso}\footnote{Ad esempio, per il gatto di Arnold il punto $\underline{\psi}=(0,0)$ \` e banalmente un punto fisso. L'analisi euristica che stiamo per presentare rimane immutata se al posto di un punto fisso $O$ consideriamo un {\em punto periodico} $O'$, ovvero un punto che \` e fisso per una certa iterata $S^{k}$; si pu\` o dimostrare, {\em cf.} \cite{ergo}, che per un sistema di Anosov $(\Omega,S)$ i punti fissi sono {\em densi} in $\Omega$.}\index{punto fisso} $O$; un punto fisso $O\in\Omega$ \` e caratterizzato dalla propriet\` a $SO=O$. Al tempo $t=0$ consideriamo una sfera $B$ centrata in $O$ contenente una massa unitaria distribuita con densit\` a $\rho$; dopo un tempo $T$ ``molto grande'', l'insieme $S^{T}B$ sar\` a concentrato lungo la variet\` a instabile $W^{u}_{O}$ passante per $O$. Un segmento $\delta$ (consideriamo il caso bidimensionale per semplicit\` a) giacente lungo la variet\` a instabile $W^{u}_{O}\cap S^{T}B$ centrato nel punto $x$ corrisponder\` a ad un segmento $S^{-T}\delta\in B$ centrato in $S^{-T}x$, pi\`u ``corto'' di $\delta$ di un fattore $\prod_{k=0}^{T-1}\lambda_{u}^{-1}(S^{-k}x)$, dunque la massa che riveste $\delta$ sar\`a data da\footnote{Se $n$ \` e la dimensione della variet\` a stabile, dobbiamo moltiplicare  $\rho S^{-T}\delta$ per $h^{n}$ in modo da avere un risultato adimensionale.}

\begin{equation}
\mu(\delta)\sim\rho h\prod_{k=0}^{T-1}\lambda_{u}^{-1}(S^{-k}x)|\delta|, \label{eq:appA2}
\end{equation}

dove $h$ \` e il diametro della sfera $\Sigma_{\gamma}$ nella direzione stabile. Per grandi $T$ l'evoluto $S^{T}B$ di $B$ coprir\` a una porzione sempre pi\` u lunga di variet\` a instabile, quindi possiamo dire che la misura SRB, che \` e il limite per tempi infiniti della misura di volume, \` e {\em concentrata} sulla variet\` a instabile. Se consideriamo due segmenti $\delta$, $\delta'$ appartenenti alla stessa porzione di variet\` a instabile $W^{u}_{\gamma'}(x)$ e centrati in due punti $x$, $x'$ giacenti sulla stessa porzione di variet\` a stabile $W^{s}_{\gamma}(x)$, in modo che i segmenti formino le basi di un ``rettangolo'' come illustrato in figura \ref{fig:app1}, il rapporto delle masse che li ricopre \` e dato da

\begin{figure}[htbp]
\centering
\includegraphics[width=0.7\textwidth]{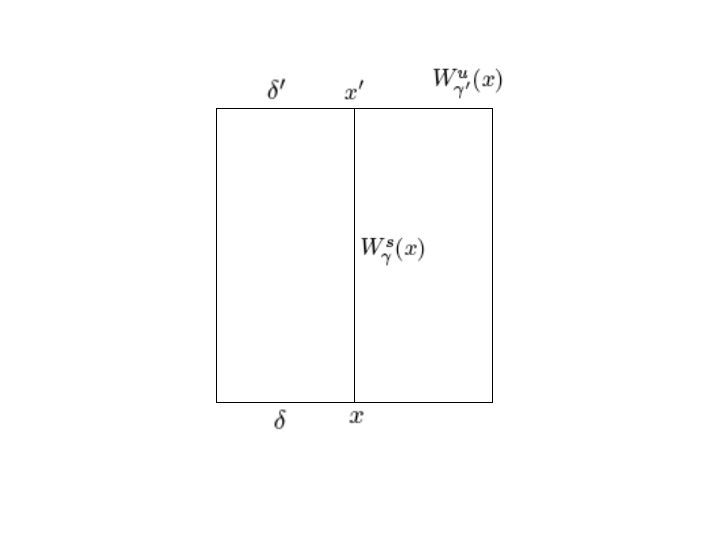}
\caption{I punti $x$ e $x'$ si trovano sulla stessa variet\` a stabile, e i segmenti $\delta$, $\delta'$ sulla stessa variet\` a instabile.} \label{fig:app1}
\end{figure}

\begin{eqnarray}
\frac{\mu(\delta)}{\mu(\delta')}&=&\frac{\prod_{k=0}^{T-1}\lambda_{u}^{-1}(S^{-k}x)|\delta|}{\prod_{k=0}^{T-1}\lambda_{u}^{-1}(S^{-k}x)|\delta'|}=\frac{\prod_{k=0}^{T-1}\lambda_{u}^{-1}(S^{-k}x)|S^{-T}\circ S^{T}\delta|}{\prod_{k=0}^{T-1}\lambda_{u}^{-1}(S^{-k}x')|S^{-T}\circ S^{T}\delta'|}\\
&=&\frac{\prod_{k=0}^{T-1}\lambda_{u}^{-1}(S^{-k}x)\prod_{k=1}^{T}\lambda_{u}^{-1}(S^{k}x)|S^{T}\delta|}{\prod_{k=0}^{T-1}\lambda_{u}^{-1}(S^{-k}x')\prod_{k=1}^{T}\lambda_{u}^{-1}(S^{k}x)|S^{T}\delta'|}; \label{eq:appA3} 
\end{eqnarray}

la (\ref{eq:appA3}) diventa, nel limite $T\rightarrow\infty$,

\begin{equation}
\lim_{T\rightarrow\infty}\frac{\mu(\delta)}{\mu(\delta')}=\frac{\prod_{k=-\infty}^{\infty}\lambda_{u}^{-1}(S^{k}x)}{\prod_{k=-\infty}^{\infty}\lambda_{u}^{-1}(S^{k}x')}, \label{eq:appA4}
\end{equation}

dal momento che $S^{T}\delta$ e $S^{T}\delta'$ tendono a coincidere. Il prodotto (\ref{eq:appA4}) converge perch\'e i punti $x$ e $x'$ appartengono alla stessa variet\` a stabile e alla stessa variet\` a instabile, e le funzioni $\log\lambda_{u}(x)$ sono H\"older continue in $x$. Quindi, codificando i punti $x$ e $x'$ con le sequenze $\underline{\sigma}'=(\underline{\sigma}'_{\Lambda}\underline{\sigma}_{\Lambda^{c}})$, $\underline{\sigma}''=(\underline{\sigma}''_{\Lambda}\underline{\sigma}_{\Lambda^{c}})$, con stessa continuazione in $\Lambda^{c}$ dal momento che i punti $x$, $x'$ si trovano sulla stessa variet\` a stabile e instabile, la (\ref{eq:appA4}) diventa la (\ref{eq:appA1}).

\chapter{Calcolo perturbativo di $\zeta(p)$} \label{app:C}

In questa appendice vogliamo discutere un calcolo perturbativo che ci permetter\` a di calcolare esplicitamente, ad ogni ordine nei parametri perturbatori $\left\{G_{i}\right\}$, il {\em funzionale di grandi deviazioni}\index{funzionale di grandi deviazioni} $\zeta(p)$ definito come

\begin{equation}
\zeta(p)=-\lim_{\tau\rightarrow\infty}\frac{1}{\tau}\log\mu_{+}(\varepsilon_{\tau}=p); \label{eq:appC1}
\end{equation}

nella (\ref{eq:appC1}), $\mu_{+}(\varepsilon_{\tau}=p)$ \` e la probabilit\` a SRB dell'evento $\{x:\varepsilon_{\tau}(x)=p\}$ e, indicando con $\sigma$ il tasso di contrazione dello spazio delle fasi\footnote{Come al solito, ci limitiamo a considerare il caso di un sistema dinamico discreto $(\Omega,S)$.} $\sigma=-\log\left|DS\right|$, la quantit\` a $\varepsilon_{\tau}$ \` e data da

\begin{equation}
\varepsilon_{\tau}=\frac{1}{\tau\left\langle\sigma\right\rangle_{+}}\sum_{j=-\tau/2}^{\tau/2-1}\sigma\circ S^{j}.
\end{equation}

Alternativamente, $\zeta(p)$ pu\` o essere introdotta come la trasformata di Legendre\index{trasformata di Legendre} di

\begin{eqnarray}
\lambda(\beta)=\lim_{\tau\rightarrow\infty}\frac{1}{\tau}\log\left\langle e^{\tau\beta\left\langle\sigma\right\rangle_{+}\left(\varepsilon_{\tau}-1\right)}\right\rangle_{+}&\equiv&\lim_{\tau\rightarrow\infty}\frac{1}{\tau}\log\int e^{\tau\beta\left\langle\sigma\right\rangle_{+}\left(p-1\right)}\pi_{\tau}(p)dp\nonumber\\&&\qquad\qquad\mbox{se $\pi_{\tau}(p)\equiv \mu_{+}(\varepsilon_{\tau}=p)$},\nonumber
\end{eqnarray}

ovvero

\begin{equation}
\zeta(p)=\max_{\beta}\left[\beta\left\langle\sigma\right\rangle_{+}(p-1)-\lambda(\beta)\right]. \label{eq:appC2}
\end{equation}

Dal momento che $\lambda(\beta)$ \` e strettamente convessa\footnote{$\lambda(\beta)$ pu\` o essere vista come la pressione di un modello di spin unidimensionale; \` e noto che, in assenza di transizioni di fase, l'energia libera \` e una funzione strettamente convessa nei suoi argomenti, \cite{RuStat}, \cite{thomp}.} e {\em asintoticamente lineare}, ovvero esiste\footnote{$p^{*}$ \` e definito come $p^{*}=\sup_{x}\limsup_{\tau}\varepsilon_{\tau}(x)$.} $p^{*}>1$ tale che

\begin{equation}
\lim_{\beta\rightarrow\pm\infty}\beta^{-1}\lambda(\beta)=\left\langle\sigma\right\rangle_{+}\left(\pm p^{*}-1\right), \label{eq:appC3.0}
\end{equation}

allora, per $|p|<p^{*}$,

\begin{eqnarray}
\beta\left\langle\sigma\right\rangle_{+}(p-1)-\lambda(\beta)&\rightarrow_{\beta\rightarrow\infty}&\beta\left\langle\sigma\right\rangle_{+}(p-p^{*})=-\infty\\
\beta\left\langle\sigma\right\rangle_{+}(p-1)-\lambda(\beta)&\rightarrow_{\beta\rightarrow-\infty}&\beta\left\langle\sigma\right\rangle_{+}(p+p^{*})=-\infty;
\end{eqnarray}

quindi se $|p|<p^{*}$ il massimo $\beta=\beta_{*}$ dell'argomento della trasformata di Legendre sar\` a raggiunto in un punto a derivata nulla, ovvero risolver\` a l'equazione

\begin{equation}
\left\langle\sigma\right\rangle_{+}(p-1)-\left.\frac{\partial}{\partial\beta}\lambda(\beta)\right|_{\beta=\beta_{*}}=0,
\end{equation}

che implica, sviluppando $\lambda(\beta)$ in aspettazioni troncate,

\begin{equation}
\beta_{*}=\frac{\left\langle\sigma\right\rangle_{+}(p-1)}{C_{2}}-\sum_{k\geq3}\beta_{*}^{k-1}\frac{C_{k}}{(k-1)!C_{2}}.\label{eq:appC3}
\end{equation}

Poich\'e in un sistema di Anosov ``poco'' perturbato i valori medi SRB di funzioni analitiche in $\left\{G_{i}\right\}$ sono a loro volta analitici\footnote{\label{nota:gener}Questo fatto pu\` o essere dimostrato con una generalizzazione della tecnica che abbiamo usato per calcolare coniugazione e potenziali della misura SRB nel caso del gatto di Arnold, {\em cf.} \cite{BFG03}, e grazie a quanto illustrato nelle appendici \ref{app:B}, \ref{app:D}.} in $\left\{G_{i}\right\}$, e dal momento che\footnote{{\em Cf.} nota \ref{nota:gener}.} la derivata in $\beta$ di $\lambda(\beta)$ \` e analitica in $\{G_{i}\}$, {\em cf.} appendici \ref{app:D}, \ref{app:B} e \cite{BFG03}, sotto la condizione\footnote{$C_{2}=0$ implica, per la formula di Green - Kubo, che perturbando il sistema non si genera una risposta lineare; evitiamo di considerare questo tipo di ``patologie''. Ad ogni modo, questa condizione non \` e restrittiva perch\' e se $C_{2}=0$ allora al denominatore avremmo $C_{3}$, e la somma nella (\ref{eq:appC3}) partirebbe da $4$ ecc.} $C_{2}\neq 0$ la quantit\` a $\beta_{*}$ \` e esprimibile come serie di potenze, e l'ordine $n$-esimo dello sviluppo pu\` o essere calcolato per iterazione sapendo che\footnote{\label{nota:appC1}Per dimostrare che $\left\langle\sigma\right\rangle_{+}=O\left(G^{2}\right)$ basta notare che se $\mu_{0}(dx)$ \` e la misura di volume normalizzata, introducendo il cambio di variabile $x=Sy$ e ricordando che il tasso di produzione di entropia \` e identificato con il tasso di contrazione dello spazio delle fasi, 

\begin{equation}
1=\int_{\Omega}\mu_{0}(dx)=\int_{S^{-1}\Omega}\mu_{0}(dy)\left|\frac{\partial S(y)}{\partial y}\right|=\int_{\Omega}\mu_{0}(dy)e^{-\sigma(y)},\qquad\mbox{poich\'e $S^{-1}\Omega=\Omega$; }
\end{equation}

dunque, dal momento che per ipotesi $\left.\sigma(x)\right|_{\underline{G}=\underline{0}}=0$, 

\begin{equation}
0=\sum_{i=1}^{s}G_{i}\partial_{G_{i}}\left.\left\langle e^{-\sigma(\cdot)}\right\rangle_{0}\right|_{\underline{G}=\underline{0}}=\sum_{i=1}^{s}G_{i}\partial_{G_{i}}\left\langle-\sigma(\cdot)\right\rangle_{0}|_{\underline{G}=\underline{0}}=\sum_{i=1}^{s}G_{i}\partial_{G_{i}}\left\langle-\sigma(\cdot)\right\rangle_{+}|_{\underline{G}=\underline{0}}.
\end{equation}

}

\begin{eqnarray}
\beta_{*}^{(0)} &=& \frac{\left\langle\sigma\right\rangle_{+}^{(2)}(p-1)}{C_{2}^{(2)}},\label{eq:appC4}
\end{eqnarray}

\begin{eqnarray}
\beta_{*}^{(n)} &=& \left[\frac{\left\langle\sigma\right\rangle_{+}(p-1)}{C_{2}}\right]^{(n)}-\left[\sum_{k\geq3}\beta_{*}^{k-1}\frac{C_{k}}{(k-1)!C_{2}}\right]^{(n)}\nonumber\\
&=& \left[\frac{\left\langle\sigma\right\rangle_{+}(p-1)}{C_{2}}\right]^{(n)}\nonumber\\&&-\sum_{k=3}^{n+2}\frac{1}{(k-1)!}\sum_{m=0}^{n-k+2}\left(\beta_{*}^{k-1}\right)^{(m)}\left(\frac{C_{k}}{C_{2}}\right)^{(n-m)}\nonumber\\
&=&  \left[\frac{\left\langle\sigma\right\rangle_{+}(p-1)}{C_{2}}\right]^{(n)}\nonumber\\&&-\sum_{k=3}^{n+2}\frac{1}{(k-1)!}\sum_{m=0}^{n-k+2}\sum_{\{n_{i}\}: \sum n_{i}=m}\prod_{i=1}^{k-1}\beta_{*}^{(n_{i})}\left(\frac{C_{k}}{C_{2}}\right)^{(n-m)}. \label{eq:appC5}
\end{eqnarray}

Grazie alle (\ref{eq:appC4}), (\ref{eq:appC5}) il punto $\beta_{*}$ \` e calcolabile esplicitamente a tutti gli ordini e con esso anche $\zeta(p)$, essendo definita come

\begin{equation}
\zeta(p)=\beta_{*}\left\langle\sigma\right\rangle_{+}(p-1)-\lambda(\beta_{*}); \label{eq:appC6}
\end{equation} 

in particolare, all'ordine $n$ la (\ref{eq:appC6}) diventa

\begin{eqnarray}
\zeta^{(n)}(p) &=& \left[\beta_{*}\left\langle\sigma\right\rangle_{+}(p-1)\right]^{(n)}-\lambda(\beta_{*})^{(n)}\nonumber\\
&=&  \sum_{m=0}^{n-2}\beta_{*}^{(m)}\left\langle\sigma\right\rangle_{+}^{(n-m)}(p-1)-\sum_{k=2}^{n}\frac{1}{k!}\left(\beta_{*}^{k}C_{k}\right)^{(n)}\nonumber\\
&=& \sum_{m=0}^{n-2}\beta_{*}^{(m)}\left\langle\sigma\right\rangle_{+}^{(n-m)}(p-1)\nonumber\\&&-\sum_{k=2}^{n}\frac{1}{k!}\sum_{m=0}^{n-k}\sum_{n_{1},n_{2}...,n_{k}:\sum n_{i}=m}\prod_{i=1}^{k}\beta_{*}^{(n_{i})}C_{k}^{(n-m)}. \label{eq:appC7}
\end{eqnarray}

Con la (\ref{eq:appC7}) possiamo, ad esempio, calcolare il funzionale di grandi deviazioni $\zeta(p)$ fino al quarto ordine in $\{G_{i}\}$, ottenendo che

\begin{equation}
\zeta(p)=\frac{(p-1)^{2}}{2}\left[\left\langle\sigma\right\rangle_{+}-\frac{C_{2}}{4}\right]-\frac{(p-1)^{3}}{48}C_{3}-\frac{(p-1)^{4}}{384}C_{4}+O\left(G^{5}\right). \label{eq:appC8}
\end{equation}

Quindi, in generale la differenza $-\zeta(p)+\zeta(-p)$ sar\` a una serie di potenze tipo

\begin{equation}
-\zeta(p)+\zeta(-p)=\sum_{k\geq 0}a_{k}p^{2k+1},
\end{equation}

dove i coefficienti $\{a_{k}\}$ possono essere espressi in termini di $C_{i}$, $\left\langle\sigma\right\rangle_{+}$; il teorema di fluttuazione
\begin{equation}
-\zeta(p)+\zeta(-p)=p\left\langle\sigma\right\rangle_{+} \label{eq:appC9}
\end{equation}

implica che $a_{k}=0$ per $k\neq 1$ e $a_{1}=\left\langle\sigma\right\rangle_{+}$. Ci\` o sar\` a reso possibile dalle relazioni tra funzioni di correlazione che abbiamo ricavato con il teorema di fluttuazione per il funzionale generatore $\lambda(\beta)$, {\em cf.} capitolo \ref{cap:appl}, sezione \ref{sez:GK}.  Infatti, l'identit\` a (\ref{eq:appC9}) al quarto ordine equivale alle relazioni

\begin{eqnarray}
\left\langle\sigma\right\rangle_{+}&=&\frac{C_{2}}{2}-\frac{C_{3}}{6}+\frac{C_{4}}{24}+O\left(G^{5}\right) \label{eq:appC10}\\
C_{3}&=&\frac{C_{4}}{2}+O\left(G^{5}\right), \label{eq:appC11}
\end{eqnarray}

grazie alle quali la (\ref{eq:appC8}) diventa:

\begin{equation}
\zeta(p)=\frac{(p-1)^{2}}{8}C_{2}-\frac{(p-1)^{2}}{48}C_{4}\left(1+\frac{(p-1)}{2}+\frac{(p-1)^{2}}{8}\right)+O\left(G^{5}\right).
\end{equation}

Dunque, anche imponendo le (\ref{eq:appC10}), (\ref{eq:appC11}) la distribuzione di probabilit\` a asintotica delle fluttuazioni di $\varepsilon_{\tau}$ preserva il suo carattere {\em non gaussiano}.

\chapter{Analiticit\` a di $\lambda(\beta)$}\label{app:D}

In questa appendice vogliamo discutere le propriet\` a di analiticit\` a della funzione $\lambda(\beta)$ definita come

\begin{equation}
\lambda(\beta)=\lim_{\tau\rightarrow\infty}\tau^{-1}\log \left\langle e^{\tau\beta\langle\sigma\rangle_{+}\left(\varepsilon_{\tau}-1\right)}\right\rangle_{+}=\lim_{\tau\rightarrow\infty}\tau^{-1}\log \left\langle e^{\tau\beta\langle\sigma\rangle_{+}\varepsilon_{\tau}}\right\rangle_{+}-\beta\langle\sigma\rangle_{+},
\end{equation}

dove $\varepsilon_{\tau}$ \` e data da

\begin{equation}
\varepsilon_{\tau}(X(\underline{\sigma}))=\frac{1}{\langle\sigma\rangle_{+}\tau}\sum_{j=-\frac{\tau}{2}}^{\frac{\tau}{2}-1}\sigma\left(S^{j}X(\underline{\sigma})\right); \label{eq:appD1}
\end{equation}

per dimostrare che $\lambda(\beta)$ \` e analitica in $\beta\in(-\infty,\infty)$ seguiremo l'analisi svolta in \cite{GMM}, \cite{CO}, \cite{ergo}, \cite{BFG03}. Come abbiamo visto ad esempio nella sezione \ref{sez:srbappr} del capitolo \ref{cap:noneq}, il valor medio SRB  di un'osservabile $\mathcal{O}$ \` e definito dal limite termodinamico

\begin{equation}
\left\langle\mathcal{O}\right\rangle_{+}=\lim_{\Lambda\rightarrow\infty}\left\langle\mathcal{O}\right\rangle_{\Lambda},
\end{equation}

se $\langle...\rangle_{\Lambda}$ \` e la misura SRB ``approssimata'',  {\em i.e.} uno stato di Gibbs contenuto in un volume finito $\Lambda$; dunque,

\begin{equation}
\lambda(\beta)+\beta\left\langle\sigma\right\rangle_{+}=\lim_{\tau\rightarrow\infty}\lim_{\Lambda\rightarrow\infty}\tau^{-1}\log\left\langle e^{\tau\beta\langle\sigma\rangle_{+}\varepsilon_{\tau}(x)}\right\rangle_{\Lambda}.\label{eq:appD001}
\end{equation}

Il valor medio argomento del logaritmo nella (\ref{eq:appD001}) pu\` o essere scritto esplicitamente come, se $\left\{\Phi_{X}\right\}$ sono i potenziali della misura SRB,

\begin{equation}
\left\langle e^{\tau\beta\langle\sigma\rangle_{+}\varepsilon_{\tau}(x)}\right\rangle_{\Lambda} = \frac{\sum_{\underline{\sigma}_{\Lambda}}e^{\tau\beta\langle\sigma\rangle_{+}\varepsilon_{\tau}\left(X(\underline{\sigma}_{\Lambda})\right)-\sum_{X\subset\Lambda}\Phi_{X}\left(\underline{\sigma}_{X}\right)}}{\sum_{\underline{\sigma}_{\Lambda}}e^{-\sum_{X\subset\Lambda}\Phi_{X}\left(\underline{\sigma}_{X}\right)}},
\end{equation}

e poich\'e $\varepsilon_{\tau}$ \` e  sviluppabile in potenziali a decadimento esponenziale, {\em cf.} capitolo \ref{cap:noneq}, sezione \ref{sez:dimoFT}, possiamo generalizzare il problema studiando il rapporto\footnote{In realt\` a, nella (\ref{eq:appD010}) la somma sui potenziali $\Psi_{X}$ dovrebbe coinvolgere tutti gli insiemi $X$ che {\em intersecano} $\left[-\frac{\tau}{2},\frac{\tau}{2}\right]$; ma la differenza tra questa somma e quella che appare nella (\ref{eq:appD010}) \` e, se $\beta$ \` e finito, stimata da una costante che pu\` o essere inglobata nel $\log B$ definito in (\ref{eq:appDstima1}).}

\begin{equation}
\frac{\sum_{\underline{\sigma}_{\left[-\frac{\tau}{2},\frac{\tau}{2}\right]}}e^{-\sum_{X\subset\left[-\frac{\tau}{2},\frac{\tau}{2}\right]}\left[\beta\Psi_{X}\left(\underline{\sigma}_{X}\right)+\Phi_{X}\left(\underline{\sigma}_{X}\right)\right]}\sum_{\underline{\sigma}_{\left[-\frac{\tau}{2},\frac{\tau}{2}\right]^{c}}}^{*}e^{-\sum_{X\not\subset\left[-\frac{\tau}{2},\frac{\tau}{2}\right]}\Phi_{X}\left(\underline{\sigma}_{X}\right)}}{\sum_{\underline{\sigma}_{\Lambda}}e^{-\sum_{X\subset\Lambda}\Phi_{X}}\left(\underline{\sigma}_{X}\right)}, \label{eq:appD010}
\end{equation}

dove con l'asterisco ricordiamo che le sequenze devono essere compatibili con $\underline{\sigma}_{\left[-\frac{\tau}{2},\frac{\tau}{2}\right]}$; inoltre, se $a$ \` e il tempo di mescolamento della matrice di compatibilit\` a ({\em cf.} definizione \ref{def:matrcomp}),

\begin{eqnarray}
\sum_{X\not\subset\left[-\frac{\tau}{2},\frac{\tau}{2}\right]}\Phi_{X}\left(\underline{\sigma}_{X}\right)&=&\sum_{X\not\subset\left[-\frac{\tau}{2},\frac{\tau}{2}\right],X\cap\left[-a-\frac{\tau}{2},\frac{\tau}{2}+a\right]=\emptyset}\Phi_{X}\left(\underline{\sigma}_{X}\right)\nonumber\\&&+\sum_{X\not\subset\left[-\frac{\tau}{2},\frac{\tau}{2}\right],X\cap\left[-a-\frac{\tau}{2},\frac{\tau}{2}+a\right]\neq\emptyset}\Phi_{X}\left(\underline{\sigma}_{X}\right), \label{eq:appD0001}
\end{eqnarray}

e il secondo addendo nella parte destra della (\ref{eq:appD0001}) pu\` o essere stimato come, grazie al decadimento esponenziale dei potenziali, 

\begin{equation}
\left|\sum_{X\not\subset\left[-\frac{\tau}{2},\frac{\tau}{2}\neq\emptyset\right],X\cap\left[-a-\frac{\tau}{2},\frac{\tau}{2}+a\right]\neq\emptyset}\Phi_{X}\left(\underline{\sigma}_{X}\right)\right|\leq 2a\sup_{\xi}\sum_{X\ni\xi}\left|\Phi_{X}\left(\underline{\sigma}_{X}\right)\right|\leq \log B>0. \label{eq:appDstima1}
\end{equation}

Quindi, dal momento che le configurazioni di simboli $\underline{\sigma}_{X}$ con $X\cap \left[-a-\frac{\tau}{2},\frac{\tau}{2}+a\right]=\emptyset$ sono {\em indipendenti} da quelle relative agli intervalli contenuti in $\left[-\frac{\tau}{2},\frac{\tau}{2}\right]$, la (\ref{eq:appD010}) \` e maggiorata da, se $\left\{1,...,n\right\}$ \` e lo spazio degli stati di un singolo spin,

\begin{equation}
(\ref{eq:appD010})\leq \frac{\sum_{\underline{\sigma}_{\left[-\frac{\tau}{2},\frac{\tau}{2}\right]}}e^{-\sum_{X\subset\left[-\frac{\tau}{2},\frac{\tau}{2}\right]}\left[\beta\Psi_{X}\left(\underline{\sigma}_{X}\right)+\Phi_{X}\left(\underline{\sigma}_{X}\right)\right]}}{\sum_{\underline{\sigma}_{\left[-\frac{\tau}{2},\frac{\tau}{2}\right]}}e^{-\sum_{X\subset\left[-\frac{\tau}{2},\frac{\tau}{2}\right]}\Phi_{X}\left(\underline{\sigma}_{X}\right)}}n^{2a}B^{2}. \label{eq:appD0100}
\end{equation}

La minorazione pu\` o essere fatta in modo completamente analogo, e il risultato \` e che

\begin{equation}
(\ref{eq:appD010})\left\{\begin{array}{c} \leq \frac{\sum_{\underline{\sigma}_{\left[-\frac{\tau}{2},\frac{\tau}{2}\right]}}e^{-\sum_{X\subset\left[-\frac{\tau}{2},\frac{\tau}{2}\right]}\left[\beta\Psi_{X}\left(\underline{\sigma}_{X}\right)+\Phi_{X}\left(\underline{\sigma}_{X}\right)\right]}}{\sum_{\underline{\sigma}_{\left[-\frac{\tau}{2},\frac{\tau}{2}\right]}}e^{-\sum_{X\subset\left[-\frac{\tau}{2},\frac{\tau}{2}\right]}\Phi_{X}\left(\underline{\sigma}_{X}\right)}}n^{2a}B^{2} \\ \geq \frac{\sum_{\underline{\sigma}_{\left[-\frac{\tau}{2},\frac{\tau}{2}\right]}}e^{-\sum_{X\subset\left[-\frac{\tau}{2},\frac{\tau}{2}\right]}\left[\beta\Psi_{X}\left(\underline{\sigma}_{X}\right)+\Phi_{X}\left(\underline{\sigma}_{X}\right)\right]}}{\sum_{\underline{\sigma}_{\left[-\frac{\tau}{2},\frac{\tau}{2}\right]}}e^{-\sum_{X\subset\left[-\frac{\tau}{2},\frac{\tau}{2}\right]}\Phi_{X}\left(\underline{\sigma}_{X}\right)}}n^{-2a}B^{-2} \end{array}\right.,
\end{equation}

che implica, poich\'e $B<\infty$, ponendo $U^{\Psi}_{\Lambda}\equiv\sum_{X\subset \Lambda}\Psi_{X}(\underline{\sigma}_{X})$,

\begin{equation}
\lim_{\tau\rightarrow\infty}\lim_{\Lambda\rightarrow\infty}\tau^{-1}\log\left\langle e^{-\beta U^{\Psi}_{\tau}}\right\rangle_{\Lambda}=\lim_{\Lambda\rightarrow\infty}\Lambda^{-1}\log\left\langle e^{-\beta U^{\Psi}_{\Lambda}}\right\rangle_{\Lambda};
\end{equation}

in conclusione, il problema dell'analiticit\` a di $\lambda(\beta)$ pu\` o essere generalizzato studiando (il segno di $\beta$ \` e irrilevante)

\begin{equation}
\Delta P(\beta)=\lim_{\Lambda\rightarrow\infty}\Lambda^{-1}\log \left\langle e^{\beta U^{\Psi}_{\Lambda}}\right\rangle_{\Lambda}.\label{eq:appD01}
\end{equation}

La (\ref{eq:appD01}) pu\` o essere riscritta come

\begin{equation}
\Delta P(\beta)=\lim_{\Lambda\rightarrow\infty}\left[\Lambda^{-1}\log Z_{\Lambda}^{\beta}-\Lambda^{-1}\log Z_{\Lambda}^{0}\right],
\end{equation}

dove

\begin{equation}
Z_{\Lambda}^{\beta}=\sum_{\underline{\sigma}_{\Lambda}}e^{-\sum_{X\subset\Lambda}\left[\beta\Psi_{X}(\underline{\sigma}_{X})+\Phi_{X}\right]},
\end{equation}

se $\Phi_{X}$ sono i potenziali della misura SRB. Poich\' e sia $\Psi_{X}$ che $\Phi_{X}$ decadono esponenzialmente, $Z_{\Lambda}^{\beta}$ \` e la funzione di partizione di {\em un modello di Fisher}\index{modello di Fisher} con potenziali a molti corpi 

\begin{equation}
\alpha_{X}(\underline{\sigma}_{X})=\beta\Psi_{X}(\underline{\sigma}_{X})+\Phi_{X}(\underline{\sigma}_{X}); \label{eq:appD2}
\end{equation}

per {\em modello di Fisher} intendiamo un sistema unidimensionale di spin interagenti attraverso potenziali che decadono esponenzialmente, e che sono diversi da zero solo per intervalli $X$ connessi\footnote{Nel nostro caso gli intervalli saranno formati da un numero {\em dispari} di siti, {\em cf.} proposizione \ref{prop:svilpot}.}. Dunque, il problema iniziale \` e ricondotto allo studio delle propriet\` a di analiticit\` a della {\em pressione} di questo modello.

In generale, questo tipo di problema pu\` o essere affrontato grazie alla tecnica della {\em cluster expansion}, \cite{GMM}, \cite{ergo}, \cite{BFG03}. Questa tecnica per\` o richiede che l'interazione iniziale sia {\em debole}; nel nostro caso invece $\beta$ \` e arbitrario, e inoltre non tutte le sequenze $\underline{\sigma}$ sono ammissibili: ci\` o equivale a considerare dei potenziali con un {\em cuore duro}. Risolveremo il problema grazie ad una tecnica di {\em gruppo di rinormalizzazione}, \cite{CO}.

\section{Decimazione: prima parte} \label{sez:decim}

Consideriamo un volume $\Lambda\subset\mathbb{Z}$ suddivisibile in intervalli $\left\{B_{i}\right\}_{i=0}^{l}$, $\left\{H_{i}\right\}_{i=0}^{l-1}$ nel seguente modo:

\begin{equation}
\Lambda = B_{0},H_{0},B_{1}, H_{1},...,B_{l-1}, H_{l-1}, B_{l}, \label{eq:appD3}
\end{equation}

dove $\delta\left(B_{i}\right)=\tau\in\mathbb{N}$ e $\delta\left(H_{i}\right)=h\in\mathbb{N}$. Quindi, stiamo assumendo che $\Lambda$ contenga $l(h+\tau+2)+\tau+1$ elementi; con la suddivisione (\ref{eq:appD3}), una generica sequenza di simboli $\underline{\sigma}_{\Lambda}\in\{1,...,n\}^{\Lambda}$ pu\` o essere rappresentata come

\begin{equation}
\underline{\sigma}_{\Lambda}=\left(\beta_{0},\eta_{0},\beta_{1},\eta_{1},...,\beta_{l-1},\eta_{l-1},\beta_{l}\right), \label{eq:appD4}
\end{equation}

dove ogni $\beta_{i}$, $\eta_{i}$ contiene rispettivamente $\tau+1$, $h+1$ simboli. Poich\'e lo stato di Gibbs \` e {\em unico}, \cite{RuStat}, \cite{ergo}, siamo liberi di rappresentarlo chiamando spin $\beta_{i}$, $\eta_{i}$. 

Come vedremo, integrando le variabili $\eta_{i}$ otterremo un sistema di spin $\beta_{i}$ che vive su un {\em reticolo rinormalizzato}, dove diremo che le distanze tra $B_{i}$ e $B_{i+1}$ \` e pari ad uno, interagente attraverso un {\em potenziale rinormalizzato} $\alpha^{ren}$; il nostro scopo \`e verificare che questo nuovo potenziale abbia tutte le caratteristiche necessarie per applicare la cluster expansion. Per risolvere il problema della compatibilit\` a tra gli spin $\beta_{i}$ sottintenderemo sempre che, se $a_{0}$ \` e il tempo di mescolamento della matrice di compatibilit\` a\footnote{{\em cf.} definizione \ref{def:matrcomp}, capitolo \ref{cap:ergo}.} $T$, $h=na_{0}$ con $\mathbb{N}\ni n\geq 1$. Una volta fissate le configurazioni $\beta_{i}$, le configurazioni $\eta_{i}$ dovranno rispettare dei vincoli di compatibilit\` a con le prime; ad ogni modo, come sar\` a chiaro nel seguito, ci\` o sar\` a dal tutto irrilevante ai fini delle nostre stime. Possiamo scrivere l'energia totale del sistema come

\begin{eqnarray}
U^{\Lambda}&\equiv&\sum_{X\subset\Lambda}\alpha_{X}\left(\underline{\sigma}_{X}\right)\nonumber\\&=&\sum_{i=0}^{l}U^{B}\left(\beta_{i}\right)+\sum_{i=0}^{l-1}\left[W(\beta_{i},\eta_{i},\beta_{i+1})\right]\nonumber\\&&+ \sum_{X\not\subset B_{i}\cup H_{i}\cup B_{i+1}\,\forall i}\alpha_{X}(\underline{\sigma}_{X}), \label{eq:appD5}
\end{eqnarray}

dove

\begin{eqnarray}
U^{B}(\beta_{i})&\equiv&\sum_{X\subset B_{i}}\alpha_{X}(\underline{\sigma}_{X})\\
W(\beta_{i},\eta_{i},\beta_{i+1})&\equiv& \sum_{X\subset B_{i}\cup H_{i}\cup B_{i+1}: X\cap H_{i}\neq\emptyset}\alpha_{X}(\underline{\sigma}_{X});
\end{eqnarray}

dunque, l'ultimo termine nella (\ref{eq:appD5}) coinvolge potenziali che intersecano pi\` u di un $H_{i}$, {\em i.e.} sono i potenziali a molti corpi nell'$H$-reticolo. Quindi, possiamo riscrivere la funzione di partizione come

\begin{eqnarray}
Z_{\Lambda} &=& \sum_{\beta_{0},...,\beta_{l}}\sum_{\eta_{0},...,\eta_{l-1}}\left(\prod_{i=0}^{l}e^{-U^{B}\left(\beta_{i}\right)}\right)\left(\prod_{i=0}^{l-1}e^{W(\beta_{i},\eta_{i},\beta_{i+1})}\right)e^{-\sum^{*}\alpha_{X}\left(\underline{\sigma}_{X}\right)}, \label{eq:appD6}
\end{eqnarray}

dove con l'asterisco indichiamo la condizione $X\not\subset B_{i}\cup H_{i}\cup B_{i+1}\,\forall i$. Definendo

\begin{equation}
Z_{h}\left(\beta_{i},\beta_{i+1}\right)\equiv\sum_{\eta}e^{W\left(\beta_{i},\eta_{i},\beta_{i+1}\right)}, \label{eq:appD7}
\end{equation}

moltiplicando e dividendo per $\prod_{i=0}^{l-1}Z_{h}\left(\beta_{i},\beta_{i+1}\right)$ la (\ref{eq:appD6}) diventa

\begin{eqnarray}
Z_{\Lambda} &=& \sum_{\beta_{0},...,\beta_{l}}\left(\prod_{i=0}^{l}e^{-U^{B}\left(\beta_{i}\right)}\right)\left(\prod_{i=0}^{l-1}Z_{h}\left(\beta_{i},\beta_{i+1}\right)\right)\nonumber\\&&\times\frac{\sum_{\eta_{0},...,\eta_{l-1}}\left(\prod_{i=0}^{l-1}e^{-W\left(\beta_{i},\eta_{i},\beta_{i+1}\right)}\right)e^{-\sum^{*}\alpha_{X}\left(\underline{\sigma}_{X}\right)}}{\prod_{i=0}^{l-1}Z_{h}\left(\beta_{i},\beta_{i+1}\right)}; \label{eq:appD8} 
\end{eqnarray}

per ricondurci ad un modello di spin $\beta_{i}$ interagenti attraverso un potenziale effettivo dobbiamo essere in grado di stimare il rapporto nella (\ref{eq:appD8}). I blocchi di spin $H_{i}$ interagiscono attraverso i potenziali a molti corpi

\begin{equation}
\widetilde{\alpha}_{H_{i},...,H_{i+q}}\left(\eta_{i},...,\eta_{i+q}\right)=\sum_{X\cap H_{k}\neq \emptyset, k=i,...,i+q}\alpha_{X}\left(\underline{\sigma}_{X}\right); \label{eq:appD9}
\end{equation}

in particolare, ci interessa stimare la quantit\` a

\begin{equation}
\sup_{H}\sum_{\mathcal{H}\ni H:\left|\mathcal{H}\right|>1}\sup_{\underline{\eta}_{\mathcal{H}}}\left|\widetilde{\alpha}_{H_{1},...,H_{k}}\left(\eta_{H_{1}},...,\eta_{H_{k}}\right)\right|e^{\frac{1}{2}(k-1)\tau \kappa}, \label{eq:appD10}
\end{equation}

dove $\mathcal{H}=\left(H_{1},...,H_{k}\right)$ \` e un ``intervallo'' nell' $H$-reticolo e $\underline{\eta}_{\mathcal{H}}=\left(\eta_{H_{1}},...,\eta_{H_{k}}\right)$. Possiamo maggiorare la (\ref{eq:appD10}) nel modo seguente:

\begin{eqnarray}
(\ref{eq:appD10})&=& \sup_{H}\sum_{\mathcal{H}\ni H: \left|\mathcal{H}\right|>1}\sup_{\underline{\eta}_{\mathcal{H}}}\left|\sum_{X\cap H_{i}\neq \emptyset,i=1,...,\left|\mathcal{H}\right|}\alpha_{X}\left(\underline{\sigma}_{X}\right)\right|e^{\frac{\kappa\tau}{2}\left|k-1\right|}\nonumber\\&\leq&\sup_{H}\sum_{\mathcal{H}\ni H: \left|\mathcal{H}\right|>1}\sum_{X\cap H_{i}\neq \emptyset, i=1,...,\left|\mathcal{H}\right|}\sup_{\underline{\sigma}_{X}}\left|\alpha_{X}\left(\underline{\sigma}_{X}\right)\right|e^{\kappa\delta\left(X\right)}e^{-\frac{\kappa}{2}\delta\left(X\right)}\nonumber\\&\leq&\left\|\overline{\alpha}\right\|_{\kappa}\sum_{k\geq 2}e^{-\frac{\kappa\tau}{2}(k-1)} = m\left\|\overline{\alpha}\right\|_{\kappa}e^{-\frac{\kappa\tau}{2}},\qquad m =\frac{1}{1-e^{-\frac{\kappa\tau}{2}}}, \label{eq:appD11} 
\end{eqnarray}

dove $\left\|\overline{\alpha}\right\|_{\kappa}\equiv \sup_{\xi}\sum_{X\ni \xi,\left|X\right|>1}\left\|\alpha_{X}\right\|e^{\kappa\delta(X)}$, se $\left\|\alpha_{X}\right\|\equiv \max_{\underline{\sigma}_{X}}\left|\alpha_{X}\left(\underline{\sigma}_{X}\right)\right|$. La stima (\ref{eq:appD11}) \` e ci\` o di cui avremo bisogno per poter usare le tecniche della cluster expansion, che introdurremo nella prossima sezione.

\section{Cluster expansion}

La {\em cluster expansion} \` e una tecnica che permette di studiare somme tipo

\begin{equation}
Z_{\Lambda}^{\Phi}=\sum_{\underline{\sigma}_{\Lambda}}e^{-\sum_{Y\subset\Lambda}\Phi_{Y}\left(\underline{\sigma}_{Y}\right)}, \label{eq:appD12}
\end{equation}

dove $\left\{\Phi_{Y}\right\}$ sono potenziali che in generale possono non essere invarianti per traslazioni, e $\Lambda$ \` e un sottoinsieme finito $\mathbb{Z}^{d}$; come abbiamo visto pi\` u volte, le {\em funzioni di partizione} dei modelli di spin rientrano nella classe di somme (\ref{eq:appD12}). La nostra tecnica si basa su  un risultato fondamentale dovuto a Ruelle, \cite{RuStat}, che enuncieremo e dimosteremo nella prossima sezione.

\subsection{Polimerizzazione}\label{sez:pol}

Chiamiamo {\em polimero} qualunque insieme connesso finito $\emptyset\neq \gamma\subset \Lambda$, e indichiamo con $\Gamma$ una {\em configurazione di polimeri} $\Gamma=\left\{\gamma_{1},...,\gamma_{n}\right\}$; non imporremo nessuna restrizione sui polimeri che formano $\Gamma$, {\em i.e.} permetteremo che i polimeri $\gamma_{i}$ si sovrappongano e anche che coincidano.

Definiamo {\em attivit\` a di un polimero} una qualsiasi funzione $\zeta(\gamma)$ definita sull'insieme dei polimeri $\gamma\subset\Lambda$, e l'{\em attivit\` a di una configurazione di polimeri} $\zeta\left(\Gamma\right)$ come

\begin{equation}
\zeta\left(\Gamma\right)=\prod_{\gamma\in\widetilde{\Gamma}}\zeta(\gamma)^{\Gamma(\gamma)},
\end{equation}

dove $\Gamma(\gamma)$ \` e la {\em molteplicit\` a del polimero $\gamma\in\Gamma$}; quindi $\zeta\left(\Gamma\right)$ \` e una funzione moltiplicativa,

\begin{equation}
\zeta\left(\Gamma_{1}+\Gamma_{2}\right)=\zeta\left(\Gamma_{1}\right)\zeta\left(\Gamma_{2}\right), \label{eq:appD012}
\end{equation}

e imporremo che $\zeta(\emptyset)\equiv 1$. Diremo che una configurazione $\Gamma$ \` e {\em compatibile} se non ci sono sovrapposizioni tra polimeri, {\em i.e.} se $\gamma_{i}\cap\gamma_{j}=\emptyset$ per ogni $\gamma_{i},\gamma_{j}\in\Gamma$. La {\em funzione di partizione delle configurazioni compatibili}  \` e data da

\begin{equation}
Z(\Lambda)\equiv \sum_{\Gamma\subset\Lambda, \gamma_{i}\cap\gamma_{j}=\emptyset}\zeta(\Gamma); \label{eq:appD13}
\end{equation}

definendo $\varphi(\Gamma)$, $\chi_{V}(\Gamma)$ tramite le propriet\` a

\begin{eqnarray}
\varphi(\Gamma)&=&\left\{\begin{array}{cc} 1&\mbox{se $\Gamma$ \`e compatibile} \\  0 &\mbox{altrimenti} \end{array}\right.\\
\chi_{V}(\Gamma)&=&\left\{\begin{array}{cc} 1&\mbox{se ogni $\gamma\in\Gamma$ \` e contenuto in $V$} \\ 0 & \mbox{altrimenti} \end{array}\right., \label{eq:appD013}
\end{eqnarray}

la (\ref{eq:appD13}) diventa

\begin{equation}
Z(\Lambda)=\sum_{\Gamma}\varphi(\Gamma)\zeta(\Gamma)\chi_{\Lambda}(\Gamma). \label{eq:appD14}
\end{equation}

Introduciamo $\mathbf{1}(\Gamma)$ come

\begin{equation}
\mathbf{1}(\Gamma)=\left\{\begin{array}{cc} 1(\Gamma)&\mbox{se $\Gamma=\emptyset$}\\ 0&\mbox{altrimenti} \end{array}\right.,
\end{equation}

e sia $\Psi$ una funzione sullo spazio dei polimeri del tipo

\begin{equation}
\Psi(\Gamma)=\mathbf{1}+\Psi_{0}(\Gamma)\qquad\mbox{con $\Psi_{0}(\emptyset)=0$};
\end{equation}

possiamo definire il {\em logaritmo} e l'{\em esponenziale} di $\Psi(\Gamma)$ come

\begin{eqnarray}
\left(\mbox{Exp}\,\Psi\right)(\Gamma)&=&\mathbf{1}(\Gamma)+\sum_{n\geq 1}\frac{1}{n!}\sum_{\Gamma_{1}+...+\Gamma_{n}=\Gamma}\Psi(\Gamma_{1})...\Psi(\Gamma_{n})\\
\left(\mbox{Log}\,\Psi\right)(\Gamma)&=&\sum_{n\geq 1}\frac{(-1)^{n+1}}{n}\sum_{\Gamma_{1}+...+\Gamma_{n}=\Gamma}\Psi_{0}(\Gamma_{1})...\Psi_{0}(\Gamma_{n}). \label{eq:deflog}
\end{eqnarray}

Quindi, chiamando $\varphi^{T}(\Gamma)$ il logaritmo di $\varphi(\Gamma)$ si ha che, poich\' e $\chi_{\Lambda}(\Gamma)$, $\zeta(\Gamma)$ sono due funzioni moltiplicative\footnote{Notare che nella (\ref{eq:appD15}) il termine $\mathbf{1}(\Gamma)\zeta(\Gamma)\chi_{\Lambda}(\Gamma)$ \` e diverso da zero solo per $\Gamma=\emptyset$, e per questo valore \` e uguale a $1$.},

\begin{eqnarray}
Z_{\Lambda}=\sum_{\Gamma}\left(\mbox{Exp}\,\varphi^{T}\right)(\Gamma)\zeta(\Gamma)\chi_{\Lambda}(\Gamma)=\sum_{\Gamma}\left[\mathbf{1}(\Gamma)\zeta(\Gamma)\chi_{\Lambda}(\Gamma)\right.\\\left.+\sum_{n\geq 1}\frac{1}{n!}\sum_{\Gamma_{1}+...+\Gamma_{n}=\Gamma}\varphi^{T}(\Gamma_{1})\zeta(\Gamma_{1})\chi_{\Lambda}(\Gamma_{1})+...+\varphi^{T}(\Gamma_{1})\zeta(\Gamma_{1})\chi_{\Lambda}(\Gamma_{1})\right]\nonumber\\=e^{\sum_{\Gamma}\varphi^{T}(\Gamma)\zeta(\Gamma)\chi_{\Lambda}(\Gamma)}. \label{eq:appD15}
\end{eqnarray}

Siamo pronti per enunciare e discutere il seguente importante risultato.

\begin{prop}\label{prop:appD1}{{\em (Polimeri e cluster expansion)}}

Sia $\Lambda\subset\mathbb{Z}^{d}$ un insieme finito e sia $\zeta(\gamma)$ l'attivit\` a dei polimeri che verifica la condizione

\begin{equation}
\left|\zeta(\gamma)\right|\leq b_{0}^{2}\nu_{0}^{2|\gamma|}e^{-2\kappa_{0}\delta(\gamma)}\equiv z(\gamma)^{2}, \label{eq:appD014}
\end{equation}

per opportune costanti $b_{0},\nu_{0},\kappa_{0}>0$. Introducendo

\begin{equation}
k\left(d,\kappa_{0}\right)\equiv 2e\sum_{0\neq \xi\in\mathbb{Z}^{d}}e^{-\kappa_{0}|\xi|}, \label{eq:appD015}
\end{equation}

supponiamo che si abbia

\begin{equation}
\left[(1+B_{0})e^{B_{0}\nu_{0}}+k\left(d,\kappa_{0}\right)\right]<1 \label{eq:dimpol00}
\end{equation}

con $B_{0}=2b_{0}$. Allora la somma $Z_{\Lambda}$ nella (\ref{eq:appD13}) pu\` o essere scritta come

\begin{equation}
Z_{\Lambda}=\exp\left(\sum_{\Gamma\subset\Lambda}\varphi^{T}(\Gamma)\zeta(\Gamma)\right), \label{eq:appD16}
\end{equation}

con

\begin{equation}
\sup_{\xi\in\Lambda}\sum_{\Gamma\ni\xi, \delta(\Gamma)\geq r}\left|\varphi^{T}(\Gamma)\right|\left|\zeta(\gamma)\right|<B_{0}\nu_{0}e^{-\kappa_{0}r}.
\end{equation}

\end{prop}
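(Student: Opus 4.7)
The plan is to build directly on the identity
\[
Z_\Lambda \;=\; \exp\Bigl(\sum_{\Gamma}\varphi^T(\Gamma)\,\zeta(\Gamma)\,\chi_\Lambda(\Gamma)\Bigr),
\]
which was already established in (\ref{eq:appD15}) through the multiplicativity of $\zeta$ and $\chi_\Lambda$ together with the $\mathrm{Exp}/\mathrm{Log}$ formalism. The first step is the structural remark that, since $\varphi^T = \mathrm{Log}\,\varphi$ and $\varphi_0(\Gamma) = \varphi(\Gamma)-\mathbf{1}(\Gamma)$ is supported on configurations containing at least one incompatible pair, $\varphi^T(\Gamma)$ vanishes unless the \emph{incompatibility graph} of $\Gamma$ (vertices $=$ polymers of $\Gamma$, edges $=$ pairs with $\gamma_i\cap\gamma_j\neq\emptyset$) is connected. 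In particular the support of $\Gamma$ is then a connected subset of $\Lambda$, so the sum in (\ref{eq:appD16}) can be rewritten as a sum over connected clusters, and all the work concentrates on estimating it.

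The heart of the proof will be a tree-graph bound for $|\varphi^T(\Gamma)|$. Writing the inclusion-exclusion series (\ref{eq:deflog}) and reorganising it in the standard Mayer/Penrose way one obtains
\[
|\varphi^T(\Gamma)| \;\le\; \frac{1}{\Gamma!}\,\sum_{T\in\mathcal{T}(\Gamma)}\;\prod_{\{i,j\}\in T}\mathbf{1}_{\gamma_i\cap\gamma_j\neq\emptyset},
\]
where $\mathcal{T}(\Gamma)$ is the set of spanning trees on the vertex set of $\Gamma$. This is the only genuinely combinatorial input and it is the step I expect to be the main obstacle; everything that follows is analytic estimation.

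Once the tree bound is in place, I would fix a site $\xi\in\Lambda$ and estimate
\[
\sum_{\Gamma\ni\xi}|\varphi^T(\Gamma)|\,|\zeta(\Gamma)|
\]
by induction on the size of the tree. Splitting the hypothesis $|\zeta(\gamma)|\le z(\gamma)^2$ as $z(\gamma)\cdot z(\gamma)$, one factor is used to run the recursion: at each vertex of the tree one sums over polymers through a fixed point, picking up a factor $\sup_\eta \sum_{\gamma\ni\eta} z(\gamma) \le B_0\nu_0$ (by the hypothesis $|\gamma|\ge 1$ and the exponential weight), while each edge of the tree contributes at most $k(d,\kappa_0)$ through the geometric sum $\sum_{0\neq\xi\in\mathbb Z^d} e^{-\kappa_0|\xi|}$ coming from the requirement that two successive polymers overlap. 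Cayley's formula counts trees and the smallness condition $(1+B_0)e^{B_0\nu_0}+k(d,\kappa_0)<1$ is exactly what is needed to sum the resulting geometric series and obtain absolute convergence.

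The second copy of $z(\gamma)$ is then spent on the exponential decay in the diameter: for any cluster with $\delta(\Gamma)\ge r$ the product $\prod z(\gamma)$ carries a factor at least $e^{-\kappa_0\delta(\Gamma)}\le e^{-\kappa_0 r}$, which can be pulled out of the sum. Combining this with the convergent estimate from the first copy yields
\[
\sup_{\xi\in\Lambda}\sum_{\Gamma\ni\xi,\;\delta(\Gamma)\ge r}|\varphi^T(\Gamma)|\,|\zeta(\Gamma)| \;<\; B_0\nu_0\,e^{-\kappa_0 r},
\]
which is the exponential-decay statement of the proposition. In particular the exponent in (\ref{eq:appD16}) converges absolutely, and one may legitimately drop the cut-off $\chi_\Lambda$ up to boundary terms controlled by the same estimate. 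The hard part is really the Mayer/Penrose tree bound on $|\varphi^T|$; the subsequent estimates are of Kirkwood--Salsburg type and are standard once that input is granted.
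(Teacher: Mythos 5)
Your route is genuinely different from the paper's. You want to control $\left|\varphi^{T}(\Gamma)\right|$ by the Penrose/Mayer tree--graph inequality and then do vertex/edge bookkeeping (vertex factor $B_{0}\nu_{0}$, edge factor $k(d,\kappa_{0})$, a Cayley-type count, geometric series). The paper never bounds $\varphi^{T}$ by spanning trees at all: setting $\overline{\varphi}=\zeta\varphi$, it introduces the polymer derivative $D_{\gamma}$ and the quantity $\Delta_{\Gamma}(H)=\left(\overline{\varphi}^{-1}\ast D_{\Gamma}\overline{\varphi}\right)(H)$, derives the Kirkwood--Salsburg-type recursion (\ref{eq:dimpol4}) expressing $\Delta$ for $m+1$ polymers through $\Delta$ for $m$ polymers, and closes an induction on $m$ via the norms $I_{m}$ of (\ref{eq:dimpol04}), getting $I_{m}<2^{-m}$; spanning trees enter only in the elementary estimate (\ref{eq:dimpol13}) of $\sup_{\xi}\sum_{\gamma\ni\xi}z(\gamma)$, where a single polymer (a connected set of sites) is dominated by a tree on its sites, with the count $e^{n}n!$ and the factor $k(d,\kappa_{0})$. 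Both arguments split the hypothesis $|\zeta(\gamma)|\le z(\gamma)^{2}$ identically --- one copy of $z$ drives the convergence, the other is saved to produce the decay $e^{-\kappa_{0}r}$ --- so your architecture is parallel, but the engine is different: tree--graph bound versus algebraic recursion. Your engine, if supplied, is the more standard modern one and would give an arguably cleaner convergence proof.

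The caveat is that you leave the engine unproven: the tree--graph bound on $\left|\varphi^{T}\right|$ is precisely the step you yourself flag as the main obstacle, and nothing already established in the paper (which proves even the Mayer-coefficient formula (\ref{eq:appD17}) from scratch) can be invoked for it; you would have to supply the Penrose argument, or Rota's theorem on the M\"obius function of connected subgraphs, for the hard-core pair factors $g_{ij}\in\{0,-1\}$. Two smaller points: the vertex bound $\sup_{\eta}\sum_{\gamma\ni\eta}z(\gamma)\le B_{0}\nu_{0}$ does not follow merely ``from the exponential weight'' --- it itself needs the spanning-tree count over the sites of $\gamma$ together with the smallness of $k(d,\kappa_{0})\nu_{0}$, which is the content of (\ref{eq:dimpol13}); and you never verify that your bookkeeping reproduces the specific constant $B_{0}\nu_{0}e^{-\kappa_{0}r}$ of the statement rather than a bound of the same flavour with different constants. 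These are fixable, but as written the decisive steps are asserted rather than proved.
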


\begin{proof}

Chiamando $\widetilde{\Gamma}$ l'insieme dei polimeri distinti appartenenti a $\Gamma=\left\{\gamma_{1},...,\gamma_{n}\right\}$, ponendo $\Gamma!\equiv \prod_{\gamma\in \widetilde{\Gamma}}\Gamma(\gamma)!$ e chiamando $G$ il grafo formato dai nodi $\{1,...,n\}$ e dagli archi $(i,j)$, $i,j\in\{1,...,n\}$, posti in corrispondenza di una sovrapposizione tra due polimeri $\gamma_{i},\gamma_{j}\in\Gamma$, la funzione $\varphi^{T}(\Gamma)$ che compare nella (\ref{eq:appD16}) pu\` o essere scritta come

\begin{equation}
\varphi^{T}(\Gamma)=\frac{1}{\Gamma!}\sum^{*}_{C\subseteq G}(-1)^{\mbox{(numero di archi in $C$)}}. \label{eq:appD17}
\end{equation}

La (\ref{eq:appD17}) prende il nome di {\em formula dei coefficienti di Mayer}, e l'asterisco indica che la somma coinvolge tutti i sottografi connessi\footnote{Un {\em grafo connesso} \` e un grafo nel quale \` e possibile arrivare in qualunque nodo partendo da qualsiasi altro spostandosi sugli gli archi.} di $G$ {che visitano tutti i nodi}; se $G$ non \` e connesso la somma non ha nessun addendo, dunque in questo caso $\varphi^{T}(\Gamma)\equiv 0$.

Per convincerci della (\ref{eq:appD17}) dimostriamo che vale l'implicazione $(\ref{eq:appD17})\Rightarrow \varphi^{T}(\Gamma)=\left(\mbox{Log}\,\varphi\right)(\Gamma)$. Per fare ci\` o, introduciamo una generica configurazione di polimeri $\Gamma=\left\{\gamma_{1},...,\gamma_{n}\right\}$, dove ogni polimero ha una molteplicit\` a $\Gamma(\gamma_{i})$, $i=1,...,n$; consideriamo distinte le $\Gamma(\gamma_{i})$ copie del polimero $i$-esimo aggiungendo ad ogni copia un nuovo indice $1,2...,\Gamma(\gamma_{i})$. Introducendo la funzione $g_{ij}$ come

\begin{equation}
g_{ij}=\left\{\begin{array}{cc}-1 & \mbox{se $\gamma_{i}\cap\gamma_{j}\neq \emptyset$} \\ 0 & \mbox{altrimenti}\end{array}\right.,
\end{equation}

poich\' e $\varphi(\Gamma)$ \` e definita dalla propriet\` a (\ref{eq:appD013}), possiamo porre

\begin{equation}
\varphi(\Gamma)\equiv \prod_{i<j}\left(1+g_{ij}\right); \label{eq:dimpol0}
\end{equation}

\` e chiaro che $\varphi(\Gamma)\equiv \frac{\varphi(\Gamma)}{\Gamma!}$, dal momento che se $\Gamma!>1$ allora $\varphi(\Gamma)=0$.

Quindi,

\begin{eqnarray}
\varphi(\Gamma)&=&\prod_{i<j}\left(1+g_{ij}\right)=\sum_{k}\sum_{C_{1}...C_{k}\;non\;ordinati,\;C_{1}+...+C_{k}=\Gamma}\prod_{j}C_{j}!\varphi^{T}(\Gamma)\nonumber\\&=&\sum_{k}\frac{1}{k!}\sum_{C_{1},...,C_{k}\;ordinati,\; C_{1}+...+C_{k}=\Gamma}^{**}\frac{\Gamma!}{\prod_{m}C_{m}!}\prod_{j}C_{j}!\varphi^{T}\left(C_{j}\right)\nonumber\\&=&\Gamma!\left(\mbox{Exp}\,\varphi^{T}\right)(\Gamma)\Rightarrow \left(\mbox{Log}\,\varphi\right)(\Gamma)=\varphi^{T}(\Gamma);
\end{eqnarray}

nella prima somma l'ordine non d\` a luogo a nuovi addendi, {\em i.e.} le permutazioni di $\left\{C_{1},...,C_{k}\right\}$ non vengono sommate, mentre nella seconda l'ordine conta (fattore $\frac{1}{k!}$) e con il doppio asterisco indichiamo che smettiamo di considerare distinti polimeri uguali (fattore $\frac{\Gamma!}{\prod_{m}C_{m}!}$). Definiamo la {\em derivata} di una funzione definita sui polimeri come

\begin{equation}
(D_{\Gamma}\Psi)(\Gamma)\equiv \psi(\Gamma+H)\frac{(\Gamma+H)!}{H!},\label{eq:dimpol1}
\end{equation}

se $\Gamma!$ \`e la molteplicit\` a della configurazione $\Gamma$, {\em i.e.} il prodotto delle singole molteplicit\` a che formano la configurazione di polimeri; indicando con $\ast$ l'operazione di {\em convoluzione} tra funzioni di polimeri, {\em i.e.}

\begin{equation}
(\Psi_{1}\ast\Psi_{2})(\Gamma)=\sum_{\Gamma_{1}+\Gamma_{2}=\Gamma}\Psi_{1}(\Gamma_{1})\Psi_{2}(\Gamma_{2}),
\end{equation}

 \` e possibile dimostrare che valgono le propriet\` a

\begin{eqnarray}
D_{\gamma}(\Psi_{1}\ast\Psi_{2})&=&(D_{\gamma}\Psi_{1})\ast\Psi_{2}+\Psi_{1}\ast(D_{\gamma}\Psi_{2})\\
D_{\gamma}\mbox{Exp}\,\Psi&=&D_{\gamma}\Psi\ast\mbox{Exp}\,\Psi.\label{eq:dimpol11}
\end{eqnarray}

Ponendo

\begin{equation}
\overline{\varphi}(\Gamma)\equiv\zeta(\Gamma)\varphi(\Gamma)
\end{equation}

che implica\footnote{Questo fatto \` e una conseguenza della definizione di logaritmo, {\em cf.} formula (\ref{eq:deflog}), e della propriet\` a delle funzioni moltiplicative (\ref{eq:appD012}).} $\mbox{Log}\,\overline{\varphi}(\Gamma)=\overline{\varphi}^{T}(\Gamma)=\zeta(\Gamma)\varphi^{T}(\Gamma)$, grazie alla definizione (\ref{eq:dimpol1}) introduciamo la funzione $\Delta_{\Gamma}$ come, indicando con $\Psi^{-1}$ l'{\em inversa} di $\Psi$, {\em i.e.} la funzione tale che $\Psi^{-1}\ast\Psi(\Gamma)=\textbf{1}(\Gamma)$,

\begin{equation}
\Delta_{\Gamma}(H)=\left(\overline{\varphi}^{-1}\ast D_{\Gamma}\overline{\varphi}\right)(H)=\sum_{H_{1}+H_{2}=H}\overline{\varphi}^{-1}(H_{1})\overline{\varphi}(\Gamma+H_{2}); \label{eq:dimpol2}
\end{equation}

\` e importante notare che $D_{\Gamma}\overline{\varphi}(H_{2})=\overline{\varphi}(\Gamma+H_{2})$, dal momento che $\overline{\varphi}(\Gamma+H_{2})$ \` e identicamente uguale a zero se $(\Gamma+H_{2})!>1$, quindi quando $\overline{\varphi}$ \` e diversa da zero abbiamo che $\frac{(\Gamma+H_{2})!}{H_{2}!}=1$. Quindi, poich\' e $\overline{\varphi}$ si annulla se $\Gamma!$ non \` e compatibile, {\em i.e.} se due polimeri che appartengono a $\Gamma$ si sovrappongono, la (\ref{eq:dimpol2}) implica che

\begin{equation}
\Delta_{\Gamma}(H)=0\qquad\mbox{se $\Gamma$ non \` e compatibile.}
\end{equation}

Dalla (\ref{eq:dimpol0}) sappiamo che $\varphi(\gamma_{1},...,\gamma_{n})=\prod_{i<j}\left(1+g(\gamma_{i},\gamma_{j})\right)$, dunque

\begin{eqnarray}
\overline{\varphi}(\gamma+\Gamma+H_{2})&=&\zeta(\gamma)\overline{\varphi}(\Gamma+H_{2})\prod_{\gamma'\in H_{2}}\left(1+g(\gamma,\gamma')\right)\nonumber\\&=&\zeta(\gamma)\overline{\varphi}(\Gamma+H_{2})\sum_{S\subseteq H_{2}}^{*}(-1)^{N(S)}, \label{eq:dimpol3}
\end{eqnarray}

se con l'asterisco indichiamo che stiamo sommando su tutti gli insiemi di polimeri $S$ incompatibili con $\gamma$, e dove la somma di polimeri $\gamma+H_{2}+\Gamma$ deve essere compatibile (altrimenti $\overline{\varphi}=0$). Se $\gamma$ \` e compatibile con $\Gamma$ dalla definizione (\ref{eq:dimpol2}) otteniamo che, ponendo $H_{2}=S+H_{3}$ e grazie alla propriet\` a (\ref{eq:dimpol3}),

\begin{eqnarray}
\Delta_{\gamma+\Gamma}(H)&=&\zeta(\gamma)\sum_{S\subseteq H}^{*}(-1)^{N(S)}\sum_{H_{1}+H_{3}=H-S}\overline{\varphi}^{-1}(H_{1})\overline{\varphi}(\Gamma+S+H_{3})\nonumber\\&=&\zeta(\gamma)\sum_{S\subseteq H}^{*}(-1)^{N(S)}\Delta_{S+\Gamma}(H-S).\label{eq:dimpol4}
\end{eqnarray} 

\` E importante notare che grazie all'espressione (\ref{eq:dimpol4}) possiamo determinare una funzione di $N(H)+N(\Gamma)+\gamma=m+1$ polimeri attraverso quantit\` a che dipendono da $N(H)+N(\Gamma)=m$  polimeri; grazie a questa propriet\` a saremo in grado di fare la seguente stima. Definiamo $I_{m}$ come

\begin{equation}
I_{m}\equiv \sup_{\gamma_{1},...,\gamma_{q}, m\geq q\geq 1}\sum_{H:N(H)=m-q}\left|\Delta_{\{\gamma_{1},...,\gamma_{q}\}}(H)\right|\prod_{i=1}^{q}z\left(\gamma_{i}\right)^{-1}; \label{eq:dimpol04}
\end{equation}

quindi,

\begin{eqnarray}
I_{m+1}&=&\sup_{\Gamma, 1\leq|\Gamma|\leq m}\sum_{H:N(\Gamma)+N(H)=m}\left|\Delta_{\gamma+\Gamma}(H)\right|z(\gamma+\Gamma)^{-1}\nonumber\\
&\leq&\sup_{\Gamma, 1\leq |\Gamma|\leq m}\sum_{H:N(\Gamma)+N(H)=m}\sum_{S\subseteq H}^{*}\left|\Delta_{\Gamma+S}(H-S)\right|\left|z(\gamma+\Gamma)\right|^{-1}\left|\zeta(\gamma)\right|\nonumber\\
&\leq& \sup_{\Gamma, 1\leq |\Gamma|\leq m}z(\gamma)\sum_{S:N(S)\leq m-N(\Gamma)}^{*}I_{m}z(S)\leq z(\gamma)I_{m}\sum_{n\geq 0}\frac{1}{n!}\left(\sum_{\gamma'\cap\gamma\neq\emptyset}z(\gamma')\right)^{n}\nonumber\\
&\leq& z(\gamma)\exp{\sum_{\xi\in\gamma}\sum_{\gamma'\ni \xi}z(\gamma')} \leq z(\gamma)I_{m}\exp{|\gamma|\sup_{\xi\in\mathbb{Z}^{d}}\sum_{\gamma'\ni \xi}z(\gamma')}\nonumber\\&\leq&z_{1}(\gamma)I_{m},\label{eq:dimpol9}
\end{eqnarray}

dove

\begin{equation}
z_{1}(\gamma)=b_{0}\nu_{0}^{|\gamma|}e^{-\kappa_{0}\delta(\gamma)}e^{|\gamma|\sup_{\xi\in\mathbb{Z}^{d}}\sum_{\gamma'\ni \xi}z(\gamma')}. \label{eq:dimpol5}
\end{equation}

La quantit\` a all'argomento dell'esponenziale nella (\ref{eq:dimpol5}) pu\` o essere scritta esplicitamente come, {\em cf.} formula (\ref{eq:appD014}),

\begin{equation}
\sup_{\xi\in\mathbb{Z}^{d}}\sum_{\gamma'\ni \xi} z(\gamma')=\sup_{\xi\in\mathbb{Z}^{d}}\sum_{\gamma'\ni \xi}b_{0}\nu_{0}^{|\gamma|}e^{-\kappa_{0}\delta(\gamma)}; \label{eq:dimpol6}
\end{equation}

\` e facile dimostrare che

\begin{equation}
\sup_{\xi\in\mathbb{Z}^{d}}\sum_{\gamma'\ni \xi}b_{0}\nu_{0}^{|\gamma|}e^{-\kappa_{0}\delta(\gamma)}\leq B_{0}\nu_{0}.
\end{equation}

Infatti, consideriamo in generale

\begin{equation}
\sup_{\xi\in\mathbb{Z}^{d}}\sum_{\gamma\ni\xi,\delta(\gamma)\geq r}b_{0}\nu_{0}^{|\gamma|}e^{-\kappa_{0}\delta(\gamma)}; \label{eq:dimpol7}
\end{equation}

dal momento che il numero di alberi con $n+1$ nodi di cui uno \` e una {\em radice}, {\em i.e.} \` e fisso, \` e maggiorato da\footnote{Ci\` o pu\` o essere dimostrato nel modo seguente. Consideriamo $n+1$ segmenti orientati indicizzati $1,2...,n+1$, chiamiamo {\em origine} $v$ di un segmento il suo punto iniziale e {\em punti finali} gli altri estremi. Tra questi, scegliamo un segmento e chiamiamo il suo punto finale {\em radice}; questo segmento \` e {\em fisso}, {\em i.e.} non parteciper\` a alla costruzione del grafo. In seguito, ``attacchiamo'' uno dopo l'altro i segmenti rimanenti ai punti finali dei segmenti gi\` a disposti. Chiamando ``nodi'' i punti iniziali di ogni segmento tranne quello associato alla radice, ogni nodo $v$ avr\` a $d_{v}$ segmenti entranti; quindi, il numero $\mathcal{N}(n)$ di possibili grafi distinti pu\` o essere maggiorato con

\begin{equation}
\mathcal{N}(n)=n!\prod_{i=1}^{n}\sum_{d_{i}=1}^{n,*}\frac{1}{d_{i}!}\leq e^{n}n!,
\end{equation}

dove con l'asterisco imponiamo la condizione $\sum_{i=1}^{n}d_{i}=n$, con il fattore $n!$ teniamo conto della permutazione dei vertici e i rapporti $\frac{1}{d_{i}!}$ servono a non contare le permutazioni equivalenti di rami che insistono sullo stesso nodo.

} $e^{n}n!$, chiamando $\delta_{\vartheta}(\xi,\xi_{1},...,\xi_{p})$ la lunghezza dell'albero $\vartheta$ che ha per nodi i punti $\xi,\xi_{1},...,\xi_{p}$ e $\xi$ come radice, possiamo stimare la (\ref{eq:dimpol7}) come, indicando con l'asterisco la condizione\footnote{Ricordiamo che $\delta(\gamma)=\min_{\vartheta}\delta_{\vartheta}(\gamma)$.} $\delta(\xi,\xi_{1},...,\xi_{q})\geq r$:

\begin{eqnarray}
\sup_{\xi\in\mathbb{Z}^{d}}\sum_{\gamma\ni\xi,\delta(\gamma)\geq r}b_{0}\nu_{0}^{|\gamma|}e^{-\kappa_{0}\delta(\gamma)}&=&\sup_{\xi\in\mathbb{Z}^{d}}\sum_{q\geq 0}\frac{1}{q!}\sum_{0\neq \xi_{1}\in\mathbb{Z}^{d}}...\sum_{0\neq \xi_{q}\in\mathbb{Z}^{d}}^{*}\sum_{\vartheta}e^{-\kappa_{0}\delta_{\vartheta}(\xi,\xi_{1},...,\xi_{q})}\nonumber\\&=&\sup_{\xi\in\mathbb{Z}^{d}}\sum_{q\geq 0}\frac{1}{q!}\sum_{\vartheta}\sum_{0\neq \xi_{1}\in\mathbb{Z}^{d}}...\sum_{0\neq \xi_{q}\in\mathbb{Z}^{d}}^{*}e^{-\kappa_{0}\delta_{\vartheta}(\xi,\xi_{1},...,\xi_{q})}\nonumber\\&\leq&\nu_{0}b_{0}e^{-\kappa_{0}\delta(r)}\sum_{q\geq 0}\left(e\sum_{0\neq \xi\in\mathbb{Z}^{d}}\nu_{0}e^{-\kappa_{0}|\xi|}\right)\nonumber\\&\leq& \nu_{0}b_{0}e^{-\kappa_{0}\delta(r)}\sum_{q\geq 0}2^{-q}=\nu_{0}B_{0}e^{-\kappa_{0}\delta(r)}, \label{eq:dimpol13}
\end{eqnarray}

dove nella ultima disuguaglianza abbiamo usato che $k(d,\kappa_{0})\nu_{0}<1$, {\em cf.} formula (\ref{eq:dimpol00}), e abbiamo posto $B_{0}=2b_{0}$. Inoltre, grazie alla (\ref{eq:dimpol00}) abbiamo che $\nu_{0}e^{B_{0}\nu_{0}}<1$, $2b_{0}\nu_{0}e^{B_{0}\nu_{0}}<1$ dunque

\begin{equation}
z_{1}(\gamma)<\frac{1}{2};
\end{equation}

inoltre, sempre grazie alla (\ref{eq:dimpol00}), nel caso $m=1$ la (\ref{eq:dimpol04}) diventa

\begin{equation}
I_{1}=\sup_{\gamma}\left|\Delta_{\gamma}(\emptyset)\right|z(\gamma)^{-1}=\sup_{\gamma}\frac{|\zeta(\gamma)|}{z(\gamma)}\leq b_{0}\nu_{0}<\frac{1}{2},
\end{equation}

quindi, in conclusione la (\ref{eq:dimpol9}) implica che

\begin{equation}
I_{m}<2^{-m}\qquad\mbox{per ogni $m\geq 1$}. \label{eq:dimpol10}
\end{equation}

Grazie alla propriet\` a (\ref{eq:dimpol11}) abbiamo che

\begin{equation}
\Delta_{\gamma}(\Gamma)=\left(\overline{\varphi}^{-1}\ast D_{\gamma}\overline{\varphi}\right)(\Gamma)=D_{\gamma}\overline{\varphi}^{T}(\Gamma)=\overline{\varphi}^{T}(\gamma+\Gamma)\frac{(\gamma+\Gamma)!}{\Gamma!},
\end{equation}

quindi,

\begin{equation}
\sum_{\Gamma}\left|\overline{\varphi}^{T}(\gamma+\Gamma)\right|\leq \sum_{m\geq 1}\sum_{\Gamma:N(\Gamma)=m-1}\left|\Delta_{\gamma}(\Gamma)\right|\leq z(\gamma)\sum_{m\geq 1}I_{m}=z(\gamma) \label{eq:dimpol12}
\end{equation}

e dunque, per le (\ref{eq:dimpol13}), (\ref{eq:dimpol12}), se $Q\in\mathbb{Z}^{d}$ \` e un insieme che dista $r$ dal punto $p$:

\begin{equation}
\sum_{\Gamma\ni p:\Gamma\cap Q\neq\emptyset}\left|\overline{\varphi}^{T}(\Gamma)\right|\leq \sum_{\gamma\ni p:\gamma\cap Q\neq\emptyset}\sum_{\Gamma}\left|\overline{\varphi}^{T}(\gamma+\Gamma)\right|\leq \sum_{\gamma\ni p:\gamma\cap Q\neq\emptyset}z(\gamma)\leq B_{0}\nu_{0}e^{-\kappa_{0} r}. \label{eq:dimpol14}
\end{equation}

\end{proof}

Come vedremo nel prossimo corollario, le propriet\` a che abbiamo appena dimostrato con la proposizione \ref{prop:appD1} implicheranno l'esistenza del {\em limite termodinamico} per un gas di polimeri con attivit\` a invarianti per traslazioni.

\begin{cor}{{\em (Analiticit\` a e espansione in polimeri)}}\label{cor:appD1}

Sotto le ipotesi della proposizione \ref{prop:appD1}, se $\zeta(\gamma)$ \`e invariante per traslazioni e analitica in un parametro $\beta$ allora il limite

\begin{equation}
P\equiv \lim_{\Lambda\rightarrow\infty}P_{\Lambda}\equiv \lim_{\Lambda\rightarrow\infty}\Lambda^{-1}\log Z_{\Lambda}
\end{equation}

esiste ed \` e analitico in $\beta$.
\end{cor}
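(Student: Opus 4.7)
The plan is to combine the cluster-expansion representation of $Z_\Lambda$ given by Proposition \ref{prop:appD1} with translation invariance to exhibit $P_\Lambda$ as a per-site sum up to a boundary remainder, and then to deduce analyticity by a Weierstrass-type argument applied to the resulting uniformly convergent series.

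First I would apply Proposition \ref{prop:appD1} to write $\log Z_\Lambda = \sum_{\Gamma\subset\Lambda} \varphi^T(\Gamma)\zeta(\Gamma)$, and reorganize the right-hand side by assigning to each polymer configuration $\Gamma$ a canonical anchor $\xi(\Gamma)\in\mathbb{Z}$ (for definiteness, the leftmost lattice site of its support). Since $\zeta$ is translation invariant and the combinatorial factor $\varphi^T(\Gamma)$ depends only on the relative structure of $\Gamma$, the quantity
\begin{equation}
P \;:=\; \sum_{\Gamma:\,\xi(\Gamma)=0} \varphi^T(\Gamma)\zeta(\Gamma)
\end{equation}
is well defined and independent of the base site. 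The absolute convergence of this series is immediate from the bound $\sup_\xi \sum_{\Gamma\ni\xi}|\varphi^T(\Gamma)||\zeta(\Gamma)|\leq B_0\nu_0$ provided by Proposition \ref{prop:appD1}.

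Next I would estimate the finite-volume defect $\log Z_\Lambda - |\Lambda|P$. Summing the anchored contributions over $\xi\in\Lambda$ reproduces $|\Lambda|P$ minus the terms in which the cluster anchored at $\xi$ fails to lie inside $\Lambda$; such a cluster must have diameter at least $d(\xi,\Lambda^c)$. Using the decay bound
\begin{equation}
\sup_\xi \sum_{\Gamma\ni\xi,\,\delta(\Gamma)\geq r} |\varphi^T(\Gamma)||\zeta(\Gamma)| \;\leq\; B_0\nu_0 e^{-\kappa_0 r}
\end{equation}
from Proposition \ref{prop:appD1}, the remainder is bounded by $\sum_{\xi\in\Lambda} B_0\nu_0 e^{-\kappa_0 d(\xi,\Lambda^c)} = O(1)$ in one dimension (the boundary consists of two points), hence $\Lambda^{-1}\log Z_\Lambda \to P$ as $\Lambda\to\infty$.

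For analyticity, I would note that, by hypothesis, $\zeta(\gamma;\beta)$ is analytic in $\beta$ on some open set $\mathcal{D}$ on which the estimate (\ref{eq:appD014}) and the smallness condition (\ref{eq:dimpol00}) continue to hold with constants $b_0,\nu_0,\kappa_0$ uniform on compact subsets $K\Subset\mathcal{D}$. Truncating $P$ to clusters with $\delta(\Gamma)\leq R$ gives a finite sum of analytic functions, and the tail is majorized uniformly on $K$ by $B_0\nu_0 e^{-\kappa_0 R}$; by Weierstrass' theorem, $P(\beta)$ is analytic on $\mathcal{D}$. Equivalently, each $P_\Lambda(\beta)$ is analytic (a finite analytic sum, with $Z_\Lambda$ bounded away from $0$ by the cluster representation) and $P_\Lambda\to P$ uniformly on $K$, so the limit is analytic. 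The main technical point, and the only step that needs care, is verifying that the polymer bound (\ref{eq:appD014}) survives with room to spare under the perturbation $\beta\mapsto\beta+\delta\beta$ on some open neighborhood; this is routine in the applications of interest here (where $\zeta(\gamma)$ arises from exponentially decaying Fisher potentials, entire in $\beta$) but must be recorded explicitly to justify the uniform bounds that drive the Weierstrass argument.
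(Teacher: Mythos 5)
Your argument is correct and follows essentially the same route as the paper's proof: both rewrite $\log Z_{\Lambda}$ via the cluster expansion of Proposition \ref{prop:appD1}, identify the limit $P$ with a translation-invariant per-site cluster sum (the paper weights clusters containing $0$ by $1/n(\Gamma)$ where you anchor each cluster at its leftmost site), and control $\log Z_{\Lambda}-\Lambda P$ with the decay bound $\sup_{\xi}\sum_{\Gamma\ni\xi,\delta(\Gamma)\geq r}|\varphi^{T}(\Gamma)||\zeta(\Gamma)|<B_{0}\nu_{0}e^{-\kappa_{0}r}$. The only cosmetic difference is the final complex-analysis step (you use uniform convergence plus Weierstrass, the paper uses the uniform bound $|P_{\Lambda}|\leq M$ plus Vitali's theorem), and you correctly flag, as the paper does in its footnote, that the constants must be uniform in $\beta$.
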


\begin{proof}

La {\em pressione} $P_{\Lambda}$ dei polimeri con attivit\` a $\gamma\rightarrow\zeta(\gamma)$ pu\` o essere scritta come

\begin{equation}
P_{\Lambda}=\Lambda^{-1}\log Z_{\Lambda}=\Lambda^{-1}\sum_{\Gamma\subset\Lambda}\varphi^{T}(\Gamma)\zeta(\Gamma),
\end{equation}

e sotto le ipotesi della proposizione \ref{prop:appD1} sappiamo che

\begin{equation}
\sup_{\xi\in\Lambda}\sum_{\Gamma\ni\xi,\delta(\Gamma)\geq r}\left|\varphi^{T}(\Gamma)\right|\left|\zeta(\Gamma)\right|<B_{0}\nu_{0}e^{-\kappa_{0}r}. \label{eq:appD19}
\end{equation}

Se $\zeta$ \` e invariante per traslazioni,

\begin{equation}
\lim_{\Lambda\rightarrow\infty}\Lambda^{-1}\sum_{\Gamma\subset\Lambda}\varphi^{T}(\Gamma)\zeta(\Gamma)=\sum_{\Gamma\ni 0}\frac{\varphi^{T}(\Gamma)\zeta(\Gamma)}{n(\Gamma)},
\end{equation}

dove $n(\Gamma)$ \` e il numero di punti contenuti in $\Gamma$; inoltre, chiamando $\Delta$ la costante {\em indipendente da $\beta$} necessaria per stimare i termini al bordo\footnote{\label{nota:beta}La costante $\Delta$ pu\` o essere stimata come

\begin{equation}
\Delta\leq\sup_{\xi}\sum_{\Gamma\ni\xi}\frac{\left|\varphi^{T}(\Gamma)\zeta(\Gamma)\right|}{n(\Gamma)}<B_{0}\nu_{0};
\end{equation}

nelle applicazioni alla meccanica statistica, in generale la costante $\nu_{0}$ sar\` a proporzionale a $e^{\left\|\Phi\right\|_{\kappa}}$, e $\Phi$ dipende da $\beta$. Ma questo non rappresenta un problema, perch\' e nei casi in cui sapremo ricondurci alla proposizione \ref{prop:appD1} la quantit\` a $\left\|\Phi\right\|_{\kappa}$ sar\` a limitata, in particolare {\em piccola}.

} \` e facile vedere che

\begin{eqnarray}
\left|P-P_{\Lambda}\right|&=&\left|\sum_{\Gamma\ni 0}\frac{\varphi^{T}(\Gamma)\zeta(\Gamma)}{n(\Gamma)}-\sum_{\Gamma\ni 0, |\Gamma|<\Lambda}\frac{\varphi^{T}(\Gamma)\zeta(\Gamma)}{n(\Gamma)}\right|+\frac{\Delta}{\Lambda}\nonumber\\
&\leq&\sum_{\Gamma\ni\xi,|\Gamma|\geq \Lambda}\left|\frac{\varphi^{T}(\Gamma)\zeta(\Gamma)}{n(\Gamma)}\right|+\frac{\Delta}{\Lambda}\nonumber\\&\leq&\sum_{p=0}^{\infty}\sum_{\Gamma\ni\xi,\delta(\Gamma)\geq \Lambda+p}\left|\varphi^{T}(\Gamma)\right|\left|\zeta(\Gamma)\right|+\frac{\Delta}{\Lambda}\nonumber\\&\leq& B_{0}\nu_{0}\sum_{p=0}^{\infty}e^{-\frac{\kappa_{0}}{2}(p+\Lambda)}+\frac{\Delta}{\Lambda}\rightarrow_{\Lambda\rightarrow\infty}0
\end{eqnarray}

uniformemente in $\beta$, {\em cf.} nota \ref{nota:beta}. Infine, se $\zeta(\gamma)$ \` e invariante per traslazioni

\begin{eqnarray}
\left|P_{\Lambda}\right|&\leq& \sum_{\Gamma\ni 0}\left|\varphi^{T}(\Gamma)\right|\left|\zeta(\Gamma)\right|+\frac{\Delta}{\Lambda}\leq \sum_{r\geq 0}\sum_{\Gamma\ni 0, \delta(\Gamma)\geq r} \left|\varphi^{T}(\Gamma)\right|\left|\zeta(\Gamma)\right|+\frac{\Delta}{\Lambda}\nonumber\\&\leq& B_{0}\nu_{0}\sum_{r\geq 0}e^{-\frac{\kappa_{0}}{2}r}+\Delta=M<\infty;
\end{eqnarray}

allora, poich\'e $P_{\Lambda}$ \`e analitica in $\beta$ per il teorema della convergenza di Vitali, {\em cf.} \cite{tit}, anche il limite $P=\lim_{\Lambda\rightarrow\infty}P_{\Lambda}$ lo sar\` a.

\end{proof}

\subsection{Cluster expansion}

In questa sezione discutiamo come ``polimerizzare'' lo studio delle propriet\` a di analiticit\` a di un sistema di spin in una dimensione, in modo da rendere applicabili i risultati della sezione precedente; l'analisi che stiamo per fare pu\` o essere resa molto pi\` u generale, \cite{ergo}, ma al prezzo di imporre condizioni pi\` u restrittive sulla forma dei potenziali d'interazione.
 
 \begin{prop}{{\em (Cluster expansion per un modello di Fisher)}} \label{prop:appD3}
 
 Consideriamo un sistema di $\frac{n}{2}$ spin su $\mathbb{Z}$ senza interazioni di tipo ``cuore duro'', ovvero senza vincoli di compatibilit\` a. Sia $\Phi$ un potenziale non necessariamente invariante per traslazioni; ponendo $\left\|\Phi_{X}\right\|\equiv\max_{\underline{\sigma}_{X}}\left|\Phi_{X}\left(\underline{\sigma}_{X}\right)\right|$ definiamo, per $\kappa>0$,
 
 \begin{equation}
 \left\|\overline{\Phi}\right\|_{\kappa}\equiv\sup_{\xi}\sum_{X\ni,|X|>1}\left\|\Phi_{X}\right\|e^{\kappa\delta(X)}, \label{eq:appD019}
 \end{equation}
 
 dove $\delta(X)$ \` e la lunghezza dell'intervallo $X$, {\em i.e.} $\delta(X)=|X|-1$.
 
 (i) Indipendentemente dal fatto che $\Phi$ sia invariante per traslazioni, la funzione di partizione $Z^{\Phi}_{\Lambda}=\sum_{\underline{\sigma}_{\Lambda}}e^{-\sum_{X\subset\Lambda}\Phi_{X}(\underline{\sigma}_{X})}$ pu\` o essere scritta formalmente come
 
 \begin{equation}
 Z^{\Phi}_{\Lambda}=\left[\prod_{\xi\in\Lambda}z(\xi)\right]\exp\left(\sum_{X\subset\Lambda}\varphi^{T}(X)\zeta(X)\right), \label{eq:appD20}
 \end{equation}
 
 dove $z(\xi)=\sum_{\sigma}e^{-\Phi_{\xi}(\sigma)}$, $X=\left\{\gamma_{1},...,\gamma_{n}\right\}$ \` e una configurazione di polimeri, $\varphi^{T}(X)$ \` e la funzione combinatoria definita nella (\ref{eq:appD17}), e $\zeta(X)=\prod_{\gamma\in X}\zeta(\gamma)$ per un certo $\zeta(\gamma)$ che verifica
 
 \begin{equation}
  \left|\zeta(\gamma)\right|\leq \left(e^{2\left(\left\|\overline{\Phi}\right\|_{\kappa}+1\right)}e^{-\frac{\kappa}{2}}\right)^{|\gamma|}e^{-\frac{\kappa}{2}\delta(\gamma)}. \label{eq:appD020}
 \end{equation}
 
 La (\ref{eq:appD20}) diventa un'identit\` a se la somma all'esponente \` e assolutamente convergente.
 
 (ii) Se i potenziali $\Phi$ verificano la disuguaglianza
 
 \begin{equation}
 e^{2\left\|\overline{\Phi}\right\|_{\kappa}}e^{-\frac{1}{4}\kappa}\kappa^{-1}\leq\varepsilon\qquad\mbox{per $0<\varepsilon<1$}, \label{eq:appD021}
 \end{equation}
 
 allora la (\ref{eq:appD020}) \` e sostituita da
 
 \begin{equation}
 \left|\zeta(\gamma)\right|\leq e^{-L_{0}\left|\gamma\right|}e^{-\frac{1}{2}\kappa\delta(\gamma)}, \label{eq:stimaL.0}
 \end{equation}
 
 con $e^{-L_{0}}\equiv \varepsilon c$ per un'opportuna costante $c$; inoltre, la (\ref{eq:appD021}) implica che
  
 \begin{equation}
 \sup_{\xi\in\mathbb{Z}^{d}}\sum_{\widetilde{X}\ni\xi,\delta(X)\geq r}\left|\varphi^{T}(X)\right|\left|\zeta(X)|\right|<B\varepsilon e^{-\frac{\kappa}{2}r}.
 \end{equation}

(iii)  Infine, se la (\ref{eq:appD021}) \`e verificata e se le funzioni $\zeta(\gamma)$ sono analitiche in $\Phi$, nel caso invariante per traslazioni la misura di Gibbs $\mu_{\Phi}$ \` e analitica in $\Phi$\footnote{Nel senso che la pressione $\lim_{\Lambda\rightarrow\infty}\Lambda^{-1}\log Z_{\Lambda}^{\Phi}$ \` e analitica in $\Phi$}.
 
 \end{prop}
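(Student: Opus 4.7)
The plan is to obtain (i) through a standard Mayer--type manipulation of the partition function, then to reduce (ii) to an application of Proposition~\ref{prop:appD1} with carefully chosen constants $b_{0},\nu_{0},\kappa_{0}$, and finally to deduce (iii) from Corollary~\ref{cor:appD1} combined with Vitali's theorem.

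\textbf{Step 1 (polymerization, proof of (i)).} I would first factor out the one--body part: writing $Z_{\Lambda}^{\Phi}=\prod_{\xi\in\Lambda}z(\xi)\cdot\langle\prod_{X\subset\Lambda,\,|X|>1}e^{-\Phi_{X}(\underline{\sigma}_{X})}\rangle_{0}$, where $\langle\cdot\rangle_{0}$ denotes the product measure with weights $z(\xi)^{-1}e^{-\Phi_{\xi}(\sigma_{\xi})}$. Then apply Mayer's identity $e^{-\Phi_{X}}=1+(e^{-\Phi_{X}}-1)$ and expand the product into a sum over families $\mathcal{F}=\{X_{1},\ldots,X_{m}\}$ of multibody supports, weighted by $\prod_{i}(e^{-\Phi_{X_{i}}(\underline{\sigma}_{X_{i}})}-1)$. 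Two supports $X,X'$ are declared \emph{linked} if $X\cap X'\neq\emptyset$; partitioning $\mathcal{F}$ into maximal linked components and integrating the spins of each component separately (thanks to independence under $\langle\cdot\rangle_{0}$) groups the families into \emph{polymers} $\gamma$ (connected subsets of $\mathbb{Z}$). Setting
\[
\zeta(\gamma)\equiv\sum_{m\geq 1}\frac{1}{m!}\sum_{\substack{X_{1},\ldots,X_{m}\subset\gamma\\ X_{i}\cap\gamma\neq\emptyset,\ \cup X_{i}=\gamma\\ \text{connected}}}\Big\langle\prod_{i=1}^{m}\big(e^{-\Phi_{X_{i}}}-1\big)\Big\rangle_{0,\gamma},
\]
one recognises in the sum over compatible families of polymers exactly the structure of Proposition~\ref{prop:appD1}, so that the formal identity (\ref{eq:appD20}) follows from (\ref{eq:appD15}). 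To obtain the bound (\ref{eq:appD020}) I would estimate $|e^{-\Phi_{X}}-1|\leq\|\Phi_{X}\|e^{\|\Phi_{X}\|}$, use the tree--graph inequality (or direct enumeration) to control the sum over the $X_{i}$'s building a polymer $\gamma$ of $|\gamma|$ sites, and finally trade $e^{\kappa\delta(X)}$ factors for the exponential decay $e^{-(\kappa/2)\delta(\gamma)}$, paying $e^{-\kappa/2}$ per site of $\gamma$ and absorbing the combinatorial factors into $e^{2(\|\overline{\Phi}\|_{\kappa}+1)|\gamma|}$.

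\textbf{Step 2 (applying the abstract polymer estimate, proof of (ii)).} With the bound (\ref{eq:appD020}) at hand one can identify $b_{0}\nu_{0}^{|\gamma|}\equiv e^{2(\|\overline{\Phi}\|_{\kappa}+1)|\gamma|}e^{-(\kappa/2)|\gamma|}$ and $\kappa_{0}=\kappa/2$, and check that the smallness assumption (\ref{eq:appD021}) implies $(1+B_{0})e^{B_{0}\nu_{0}}+k(d,\kappa_{0})<1$, since in $d=1$ the factor $k(1,\kappa/2)=2e\sum_{n\neq 0}e^{-\kappa|n|/2}$ is of order $\kappa^{-1}$ for small $\kappa$, and (\ref{eq:appD021}) forces precisely $\nu_{0}\kappa^{-1}\lesssim\varepsilon$. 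Proposition~\ref{prop:appD1} then converts (\ref{eq:appD020}) into the sharper bound (\ref{eq:stimaL.0}) with $e^{-L_{0}}=c\varepsilon$ (absorbing the combinatorial factors $\sum_{q}(2^{-q})$ appearing in (\ref{eq:dimpol13})), and yields the uniform estimate $\sup_{\xi}\sum_{X\ni\xi,\,\delta(X)\geq r}|\varphi^{T}(X)||\zeta(X)|<B\varepsilon e^{-\kappa r/2}$.

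\textbf{Step 3 (analyticity, proof of (iii)).} In the translation--invariant case, $\gamma\mapsto\zeta(\gamma)$ is analytic in the potentials $\Phi$, because $\zeta(\gamma)$ is a finite sum (over families of supports contained in $\gamma$) of polynomials in $e^{-\Phi_{X}}-1$, each of which is an entire function of $\Phi$. The bound (\ref{eq:stimaL.0}) is uniform on compact sets of $\Phi$ for which (\ref{eq:appD021}) holds, so Corollary~\ref{cor:appD1} applies directly: the pressure $\lim_{\Lambda\to\infty}\Lambda^{-1}\log Z_{\Lambda}^{\Phi}$ exists and is analytic in $\Phi$, whence the Gibbs state itself (obtained by functional derivatives of the pressure with respect to test potentials) inherits the analyticity. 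The main technical obstacle that I expect is the combinatorial bookkeeping in Step~1: controlling uniformly in $\gamma$ the sum over the ways in which collections of multibody interaction supports $X_{i}$ can assemble into a connected polymer, and then repackaging the result so that the decay rate $\kappa$ in the hypothesis survives (up to a factor $1/2$) in the decay rate of $\zeta(\gamma)$. The tree--graph inequality together with the fact that each $\|\Phi_{X}\|$ comes with an $e^{\kappa\delta(X)}$ factor in the definition (\ref{eq:appD019}) is what makes this reshuffling possible.
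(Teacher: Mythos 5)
Your proposal follows essentially the same route as the paper: factor out the one-body weights, Mayer-expand the multi-body factors $e^{-\Phi_{X}}=1+(e^{-\Phi_{X}}-1)$, group linked supports into connected polymers with activity $\zeta(\gamma)$ given by the normalized sum over connected coverings of $\gamma$, bound it via $|e^{x}-1|\leq|x|e^{|x|}$ and $\|\overline{\Phi}\|_{\kappa}$ (using $\delta(\gamma)=|\gamma|-1$ in $d=1$ to split the decay), and then feed the resulting estimate into Proposizione \ref{prop:appD1} and Corollario \ref{cor:appD1} (with Vitali for the analyticity of the pressure), checking that (\ref{eq:appD021}) guarantees the smallness condition since $k(1,\kappa/2)=O(\kappa^{-1})$. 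The only differences are cosmetic (ordered families with $1/m!$ versus unordered ones, and the optional tree-graph inequality where the paper just enumerates coverings directly), so the argument is correct and equivalent to the paper's.
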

 
 \begin{oss}
 Come vedremo, grazie alla tecnica di decimazione che abbiamo introdotto nella sezione \ref{sez:decim} di questa appendice, in un modello di Fisher sar\` a sempre possibile soddisfare la la (\ref{eq:appD021}).  
 \end{oss}
 
 \begin{proof}
 Innanzitutto, dobbiamo precisare cosa intendiamo per {\em polimero} in questo caso. Chiamiamo polimero qualunque insieme $\gamma\subset\Lambda\subset\mathbb{Z}^{d}$ tale che per ogni coppia di punti $\xi,\eta\in\gamma$ esiste una ``catena'' $Y_{1},...,Y_{m}$ di sottoinsiemi di $\gamma$ con $\xi\in Y_{1}$, $\eta\in Y_{m}$, $Y_{i}\cap Y_{i+1}\neq \emptyset$, e per ogni $Y_{i}$ esiste almeno una configurazione $\underline{\sigma}_{Y_{i}}$ con $\Phi_{Y_{i}}\left(\underline{\sigma}_{Y_{i}}\right)\neq 0$.\footnote{Questa definizione in una dimensione \` e banale; lo \` e di meno per $d>1$.}
 
 Detto ci\` o, la funzione di partizione $Z_{\Lambda}^{\Phi}$ pu\` o essere scritta come
 
 \begin{eqnarray}
 Z_{\Lambda}^{\Phi}&=&\sum_{\underline{\sigma}_{\Lambda}\in\{1,...,n\}^{\Lambda}}e^{-\sum_{Y\subset\Lambda}\Phi_{X}\left(\underline{\sigma}_{X}\right)}\nonumber\\&=&\sum_{\underline{\sigma}_{\Lambda}\in\{1,...,n\}^{\Lambda}}\prod_{\xi\in\Lambda}e^{-\Phi_{\xi}\left(\sigma_{\xi}\right)}\prod_{Y\subset\Lambda,|Y|>1}\left[\left(e^{-\Phi_{Y}\left(\underline{\sigma}_{Y}\right)}-1\right)+1\right],\label{eq:appD21}
 \end{eqnarray}
  
e possiamo sviluppare il fattore binomiale $\prod_{Y\subset\Lambda,|Y|>1}\left[\left(e^{-\Phi_{Y}\left(\underline{\sigma}_{Y}\right)}-1\right)+1\right]$ raggruppando i termini relativi a polimeri compatibili, ovvero tali che $\gamma_{i}\cap\gamma_{j}=\emptyset$; quindi la (\ref{eq:appD21}) diventa

\begin{eqnarray}
Z_{\Lambda}^{\Phi}&=&z\left(\Lambda\right)\sum_{\gamma_{1},...,\gamma_{p}}^{*}\prod_{i=1}^{p}\left[z\left(\gamma_{i}\right)^{-1}\sum_{\underline{\sigma}_{\gamma_{i}}}\prod_{\xi\in\gamma_{i}}e^{-\Phi_{\xi}\left(\sigma_{\xi}\right)}\right.\nonumber\\&&\times \left.\sum_{q_{i}}\sum^{**}_{Y_{1}^{(i)},...,Y_{q_{i}}^{(i)}\subset\gamma_{i},\; \bigcup_{j}Y_{j}^{(j)}=\gamma_{i}}\prod_{j=1}^{q_{i}}\left(e^{-\Phi_{Y_{j}^{(i)}}\left(\underline{\sigma}_{Y_{j}^{(i)}}\right)}-1\right)\right], \label{eq:appD22}
\end{eqnarray}
  
  dove $z(\Lambda)=\prod_{\xi\in \Lambda}\sum_{\sigma}e^{-\Phi_{\xi}(\sigma)}$, l'asterisco e il doppio asterisco indicano rispettivamente che i polimeri non si intersecano e che gli insiemi $\left\{Y_{j}^{(i)}\right\}$ devono contenere una catena che connette due qualsiasi punti di $\gamma_{i}$. Grazie alla (\ref{eq:appD22}) \` e naturale introdurre le {\em attivit\` a} dei polimeri $\gamma$ e delle configurazioni di polimeri $X$ come
  
  \begin{eqnarray}
  \zeta(\gamma)&\equiv&z^{-1}(\gamma)\sum_{\underline{\sigma}_{\gamma}}\prod_{\xi\in\gamma}e^{-\Phi_{\xi}\left(\sigma_{\xi}\right)}\nonumber\\&&\times\sum_{q}\sum^{**}_{\begin{array}{c} \scriptstyle{Y_{1},...,Y_{q}\subset\gamma} \\ \scriptstyle{\bigcup_{j}Y_{j}=\gamma} \end{array}}\prod_{j=1}^{q}\left(e^{-\Phi_{Y_{j}}\left(\underline{\sigma}_{Y_{j}}\right)}-1\right)\nonumber\\&\equiv&\left\langle\sum_{q}\sum^{**}_{\begin{array}{c} \scriptstyle{Y_{1},...,Y_{q}\subset\gamma} \\ \scriptstyle{\bigcup_{j}Y_{j}=\gamma} \end{array}}\prod_{j=1}^{q}\left(e^{-\Phi_{Y_{j}}\left(\underline{\sigma}_{Y_{j}}\right)}-1\right)\right\rangle\label{eq:att}\\
  \zeta(X)&\equiv& \prod_{\gamma\in X}\zeta(\gamma)^{X(\gamma)},\qquad\mbox{$\zeta(\emptyset)\equiv 1$}, 
  \end{eqnarray}
  
  e l'ultima cosa che ci rimane da fare \` e verificare che l'attivit\` a $\zeta(\gamma)$ cos\` i introdotta verifichi la stima (\ref{eq:appD014}) necessaria per applicare i risultati della proposizione \ref{prop:appD1} e del corollario \ref{cor:appD1}. Ci\` o pu\` o essere fatto nel modo seguente:
  
  \begin{eqnarray}
  \left|\zeta(\gamma)\right| &\leq& \left|\sum_{q = 1}^{|\gamma|}\sum_{\left\{Y_{i}\right\}_{i=1}^{q}}\prod_{j=1}^{q}\left(e^{-\Phi_{Y_{j}}\left(\underline{\sigma}_{Y_{j}}\right)}-1\right)\right|\nonumber\nonumber\\&\leq&\sum_{q= 1}^{|\gamma|}\sum_{\left\{Y_{i}\right\}_{i=1}^{q}}\prod_{j=1}^{q}\left|\left(e^{-\Phi_{Y_{j}}\left(\underline{\sigma}_{Y_{j}}\right)}-1\right)\right|\nonumber\\&\leq& \sup_{\xi}\sum_{q= 1}^{|\gamma|}\prod_{i=1}^{q}\sum_{Y_{i}\ni\xi}\left|\Phi_{Y_{i}}(\underline{\sigma}_{Y_{i}})\right|e^{\left|\Phi_{Y_{i}}\left(\underline{\sigma}_{Y_{i}}\right)\right|}\label{eq:appD24}\qquad\mbox{(da $\left|e^{x}-1\right|\leq|x|e^{|x|}$)}\quad\qquad\nonumber\\&\leq&e^{|\gamma|\left\|\Phi\right\|_{\kappa}}\sum_{q = 1}^{|\gamma|}\left\|\Phi\right\|^{|\gamma|}_{\kappa}e^{-\kappa\delta(\gamma)}\label{eq:appD25.0}\qquad\quad\qquad\mbox{({\em cf.} (\ref{eq:appD019}))}\nonumber\\&\leq&e^{|\gamma|\left\|\Phi\right\|_{\kappa}}\left|\gamma\right|\left\|\Phi\right\|_{\kappa}^{|\gamma|}e^{-\kappa\delta(\gamma)}\nonumber\\&\leq&\left(e^{2\left\|\Phi\right\|_{c_{0}}+1}\right)^{|\gamma|}e^{-c_{0}\delta(\gamma)}\quad\qquad\qquad\quad\mbox{($|x|<e^{|x|}$)},\nonumber 
  \end{eqnarray}
 
  e, poich\' e per $d=1$ $\delta(\gamma)=|\gamma|-1$, quindi per polimeri con $\delta(\gamma)\geq 1$ $\delta(\gamma)\geq \frac{1}{2}|\gamma|$,
  
  \begin{equation}
  \left|\zeta(\gamma)\right|\leq \left(e^{2\left(\left\|\overline{\Phi}\right\|_{\kappa}+1\right)}e^{-\frac{\kappa}{2}}\right)^{|\gamma|}e^{-\frac{\kappa}{2}\delta(\gamma)}.
  \end{equation}
  
  Ponendo $b_{0}^{2}=1$, $\nu_{0}^{2}=e^{2\left\|\Phi\right\|+1-\frac{\kappa}{2}}$, il risultato finale segue per la proposizione \ref{prop:appD1} e per il corollario \ref{cor:appD1} sotto la condizione
  
  \begin{equation}
  \left[3e^{B_{0}\nu_{0}}+k\left(1,\frac{\kappa}{2}\right)\right]\nu_{0}<1 \label{eq:appD25}
  \end{equation}
  
  con $B_{0}=2$ e ({\em cf.} proposizione \ref{prop:appD1}, formula (\ref{eq:appD015}))
  
  \begin{equation}
  k(1,\kappa_{0})=2e\sum_{\xi\in\mathbb{Z},|\xi|\geq 1}e^{-\kappa_{0}|\xi|}\leq 4e^{-\kappa_{0}}\left(1+\kappa_{0}^{-1}\right)<8\frac{e^{-\frac{\kappa_{0}}{2}}}{\kappa_{0}}; \label{eq:appD26}
  \end{equation}
  
  si pu\` o dimostrare che la condizione (\ref{eq:appD25}) \` e implicata dalla (\ref{eq:appD021}). Infatti, definiamo $e^{-L_{0}}$ come
  
  \begin{equation}
  e^{-L_{0}}\equiv e^{2\left\|\Phi\right\|_{\kappa}+1}\frac{e^{-\frac{\kappa}{4}}}{\kappa}, \label{eq:appD0015}
  \end{equation}
  
  e grazie alla (\ref{eq:appD021}) sappiamo che esiste una costante $c>0$ tale che
  
  \begin{equation}
  e^{-L_{0}}=c\varepsilon; \label{eq:appD00151}
  \end{equation}
  
  inoltre, \` e facile vedere che $\nu_{0}^{2}\leq e^{L_{0}}$, dunque la (\ref{eq:appD021}) implica che
  
  \begin{equation}
  \nu_{0}^{2}\leq c\varepsilon.
  \end{equation}
  
Infine, poich\' e per le (\ref{eq:appD0015}), (\ref{eq:appD00151})
  
  \begin{equation}
  \frac{e^{-\frac{\kappa}{2}}}{\kappa}\leq \frac{e^{-\frac{\kappa}{4}}}{\kappa}= c\varepsilon e^{-2\left\|\Phi\right\|_{k}-1}\leq c\varepsilon,
  \end{equation}
  
\` e possibile trovare un $\varepsilon>0$ tale che
  
  \begin{equation}
  \left[3e^{2\nu_{0}}+16\frac{e^{-\frac{\kappa}{2}}}{\kappa}\right]\nu_{0}\leq c^{\frac{1}{2}}\varepsilon^{\frac{1}{2}}\left[3e^{2c^{\frac{1}{2}}\varepsilon^{\frac{1}{2}}}+16c\varepsilon\right]<1.
  \end{equation}
  
 \end{proof}
 
 \section{Decimazione: seconda parte}
 
In questa sezione vogliamo mostrare che il nostro modello decimato, {\em cf.} sezione \ref{sez:decim}, soddisfa le ipotesi della proposizione \ref{prop:appD3}. Avevamo riscritto la funzione di partizione come

\begin{eqnarray}
Z_{\Lambda} &=& \sum_{\beta_{0},...,\beta_{l}}\left(\prod_{i=0}^{l}e^{-U^{B}_{\tau}\left(\beta_{i}\right)}\right)\left(\prod_{i=0}^{l-1}Z_{h}\left(\beta_{i},\beta_{i+1}\right)\right)\nonumber\\&&\times\frac{\sum_{\eta_{0},...,\eta_{l-1}}\left(\prod_{i=0}^{l-1}e^{-W\left(\beta_{i},\eta_{i},\beta_{i+1}\right)}\right)e^{-\sum^{*}\alpha_{X}\left(\underline{\sigma}_{X}\right)}}{\prod_{i=0}^{l-1}Z_{h}\left(\beta_{i},\beta_{i+1}\right)}, \label{eq:appD025}
\end{eqnarray}

dove l'asterisco ricorda la condizione $X\not\subset B_{i}\cup H_{i}\cup B_{i+1}\,\forall i$, {\em i.e.} coinvolge intervalli che intersecano {\em almeno due} elementi dell'$H$-reticolo, e abbiamo dimostrato che ({\em cf.} equazione (\ref{eq:appD12}))

\begin{equation}
\sup_{H}\sum_{\mathcal{H}\ni H:\left|\mathcal{H}\right|>1}\sup_{\underline{\eta}_{\mathcal{H}}}\left|\widetilde{\alpha}_{H_{1},...,H_{k}}\left(\eta_{H_{1},...,\eta_{H_{k}}}\right)\right|\leq m\left\|\overline{\alpha}\right\|_{\kappa}e^{-\frac{\kappa}{2}\tau}. \label{eq:appD26a}
\end{equation}

Adesso, se il potenziale iniziale aveva un raggio d'interazione $\kappa^{-1}$, il {\em potenziale rinormalizzato} tra i blocchi $H_{i}$ che si ottiene fissando le variabili $\beta_{i}$ avr\` a un raggio d'interazione ordine $\left(\tau\kappa\right)^{-1}$; dunque, possiamo riformulare la condizione (\ref{eq:appD021}) come, sostituendo $\left\|\overline{\alpha}\right\|$ con $m\left\|\overline{\alpha}\right\|_{\kappa}e^{-\frac{\kappa}{2}\tau}$,

\begin{equation}
e^{2m\left\|\overline{\alpha}\right\|_{\kappa}e^{-\frac{\kappa}{2}\tau}-\frac{1}{4}\tau\kappa}(\tau\kappa)^{-1}<\varepsilon.\label{eq:appD27}
\end{equation}

Senza perdita di generalit\` a, supponiamo che $\tau=cost\cdot h$; ci\` o implica che per $h$ opportunamente grandi la (\ref{eq:appD27}) sar\` a sicuramente verificata. Il rapporto nella (\ref{eq:appD025}) ha la familiare forma di una {\em media} sugli spin $\eta_{i}$, con pesi $e^{-W(\beta_{i},\eta_{i},\beta_{i+1})}$; quindi, possiamo ripetere l'argomento esposto nella dimostrazione della proposizione \ref{prop:appD3} e dunque, se $\Gamma$ \` e una configurazione di polimeri appartenenti all'$H$-reticolo, il rapporto nella (\ref{eq:appD025}) pu\` o essere riscritto come

\begin{equation}
\exp\sum_{\Gamma}\varphi^{T}(\Gamma)\zeta(\Gamma), \label{eq:appD027}
\end{equation}

dove la somma all'esponente \` e assolutamente convergente per la propriet\` a (\ref{eq:appD27}), {\em cf.} proposizione \ref{prop:appD3}, grazie alla quale le attivit\` a $\zeta(\gamma)$ verificano la stima\footnote{Anche qui, consideriamo $H$ abbastanza grande da poter tenere conto degli altri fattori all'esponente semplicemente scegliendo opportunamente i coefficienti, in questo caso pari a $\frac{1}{4}$, $\frac{1}{8}$.}

\begin{equation}
\left|\zeta(\gamma)\right|\leq e^{-\frac{1}{8}\kappa\tau|\gamma|}e^{-\frac{1}{4}\kappa\tau\delta(\gamma)}\leq e^{-\frac{1}{8}\kappa\tau|\gamma|-\frac{1}{8}\kappa\tau\delta(\gamma)} \label{eq:appD28b}
\end{equation}

se le distanze $\delta(\gamma)$ sono misurate sull' $H$-reticolo (quindi essenzialmente in unit\` a $\tau+h+2$). Definiamo i nuovi {\em potenziali efficaci}

\begin{equation}
\Psi_{B_{j},...,B_{j+p}}\left(\beta_{j},...,\beta_{j+p}\right)=\sum_{\Gamma\cap B_{i}\neq \emptyset, i=0,...,p}\varphi^{T}(\Gamma)\zeta(\Gamma), \label{eq:appD0028}
\end{equation}

che, lontano dai bordi del $B$-reticolo, riflettono l'invarianza per traslazioni di $\alpha_{X}$. Grazie alla (\ref{eq:appD27}), dalla proposizione \ref{prop:appD3} sappiamo che

\begin{equation}
\sup_{\xi\in\Lambda}\sum_{\Gamma\ni\xi,\delta(\Gamma)\geq r}\left|\varphi^{T}(\Gamma)\right|\left|\zeta(\Gamma)\right|<B_{0}\nu_{0}e^{-\kappa_{0}r},
\end{equation}

con $\kappa_{0}=\frac{\kappa\tau}{8}$, {\em cf.} stima (\ref{eq:appD28b}); dunque

\begin{eqnarray}
\max_{\underline{\beta}_{p}}\left|\Psi_{B_{j},...,B_{j+p}}\left(\beta_{j},...,\beta_{j+p}\right)\right| &\leq&  \sum_{q=0}^{\infty}(q+1)\sup_{\xi}\sum_{\Gamma\ni\xi, \delta(\Gamma)\geq h(p-1)+q}\left|\varphi^{T}(\gamma)\right|\left|\zeta(\Gamma)\right|\nonumber\\&\leq& \sum_{q=0}^{\infty}(q+1)B_{0}\nu_{0}e^{-\frac{\kappa\tau}{8}\left(h(p-1)+q\right)}\nonumber\\&=& A_{0}e^{-\frac{\kappa\tau}{16}}e^{-\frac{\kappa\tau}{8}h(p-1)}, \qquad A_{0}=B_{0}\sum_{q=0}^{\infty}(q+1)e^{-\frac{\kappa\tau}{8}q}, \label{eq:appD28}
\end{eqnarray}

dal momento che $\left|\zeta(\gamma)\right|\leq \nu_{0}^{2|\gamma|}e^{-\kappa_{0}\delta(\gamma)}$ con $\nu_{0}=e^{-\frac{\kappa\tau}{16}}$, $\kappa_{0}=\frac{\kappa\tau}{8}$; {\em cf.} proposizione \ref{prop:appD1}. Quindi, il potenziale efficace $\Psi$ pu\` o essere reso arbitrariamente piccolo. Chiamando $X$ un intervallo nel $B$-reticolo, grazie alla (\ref{eq:appD28}) abbiamo che

\begin{eqnarray}
\left\|\Psi\right\|_{\frac{\kappa\tau}{32}}&\equiv&\sup_{B}\sum_{X\ni B, |X|>1}\left\|\Psi_{X}\right\|e^{\frac{\kappa}{32}\delta(X)}\nonumber\\&\leq& \sum_{p\geq 1}pA_{0}e^{-\frac{\kappa\tau}{16}}e^{-\frac{\kappa h}{8}h(p-1)}e^{\frac{\kappa}{32}\left(h(p-1)+\tau\right)}\nonumber\\&=&C_{0}e^{-\frac{\kappa\tau}{32}}, 
\end{eqnarray}

ovvero anche il raggio d'interazione pu\` o essere reso arbitrariamente piccolo. Dunque, la funzione di partizione (\ref{eq:appD025}) pu\` o essere riscritta come:

\begin{equation}
Z_{\Lambda}=\sum_{\underline{\beta}_{[0,...,l]}}e^{-\sum_{X,|X|>1}\Psi_{X}\left(\underline{\beta}_{X}\right)}e^{-\sum_{i=0}^{l}U^{B}(\beta_{i})}e^{-\sum_{i=0}^{l}\Psi_{\xi_{i}}(\beta_{i})}\prod_{i=0}^{l-1}Z_{h}\left(\beta_{i},\beta_{i+1}\right), \label{eq:appD028}
\end{equation}

dove $\Psi_{X}$ \` e arbitrariamente piccolo se $|X|>1$ e $U^{B}(\beta_{i})$, $\Psi_{\xi_{i}}(\beta_{i})$ rappresentano dei potenziali a singolo corpo; tutto \` e invariante per traslazioni lontano dai bordi del $B$-reticolo. Infine, se $\underline{\omega}_{1}$, $\underline{\omega}_{2}$ sono due generiche sequenze compatibili possiamo tener conto del prodotto di  funzioni di partizione nella (\ref{eq:appD028}) sostituendo $U^{B}(\beta_{i})$, $\Psi_{X}(\underline{\beta}_{X})$ con i nuovi potenziali 

\begin{eqnarray}
\widetilde{U}^{B}(\beta_{i})&\equiv& U^{B}(\beta_{i})-\log\frac{Z_{h}(\underline{\omega}_{1},\beta_{i})Z_{h}(\beta_{i},\underline{\omega}_{2})}{Z_{h}(\underline{\omega}_{1},\underline{\omega}_{2})Z_{h}(\underline{\omega}_{1},\underline{\omega}_{2})}\\
\widetilde{\Psi}_{i,i+1}(\underline{\beta}_{[i,i+1]})&\equiv& \Psi_{i,i+1}(\underline{\beta}_{[i,i+1]})-\log\frac{Z_{h}(\beta_{i},\beta_{i+1})Z_{h}(\underline{\omega}_{1},\underline{\omega}_{2})}{Z_{h}(\underline{\omega}_{1},\beta_{i+1})Z_{h}(\beta_{i},\underline{\omega}_{2})}\\
\widetilde{\Psi}_{X}(\underline{\beta}_{X})&\equiv& \Psi_{X}(\underline{\beta}_{X}),\qquad\mbox{se $|X|>2$}; 
\end{eqnarray}

con questa definizione, la (\ref{eq:appD028}) diventa

\begin{equation}
Z_{\Lambda}=\sum_{\underline{\beta}_{[0,...,l]}}e^{-\sum_{X,|X|>1}\widetilde{\Psi}_{X}\left(\underline{\beta}_{X}\right)}e^{-\sum_{i=0}^{l}\widetilde{U}^{B}(\beta_{i})}e^{-\sum_{i=0}^{l}\Psi_{\xi_{i}}(\beta_{i})}. \label{eq:newpart}
\end{equation}

Abbiamo fatto questa scelta perch\'e, come vedremo nella prossima sezione, 

\begin{equation}
\lim_{h\rightarrow\infty}\sup_{\beta_{i},\beta_{i+1}}\log\frac{Z_{h}(\beta_{i},\beta_{i+1})Z_{h}(\underline{\omega}_{1},\underline{\omega}_{2})}{Z_{h}(\beta_{i},\underline{\omega}_{2})Z_{h}(\underline{\omega}_{1},\beta_{i+1})}=0; \label{eq:appD029}
\end{equation}

dunque, scegliendo $h$ abbastanza grande i potenziali $\widetilde{\Psi}_{X}$ verificheranno le stesse stime che abbiamo ricavato per $\Psi_{X}$, eventualmente modificando in modo opportuno le costanti che vi appaiono. Con ci\` o, il problema \` e definitivamente ricondotto allo studio di un modello unidimensionale caratterizzato da potenziali a pi\` u di un corpo con taglia e raggio d'interazione piccoli a piacere, quindi siamo nelle condizioni della proposizione \ref{prop:appD3}; in particolare, grazie al corollario \ref{cor:appD1} la pressione del nostro modello di Fisher iniziale $P=\lim_{l\rightarrow\infty}\Lambda(l)\log Z_{\Lambda(l)}^{\beta}$ \` e analitica in $\beta$. Per concludere, dimostriamo la propriet\` a (\ref{eq:appD029}).

\section{``Fattorizzazione'' di una funzione di partizione in una dimensione}

Per dimostrare la propriet\` a (\ref{eq:appD029}) ci serviremo di un risultato di Ruelle ben noto in teoria dei sistemi di spin unidimensionali, \cite{pfr}, \cite{abs}, \cite{ergo}. Se $A_{\Phi}(\underline{\sigma})$ \`e  l'energia potenziale per sito di una misura di Gibbs con potenziali $\left\{\Phi_{X}\right\}$, {\em cf.} definizione \ref{def:potsimb}, introduciamo la sua restrizione $\widehat{A}_{\Phi}(\underline{\sigma})$ su $\mathbb{Z}^{+}$ ({\em i.e.} sugli interi $\geq 0$) come

\begin{equation}
\widehat{A}_{\Phi}(\underline{\sigma})\equiv \sum_{X\ni 0, X\subset\mathbb{Z}^{+}}\Phi_{X}(\underline{\sigma}_{X}), \label{eq:dimpro}
\end{equation}

dove $\Phi_{X}\in B$ sono i potenziali della misura di Gibbs su $\mathbb{Z}$; diremo che $m$ \` e una misura di Gibbs\index{misura di Gibbs} su $\mathbb{Z}^{+}$ se le probabilit\` a condizionate ({\em cf.} definizione \ref{def:probcond}) verificano\footnote{Poich\'e $\widehat{A}$ \` e definita in $\{1,...,n\}_{T}^{\mathbb{Z}^{+}}$, la somma argomento dell'esponenziale nella (\ref{eq:dimpro1}) coinvolge al massimo $a+1$ termini.}, ponendo $\underline{\sigma}'=(\underline{\sigma}'_{a}\underline{\sigma}_{(a,\infty]})$, $\underline{\sigma}''=(\underline{\sigma}''_{a}\underline{\sigma}_{(a,\infty]})$,

\begin{equation}
\frac{m(\sigma'_{0}...\sigma'_{a}|\sigma_{a+1}...)}{m(\sigma''_{0},...,\sigma''_{a}|\sigma_{a+1}...)}=\exp\left(-\sum_{k=0}^{\infty}\left[\widehat{A}(\tau^{k}\underline{\sigma}')-\widehat{A}(\tau^{k}\underline{\sigma}'')\right]\right).  \label{eq:dimpro1}
\end{equation}

\begin{prop}\label{prop:pfr}

Introduciamo l'{\em operatore di trasferimento}\index{operatore di trasferimento} $\mathcal{L}_{\Phi}$ come

\begin{equation}
\left(\mathcal{L}_{\Phi}f\right)(\sigma_{0}\sigma_{1}...)=\sum_{\sigma=0}^{n}e^{-\widehat{A}(\sigma\sigma_{0}\sigma_{1}...)}f(\sigma\sigma_{0}\sigma_{1}...);
\end{equation}

 se $f$ \` e una funzione continua nello nello spazio delle sequenze in $\{1,...,n\}^{\mathbb{Z}^{+}}_{T}$ allora, indicando con $\mathcal{L}_{\Phi}^{*}$ l'aggiunto di $\mathcal{L}_{\Phi}$, esiste $\lambda_{\Phi}>0$ tale che le equazioni

\begin{equation}
\mathcal{L}_{\Phi}h_{\Phi}=\lambda_{\Phi} h_{\Phi},\qquad \mathcal{L}^{*}_{\Phi}\widetilde{m}^{+}_{\Phi}=\lambda_{\Phi}\widetilde{m}^{+}_{\Phi}, \label{eq:pfr01}
\end{equation}

ammettono un'unica soluzione $h_{\Phi}$, $\widetilde{m}^{+}_{\Phi}$, con $h_{\Phi}$ continua in $\{1,...,n\}^{\mathbb{Z}}_{T}$ e $\widetilde{m}^{+}_{\Phi}$ definita in $\{1,...n\}^{\mathbb{Z}}_{T}$. Inoltre, indicando con $\widetilde{m}^{+}(f)$ il valore di aspettazione di $f$ con questa misura,

\begin{equation}
\lim_{k\rightarrow\infty}\sup_{\Phi\in B}\left|\lambda_{\Phi}^{-k}\mathcal{L}_{\Phi}^{k}f-\widetilde{m}_{\Phi}(f)h\right|=0.\label{eq:dimpro2}
\end{equation}

\end{prop}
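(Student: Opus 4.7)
The plan is to prove this via the classical Ruelle--Perron--Frobenius strategy, splitting the argument into existence, then uniqueness, then the asymptotic formula. First, I would establish existence of a positive eigenvalue $\lambda_\Phi$ and strictly positive Hölder-continuous eigenfunction $h_\Phi$ by a Schauder--Tychonoff fixed-point argument applied to the nonlinear map
\[
N_\Phi : f \;\longmapsto\; \frac{\mathcal{L}_\Phi f}{(\mathcal{L}_\Phi f)(\underline{\sigma}_*)},
\]
where $\underline{\sigma}_*$ is a fixed reference sequence. The domain should be a compact convex subset of $C(\{1,\ldots,n\}^{\mathbb{Z}^+}_T)$ consisting of positive functions normalized at $\underline{\sigma}_*$, uniformly bounded above and below, and with a uniform Hölder modulus. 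The a priori Hölder estimate preserved by $N_\Phi$ comes from a standard distortion bound: two sequences agreeing in the first $N$ coordinates differ in their $\mathcal{L}_\Phi^k$-preimage weights by at most $\exp\bigl(\sum_{X \ni 0, \mathrm{diam}(X)\geq N}\|\Phi_X\|\bigr)$, which is controlled by $\|\Phi\|$ and the Hölder continuity of $\widehat{A}_\Phi$ that follows from the exponential decay of the $\Phi_X$. A fixed point $h_\Phi$ then satisfies $\mathcal{L}_\Phi h_\Phi = \lambda_\Phi h_\Phi$ with $\lambda_\Phi = (\mathcal{L}_\Phi h_\Phi)(\underline{\sigma}_*)$.

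For the dual equation, I would apply Schauder--Tychonoff to the map $\nu \mapsto \mathcal{L}_\Phi^* \nu / \nu(\mathcal{L}_\Phi 1)$ on the weak-$*$ compact set of probability measures (Banach--Alaoglu), producing $\widetilde{m}^+_\Phi$ with $\mathcal{L}_\Phi^* \widetilde{m}^+_\Phi = \mu_\Phi \widetilde{m}^+_\Phi$. Pairing the two eigenvalue equations via $\widetilde{m}^+_\Phi(\mathcal{L}_\Phi h_\Phi)$ forces $\mu_\Phi = \lambda_\Phi$. Uniqueness (up to scaling) and strict positivity of $h_\Phi$ follow from the \emph{mixing} of the compatibility matrix $T$: after $a$ iterations, the operator $\mathcal{L}_\Phi^{a+1}$ has a strictly positive kernel in the sense that all symbol transitions are realized with weights bounded below by $e^{-(a+1)\|\widehat{A}_\Phi\|_\infty}$, which excludes the existence of a second eigenfunction at the same eigenvalue with a different sign or Hölder structure.

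The asymptotic formula is obtained by passing to the normalized operator
\[
\widetilde{\mathcal{L}}_\Phi f \;=\; \lambda_\Phi^{-1}\, h_\Phi^{-1}\, \mathcal{L}_\Phi(h_\Phi f),
\]
which satisfies $\widetilde{\mathcal{L}}_\Phi 1 = 1$ and is therefore a Markov transfer operator with invariant probability measure $d\pi_\Phi = h_\Phi\, d\widetilde{m}^+_\Phi / \widetilde{m}^+_\Phi(h_\Phi)$. The core estimate is that $\widetilde{\mathcal{L}}_\Phi$ strictly contracts the Hilbert projective metric on the cone of positive functions: after $a$ iterations the uniform positivity of the kernel (inherited from mixing of $T$) yields, by the Birkhoff inequality, a contraction coefficient $\theta < 1$ depending only on $\|\Phi\|$ and $a$. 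Iterating gives $\|\widetilde{\mathcal{L}}_\Phi^k f - \pi_\Phi(f)\|_\infty \leq C\theta^{k}\|f\|$ on Hölder-continuous $f$, which, upon undoing the normalization, is precisely the claimed convergence of $\lambda_\Phi^{-k}\mathcal{L}_\Phi^k f$ to $\widetilde{m}^+_\Phi(f)\, h_\Phi$.

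The hard part will be the uniformity $\sup_{\Phi \in B}$ in the final step. For a fixed potential the spectral gap is a classical output of the contraction argument, but $B$ equipped with the norm $\|\Phi\| = \sum_{X \ni 0}\|\Phi_X\|/|X|$ is not compact, so to obtain a genuinely uniform limit one must localize to sub-level sets $\{\|\Phi\| \leq M\}$ and check that both the Hölder modulus of $h_\Phi$ and the contraction coefficient $\theta$ depend only on $M$ (and on the mixing time of $T$). Tracing the constants through the distortion estimate and the Birkhoff inequality shows this is the case, and one recovers the uniform statement by noting that the two sides of the claimed limit are each continuous in $\Phi$ with moduli controlled by $\|\Phi\|$; this is also the delicate point where one must verify that the eigenfunction normalization can be chosen to vary continuously with $\Phi$, so that the limit measure $\widetilde{m}^+_\Phi$ and function $h_\Phi$ are genuinely canonical objects and not merely equivalence-class representatives.
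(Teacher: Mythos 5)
The paper itself does not prove this proposition: the ``proof'' is a two-word deferral to the literature (\cite{pfr}, \cite{ergo}), so there is no internal argument to compare yours against line by line. Your proposal is a correct reconstruction of the classical Ruelle--Perron--Frobenius route that those references follow: fixed-point (Schauder--Tychonoff) construction of the positive eigenfunction $h_{\Phi}$ on a convex set of positive functions with a uniform H\"older/distortion bound, the dual fixed point on probability measures giving $\widetilde{m}^{+}_{\Phi}$, identification of the two eigenvalues by pairing (using $\widetilde{m}^{+}_{\Phi}(h_{\Phi})>0$), uniqueness and strict positivity from the mixing of the compatibility matrix $T$ (so that $\mathcal{L}_{\Phi}^{a+1}$ has a kernel bounded below), and the exponential convergence $\lambda_{\Phi}^{-k}\mathcal{L}_{\Phi}^{k}f\to\widetilde{m}^{+}_{\Phi}(f)h_{\Phi}$ via the normalized operator and a Birkhoff cone-contraction estimate. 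All of these steps are sound, and the care you take with the subshift constraint (every symbol must admit an incoming compatible transition, guaranteed by mixing) is exactly what is needed.

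The one place where you are right to be suspicious, and where your fix should be stated as a genuine correction rather than a technicality, is the uniformity $\sup_{\Phi\in B}$ in \eqref{eq:dimpro2}. As literally written, with $B$ the full (non-compact) space of summable potentials, the claim cannot hold: the contraction coefficient $\theta$ and the distortion constants degenerate as $\left\|\Phi\right\|$ grows, so the convergence rate is not uniform over all of $B$. Your localization to sub-level sets $\left\{\left\|\Phi\right\|\leq M\right\}$, with all constants depending only on $M$ and on the mixing time $a$ of $T$, is the correct form of the statement, and it is also all that the thesis actually uses: in the application (formula \eqref{eq:pfr0}) the potential $\alpha$ is fixed and only the limit $h\rightarrow\infty$, uniform in the boundary configurations $\beta_{i},\beta_{i+1},\underline{\omega}_{1},\underline{\omega}_{2}$, is needed. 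So your proof, with the sup restricted to bounded subsets of $B$ (or with uniformity understood in the boundary data rather than in $\Phi$), establishes everything the paper relies on.
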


\begin{proof}
Rimandiamo a \cite{pfr}, \cite{ergo}.
\end{proof}

Consideriamo la funzione $f_{\beta_{i}}$ definita nel seguente modo:

\begin{equation}
f_{\Phi,\beta_{i}}^{h}(\underline{\sigma})=\left\{\begin{array}{cc} e^{-\sum_{X\ni 0,X\not\subset[0,...,h-1],X\subset\left\{\mathbb{Z}^{-}\cup [0,...,h-1]\right\}}\Phi_{X}(\beta_{i}\underline{\sigma})_{X}} &\mbox{se $\beta_{i}$, $\underline{\sigma}$ sono compatibili} \\ 0 &\mbox{altrimenti} \end{array}\right.;
\end{equation}

allora, possiamo scrivere la funzione di partizione $Z_{h}(\beta_{i},\beta_{i+1})$, {\em cf.} formula (\ref{eq:appD7}), come, se $\underline{\omega}(\beta_{i+1})$ \` e una sequenza semi-infinita compatibile con $\beta_{i+1}$,

\begin{equation}
Z_{h}(\beta_{i},\beta_{i+1})=\left(\mathcal{L}_{\alpha}^{h}f_{\alpha,\beta_{i}}^{h}\right)(\beta_{i+1}\underline{\omega}(\beta_{i+1})).
\end{equation}

Quindi, se $\omega_{1}$, $\omega_{2}$ sono due generiche sequenze compatibili,

\begin{equation}
\frac{Z_{h}(\beta_{i}\beta_{i+1})Z_{h}(\underline{\omega}_{1},\underline{\omega}_{2})}{Z_{h}(\beta_{i},\underline{\omega}_{2})Z_{h}(\underline{\omega}_{1},\beta_{i+1})}=\frac{\mathcal{L}_{\alpha}^{h}f_{\alpha,\beta_{i}}^{h}(\beta_{i+1},\underline{\omega}(\beta_{i+1}))\mathcal{L}_{\alpha}^{h}f_{\alpha,\underline{\omega}_{1}}^{h}(\underline{\omega}_{2})}{\mathcal{L}_{\alpha}^{h}f_{\alpha,\beta_{i}}^{h}(\underline{\omega}_{2})\mathcal{L}_{\alpha}^{h}f_{\alpha,\underline{\omega}_{1}}^{h}(\beta_{i+1},\underline{\omega}(\beta_{i+1}))}, \label{eq:pfr2}
\end{equation}

e grazie alla propriet\` a (\ref{eq:dimpro2})

\begin{equation}
\lim_{h\rightarrow\infty}\sup_{\beta_{i},\beta_{i+1}}\frac{\mathcal{L}_{\alpha}^{h}f_{\alpha,\beta_{i}}^{h}(\beta_{i+1},\underline{\omega}(\beta_{i+1}))\mathcal{L}_{\alpha}^{h}f_{\alpha,\underline{\omega}_{1}}^{h}(\underline{\omega}_{2})}{\mathcal{L}_{\alpha}^{h}f_{\alpha,\beta_{i}}^{h}(\underline{\omega}_{2})\mathcal{L}_{\alpha}^{h}f_{\alpha,\underline{\omega}_{1}}^{h}(\beta_{i+1},\underline{\omega}(\beta_{i+1}))}=1. \label{eq:pfr0}
\end{equation}

La propriet\` a (\ref{eq:dimpro2}) \` e l'analogo del teorema di Perron - Frobenius\index{teorema di Perron - Frobenius} per l'operatore di trasferimento $\mathcal{L}$; se $h_{\alpha}$ \` e l'autofunzione corrispondente all'autovalore massimo (l'unico positivo secondo il teorema (\ref{prop:pfr})), possiamo decomporre una qualsiasi funzione continua $g$ definita in $\left\{1,...,n\right\}^{\mathbb{Z}^{+}}_{T}$ nella base delle autofunzioni di $\mathcal{L}_{\alpha}$ come

\begin{equation}
g = c_{1}h_{\alpha}+\widetilde{g}, \label{eq:pfr}
\end{equation} 

dove $c_{1}$ \` e la proiezione di $g$ su $h_{\alpha}$, ovvero per la (\ref{eq:pfr01}) $c_{1}=\widetilde{m}_{\alpha}^{+}(g)$; applicando $\lambda^{-h}\mathcal{L}_{\alpha}^{h}$ alla (\ref{eq:pfr}) otteniamo che

\begin{equation}
\lambda^{-h}\mathcal{L}_{\alpha}^{h}g=\widetilde{m}^{+}(g)h_{\alpha}\left(1+O\left(\frac{\Lambda}{\lambda}\right)^{h}\right),
\end{equation}

se $\Lambda$ \` e il secondo autovalore in modulo pi\` u grande appartenente allo spettro di $\mathcal{L}$. Quindi, poich\'e il coefficiente $\widetilde{m}^{+}(g)$ non dipende dal punto in cui era inizialmente calcolata $g$ (al contrario di $h_{\alpha}$), la (\ref{eq:pfr0}) \` e immediata.

\chapter{Analiticit\` a di un valor medio SRB}\label{app:B}

Considerando una generica osservabile $\mathcal{O}$ analitica in un parametro $\varepsilon$ e in $x$, ci chiediamo quale sia il comportamento del valor medio SRB nel caso in cui anche i potenziali della misura dipendano analiticamente da $\varepsilon$. In particolare, se $\partial_{\varepsilon}\left\langle\mathcal{O}\right\rangle_{\Lambda}$ \` e la derivata di un valor medio calcolato con la misura SRB approssimata\footnote{{\em Cf.} capitolo \ref{cap:noneq}, sezione \ref{sez:srbappr}.}, ovvero

\begin{eqnarray}
\partial_{\varepsilon}\left\langle\mathcal{O}\right\rangle_{\Lambda}&=&\left\langle\partial_{\varepsilon}\mathcal{O}\right\rangle_{\Lambda}\nonumber\\&&-\sum_{k=-\Lambda/2}^{\Lambda/2-1}\left(\left\langle\mathcal{O}\partial_{\varepsilon}A_{u}\circ S^{k}\right\rangle_{\Lambda}-\left\langle\mathcal{O}\right\rangle_{\Lambda}\left\langle\partial_{\varepsilon}A_{u}\circ S^{k}\right\rangle_{\Lambda}\right),\label{eq:appB01}
\end{eqnarray}

vorremmo dimostrare che per $\Lambda\rightarrow\infty$ il limite esiste ed \` e uguale a

\begin{equation}
\partial_{\varepsilon}\left\langle\mathcal{O}\right\rangle_{+}=\lim_{\Lambda\rightarrow\infty}\partial_{\varepsilon}\left\langle\mathcal{O}\right\rangle_{\Lambda}. \label{eq:exlim}
\end{equation}

La strategia che seguiremo sar\` a quella di sfruttare le tecniche illustrate nell'appendice \ref{app:D}. Grazie all'invarianza per traslazioni temporali della misura SRB possiamo riscrivere il valor medio di $\mathcal{O}$ come

\begin{equation}
\left\langle\mathcal{O}\right\rangle_{+}=\frac{1}{T}\left\langle\sum_{j=-\frac{T}{2}}^{\frac{T}{2}-1}\mathcal{O}\circ S^{j}\right\rangle_{+}; \label{eq:appB02}
\end{equation}

quindi, per la (\ref{eq:appB02}), indicando con $\left\{\Phi_{X}\right\}$ i potenziali a decadimento esponenziale invarianti per traslazioni della misura SRB, con un argomento del tutto analogo a quello esposto all'inizio dell'appendice \ref{app:D} o nella dimostrazione del teorema di fluttuazione si ottiene che

\begin{eqnarray}
\left\langle\mathcal{O}\right\rangle_{+}&=&\lim_{T\rightarrow\infty}\lim_{\Lambda\rightarrow\infty}\frac{1}{T}\left.\partial_{\beta}\log\frac{\sum_{\underline{\sigma}_{\Lambda}}e^{-\sum_{X\subset\Lambda}\Phi_{X}(\underline{\sigma}_{X})+\beta\sum_{X\subset\left[-\frac{T}{2},\frac{T}{2}\right]}\mathcal{O}_{X}(\underline{\sigma}_{X})}}{\sum_{\underline{\sigma}_{\Lambda}}e^{-\sum_{X\subset\Lambda}\Phi_{X}(\underline{\sigma}_{X})}}\right|_{\beta=0}\nonumber\\&=&\lim_{T\rightarrow\infty}\frac{1}{T}\left.\partial_{\beta}\log\frac{\sum_{\underline{\sigma}_{\left[-\frac{T}{2},\frac{T}{2}\right]}}e^{-\sum_{X\subset\left[-\frac{T}{2},\frac{T}{2}\right]}\Phi_{X}(\underline{\sigma}_{X})+\beta\sum_{X\subset\left[-\frac{T}{2},\frac{T}{2}\right]}\mathcal{O}_{X}(\underline{\sigma}_{X})}}{\sum_{\underline{\sigma}_{\left[-\frac{T}{2},\frac{T}{2}\right]}}e^{-\sum_{X\subset\left[-\frac{T}{2},\frac{T}{2}\right]}\Phi_{X}(\underline{\sigma}_{X})}}\right|_{\beta=0}.\label{eq:appB03}\nonumber\\
\end{eqnarray}

Nella (\ref{eq:appB03}), $\left\{\mathcal{O}_{X}(\underline{\sigma}_{X})\right\}$ sono i potenziali che si ottengono sviluppando la funzione $\mathcal{O}$, {\em cf.} proposizione \ref{prop:svilpot}, anch'essi invarianti per traslazioni ed esponenzialmente decrescenti. A questo punto usariamo le tecniche esposte nell'appendice \ref{app:D} (decimazione e cluster expansion) per scrivere la (\ref{eq:appB03}) come

\begin{eqnarray}
(\ref{eq:appB03})&=&\lim_{T\rightarrow\infty}\frac{1}{T}\left.\partial_{\beta}\sum_{\Gamma\subset\left[-\frac{T}{2},\frac{T}{2}\right]}\varphi^{T}(\Gamma)\left[\zeta_{\beta}(\Gamma)-\zeta(\Gamma)\right]\right|_{\beta=0}\nonumber\\&=&\lim_{T\rightarrow\infty}\frac{1}{T}\sum_{\Gamma\subset\left[-\frac{T}{2},\frac{T}{2}\right]}\varphi^{T}(\Gamma)\left.\partial_{\beta}\zeta_{\beta}(\Gamma)\right|_{\beta=0}, \label{eq:appB030}
\end{eqnarray}

ed \` e facile dimostrare che la {\em nuova attivit\` a} di una configurazione di polimeri $\Gamma$ $\left.\partial_{\beta}\zeta_{\beta}(\Gamma)\right|_{\beta=0}$ pu\` o essere stimata in modo analogo a $\zeta(\gamma)$. Infatti, dopo la procedura di decimazione avremo che

\begin{equation}
\left|\zeta_{\beta}(\gamma)\right|\leq e^{-L_{0}\left|\gamma\right|}e^{-\frac{1}{2}\kappa\delta(\gamma)},
\end{equation}

per $L_{0},\kappa>0$ opportuni, {\em cf.} proposizione \ref{prop:appD3}; quindi, poich\'e ({\em cf.} formula (\ref{eq:att}))  

\begin{equation}
\zeta_{\beta}(\gamma)=\left\langle\sum_{q}\sum^{**}_{\begin{array}{c} \scriptstyle{Y_{1},...,Y_{q}\subset\gamma} \\ \scriptstyle{\bigcup_{j}Y_{j}=\gamma} \end{array}}\prod_{j=1}^{q}\left(e^{-\widetilde{\Phi}^{\beta}_{X}\left(\underline{\sigma}_{Y_{j}}\right)}-1\right)\right\rangle, \label{eq:att2}
\end{equation}

dove $\widetilde{\Phi}_{X}^{\beta}$ \` e il potenziale analitico in $\beta$ che si ottiene dopo aver decimato il modello, usando una stima di Cauchy\footnote{Come abbiamo visto nella dimostrazione del teorema di Anosov per il gatto di Arnold, {\em cf.} sezione \ref{sez:cat} del capitolo \ref{cap:ergo}, se $\mathcal{D}=\left\{\beta:\mbox{Im}\,\beta\leq \beta_{0}\right\}$ con $\beta_{0}>0$, dal teorema di Cauchy sappiamo che possiamo stimare la derivata in $\beta$ di una funzione analitica $f(\beta)$ come

\begin{equation}
\partial_{\beta}f(\beta)\leq \frac{\max_{\beta\in\mathcal{D}}f(\beta)}{\beta_{0}}.
\end{equation}

} per maggiorare le derivate in $\beta$ di  dei potenziali con i loro moduli possiamo trovare\footnote{Il fattore $C|\gamma|$ tiene conto anche delle derivate sulla ``misura'' con la quale \` e calcolato il ``valor medio'' nella (\ref{eq:att2}), {\em cf.} proposizione \ref{prop:appD3}.} $C>0$ tale che

\begin{equation}
\left|\left.\partial_{\beta}\zeta_{\beta}(\gamma)\right|_{\beta=0}\right|\leq C|\gamma|e^{-L_{0}\left|\gamma\right|}e^{-\frac{1}{2}\kappa\delta(\gamma)}, \label{eq:stimaL}
 \end{equation}
 
dunque il fattore $|\gamma|$ non pone alcun problema, dal momento che pu\` o essere controllato con $e^{-\frac{L_{0}}{2}|\gamma|}$. Grazie all'invarianza per traslazioni dei potenziali e alla loro analiticit\` a in $\varepsilon$, dal corollario \ref{cor:appD1} sappiamo che il limite $T\rightarrow\infty$, ovvero il valor medio SRB di $\mathcal{O}$, \` e analitico in $\varepsilon$.

Infine, la derivata del valore di aspettazione con la misura SRB approssimata $\partial_{\varepsilon}\left\langle\mathcal{O}\right\rangle_{T}$ pu\` o essere scritta come

\begin{equation}
\frac{1}{T}\sum_{\Gamma\subset\left[-\frac{T}{2},\frac{T}{2}\right]}\varphi^{T}(\Gamma)\left.\partial_{\varepsilon\beta}\zeta_{\beta}(\Gamma)\right|_{\beta=0}, \label{eq:app05}
\end{equation}

e per le (\ref{eq:att2}), (\ref{eq:stimaL}), in un modo completamente analogo a quello usato per ottenere la stima (\ref{eq:stimaL}), abbiamo che, se $C'$ \` e un'opportuna costante positiva,

\begin{equation}
\left|\left.\partial_{\varepsilon}\partial_{\beta}\zeta(\gamma)\right|_{\beta=0}\right|\leq C'|\gamma|\left|\left.\partial_{\beta}\zeta(\gamma)\right|_{\beta=0}\right|\leq C'C|\gamma|^{2}e^{-L_{0}\left|\gamma\right|}e^{-\frac{1}{2}\kappa\delta(\gamma)}; \label{eq:stimaL2}
\end{equation}

quindi, grazie al risultato (\ref{eq:stimaL2}) e all'invarianza per traslazioni dei potenziali la (\ref{eq:app05}) converge uniformemente in $\varepsilon$ per $T\rightarrow\infty$, e il limite \` e analitico. Questo risultato ci permette di scambiare la derivata rispetto a $\varepsilon$ con il limite $T\rightarrow\infty$ che definisce la SRB, ovvero

\begin{equation}
\partial_{\varepsilon}\lim_{T\rightarrow\infty}\left\langle\mathcal{O}\right\rangle_{T}=\lim_{T\rightarrow\infty}\partial_{\varepsilon}\left\langle\mathcal{O}\right\rangle_{T},
\end{equation}

e dunque

\begin{equation}
\partial_{\varepsilon}\left\langle\mathcal{O}\right\rangle_{+}=\left\langle\partial_{\varepsilon}\mathcal{O}\right\rangle_{+}-\sum_{k=-\infty}^{\infty}\left[\left\langle\mathcal{O}\partial_{\varepsilon}A_{u}\circ S^{k}\right\rangle_{+}-\left\langle\mathcal{O}\right\rangle_{+}\left\langle \partial_{\varepsilon}A_{u}\circ S^{k}\right\rangle_{+}\right].
\end{equation}

\end{appendix}

\chapter*{Ringraziamenti}

Vorrei ringraziare il Prof. Giovanni Gallavotti per avermi seguito durante il lavoro di tesi, e il Dr. Pierluigi Falco per innumerevoli discussioni. Ringrazio anche il Dr. Alessandro Giuliani, Carlo Barbieri e il Prof. Marzio Cassandro, per un chiarimento dell'ultimo minuto. 

\bibliographystyle{unsrtit}


 \end{document}